\documentclass[11pt]{article}
\usepackage[margin=1in]{geometry}
\usepackage{amsmath}
\usepackage{amsthm}
\usepackage{amssymb}
\usepackage{color}
\usepackage[dvipsnames]{xcolor}
\usepackage{hyperref}
\usepackage{cleveref}
\usepackage[linesnumbered,ruled,vlined,longend]{algorithm2e}
\usepackage{tikz}
\usepackage{capt-of}
\usetikzlibrary{shapes.geometric}
\hypersetup{colorlinks=true,citecolor=OliveGreen,linkcolor=blue,urlcolor=black}
\usepackage{boldline}

\newcommand{\wt}{\widetilde}
\newcommand{\wh}{\widehat}
\newcommand{\init}{\mathrm{init}}
\newcommand{\final}{\mathrm{final}}

\newcommand{\start}{\mathrm{start}}
\newcommand{\tot}{\mathrm{tot}}

\newcommand{\cnt}{\mathrm{cnt}}
\newcommand{\polylogn}{\mathrm{poly}\log(n)}

\DeclareMathOperator*{\R}{{\mathbb{R}}}
\DeclareMathOperator*{\E}{{\mathbb{E}}}

\DeclareMathOperator*{\poly}{\mathrm{poly}}

\DeclareMathOperator*{\vol}{\mathrm{vol}}

\DeclareMathOperator*{\ext}{\mathrm{ext}}
\DeclareMathOperator*{\old}{\mathrm{old}}
\DeclareMathOperator*{\nnz}{\mathrm{nnz}}
\DeclareMathOperator*{\DS}{\mathrm{DS}}
\DeclareMathOperator*{\valid}{\mathrm{valid}}
\DeclareMathOperator*{\normrm}{\mathrm{norm}}

\newtheorem{theorem}{Theorem}[section]
\newtheorem{lemma}[theorem]{Lemma}
\newtheorem{definition}[theorem]{Definition}
\newtheorem{corollary}[theorem]{Corollary}

\newtheorem{fact}[theorem]{Fact}

\newcommand{\ov}{\overline}
\newcommand{\hpv}{h^{\perp v}}
\newcommand{\hv}{h^{v}}
\newcommand{\gpv}{g^{\perp v}}
\newcommand{\gv}{g^{v}}

\newcommand{\imbal}{\mathrm{im}}

\global\long\def\norm#1{\|#1\|}%

\newcommand{\mvar}[1]{\mathbf{#1}}
\global\long\def\balloss{\beta_{\eta}}%
\global\long\def\epsad{\epsilon_{\mathrm{add}}}%
    
\global\long\def\ma{\mvar A}%
\global\long\def\mb{\mvar B}%
\global\long\def\md{\mvar D}%
\global\long\def\mg{\mvar G}%

\global\long\def\mh{\mvar H}%

\global\long\def\mI{\mvar I}%

\global\long\def\mj{\mvar J}%
\global\long\def\mL{\mvar L}%

\global\long\def\mm{\mvar M}%
 
\global\long\def\mn{\mvar N}%

\global\long\def\mr{\mvar R}%
\global\long\def\ms{\mvar S}%
\global\long\def\mt{\mvar T}%
\global\long\def\mv{\mvar V}%

\global\long\def\mw{\mvar W}%
\global\long\def\mz{\mvar Z}%

\global\long\def\mzero{\mvar 0}%
\global\long\def\mntilde{\widetilde{\mn}_G}%

\global\long\def\onesVec{\vec{1}}%

\global\long\def\defeq{\stackrel{\mathrm{{\scriptscriptstyle def}}}{=}}%

\global\long\def\gbar{\overline{G}}%
\global\long\def\polylogn{\mathrm{polylog}(n)}%
\global\long\def\polylog{\mathrm{polylog}}%

\global\long\def\mdiag{\mvar{diag}}%

\global\long\def\indicVec#1{\onesVec_{#1}}%

\DeclareMathAlphabet{\bbold}{U}{bbold}{m}{n}%
\global\long\def\allone{\bbold{1}}%
\global\long\def\allzero{\bbold{0}}%

{\makeatletter
	\gdef\xxxmark{%
		\expandafter\ifx\csname @mpargs\endcsname\relax %
		\expandafter\ifx\csname @captype\endcsname\relax %
		\marginpar{xxx}%
		\else
		xxx %
		\fi
		\else
		xxx %
		\fi}
	\gdef\xxx{\@ifnextchar[\xxx@lab\xxx@nolab}
	\long\gdef\xxx@lab[#1]#2{{\bf \color{blue} [\xxxmark #2 ---{\sc #1}]}}
	\long\gdef\xxx@nolab#1{{\bf \color{blue} [\xxxmark #1]}}
}

\begin{document}
\title{Generalized Flow in Nearly-linear Time\\ on Moderately Dense Graphs}
\date{}
\author{Shunhua Jiang\footnote{\url{sj3005@columbia.edu}. Columbia University.} \qquad Michael Kapralov\footnote{\url{michael.kapralov@epfl.ch}. \'Ecole Polytechnique F\'ed\'erale de Lausanne} \qquad Lawrence Li\footnote{\url{lawrenceli@cs.toronto.edu}. University of Toronto.} \qquad Aaron Sidford\footnote{\url{sidford@stanford.edu}. Stanford University.}}

\maketitle

\begin{abstract}
In this paper we consider generalized flow problems where there is an $m$-edge $n$-node directed graph $G = (V,E)$ and each edge $e \in E$ has a loss factor $\gamma_e >0$ governing whether the flow is increased or decreased as it crosses edge $e$. We provide a randomized $\wt{O}( (m + n^{1.5}) \cdot \polylog(\frac{W}{\delta}))$ time algorithm for solving the generalized maximum flow and generalized minimum cost flow problems in this setting where $\delta$ is the target accuracy and $W$ is the maximum of all costs, capacities, and loss factors and their inverses. %
This improves upon the previous state-of-the-art $\wt{O}(m \sqrt{n} \cdot \log^2(\frac{W}{\delta}) )$ time algorithm, obtained by combining the algorithm of \cite{ds08} with techniques from \cite{ls14}.
To obtain this result we provide new dynamic data structures and spectral results regarding the matrices associated to generalized flows and apply them through the interior point method framework of \cite{bll+21}. %
\end{abstract}

\thispagestyle{empty}
\newpage
\thispagestyle{empty}
\tableofcontents

\newpage
\setcounter{page}{1}
\section{Introduction}

In this paper we consider \emph{generalized flow problems} where there is a directed graph $G = (V,E)$ with $n$-nodes $V$, $m$-edges $E$, together with positive loss factors $\gamma \in \R^{E}_{> 0}$ for edges, and positive edge capacities $u \in \R^{E}_{> 0}$. We call any $f \in \R^{E}_{\geq 0}$ a \emph{flow} and define the \emph{imbalance} of $f$ at $a \in V$, as
\[
\mathrm{im}_{G}(f)_a \defeq \sum_{e = (b,a) \in E} \gamma_e f_e -\sum_{e = (a,b) \in E} f_e.
\]
The imbalance of $f$ at $a$ denotes the net amount of flow into $a$. 
Whereas in standard flow problems increasing the flow on edge $e = (a,b)$ by $\alpha$ decreases the imbalance at $a$ by $\alpha$ and increases it at $b$ by $\alpha$, here the increase at $b$ is instead $\gamma_e \alpha$. Consequently, $\gamma_e < 1$ can be viewed as flow leaving the graph, i.e., being \emph{lost}, as flow crosses edge $e$, and we refer to such problems with $\gamma_e < 1$ as \emph{lossy flow problems}.
Conversely, $\gamma_e > 1$ can be viewed as flow entering the graph as the flow crosses $e$.

In this paper we consider the problems of solving\footnote{Unless specified otherwise, we use ``solving'' to refer to producing a high-accuracy approximately-feasible solution.} canonical optimization problems for graphs with loss factors. In particular we consider the following:
\begin{itemize}
    \item \emph{Generalized Maximum Flow}: In this problem there is a specified $s,t \in V$ and the goal is to find flow $f \in \R^{E}_{\geq 0}$ with $f \leq u$ entrywise and $\imbal_G(f)_a = 0$ for all $a \notin \{s,t\}$ such that $\imbal_G(f)_t$ is maximized. 
    This corresponds to maximizing the flow into $t$ given an unlimited supply at $s$ while respecting the capacity constraints.
    \item \emph{Generalized Minimum Cost Flow}: In this problem there is a specified $d \in \R^V$ and $c \in \R^E$ and the goal is to find  $f \in \R^{E}_{\geq 0}$ with $f \leq u$ entrywise and $\imbal_G(f) = d$ such that $c^\top f$ is minimized.
\end{itemize}

In the special case when $\gamma_e = 1$ for all $e \in E$, the generalized maxflow problem is the standard maxflow problem and the generalized min-cost flow problem is the standard min-cost flow problem. %
A line of work over the past decade has obtained numerous improvements to the running times for solving these standard problems \cite{ckmst11,m16,ls20,blss20,bln+20,bll+21,glp22,bgs22}. This line of work culminated in \cite{cklpps23,bcklppss23, bcklmms24}, which give deterministic almost linear time $O(m^{1+o(1)})$ algorithms for solving maximum flow and minimum cost flow to high-accuracy, and a randomized algorithm \cite{bll+21} that solves these problems in nearly linear time $\wt{O}( m+n^{1.5})$ on moderately dense graphs where $m \geq n^{1.5}$.\footnote{$\widetilde{O}(m)$ hides $\polylog(m)$ factors. Almost linear time refers to $O(m^{1 + o(1)})$ running times, and nearly linear time refers to $\wt{O}(m)$ running times. In just the introduction, we assume that the problem magnitude parameter $W$ and the error parameter $1/\delta$ are both bounded by $\poly(n)$, and hide $\polylog(W/\delta)$ factors in $\wt{O}(\cdot)$. 
} 
Interestingly, the almost linear time algorithms even apply to more general convex flow problems where the goal is to minimize sums of convex cost functions applied to the edges while routing  specified demands $d$: 
\[
\min_{\substack{f \in \R^{E}:~ \mathrm{im}(f)_v = d_v ~\forall v \in V \\ \allzero \leq f \leq u }} \sum_{e \in E} \phi_e(f_e),
\text{ where each $\phi_e : \R \rightarrow \R$ is a convex.}
\]
However, there is no known almost linear time reduction from generalized flow problems to min-cost flow, or even to this more general convex flow problem.

Despite these advances and multiple works studying generalized flows \cite{gpt88,v89,gjo97,fw02,ds08}, the state-of-the-art running times for solving generalized maxflow and generalized min-cost flow are substantially slower than those for maximum flow and minimum cost flow. %
These state-of-the-art running times include the $\wt{O}(m^{1.5})$ time algorithm of \cite{ds08},\footnote{The algorithm of \cite{ds08} was stated for lossy maximum flow, for which it produces \emph{exact feasible} solutions. Careful inspection shows that their algorithm can also be applied to generalized maximum flow and generalized min-cost flow to produce \emph{approximate feasible} solutions. 
Our algorithm also outputs approximate feasible solutions for generalized maximum flow and generalized min-cost flow, and the precise guarantees are stated in \Cref{thm:generlize_min_cost_flow}. For lossy maximum flow, however, we can obtain exact feasible solutions using the same techniques as \cite{ds08}.
} %
as well as a $\wt{O}(m \sqrt{n})$ time algorithm, obtained by combining the algorithm of \cite{ds08} with techniques from \cite{ls14}.

The central goal of this paper is to close this gap. We consider the problem of designing faster algorithms for solving the generalized flow problem and more broadly, developing new algorithmic tools for reasoning about linear programs with at most two variables per inequality. Our main result is a nearly linear time algorithm for solving generalized maximum flow and generalized minimum cost flow to high accuracy on sufficiently dense graphs. To obtain this result, we provide new tools to reason about the linear algebraic structure of matrices associated with generalized flows and new dynamic data structures for maintaining information about them. We then apply these data structures in the interior point method (IPM) optimization framework of \cite{bll+21}. Beyond faster running times, we hope that the spectral graph theory tools we develop for generalized flows will have broader applications and facilitate faster algorithms for a wider range of problems.%

\subsection{Our Results}

Let $\mb_G \in \R^{E \times V}$ denote the edge-incidence matrix of the lossy graph $G = (V,E,\gamma)$, where for every edge $e = (a, b)$, the corresponding row $e$ of $\mb_G$ has $\gamma_{e}$ on entry $b$ and $-1$ on entry $a$, and therefore $(\mb_G^{\top} f)_v = \mathrm{im}_G(f)_v$ for all $v \in V$. Let $\mb_{G \backslash \{s,t\}} \in \R^{E \times V\backslash \{s,t\}}$ denote the submatrix of $\mb_G$ obtained by deleting the columns corresponding to vertices $s$ and $t$. The generalized maximum flow problem can be formulated as the following linear program:%
\[
\min_{\substack{\mb_{G \backslash \{s,t\}}^{\top} f = 0 \\ \allzero \leq f \leq u}} (\mb_G^{\top} f)_t.
\]
The generalized minimum cost flow problem can be formulated as the following linear program:
\[
\min_{\substack{\mb_G^{\top} f = d \\ \allzero \leq f \leq u}} c^{\top} f.
\]

To obtain our generalized flow algorithms, we consider the more general problem of solving the following linear program where every row of $\ma \in \R^{m \times n}$ has at most two non-zero entries: %
\begin{equation}\label{eq:LP_two_sided_constraints}
\min_{\substack{\ma^{\top} x=b\\
        \ell \le x \le u
    }
}c^{\top}x.
\end{equation}
This problem is the dual of a well-studied problem called the two variable per inequality (2VPI) linear program \cite{m83,h04}.
Additionally, it clearly encompasses the generalized maximum flow and generalized minimum cost flow problems.\footnote{We write $\ma^{\top} x = b$ (instead of $\ma x = b$) in \eqref{eq:LP_two_sided_constraints} so that generalized flow problems are special cases with $\ma = \mb_G$.}

We refer to the linear program \eqref{eq:LP_two_sided_constraints} as a \emph{two-sparse LP}. Additionally, we define matrices with at most two non-zero entries per row as \emph{two-sparse matrices} and measure their magnitude as follows. 
\begin{definition}[Two-sparse matrix and its magnitude parameter]\label{def:W_A}
We say a matrix $\ma$ is a \emph{two-sparse matrix} if every row of $\ma$ has at most two non-zero entries. For any matrix $\ma$, we define
\[
W_{\ma} \defeq \max_{i,j:\ma_{i,j} \neq 0}\left(\max(|\ma_{i,j}|, \frac{1}{|\ma_{i,j}|})\right)\,.
\]
\end{definition}

Our main result is that we can solve two-sparse LPs to $\delta$-accuracy %
in $\wt{O}((m + n^{1.5})\cdot \poly\log{(W/\delta)})$ time, where $\delta$-accuracy is defined in the theorem below. %
As a corollary, we can also solve the generalized maxflow and generalized min-cost flow problems to $\delta$-accuracy in $\wt{O}((m + n^{1.5})\cdot \poly\log{(W/\delta)})$ time, where $W$ is an upper bound on the magnitude of any integer used to describe the problem instance, and $\delta$ is an additive error parameter.

\begin{theorem}[Two-sparse LP]\label{thm:two_sparse_LP}
Let $\ma\in\R^{m\times n}$ be a two-sparse matrix, $c,\ell,u\in\R^{m}$, and $b\in\R^{n}$. Let $W \defeq \max\left(W_{\ma}, \|c\|_{\infty},\|b\|_{\infty},\|u\|_{\infty},\|\ell\|_{\infty},\frac{\max_{i}(u_{i}-\ell_{i})}{\min_{i}(u_{i}-\ell_{i})}\right)$. Assume that there is a point $x$ satisfying $\ma^{\top} x = b$ and $\ell \leq x \leq u$. There exists an algorithm that given any such $\ma$ and any $\delta>0$, runs in time $\wt{O}( (m + n^{1.5}) \cdot \poly\log(\frac{W}{\delta}))$ and w.h.p.\footnote{We use ``w.h.p.''~as an abbreviation for ``with high probability'', which means for any arbitrarily large constant specified in advance, the event happens with probability at least $1-1/m^c$, where $m$ is the input size. %
}
outputs a vector $x^{(\final)}$ satisfying %
\begin{align*}
\|\ma^{\top} x^{(\final)}-b\|_{\infty}\le\delta\enspace\text{ and }\enspace\ell \le x^{(\final)} \le u \enspace\text{ and }\enspace c^{\top}x^{(\final)}\le\min_{\substack{\ma^{\top} x=b\\
        \ell \le x \le u
    }
}c^{\top}x+\delta.
\end{align*}
\end{theorem}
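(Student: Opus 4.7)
The plan is to invoke the interior point method (IPM) framework of \cite{bll+21}, which reduces solving \eqref{eq:LP_two_sided_constraints} to maintaining an approximately-centered primal-dual pair across $\wt{O}(\sqrt{n})$ Newton-type iterations. Each iteration requires (i) approximately applying the inverse of $\ma^{\top} \md \ma$ to a vector, where $\md$ is a positive diagonal depending on the current slacks, and (ii) updating the iterate $x$ by a small step in the resulting scaled Newton direction. The target running time $\wt{O}((m + n^{1.5}) \polylog(W/\delta))$ forces the amortized per-iteration cost to be $\wt{O}(\sqrt{n}\cdot\polylog(W/\delta))$ after an $\wt{O}(m)$ preprocessing phase, so a fresh solve from scratch in every iteration is infeasible.

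First, I would establish the spectral structure underlying the linear system solves. Since every row of $\ma$ has at most two nonzero entries, $\ma^{\top} \md \ma = \sum_{i} d_i \ma_{i,:}^{\top} \ma_{i,:}$ decomposes as a sum of rank-one matrices whose supports lie in $2 \times 2$ principal blocks. This matrix is the dual of a 2VPI system and a generalized Laplacian of the lossy graph encoded by $\ma$. Leveraging nearly linear time 2VPI solvers (in the spirit of \cite{ds08}), I would then derive the spectral-approximation results required by the IPM: sparsification, effective-resistance estimation, and leverage-score approximation for $\ma^{\top} \md \ma$. This yields an $\epsilon$-accurate linear-system solver in $\wt{O}(m \polylog(1/\epsilon))$ time, sufficient both for initialization and for offline sparsifier maintenance.

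Second, and more importantly, I would construct dynamic data structures that reduce the amortized per-iteration cost to $\wt{O}(\sqrt{n})$. Following the template of \cite{bll+21}, the main objects maintained are: an implicit representation of $x$ as a base vector together with a small pool of low-rank updates; a sampler that identifies the coordinates of $x$ whose associated $d_i$ has shifted by more than a fixed threshold since its last explicit refresh; and a lazily-refreshed spectral sparsifier of $\ma^{\top} \md \ma$ that serves as a preconditioner for iterative refinement. The two-sparsity of $\ma$ is exploited crucially here: changing a single $d_i$ perturbs $\ma^{\top} \md \ma$ by a rank-one update supported on a $2 \times 2$ block, which enables Sherman--Morrison-style inverse maintenance combined with sampling-based sparsifier updates. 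Standard IPM stability bounds cap the total mass of changes across all iterations by $\wt{O}(n)$, so amortization delivers the desired $\wt{O}(n^{1.5})$ budget across the $\wt{O}(\sqrt{n})$ iterations.

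The main obstacle I anticipate is that $\ma^{\top} \md \ma$ is only symmetric positive semidefinite and is \emph{not} a standard graph Laplacian: its off-diagonal entries have magnitudes that need not match the corresponding diagonal contributions, and its kernel is governed by cycle-gain conditions in the underlying generalized graph rather than mere connectivity. Consequently, the spectral sparsification, leverage-score sampling, and oblivious sketching tools used for Laplacians in \cite{bll+21} require nontrivial adaptation to the generalized-Laplacian setting; developing and analyzing these constitutes the technical core of the argument. Once in place, plugging them into the \cite{bll+21} IPM gives $\wt{O}(\sqrt{n})$ Newton iterations of amortized cost $\wt{O}(n)\polylog(W/\delta)$ atop an $\wt{O}(m\polylog(W/\delta))$ initialization, totaling $\wt{O}((m + n^{1.5})\polylog(W/\delta))$ time. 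A standard rounding step, perturbing $x$ into feasibility at an $O(\delta)$ cost, then yields the stated guarantees $\|\ma^{\top} x^{(\final)} - b\|_{\infty} \le \delta$, $\ell \le x^{(\final)} \le u$, and $c^{\top} x^{(\final)} \le \mathrm{OPT} + \delta$.
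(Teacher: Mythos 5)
Your high-level skeleton is right: reduce to the IPM of \cite{bll+21}, which runs for $\wt{O}(\sqrt{n})$ iterations and reduces the problem to maintaining three data structures against the constraint matrix, and then observe that the main difficulty is that $\ma^{\top}\md\ma$ is a generalized (lossy) Laplacian rather than a graph Laplacian. This matches the paper's strategy. However, the proposal has a genuine gap: you flag ``nontrivial adaptation of the spectral tools'' as the technical core and then do not supply it, and the specific data structures you describe do not match what the \cite{bll+21} framework actually requires.

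Concretely, the framework needs (i) a \emph{heavy-hitter} structure that, given a query $h$, returns all rows with $|(\ma h)_i|\geq\epsilon$ in time roughly $\epsilon^{-2}\|\ma h\|_2^2 + n$; (ii) a \emph{sampler} that draws row $i$ with probability $\propto (\ma h)_i^2/\|\ma h\|_2^2$; and (iii) an inverse-maintenance routine. Your ``sampler that identifies the coordinates of $x$ whose associated $d_i$ has shifted'' is a different and easier object, and your Sherman--Morrison / low-rank-update picture for inverse maintenance is also not what is used (the paper instead re-samples a sparse $\ma^{\top}\ms\ma$ via leverage scores and solves it with the M-matrix solver of \cite{ds08}). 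The heart of the difficulty you gesture at is precisely the heavy-hitter structure: for a standard graph, \cite{bln+20,bll+21} exploit that $\allone$ spans the kernel of the incidence matrix, project $h$ onto $\allone^{\perp}$, and then scan the neighborhoods of the large coordinates of the projection. For a lossy graph, the near-kernel direction $v_{\min}$ of the lossy Laplacian is neither $\allone$ nor static, and a naive coordinate-wise approximation to $v_{\min}$ is not good enough to make the scanning argument go through. The paper resolves this with two new spectral facts about balanced lossy expanders that your plan does not anticipate and cannot do without: (a) any unit $v$ with Rayleigh quotient $v^{\top}\mn_G v$ within a polylog factor of $\lambda_{\min}(\mn_G)$ yields a diagonal-minus-rank-one matrix $\mI - (1-\lambda)vv^{\top}$ that spectrally approximates $\mn_G$ (\Cref{thm:spectral_sparsification}), so one can use a stale approximate eigenvector until the Rayleigh-quotient condition visibly fails; and (b) $v_{\min}$ of a balanced lossy expander is entrywise close to uniform (\Cref{thm:unifomity_v}), which controls the renormalization loss when vertices are deleted and bounds the number of eigenvector recomputations by $O(\log(1/\epsad))$. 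These results, combined with a decomposition of a general two-sparse matrix into weighted lossy graphs, then into $\balloss$-balanced unweighted lossy graphs, and finally into expanders via \cite{sw21}, are what make the heavy hitter and sampler queries fit in the $\wt{O}(\|\ma h\|_2^2/\epsilon^2+n)$ budget. Without them, your plan reduces to the statement ``adapt \cite{bll+21} to lossy graphs,'' which is the open problem rather than its solution.
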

Applying \Cref{thm:two_sparse_LP}, we obtain the following results for the generalized maxflow  (\Cref{thm:generlize_maxflow}) and generalized min-cost flow (\Cref{thm:generlize_min_cost_flow}) problems. 
In the special case of lossy maxflow, we further apply the technique of \cite{ds08} to obtain an exactly feasible flow.

\begin{theorem}[Generalized maxflow]\label{thm:generlize_maxflow}
    There exists an algorithm that, given any lossy graph $G = (V,E,\gamma)$, source $s \in V$, sink $t \in V$, capacities $u\in\R_{\geq 0}^{E}$, and $\delta>0$, for which there exists $f \in \R_{\geq 0}^E$ such that $f \leq u$ and $\mathrm{im}_{G}(f)_v = 0$ for all $v\in V \backslash \{s,t\}$, runs in time 
    \[
    \wt{O}\Big( (m + n^{1.5}) \cdot \poly\log(W/\delta)\Big)
    \text{ where }
    W \defeq \max\left(\|\gamma\|_{\infty}, \|\gamma^{-1}\|_{\infty}, \|u\|_{\infty},\frac{\max_e u_e}{\min_e u_e}\right)
    \]
    and w.h.p.~outputs $f \in \R_{\geq 0}^E$ satisfying $f \leq u$, 
\begin{align*}
\left|\mathrm{im}_{G}(f)_v \right| \leq \delta ~ \forall v\in V \backslash \{s,t\}, \enspace\text{ and }\enspace \mathrm{im}_{G}(f)_t \le\min_{\substack{\mathrm{im}_{G}(f')_v = 0\, \forall v\in V\backslash \{s,t\} \\
        \allzero \le f'_{e}\le u_{e}\forall e \in E
    }
}\mathrm{im}_{G}(f')_t + \delta.
\end{align*}
In the special case of lossy maxflow (i.e., when $\gamma \leq \allone_V$), the algorithm outputs an exactly feasible flow, i.e., it has the additional property that $\mathrm{im}_{G}(f)_v = 0 ~ \forall v\in V \backslash \{s,t\}$.
\end{theorem}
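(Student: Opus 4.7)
The plan is to reduce the generalized maximum flow problem to the two-sparse LP solved by \Cref{thm:two_sparse_LP}, and then invoke the rounding procedure of \cite{ds08} to obtain an exactly feasible flow in the lossy case.

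To carry out the reduction I would set $\ma \defeq \mb_{G\setminus\{s,t\}} \in \R^{E \times (V \setminus \{s,t\})}$, $\ell \defeq \vzero$, $b \defeq \vzero$, keep the upper bound $u$ as given, and take $c \in \R^E$ to be the negation of the column of $\mb_G$ indexed by $t$, so that $c^\top f = -\imbal_G(f)_t$ and minimizing $c^\top f$ is maximizing the flow into $t$. Each row of $\mb_{G\setminus\{s,t\}}$ has at most two nonzero entries (namely $\gamma_e$ and $-1$, with at most one of them dropped when $s$ or $t$ is an endpoint of $e$), so $\ma$ is two-sparse with $W_{\ma} = \max(\|\gamma\|_\infty, \|\gamma^{-1}\|_\infty)$. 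After a preprocessing pass that drops all edges with $u_e = 0$ (which cannot carry flow), the aggregate $W$ parameter appearing in \Cref{thm:two_sparse_LP} (which combines $W_{\ma}$, $\|c\|_\infty$, $\|b\|_\infty$, $\|\ell\|_\infty$, $\|u\|_\infty$, and the capacity ratio $\max_e u_e / \min_e u_e$) is bounded by the $W$ of \Cref{thm:generlize_maxflow}. The feasibility hypothesis on $G$ furnishes the feasible point required by \Cref{thm:two_sparse_LP}, and invoking it produces in time $\wt{O}((m + n^{1.5}) \poly\log(W/\delta))$ a vector $f$ satisfying $\|\mb_{G\setminus\{s,t\}}^\top f\|_\infty \le \delta$, $\vzero \le f \le u$, and objective within additive $\delta$ of optimal, which translates directly to the main output guarantees of \Cref{thm:generlize_maxflow}.

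To additionally obtain exact feasibility in the lossy case $\gamma \le \allone_V$, I would invoke the algorithm above with accuracy $\delta' \defeq \delta / \poly(m, W)$ and then post-process $f$ using the technique of \cite{ds08}. The idea is that in a lossy network, any residual positive imbalance at an internal vertex $v$ can be cancelled by routing a small additional flow from $v$ toward $t$ along a lossy path, and any residual negative imbalance can be cancelled by slightly reducing the flow on an outgoing path from $v$. Because every $\gamma_e \le 1$, these local corrections can be carried out in $\wt{O}(m)$ total time without violating $\vzero \le f \le u$, and they alter the objective by only $O(\poly(m,W)\cdot\delta') = O(\delta)$.

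The main anticipated obstacle is the careful bookkeeping of accuracy through the reduction: verifying that the two-sparse LP's $W$ does not blow up beyond a polynomial of the $W$ in \Cref{thm:generlize_maxflow} (so that $\poly\log(W/\delta)$ factors are preserved), and that the post-processing in the lossy case does not amplify error beyond the $\delta$ target when the internal call is invoked with a polynomially smaller accuracy. The algorithmic substance is entirely carried by \Cref{thm:two_sparse_LP}, while the \cite{ds08} rounding is standard.
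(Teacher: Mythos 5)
Your proposal is correct and matches the paper's proof: the paper likewise reduces directly to \Cref{thm:two_sparse_LP} with constraint matrix $\mb_{G\setminus\{s,t\}}$, bounds $W_{\mb_G} \le \max(\|\gamma\|_\infty,\|\gamma^{-1}\|_\infty)$ to control the aggregate $W$ parameter, and invokes Lemma~3.2 of \cite{ds08} (which is the ``lossy path repair'' routine you describe) to obtain exact feasibility in the $\gamma \le \allone$ case. The only cosmetic discrepancy is a sign on $c$ arising from the paper's own $\min$/$\max$ convention for $\mathrm{im}_G(f)_t$, which does not affect the argument, and your extra care in choosing $\delta' = \delta/\poly(m,W)$ for the post-processing step only costs a $\polylog$ overhead and so is harmless.
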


\begin{theorem}[Generalized min-cost flow]\label{thm:generlize_min_cost_flow}
    There exists an algorithm that, given any lossy graph $G = (V,E,\gamma)$, costs $c \in \R^E$, capacities $u\in\R_{\geq 0}^{E}$, demands $d\in\R^{V}$, and $\delta>0$, for which there exists $f \in \R_{\geq 0}^E$ such that $f \leq u$ and $\mathrm{im}_{G}(f)_v = d_v $ for all $v\in V$, runs in time 
    \[
    \wt{O}\Big( (m + n^{1.5}) \cdot \poly\log(W/\delta)\Big)
    \text{ where }
    W \defeq \max\left(\|\gamma\|_{\infty}, \|\gamma^{-1}\|_{\infty}, \|c\|_{\infty},\|u\|_{\infty},\frac{\max_e u_e}{\min_e u_e}\right)
    \]
    and w.h.p.~outputs $f \in \R_{\geq 0}^E$ satisfying $f \leq u$, 
\begin{align*}
\left|\mathrm{im}_{G}(f)_v - d_v\right| \leq \delta ~ \forall v\in V, \enspace\text{ and }\enspace c^{\top} f \le\min_{\substack{\mathrm{im}_{G}(f')_v = d_v \forall v\in V \\
        \allzero \le f'_{e}\le u_{e}\forall e \in E
    }
}c^{\top} f' + \delta.
\end{align*}
\end{theorem}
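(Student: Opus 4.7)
The plan is to derive \Cref{thm:generlize_min_cost_flow} as an essentially immediate corollary of \Cref{thm:two_sparse_LP} by observing that the generalized min-cost flow LP is itself a two-sparse LP. First I would verify that the edge-incidence matrix $\mb_G \in \R^{E \times V}$ is two-sparse in the sense of \Cref{def:W_A}: each row of $\mb_G$ corresponds to an edge $e = (a,b)$ and is nonzero only in coordinate $a$ (value $-1$) and coordinate $b$ (value $\gamma_e$). Since $(\mb_G^{\top} f)_v = \mathrm{im}_G(f)_v$, the generalized min-cost flow problem is exactly the LP \eqref{eq:LP_two_sided_constraints} with $\ma = \mb_G$, $b = d$, $\ell = \vzero$, $u = u$, and $c = c$, and feasibility is guaranteed by the hypothesis of the theorem.

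Next I would translate the magnitude parameter of \Cref{thm:two_sparse_LP} into the parameter $W$ of \Cref{thm:generlize_min_cost_flow}. By construction $W_{\mb_G} = \max(\|\gamma\|_\infty, \|\gamma^{-1}\|_\infty, 1)$. Because $\ell = \vzero$, $\|\ell\|_\infty = 0$ and $\frac{\max_i(u_i-\ell_i)}{\min_i(u_i-\ell_i)} = \frac{\max_e u_e}{\min_e u_e}$. The only quantity left is $\|b\|_\infty = \|d\|_\infty$, which is not explicitly in the theorem's $W$; however, any feasible $f$ with $\vzero \le f \le u$ satisfies $\|d\|_\infty = \|\mb_G^{\top} f\|_\infty \leq m \cdot \|\gamma\|_\infty \cdot \|u\|_\infty$, so $\|d\|_\infty \leq \poly(m) \cdot W$. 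Hence the $W$ appearing inside \Cref{thm:two_sparse_LP} is at most $\poly(m) \cdot W$, and $\poly\log(\cdot/\delta)$ factors agree after absorbing a $\log m$ into $\wt{O}(\cdot)$.

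There is one degenerate case to address: if $u_e = 0$ for some edge, then the ratio $\max_e u_e / \min_e u_e$ is undefined. I would preprocess the instance in $O(m)$ time by deleting every edge with $u_e = 0$; any feasible flow is forced to have $f_e = 0$ on such edges, so the optimum is unchanged, and the remaining instance has strictly positive capacities with $\max_e u_e / \min_e u_e$ still bounded by the original $W$. (Isolated vertices created this way cause no issue since they contribute nothing to $\mb_G^{\top} f$.)

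Finally, applying \Cref{thm:two_sparse_LP} to this instance in time $\wt{O}((m + n^{1.5}) \cdot \poly\log(W/\delta))$ yields a vector $f = f^{(\final)}$ with $\|\mb_G^{\top} f - d\|_\infty \leq \delta$, $\vzero \leq f \leq u$, and $c^{\top} f \leq \mathrm{opt} + \delta$, which is exactly the conclusion of \Cref{thm:generlize_min_cost_flow} once $\mb_G^{\top} f$ is rewritten as $\mathrm{im}_G(f)$. The argument is essentially bookkeeping, and the only mildly subtle step is the bound on $\|d\|_\infty$ above; the real mathematical content of the result lives entirely in \Cref{thm:two_sparse_LP}.
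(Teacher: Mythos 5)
Your proposal is correct and follows the same route as the paper: both reduce directly to \Cref{thm:two_sparse_LP} with $\ma = \mb_G$, $b = d$, $\ell = \vzero$, and then argue that the magnitude parameter of \Cref{thm:two_sparse_LP} is $\poly(m,W)$ by noting $W_{\mb_G} \leq \max(\|\gamma\|_\infty,\|\gamma^{-1}\|_\infty)$ and bounding $\|d\|_\infty$ via the existence of a feasible flow, so that the extra factors are absorbed into $\polylog(W/\delta)$. The only (welcome) addition in your write-up over the paper's proof is the explicit $O(m)$-time preprocessing of edges with $u_e = 0$; the paper implicitly assumes these have been handled, and your bound $\|d\|_\infty \le m\max(1,\|\gamma\|_\infty)\|u\|_\infty$ is, if anything, stated more carefully than the paper's.
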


To obtain these results, our key technical contribution is a \emph{dynamic heavy hitter data structure} that can efficiently find all heavy entries $|\ma h|_i \geq \epsilon$ for any two-sparse matrix $\ma$ and any query vector $h$. We apply a reduction from \cite{bln+20,bll+21} to show that to prove \Cref{thm:two_sparse_LP}  it suffices to design this data structure (along with additional data structures we provide).
Our heavy hitter data structure relies on spectral graph theory tools that we develop to analyze the lossy Laplacian, which is an analog of the standard Laplacian that we introduce for lossy graphs. %
We believe both the data structures and these spectral results may be of independent interest.

\subsection{Related Work}

\paragraph{Generalized flow.} 
The generalized maxflow and generalized min-cost flow problems are well-studied and both combinatorial and continuous optimization based algorithms have been developed for them. We provide a brief summary of known weakly polynomial time algorithms for solving these problems in \Cref{tab:results}. For further references, we refer readers to the discussions in these papers and Chapter~15 of \cite{amo93}. Prior to our result, the state-of-the-art algorithm for solving generalized flow was given by \cite{ds08}, which can be improved to $\wt{O}(m \sqrt{n})$ time by combining techniques from \cite{ls14}. %
A key technical contribution of \cite{ds08} is an M-matrix scaling lemma that reduces solving linear systems in M-matrices to solving Laplacian linear systems, allowing the use of Laplacian solvers of \cite{st04} to implement each IPM iteration in nearly linear time. 
We refer readers to \cite{ajss19} for further discussion and improvements to solving M-matrices. %

\begin{table}[ht!]
    \centering
    \begin{tabular}{|c|c|c|c|c|}
        \hline
        {\bf Year} & {\bf Authors} & {\bf Problem} & {\bf Exact or approx}  & {\bf Time} \\
        \hlineB{3}
        1988    & Goldberg, Plotkin, Tardos \cite{gpt88} & Max & Exact & $m^2n^2$\\
        \hline
        1989    & Vaidya~\cite{v89} & Min-Cost & Exact & $m^{1.5}n^2$\\
        \hline
        1997   & Goldfarb, Jin, Orlin \cite{gjo97}& Max & Exact & $m^3$\\
        \hline
        2002   & Wayne~\cite{w02}& Min-Cost & Exact & $m^2n^3$\\
        \hline
        2002   & Fleischer, Wayne~\cite{fw02} & Min-Cost & $(1+\delta)$-mult.~approx & $m^2\delta^{-2}$\\
        \hline
        2008   & Daitch, Spielman~\cite{ds08} & Min-Cost & $\delta$-additive error & $m^{1.5}$   \\
        \hline
        2014   & Lee, Sidford~\cite{ls14} & Min-Cost & $\delta$-additive error & $m\sqrt{n}$\\
        \hline
        2025   & Our Work & Min-Cost & $\delta$-additive error & $m + n^{1.5}$\\
        \hline
    \end{tabular}
    \caption{A summary of weakly polynomial %
    algorithms for solving the generalized flow problems. All time complexities hide $n^{o(1)}$ and $\poly\log(W)$ factors, and the time complexities for approximate algorithms hide $\polylog(\delta^{-1})$ factors, where $W$ and $\delta$ are defined as in \Cref{thm:generlize_min_cost_flow}.}
    \label{tab:results}.
\end{table}

\paragraph{Strongly polynomial algorithms for solving 2VPI.}
Recently,  \cite{dknov24} provided the first strongly polynomial algorithm for solving generalized min-cost flow, %
and hence for linear programs with two variables per inequality (2VPI). Prior work had extensively studied and provided strongly polynomial algorithms for solving the primal and dual feasibility variants of the problem  \cite{m83,cm91,hn94,v14,k22}. For further references we refer the readers to the discussion in \cite{dknov24}.
Note that this line of work often considers the LP formulation of 2VPI with one-sided constraints $x \geq 0$ instead of two-sided constraints $\ell \leq x \leq u$. %
Such a restriction is without loss of generality because the 2VPI formulation with two-sided constraints can be reduced to a formulation with one-sided constraints at the cost of increasing the number of constraints from $n$ to $m+n$. Since, in this paper we seek running times which depend on both $m$ and the ratio $m/n$, we explicitly consider the 2VPI formulation with two-sided constraints, as in \eqref{eq:LP_two_sided_constraints}.

\paragraph{Maximum flow and minimum cost flow.}
Before the breakthrough almost linear time maxflow and min-cost flow algorithm of \cite{cklpps23}, multiple advances had been made to improve the time complexity of IPM algorithms for solving these flow problems %
including \cite{ckmst11,m13,ls14,m16,cmsv17,ls20,kls20,amv20,amv21,bgjllps22}. There is also a recent, related line of work on improving combinatorial algorithms for solving flow problems  \cite{bbst24,ck24a}.

\subsection{Organization}

In \Cref{sec:technical_overview}, we give an overview of our approach. %
In \Cref{sec:prelim} we cover general notation and known technical tools.
In \Cref{sec:v_uniformity} and \Cref{sec:spectral_approximation}, we prove some spectral theorems about our lossy Laplacian. In \Cref{sec:heavy_hitters_balanced_expanders} and \Cref{sec:general_heavy_hitters}, we use these spectral theorems to construct heavy hitter and sampler algorithms on lossy graph incidence matrices and 2-sparse matrices. %
Finally, in \Cref{sec:IPM_using_data_structs}, we show how the heavy hitter and sampler algorithms fit into an IPM that can solve the generalized flow and 2-sparse LP problems.

\section{Technical Overview}\label{sec:technical_overview}

In this section we give an overview of our approach. We solve two-sparse LPs by applying the  general interior point method (IPM) framework of \cite{bln+20,bll+21} (see \Cref{thm:path_following}), which applies to linear programs of the form \eqref{eq:LP_two_sided_constraints} with any constraint matrix $\ma$. The framework essentially reduces solving the LP to designing three data structures for the constraint matrix $\ma$: (1) a \emph{heavy hitter data structure} that finds the entries of $\ma h$ with large absolute value, %
(2) a \emph{sampler data structure} that samples entries according to a distribution defined by $\ma h$, and (3) an \emph{inverse maintenance data structure} that solves linear equations in $\ma^{\top} \mv \ma$, where $\mv$ is a non-negative diagonal matrix. The IPM has $\wt{O}(\sqrt{n} \log(\frac{W}{\delta}))$ iterations and runs in time which depends on the time of implementing these data structures.

We obtain our results by providing efficient data structures for each of these three data structure problems for two-sparse $\ma$. To motivate our approach and explain our results, in this overview we focus primarily on designing the heavy hitter data structure. This task captures the main difficulty of the problem and illustrates our key ideas for designing our data structures. 
More specifically, the heavy hitter problem is to find the 
 \emph{heavy hitters} of $\ma h$, the  entries of $\ma h$ that are greater than $\epsilon$ in absolute value, where $\ma$ is a two-sparse matrix that undergoes a bounded number of updates. Given any $\ma$ and $\epsilon$, it is easy to see there are at most $\epsilon^{-2}\|\ma h\|_2^2$ heavy hitters, and this bound is our target time complexity for answering heavy hitter queries. Our main theorem regarding a heavy hitter data structures for two-sparse matrices is given informally below.

\begin{theorem}[Heavy hitter for two-sparse matrices, informal version of \Cref{thm:heavy_hitter_two_sparse_general}]\label{thm:heavy_hitter_two_sparse_general_informal}
There is a data structure \textsc{HeavyHitter} that can be initialized with any two-sparse matrix $\ma \in \R^{m \times n}$ in $\wt{O}(m) \cdot \polylog(\frac{W}{\delta})$ time, and supports inserting or deleting any row to $\ma$ in $\wt{O}(1) \cdot \polylog(\frac{W}{\delta})$ time, and querying for all heavy hitters $i$ with $|(\ma h)_i| \ge \epsilon$ in $\wt{O}\left( \epsilon^{-2} \| \ma h \|_2^2 + n\right) \cdot \polylog(\frac{W}{\delta})$ time. %
\end{theorem}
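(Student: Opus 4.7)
The plan is to reduce the general two-sparse case to the balanced-expander lossy-graph case handled in Section~\ref{sec:heavy_hitters_balanced_expanders}, via a hierarchical expander decomposition of the induced lossy graph. First I would \emph{normalize} rows of $\ma$: a row with nonzero entries $\ma_{e,a},\ma_{e,b}$ can be rescaled so one entry becomes $-1$ and the other becomes a loss factor $\gamma_e$, matching the lossy-incidence form $\mb_G$. Splitting rows by sign pattern yields a constant number of blocks, each of which is the incidence matrix $\mb_G$ of some lossy graph $G$; heavy hitters of $\ma h$ correspond (up to per-row rescaling bounded in terms of $W$) to heavy hitters of $\mb_G h$ in each block. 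Rows with a single nonzero are handled by attachment to an auxiliary vertex.

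Next I would build a recursive expander decomposition of each lossy graph using the lossy-Laplacian spectral theory developed in Sections~\ref{sec:v_uniformity}--\ref{sec:spectral_approximation}. At each level, partition vertices into clusters whose induced subgraphs are balanced expanders in the lossy sense, and classify the crossing edges as bridge edges; Cheeger-type inequalities for the lossy Laplacian bound the number of bridge edges per level and cap the recursion depth at $O(\log m)$. On every expander cluster I instantiate the HeavyHitter structure of Section~\ref{sec:heavy_hitters_balanced_expanders}, and bridge edges are placed into an auxiliary heavy-hitter structure built recursively on the contracted graph at the next coarser scale. To answer a query $(\epsilon,h)$, I would forward $h$ down the hierarchy, invoke each cluster's local structure, and report the union; total output size is at most $\epsilon^{-2}\|\ma h\|_2^2$ by Markov, and the additive $O(n)$ comes from distributing $h$ across the $O(\log n)$ levels (each vertex appears in $\wt{O}(1)$ clusters per level). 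Insertions and deletions are routed to the owning cluster and periodically trigger local rebuilds; a cluster of size $s$ costs $\wt{O}(s)$ to rebuild, which amortizes to $\wt{O}(1)$ update time and $\wt{O}(m)$ initialization.

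The hardest part will be making this expander decomposition \emph{fully dynamic} while preserving the spectral properties that the Section~\ref{sec:heavy_hitters_balanced_expanders} data structure relies on, and keeping the query overhead additive $n$ rather than $m$. This requires (i) controlling how the spectral gap of each lossy-Laplacian cluster degrades under insertions and deletions, (ii) amortizing cluster rebuilds so that no edge is reprocessed more than $\wt{O}(1)$ times on average, and (iii) ensuring the decomposition depth remains $O(\log n)$ throughout the update sequence. In effect, I would need to establish a lossy analogue of dynamic expander decomposition with parameter trade-offs matching the standard Laplacian version, invoking the spectral approximations from Sections~\ref{sec:v_uniformity}--\ref{sec:spectral_approximation} as a black box.
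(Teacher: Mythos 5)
Your first reduction (normalize each row into lossy-incidence form $\indicVec{b}-\eta_e\indicVec{a}$, split by sign pattern, attach single-nonzero rows to an auxiliary vertex) matches the paper's Lemma~\ref{lem:reduction_two_sparse_to_lossy_graph}. After that, though, the plan diverges in a way that hits a known obstruction that the paper itself flags. You propose to run an expander decomposition ``using the lossy-Laplacian spectral theory'' and to obtain ``balanced expanders in the lossy sense'' as a byproduct of that decomposition. This does not work: for a generic lossy graph, even one whose smoothed graph is an excellent expander, the lossy Laplacian $\mL_G$ can have $\lambda_2-\lambda_1\to 0$ (the paper gives a $3$-cycle example with $\mb_G$ entries $1,-1/x$), so there is no Cheeger-type gap for the lossy Laplacian to decompose against, and the spectral tools of Sections~\ref{sec:v_uniformity}--\ref{sec:spectral_approximation} only kick in \emph{after} the flow multipliers are forced into $[1,1+\balloss]$. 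Balancing is not something the decomposition discovers; it is a separate preprocessing step (Lemma~\ref{lem:decomposition_balanced}): split into bipartite orientation classes, bucket edges by $\lfloor\log_{1+\balloss}\eta_e\rfloor$, and apply a per-bucket diagonal vertex scaling $\ms^{(i)}$ so that all multipliers land in $[1,1+\balloss]$. A similar bucketing by edge weight is also needed. Only then is the expander decomposition run, and crucially it is run on the \emph{smoothed} graph $\gbar$ with the \emph{standard} Laplacian; Lemma~\ref{lem:eta_to_not} and Cheeger (\Cref{thm:cheegers}) then transfer the spectral gap to the lossy Laplacian.

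The second gap is the one you yourself call the hardest part. You anticipate needing a ``lossy analogue of dynamic expander decomposition,'' but the paper deliberately sidesteps it. Because the decomposition lives on $\gbar$, the off-the-shelf expander pruning of \cite{sw21} applies unmodified; the per-expander data structure of \Cref{thm:heavy_hitter_expander} therefore only ever has to support \emph{deletions} (plus a degree-pruning rule so no vertex degree drops below a $1/9$ fraction, which \Cref{thm:unifomity_v} makes cheap). Insertions are handled entirely one layer up, by a standard level-$\ell$ rebuilding scheme (Algorithm~\ref{alg:heavy_hitter_balanced}, \textsc{Insert}/\textsc{Rebuild}) that never inserts into an existing expander structure, only destroys and re-decomposes batches of size $2^{\ell}$, giving $\wt{O}(1)$ amortized updates. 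There is also no recursion on a contracted ``bridge'' graph: \cite{sw21} directly yields a flat collection of $O(\log n)$ edge-disjoint layers covering all edges, with $O(n\,\polylog)$ total vertices, which is where the additive $n$ in the query time comes from. Finally, even with all of the above, bounding the number of eigenvector recomputations inside a single expander cluster requires the $\epsad\mI$ regularization and \Cref{thm:spectral_sparsification} / \Cref{thm:unifomity_v}; this is not visible if you treat \Cref{thm:heavy_hitter_expander} as a black box, but your proposal never arranges the preconditions (balanced multipliers, bounded degree ratio) that the black box actually requires.
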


\Cref{thm:heavy_hitter_two_sparse_general} allows us to support all heavy hitter queries of the IPM in $\widetilde{O}(m+n^{1.5}) \cdot \polylog(\frac{W}{\delta})$ time. %
When we apply the associated algorithm in each iteration of the IPM, the IPM guarantees that the sum of these $\| \ma h \|_2^2$ across all iterations is $\widetilde{O}(m) \cdot \polylog(\frac{W}{\delta})$. If we support each heavy hitter query in $\wt{O}\left(\epsilon^{-2}\|\ma h\|_2^2 + n\right) \cdot \polylog(\frac{W}{\delta})$ time and each row update of $\ma$ in $\wt{O}(1) \cdot \polylog(\frac{W}{\delta})$ time, %
then we can obtain our desired runtime of $\widetilde{O}(m + n^{1.5}) \cdot \polylog(\frac{W}{\delta})$ across all iterations.

A key ingredient of our proof of the \Cref{thm:heavy_hitter_two_sparse_general} is a heavy hitter data structure for the special case of incidence matrices $\mb_G$ of what we call \emph{balanced lossy expanders}. These are unweighted lossy graphs $G$ whose associated non-lossy graphs $\gbar$ are expanders, and have flow multipliers $(1- \beta) \allone_{E} \leq \gamma \leq \allone_{E}$ for a sufficiently small $\beta$. %
These graph can be viewed as a natural analogue of unweighted expanders for standard graphs. Perhaps the main technical contribution of this work is developing sufficient spectral graph theory tools, to obtain this heavy hitter data structure.

In \Cref{sec:tech_expander} we first review how \cite{bln+20} solves the heavy hitter problem for graphs, and introduce an approach based on \emph{balanced lossy expanders}. %
In \Cref{sec:tech_heavy_hitter_balanced_expander}, we discuss our approach for solving the heavy hitter problem on balanced lossy expanders. We first discuss the error requirements for our approaches to work, and then present the key spectral results. In \Cref{sec:tech_heavy_hitter_two_sparse} we present a proof sketch for \Cref{thm:heavy_hitter_two_sparse_general_informal} by using heavy hitters on balanced lossy expanders. Finally in \Cref{sec:tech_combine} we introduce our other two data structures, the sampler and inverse maintenance, which together with the heavy hitter yield our $\wt{O}(m + n^{1.5}) \cdot \polylog(\frac{W}{\delta})$ time algorithm for solving two-sparse LPs  (\Cref{thm:two_sparse_LP}). %

\subsection{Expander Decomposition Based Approaches}\label{sec:tech_expander}
\paragraph{Expander Decompositions for Graphs and Prior Approach.}
We first discuss the strategy used in \cite{bln+20,bll+21} for solving the heavy hitter problem on the edge-vertex incidence matrix $\mb_{\gbar} \in \R^{E \times V}$ of a graph $\gbar$. %
They observe that the heavy hitter problem is decomposable in that if there exists an edge decomposition $\gbar = \gbar_1 \cup \gbar_2$, to find the heavy hitters of $\mb_{\gbar} h = \begin{bmatrix} \mb_{\gbar_1} \\ \mb_{\gbar_2}
\end{bmatrix} h$, it suffices to find all heavy hitters of $\mb_{\gbar_1} h$ and $\mb_{\gbar_2} h$ separately. %
They use standard tools \cite{sw21} to decompose $\gbar$ into expanders with $1/\polylogn$ expansion and solve the problem on the expanders. %

To solve the heavy hitter problem when $\gbar$ is an expander, \cite{bln+20} uses  that the Laplacian $\mL_{\gbar} = \mb_{\gbar}^{\top} \mb_{\gbar}$ can be spectrally approximated by a diagonal matrix minus a rank-1 matrix. For simplicity of presentation, in this section we assume the graph is unweighted %
and $d$-regular, in which case this spectral approximation is %
\begin{equation}\label{eq:Laplacian_approx}
\mL_{\gbar} \approx_{\polylogn} \md - \frac{d}{n}\allone_V \allone_V^\top,
\end{equation}
where $\md = d \cdot \mI$ is the degree matrix of $\gbar$ and $\allone_V$ is the all-ones vector.  This spectral approximation implies that for any vector $h$ orthogonal to $\allone_V$, $h^\top \md h \approx_{\polylogn} h^\top \mL_{\gbar} h = \|\mb_{\gbar} h\|_2^2$. 

Leveraging this spectral approximation fact, \cite{bln+20} provides the following simple algorithm that finds the heavy hitters in $\widetilde{O}(\|\mb_{\gbar} h\|_2^2\epsilon^{-2})$ time. %
Given any query vector $h$, \cite{bln+20} first subtracts a constant from all entries of $h$ to obtain a shifted vector $h'$ that is orthogonal to the all-ones vector. Such constant shifts do not affect the heavy hitter problem, since $\mb_{\gbar} \allone_V = 0$. (More broadly, $\ker(\mb_{\gbar}) = \mathrm{span}(\allone_V)$.) They then solve the heavy hitter problem with the shifted vector $h'$ by using that a row of  $\mb_{\gbar}$ , e.g.,  $\indicVec{u} - \indicVec{v}$, can only be a heavy hitter if either $h'_u$ or $h'_v$ has absolute value greater than $\epsilon/2$. To find all heavy hitters, the algorithm of \cite{bln+20} checks the adjacent edges of all vertices for which $|h'_v| \geq \epsilon/2$. Checking the adjacent edges of one vertex can be done in $O(d)$ time, and in total this step takes $O(\sum_v d (h'_v)^2/\epsilon^{2}) = O(\epsilon^{-2} \cdot (h')^\top \md h') = \widetilde{O}(\|\mb_{\gbar} h\|_2^2\epsilon^{-2})$ time, as desired (where in the last equality we used \eqref{eq:Laplacian_approx}).

\paragraph{Expander Decompositions for Lossy Graphs.} Now let's consider applying an analogous approach to %
unweighted lossy graphs. With a slight notational overload, %
we consider the equivalent formulation where instead of working with loss factors $\gamma$ we work with %
\emph{flow multipliers} $\eta$, where $\eta \defeq \gamma^{-1} \geq \allone_{E}$, and denote the lossy graph as $G = (V,E,\eta)$. %
We define $\mb_{G}\in\R^{E\times V}$ as the incidence matrix of $G$ which has  $\indicVec{b_{e}} - \eta_{e} \indicVec{a_{e}}$ as row $e = (a_e, b_e)$ of $\mb_{G}$ where $\indicVec{a_{e}}$ and $\indicVec{b_{e}}$ are indicator vectors with $1$ on $a_{e}$ and $b_{e}$ respectively. These two formulations are equivalent, and we choose to use flow multipliers $\eta_e \geq \allone_{E}$ since it makes it cleaner to describe balanced lossy graphs where $\eta_e \leq 1 + \balloss$.
For a lossy graph $G = (V,E,\eta)$, we call $\mL_G = \mb_G^\top \mb_G$ the \emph{lossy Laplacian} of $G$. We denote by $\gbar=(V,E)$ the {\em smoothed graph} of $G$, which is a graph with no flow multipliers and the same vertices and edges as $G$ and let $\mL_{\gbar} \defeq  \mb_{\gbar}^\top \mb_{\gbar}$ denote the Laplacian of $\gbar$.

We first observe that the heavy hitter problem for a lossy graph $\mb_{G}$ is still decomposable: Given any edge decomposition $G = G_1 \cup G_2$, it suffices to find all heavy hitters of $\mb_{G_1} h$ and $\mb_{G_2} h$ separately. Our first attempt is to compute an expander decomposition on the smoothed graph $\gbar$, and use the same decomposition on our lossy graph $G$. However, even if the underlying smoothed graph $\gbar$ is an expander, the corresponding lossy Laplacian $\mL_G$ does not necessarily have a large enough gap between the least and second least eigenvalues $\lambda_{1}(\mL_G)$ and $\lambda_2(\mL_G)$. For example, if the incidence matrix of $G$ is $\mb_G = \begin{pmatrix}
1 & -\frac{1}{x} & 0\\
0 & 1 & -\frac{1}{x}\\
-\frac{1}{x} &  0 & 1\\
\end{pmatrix}$, then its underlying smoothed graph $\gbar$ has $\lambda_{1}(\mL_{\gbar}) = 0$ and $\lambda_2(\mL_{\gbar}) = 3$. However, the lossy Laplacian $\mL_G$ has $\lambda_{1}(\mL_G)$ and $\lambda_2(\mL_G)$ that both tend to $1$ as $x$ goes to infinity, and consequently a gap between $\lambda_{1}(\mL_G)$ and $\lambda_2(\mL_G)$ that goes to 0.

To overcome this obstacle, our first spectral result for lossy graphs is the following lemma that shows that if we have the extra condition where the flow multipliers $\eta_e$ are all relatively close to one, then the lossy graph indeed has a large enough spectral gap.
\begin{lemma}[Informal version of \Cref{lem:eta_to_not,lem:N_approx_Ivvt}]\label{lem:N_approx_Ivvt_informal}
If lossy graph $G = (V,E,\eta)$ satisfies $\allone_E\leq \eta \leq (1 + \balloss) \allone_E$ where $\balloss \leq 1/\polylogn$, and the underlying smoothed graph $\gbar$ is a $d$-regular expander with conductance at least $20 \balloss$, then the least eigenvalue $\lambda_{\min}$ of $\mL_G$ and its corresponding unit eigenvector $v_{\min}$ satisfy %
\begin{equation}
\label{eq:informal_balanced_approx}
\mL_G \approx_{\polylogn}  \md - d (1-\lambda_{\min}) v_{\min} v_{\min}^{\top}.
\end{equation}
\end{lemma}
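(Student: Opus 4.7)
The plan is to establish the spectral approximation by combining a structured perturbation analysis of $\mL_G$ against $\mL_{\gbar}$ with the Cheeger inequality for the expander $\gbar$. The crux is to observe that although $\|\mL_G - \mL_{\gbar}\|_{\mathrm{op}} = O(\balloss \cdot d)$ is too large for a naive Weyl bound to beat the $\Omega(d\phi^{2})$ spectral gap when $\balloss \ll \phi^{2}$, the perturbation couples favorably to the quadratic form $x^{\top}\mL_{\gbar}x$ through Cauchy--Schwarz, yielding a tight bound on $\lambda_2(\mL_G)$.

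First I would decompose $\mb_G = \mb_{\gbar} - \mn$, where row $e$ of $\mn$ equals $(\eta_e - 1)\,\indicVec{a_e}^{\top}$, so that $\|\mn\|_{\mathrm{op}}^{2} \le \balloss^{2}\cdot \max_v \dout_v \le \balloss^{2}d$ by the out-degree bound. This gives
\[
\mL_G \;=\; \mL_{\gbar} + \mvar{E},\quad\text{where}\quad \mvar{E}\;\defeq\;-\mb_{\gbar}^{\top}\mn - \mn^{\top}\mb_{\gbar} + \mn^{\top}\mn.
\]
Plugging in the test vector $\allone_V/\sqrt{n}$---for which $\mb_G\allone_V = \allone_E - \eta$ since $\mb_{\gbar}\allone_V = 0$---immediately yields $\lambda_{\min}(\mL_G) \le \|\eta - \allone_E\|_{2}^{2}/n \le \balloss^{2}d$. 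For the lower bound on $\lambda_2(\mL_G)$, take any unit vector $x \perp \allone_V$ and set $R \defeq x^{\top}\mL_{\gbar}x \ge \lambda_2(\mL_{\gbar}) = \Omega(d\phi^{2})$ by Cheeger. Then Cauchy--Schwarz gives
\[
|x^{\top}\mvar{E}x| \;\le\; 2\|\mb_{\gbar}x\|_{2}\|\mn x\|_{2} + \|\mn x\|_{2}^{2} \;\le\; 2\balloss\sqrt{dR} + \balloss^{2}d,
\]
and the hypothesis $\phi \ge 20\balloss$ forces $\balloss\sqrt{d/R}\le \sqrt{2}\balloss/\phi$ to be at most a small constant, so $x^{\top}\mL_G x \ge \Omega(R) = \Omega(d\phi^{2})$. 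Applying Courant--Fischer with the trial orthogonality vector $\allone_V$ then produces $\lambda_2(\mL_G) = \Omega(d\phi^{2}) = \Omega(d/\polylog)$.

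With the spectral gap in place, verifying $\mL_G \approx_{\polylog} \md - d(1-\lambda_{\min})v_{\min}v_{\min}^{\top}$ is routine. The upper bound $\mL_G \preceq O(d)\mI$ is immediate from the bounded row norms of $\mb_G$; on $v_{\min}^{\perp}$ both sides lie in $\Theta_{\polylog}(d)\|x\|_{2}^{2}$ by the spectral gap and the exact evaluation of $\md - d(1-\lambda_{\min})v_{\min}v_{\min}^{\top}$ there; and on $\mathrm{span}(v_{\min})$ the two sides give $\lambda_{\min}$ and $d\lambda_{\min}$ respectively, which agree up to a factor $d$ absorbed into $\polylog$ in the bounded-degree regime this paper operates in. The main obstacle is the $\lambda_2$ lower bound: the Weyl bound $\lambda_2(\mL_G)\ge \lambda_2(\mL_{\gbar}) - \|\mvar{E}\|_{\mathrm{op}}$ is vacuous when $\balloss \ll \phi^{2}$, so the Cauchy--Schwarz step replacing $\|\mvar{E}\|_{\mathrm{op}}$ by the $\sqrt{R}$-coupled bound is the essential move---precisely what lets the hypothesis $\phi \gtrsim \balloss$, rather than the quadratically stronger $\phi^{2}\gtrsim \balloss$, carry the argument through.
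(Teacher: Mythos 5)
Your Cauchy--Schwarz coupling argument for the lower bound on $\lambda_2$ is a genuinely different route from the paper's, and it is correct. The paper proves \Cref{lem:eta_to_not} by working with normalized Laplacians, splitting $\mL_G - \mL_{\gbar}$ into diagonal and off-diagonal parts, invoking Perron--Frobenius for the off-diagonal block, and concluding $\|\mn_G - \mn_{\gbar}\|_2 \le 10\balloss$; \Cref{lem:N_approx_Ivvt} then applies a straight Weyl bound $\lambda_2(\mn_G) \ge \lambda_2(\mn_{\gbar}) - 10\balloss$, which needs the assumption $\lambda_2(\mn_{\gbar}) \ge 20\balloss$, by Cheeger roughly $\phi^2 \gtrsim \balloss$. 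Your argument instead couples the perturbation directly to the quadratic form $R = x^\top \mL_{\gbar}x$: the bound $|x^\top(\mL_G - \mL_{\gbar})x| \le 2\balloss\sqrt{dR} + \balloss^2 d$ is correct, and combining it with $R \ge d\phi^2/2$ gets by with only $\phi \gtrsim \balloss$. The test-vector computation $\lambda_{\min}(\mL_G) \le O(\balloss^2 d)$ via $\mb_G\allone_V = \allone_E - \eta$ and the Courant--Fischer step over $\allone_V^\perp$ are also fine.

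However, your final combining step contains a real error. You observe that on $\mathrm{span}(v_{\min})$ the left side of \eqref{eq:informal_balanced_approx} gives $\lambda_{\min}$ while the right side gives $d\lambda_{\min}$, and patch this by claiming the factor $d$ is ``absorbed into $\polylog$ in the bounded-degree regime this paper operates in.'' The paper does not work in a bounded-degree regime---$d$ can be $\Theta(n)$---so this factor cannot be absorbed, and under your interpretation the claimed approximation would fail on $\mathrm{span}(v_{\min})$. The correct resolution, visible from the formal \Cref{lem:N_approx_Ivvt}, is that $\lambda_{\min}$ in \eqref{eq:informal_balanced_approx} denotes the least eigenvalue of the \emph{normalized} lossy Laplacian $\md^{-1/2}\mL_G\md^{-1/2}$ (which for $d$-regular $\gbar$ is $\frac{1}{d}\mL_G$ up to a $1 \pm O(\balloss)$ rescaling), not of $\mL_G$. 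With that reading one has $v_{\min}^\top\mL_G v_{\min} = d\lambda_{\min}$ and $v_{\min}^\top\bigl(\md - d(1-\lambda_{\min})v_{\min}v_{\min}^\top\bigr)v_{\min} = d\lambda_{\min}$ as well, so the two sides agree \emph{exactly} on $\mathrm{span}(v_{\min})$ and no factor of $d$ arises. The remainder of your argument---the eigenvalues of $\mL_G$ on $v_{\min}^\perp$ lying in $[\Omega(d\phi^2), O(d)]$ and matching the eigenvalue $d$ of the rank-one-deflated matrix there---then goes through without any unsupported degree assumption.
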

We use the term  \emph{balanced lossy expanders} to refer to  such unweighted lossy graphs that have both a small flow multiplier, as well as an underlying smoothed graph that is an expander. %
We show that any lossy graph can be decomposed into a collection of such balanced lossy expanders by partitioning its edges, applying a suitable vertex scaling, and performing expander decomposition to the underlying smoothed graph.
The total number of vertices across all subgraphs in this decomposition is $O(n \cdot \polylog(nW))$ %
(more formal statements can be found in \Cref{sec:general_heavy_hitters}, where edge partition and vertex scaling are shown in \Cref{sec:heavy_hitter_general}, and expander decomposition is shown in \Cref{sec:heavy_hitter_balanced}). 
As such, from now on we focus on balanced lossy expanders. 

Leveraging \eqref{eq:informal_balanced_approx}, if we could somehow explicitly compute the exact eigenvector $v_{\min}$ and $\mb_G v_{\min}$ in each iteration, with the latter given in sorted order, then the following natural generalization of the algorithm in \cite{bll+21,bln+20} would solve the heavy hitter problem for $\mb_G$.
Given a query vector $h$, the algorithm projects $h$ onto the direction of $v_{\min}$ rather than the all-ones vector as before, i.e., $h^{v_{\min}} = v_{\min}v_{\min}^\top h$, and the algorithm also computes the component orthogonal to $v_{\min}$, i.e., $h^{\perp v_{\min}} = (\mI -v_{\min}v_{\min}^\top)h$. The algorithm then finds the heavy hitters of these two vectors separately. In the direction perpendicular to $v_{\min}$, similar as before, the algorithm checks the adjacent edges of all vertices $i$ for which $|(h^{\perp v_{\min}})_i| \geq \epsilon/2$. This step takes time $O((h^{\perp v_{\min}})^{\top} \md h^{\perp v_{\min}})$, which can be bounded by the spectral approximation in \Cref{lem:N_approx_Ivvt_informal}: $(h^{\perp v_{\min}})^{\top} \md h^{\perp v_{\min}} \approx_{\polylogn} (h^{\perp v_{\min}})^\top \mL_G h^{\perp v_{\min}} = \|\mb_{G} h^{\perp v_{\min}}\|_2^2 \leq \|\mb_G h \|_2^2$. In the direction of $v_{\min}$, since an appropriately sorted $\mb_G v_{\min}$ has already been computed, finding the heavy hitters of $\mb_G h^{v_{\min}}$ is straightforward, as $\mb_G h^{v_{\min}}$ is simply a scalar multiple of $\mb_G v_{\min}$.%

\paragraph{Difficulties Generalizing.} Unfortunately, this approach encounters two immediate obstacles.

First, the smallest eigenvector $v_{\min}$ is not known. For lossy graphs, the smallest eigenvector is not necessarily the all-ones vector as it is for graphs. The eigenvector $v_{\min}$ can be computed to $\epsilon$ accuracy using the power method and lossy Laplacian solvers in $\wt{O}(m \log(1/\epsilon))$ time (see \Cref{thm:power_method}). However, this runtime is prohibitively expensive for each heavy hitter query. An alternative would be to efficiently maintain a coarser approximation of $v_{\min}$, however this would require care regarding the level of approximation. For example, it is no longer necessarily true that $\|\mb_{G} h^{\perp v}\|_2^2 \leq O(\|\mb_G h \|_2^2)$, when $v$ is only a coordinate-wise multiplicative $(1\pm 1/\polylog)$ approximation to the smallest eigenvector $v_{\min}$. 

Second, the vector $\mb_G v_{\min}$ is also not known. Even if $v_{\min}$ were given explicitly, computing $\mb_G v_{\min}$ would still take $O(m)$ time per iteration, as the eigenvector changes from iteration to iteration. %
One way of getting around this problem is to maintain an approximation to $\mb_G v_{\min}$ explicitly, but this would require some type of stability on $v_{\min}$ that allows us to either sparsely update our approximation $v$ in each iteration, or only update our entire approximation infrequently.

\subsection{Heavy Hitters on Balanced Lossy Expanders}\label{sec:tech_heavy_hitter_balanced_expander}
In this section we show how to overcome the obstacles mentioned above to solve the heavy hitter problem on balanced lossy expanders. %

\paragraph{Approximation Requirement of $v$.}%
We first address the accuracy required in our approximation $v$ of the smallest eigenvector $v_{\min}$ for our generalization of the algorithm in \cite{bln+20} to work. Recall that this algorithm finds the heavy hitters of $\mb_G \hpv$ and $\mb_G\hv$ for $h^{v} = vv^\top h$ and $h^{\perp v} = (\mI -vv^\top)h$. This algorithm is able to correctly output all heavy hitters of $\mb_G h$, because if $|(\mb_G h)_i| \geq \epsilon$, we must have that either $|(\mb_G h^{\perp v})_i| \geq \epsilon/2$ or $|(\mb_G h^{v})_i| \geq \epsilon/2$. So we focus on bounding the time complexity of this algorithm.

When finding the heavy hitters of $\mb_G h^{\perp v}$, by a similar argument as before, checking all neighbours of each vertex $i$ where $h^{\perp v}_i \geq \epsilon/2$ requires $O(({\hpv})^\top \md \hpv\epsilon^{-2})$ time. We therefore seek a $v$ for which $O(({\hpv})^\top \md \hpv\epsilon^{-2})$ is bounded by $\widetilde{O}(\|\mb_G h\|_2^2 \epsilon^{-2})$, the budget that we have for one heavy hitter query. 
In particular, we would like to find a unit vector $v$ satisfying 
\[
(h^{\perp v})^\top \md h^{\perp v}
\leq \polylogn \cdot h^\top  \mL_G h
\text{ for all }
h,
\]
or equivalently that

\begin{equation}\label{eq:tech_overview_spectral_approx}
    \md - dvv^\top \preceq {\polylogn}\cdot\mL_G.
\end{equation}

When $\lambda_2(\mL_G) \geq \frac{d}{\polylogn}$, we have that $v = v_{\min}(\mL_G)$ satisfies the requirement in \Cref{eq:tech_overview_spectral_approx}. %
Our next spectral result shows that perhaps surprisingly, \Cref{eq:tech_overview_spectral_approx} holds whenever $v$ has a reasonably small Rayleigh quotient $v^{\top} \mL_G v \leq \polylogn \cdot \lambda_{\min}(\mL_G)$.  %

\begin{theorem}[Spectral approximation from low Rayleigh quotient, informal version of \Cref{thm:spectral_sparsification}]\label{thm:spectral_sparsification_informal}
If $G$ is a balanced lossy expander and unit vector $v$ satisfies $v^\top \mL_G v \leq \polylogn \cdot \lambda_{\min}(\mL_G)$, then
\[
    \mL_G \approx_{\polylogn} \md - (1 - \lambda)dvv^\top
    \text{ for }
    \lambda = v^\top \mL_G v\,.
\]
\end{theorem}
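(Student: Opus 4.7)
My plan is to reduce to \Cref{lem:N_approx_Ivvt_informal}, which provides the analogous spectral approximation using the actual smallest eigenvector $v_{\min}$. The strategy is to show that whenever $v$ has small Rayleigh quotient, the rank-one correction $(1-\lambda) d v v^\top$ is spectrally close---up to a $\polylogn$ factor---to $(1-\lambda_{\min}) d v_{\min}v_{\min}^\top$, so that composing the two approximations finishes the proof.

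To formalize this, I first decompose $v = \alpha v_{\min} + \beta w$ where $w$ is a unit vector orthogonal to $v_{\min}$ and $\alpha^2 + \beta^2 = 1$. Because $v_{\min}$ is an eigenvector of $\mL_G$ and $w \perp v_{\min}$,
\[
\lambda = v^\top \mL_G v = \alpha^2 \lambda_{\min} + \beta^2 (w^\top \mL_G w) \ge \beta^2 \lambda_2(\mL_G).
\]
Reading off the second eigenvalue of the right-hand side of \Cref{lem:N_approx_Ivvt_informal} shows that $\lambda_2(\mL_G) = \Omega(d/\polylogn)$ for balanced lossy expanders. Combined with the hypothesis $\lambda \le \polylogn \cdot \lambda_{\min}$, this yields the quantitative closeness $\beta^2 = O(\polylogn \cdot \lambda_{\min}/d)$.

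I then expand the rank-two perturbation
\[
(1-\lambda)vv^\top - (1-\lambda_{\min})v_{\min}v_{\min}^\top = A_1\, v_{\min}v_{\min}^\top + A_2(v_{\min}w^\top + wv_{\min}^\top) + A_3\, ww^\top,
\]
where $A_1 = (1-\lambda)\alpha^2 - (1-\lambda_{\min})$, $A_2 = (1-\lambda)\alpha\beta$, and $A_3 = (1-\lambda)\beta^2$, which satisfy $|A_1| = O(\polylogn \cdot \lambda_{\min})$, $|A_2| = O(\sqrt{\polylogn \cdot \lambda_{\min}/d})$, and $|A_3| = O(\polylogn \cdot \lambda_{\min}/d)$ by Step~1. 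For any test vector $h$, I write $h = c_1 v_{\min} + c_2 w + h^{\perp\perp}$ with $h^{\perp\perp}$ orthogonal to $\mathrm{span}(v_{\min},w)$. The diagonal contributions $d A_1 c_1^2$ and $d A_3 c_2^2$ are bounded term-wise by $\polylogn$ times the lower bound $h^\top \mL_G h \ge \lambda_{\min} c_1^2 + \lambda_2(\mL_G)\, c_2^2$, so each separately fits inside the budget.

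The main obstacle is the cross term $2 d A_2 c_1 c_2$: a naive operator-norm bound gives $d|A_2| = O(\sqrt{d \lambda_{\min} \polylogn})$, which exceeds $\polylogn \cdot \lambda_{\min}$ whenever $\lambda_{\min} \ll d$, so a direct comparison to $\mL_G$ fails. The fix is to apply the weighted AM-GM inequality $2|c_1 c_2| \le c_1^2/\epsilon + \epsilon\, c_2^2$ with $\epsilon$ tuned so that the $c_1^2$ contribution fits inside the budget $\polylogn \cdot \lambda_{\min}\, c_1^2$ while the $c_2^2$ contribution fits inside the budget $\polylogn \cdot (d/\polylogn)\, c_2^2$; this is feasible precisely because $|A_2|^2 = O(\polylogn\cdot\lambda_{\min}/d)$ exactly matches the product of the two eigenvalue gaps. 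Composing the resulting approximation $\md - (1-\lambda) d v v^\top \approx_{\polylogn} \md - (1-\lambda_{\min}) d v_{\min}v_{\min}^\top$ with \Cref{lem:N_approx_Ivvt_informal} then completes the proof.
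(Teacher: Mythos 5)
Your proposal is correct and follows the same high-level strategy as the paper: reduce to \Cref{lem:N_approx_Ivvt} (the exact-eigenvector spectral approximation), show that the hypothesis $v^\top \mL_G v \le \polylogn\cdot\lambda_{\min}$ forces $v$ to be close to $v_{\min}$ (your $\beta^2 = O(\polylogn\cdot\lambda_{\min}/d)$ is exactly the paper's \Cref{lem:bound_on_v_topv}, since $\beta^2 = 1-(v^\top v_{\min})^2$), and then argue that the rank-one correction built from $v$ is spectrally equivalent, up to $\polylogn$, to the one built from $v_{\min}$. Where you diverge is in the last step. The paper's \Cref{lem:rayleigh_implies_spectral} fixes a common scalar $\lambda$ in both rank-one corrections, reduces the comparison to the two-dimensional subspace $\mathrm{span}(v,v_{\min})$, and solves an explicit quadratic for the eigenvalues of the matrix pencil $\mm_1^{-1/2}(\mm_2-\mm_1)\mm_1^{-1/2}$; the change $\lambda_1 \to \lambda$ is then handled in a separate step (Eq.~\eqref{eq:thm_3_1_eq_3}). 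You instead expand $(1-\lambda)vv^\top - (1-\lambda_1)v_{\min}v_{\min}^\top$ in the $(v_{\min},w)$ basis, track the coefficients $A_1,A_2,A_3$, and budget each term of the quadratic form against the lower bound $h^\top\mL_G h \ge \lambda_{\min}c_1^2 + \lambda_2 c_2^2$, using weighted AM--GM to split the cross term; this handles the change of direction and the change from $\lambda_1$ to $\lambda$ simultaneously. Your feasibility check for the AM--GM weight is the right one: it reduces to $d|A_2|^2 \lesssim \polylogn\cdot\lambda_{\min}$, which is exactly what $\beta^2 \lesssim \polylogn\cdot\lambda_{\min}/d$ provides. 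The two routes are equivalent in power; the paper's eigenvalue computation yields cleaner explicit constants (e.g., the factor $3c$ in \Cref{lem:rayleigh_implies_spectral}), while your term-by-term bound is more elementary and makes the role of the spectral gap $\lambda_2 = \Omega(d/\polylogn)$ transparent. One small caveat in the informal setting you worked in: writing $d A_1 c_1^2$ against a budget of $\lambda_{\min}c_1^2$ only closes if $\lambda_{\min}$ here is $\lambda_{\min}(\mL_G)$ (so $|A_1| \lesssim \polylogn\cdot\lambda_{\min}/d$ once you account for the normalization), which is a scaling ambiguity of the informal statement rather than an error in your reasoning; everything lines up cleanly in the normalized formulation of \Cref{thm:spectral_sparsification}.
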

\Cref{thm:spectral_sparsification_informal} shows that in the heavy hitter algorithm it suffices to use a unit vector $v$ whose Rayleigh quotient $v^\top \mL_G v$ is within a $\polylogn$ factor from the least eigenvalue $\lambda_{\min}$. Another equivalent statement of this requirement is that $v$ needs to be very aligned with $v_{\min}$, more precisely, $1-(v^{\top} v_{\min})^2 \leq \polylogn \cdot \lambda_{\min}$. %
The relationship of these two requirements are shown in \Cref{lem:bound_on_v_topv} and \Cref{lem:bound_on_v_topv_other_direction}.

\Cref{thm:spectral_sparsification_informal} suggests a natural approach to maintain the approximate eigenvector $v$: simply use the same $v$ until the Rayleigh quotient $v^\top \mL_G v$ is no longer smaller than $\polylogn \cdot \lambda_{\min}$. Note that in this approach it is also easy to find the heavy hitters in $\mb_G \hv$ because we can explicitly maintain $\mb_G v$ together with $v$. The number of heavy hitters in $\mb_G\hv$, which also determines the running time of this step, is $\|\mb_G\hv\|_2^2\epsilon^{-2}$, and it can be bounded by our budget $\wt{O}(\|\mb_G h\|_2^2 \epsilon^{-2})$ using the triangle inequality $\|\mb_G\hv\|_2^2 \leq O(\|\mb_Gh\|_2^2 + \|\mb_G\hpv\|_2^2)$, and \Cref{thm:spectral_sparsification_informal} that $\|\mb_G \hpv\|_2^2 \leq \wt{O}\big((\hpv)^{\top} (\md - (1-\lambda) d v v^{\top}) \hpv\big) = \wt{O}\big(h^{\top} (\md - d v v^{\top}) h\big) \leq \wt{O}(\|\mb_G h\|_2^2)$.

\paragraph{Structure of Eigenvector $v_{\min}$.}%

To effectively follow this approach, we need to handle that whenever a sufficient number of vertex deletions occur our algorithms need to perform a type of re-normalization. This re-normalization can potentially increase the Rayleigh quotient $v^{\top} \mL_G v$ for the re-normalized approximate smallest eigenvector $v$. We show however that this increase is fairly limited and to do this we prove another another structural result about $v_{\min}$ of potential independent interest. In particular, we show that $v_{\min}$  of a balanced lossy expanders is coordinate-wise close to the all-ones vector.

\begin{theorem}[Uniformity of $v_{\min}$, informal version of \Cref{thm:unifomity_v}]\label{thm:unifomity_v_informal}
If $G$ is a balanced lossy expander, then
\begin{align*}
     \frac{1}{\sqrt{n}} \cdot \left(1-1/\polylogn\right) \leq (v_{\min}(\mL_G))_i \leq \frac{1}{\sqrt{n}} \cdot (1+1/\polylogn)
     \text{ for all } 
     i \in [n]\,.
\end{align*}
\end{theorem}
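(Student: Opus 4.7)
The plan is to prove the coordinate-wise bound in three stages: upper-bound $\lambda_{\min}(\mL_G)$; deduce $\ell_2$-closeness of $v_{\min}$ to $\mathbf{1}/\sqrt{n}$; and upgrade this to the desired pointwise $\ell_\infty$ bound.

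For Stage~1, I would use $\mathbf{1}/\sqrt{n}$ as a test vector in the Rayleigh characterization:
\[
\lambda_{\min}(\mL_G) \;\leq\; \frac{\mathbf{1}^\top \mL_G \mathbf{1}}{n} \;=\; \frac{1}{n}\sum_{e\in E}(1-\eta_e)^2 \;\leq\; \frac{m\,\balloss^2}{n} \;=\; O(d\,\balloss^2),
\]
which is $O(d/\polylogn)$ under the assumption $\balloss\leq 1/\polylogn$, and in particular verifies the low-Rayleigh-quotient hypothesis needed to invoke \Cref{thm:spectral_sparsification_informal} at $v=v_{\min}$. For Stage~2, I would apply the spectral approximation $\mL_G \approx_{\polylogn} \md - d(1-\lambda_{\min})\,v_{\min}v_{\min}^\top$ from \Cref{lem:N_approx_Ivvt_informal} to the test vector $\mathbf{1}/\sqrt{n}$. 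Writing $\alpha = \mathbf{1}^\top v_{\min}/\sqrt n$ and using $\md = d\mI$, one side of the approximation gives $d - d(1-\lambda_{\min})\alpha^2 \leq \polylogn \cdot O(d\,\balloss^2) = O(d/\polylogn)$, hence $\alpha^2\geq 1-O(1/\polylogn)$ and $\|v_{\min}-\alpha\mathbf{1}/\sqrt n\|_2^2 = 1-\alpha^2 \leq O(1/\polylogn)$.

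For Stage~3, I would analyze the eigenvalue equation $\mL_G v_{\min} = \lambda_{\min} v_{\min}$ vertex by vertex. Writing $(v_{\min})_a = \alpha/\sqrt n + f_a$ and expanding the per-vertex equation gives a relation expressing $(d - \lambda_{\min} + O(d\,\balloss))\,f_a$ as a $(1+O(\balloss))$-weighted sum of the $f_b$ over the neighbours $b$ of $a$ in $\bar G$, plus an inhomogeneous term of order $O(d\,\balloss/\sqrt n)$ arising from the imbalance $(\mL_G \mathbf{1})_a$. Combining this local relation with the $\ell_2$ control on $f$ from Stage~2, and pointwise bounds on entries of $(\mL_{\bar G})^+$ available for expanders (for instance via bounds on effective resistances in $\bar G$), I would deduce the target pointwise bound $\|f\|_\infty \leq 1/(\sqrt n\,\polylogn)$.

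\textbf{Main obstacle.}
The hardest step is the $\ell_2 \to \ell_\infty$ upgrade. A naive bootstrap that expresses $f_a$ as the average of $f$ at neighbours plus $O(\balloss/\sqrt n)$ noise has multiplicative amplification $(1+O(\balloss))$ per iteration and therefore fails to contract, so local averaging alone cannot yield the pointwise bound. I therefore expect the argument to need a more global input: combine the spectral gap $\lambda_2(\mL_{\bar G}) \geq d\phi^2/2$ from Cheeger applied to $\bar G$ with the perturbation bound $\|\mL_G - \mL_{\bar G}\|_2 = O(d\,\balloss)$, and exploit the expander structure of $\bar G$ to convert a spectral-side estimate on $v_{\min} - \alpha\mathbf{1}/\sqrt n$ into the required coordinate-wise estimate, delicately balancing the loss coming from $\polylogn$-size factors in the spectral approximation against the smallness of $\balloss$.
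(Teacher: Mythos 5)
Your Stages~1--2 only give $\ell_2$ control $\|v_{\min} - \alpha\allone/\sqrt{n}\|_2 = O(\polylogn^{-1/2})$, which is short of the target by a $\sqrt{n}$ factor: the $\ell_2$ bound alone gives nothing better than $|(v_{\min})_i - 1/\sqrt{n}| = O(\polylogn^{-1/2})$ pointwise, whereas the theorem wants $O(\polylogn^{-1}/\sqrt{n})$. Stage~3 thus carries the whole weight of the statement, and as written it contains two problems. First, your diagnosis that local averaging ``fails to contract'' is too pessimistic: averaging over neighbours in the $\phi$-expander $\gbar$ does contract mean-zero functions at a rate governed by the spectral gap, and over $T = O(\log n / \phi^2)$ rounds this contraction dominates the accumulated $(1+O(\balloss))^T$ amplification precisely because of the hypothesis $\balloss \leq \phi^2 / \polylogn$. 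Second, your proposed fix --- pointwise bounds on $(\mL_{\gbar})^{+}$ via effective resistances --- does not close the loop: effective resistances only yield $|((\mL_{\gbar})^{+})_{ab}| = O(1/(d\phi^2))$ per entry, so summing over a row loses a factor of $n$, again $\sqrt{n}$ too big. What is actually needed is a bound on the $\ell_\infty\to\ell_\infty$ operator norm of $(\mL_{\gbar})^{+}$, i.e., a max-row $\ell_1$ bound of the form $O(\log n / (d\phi^2))$, which requires expanding $(\mL_{\gbar})^{+}$ as a sum of lazy random-walk powers and truncating at the mixing time; none of this appears in the proposal. So Stage~3 is a genuine gap.

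The paper proves \Cref{thm:unifomity_v} by a genuinely different and more robust route that never introduces an $\ell_2$ intermediary or a Green's-function estimate. Working with $z = v_{\min}/\sqrt d$, the two sweep-cut lemmas (\Cref{lem:v_unifomity_one_step} and \Cref{lem:v_unifomity_one_step_increase}) show that summing the eigenvalue equation $\mL_G z = \lambda \md z$ over $S_{\geq\zeta}(z)$ and invoking the expansion of $\gbar$ forces the volume of $S_{\geq\zeta'}(z)$ to exceed that of $S_{\geq\zeta}(z)$ by a factor $(1+\phi/2)$ once $\zeta$ is lowered to $\zeta'=(1-O((\balloss+c\lambda)/\phi))\zeta$, with a symmetric statement for $S_{\leq\zeta}(z)$ under a raised threshold. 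Iterating $O(\log m/\phi)$ times from the top and bottom vertices until each set covers more than half the volume, and then intersecting them, gives $\max_i z_i / \min_i z_i \leq \exp(O((\balloss+c\lambda)\log m / \phi^2))$ directly. This combinatorial argument sidesteps any row-sum estimate on the pseudoinverse, handles non-regular degrees through the $\md^{-1/2}$ normalization with no extra work, and keeps the dependence on $\balloss$, $\lambda$, and $\phi$ explicit.
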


\Cref{thm:unifomity_v_informal} shows that the least eigenvector $v_{\min}$ of a balanced lossy expander behaves similarly to that of a graph, which is always the all-ones vector. One might wonder if this coordinate-wise approximation to $\frac{\allone_V}{\sqrt{n}}$ would allow us to apply the same heavy hitter algorithm used for graphs to balanced lossy expanders. Unfortunately, this is not the case; \Cref{thm:unifomity_v_informal} only guarantees that $1 - (v_{\min}^\top \frac{\allone_V}{\sqrt{n}})^2 \leq 1/\polylogn$, whereas our algorithm requires an approximate vector $v$ satisfying  $1 - (v_{\min}^\top v)^2 \leq \polylogn \cdot \lambda_{\min}$ (as in \Cref{thm:spectral_sparsification_informal}).

\paragraph{Heavy Hitters on Balanced Lossy Expanders.} 
We now have all the spectral results needed to design and analyze our heavy hitter algorithm on balanced lossy expanders. %
Our algorithm uses a reweighting technique together with dynamic expander decomposition to maintain balanced lossy expanders as the graph evolves. This decomposition algorithm routinely deletes edges and vertices, and our goal is to support heavy hitter queries of $\mb_G$ under such deletions. Our heavy hitter algorithm maintains an approximate eigenvector $v \in \R^n$ of the Laplacian $\mL_G$ throughout all updates, and we begin by explaining how to maintain this vector $v$.

Instead of finding all heavy entries of $\mb_G h$, we introduce a modified problem $\mb'$ to ensure that the smallest eigenvalue of the associated Laplacian $\mL' \defeq \mb'^{\top} \mb'$ is bounded away from zero. Concretely, we define
\[
\mb' \defeq \begin{bmatrix}\mb_G \\ \epsad \mI\end{bmatrix}, \text{ where }\epsad \defeq \poly(\delta/(mW)),
\]
 and consider the heavy hitter problem of $\mb' h$. The key advantage of this modified problem is that the modified Laplacian $\mL'$ now satisfies $\lambda_{\min}(\mL') \geq \epsad$. This lower bound guarantees that $\lambda_{\min}(\mL')$ can decrease by a constant factor for at most $O(\log(\epsad^{-1}))$ times, which is crucial for bounding the number of times to recompute the approximate eigenvector $v$. Since $\epsad$ is chosen sufficiently small, replacing $\mb_G$ with $\mb'$ only adds an additional $\delta/10$ error to the final LP solution. 
 
Our algorithm only recomputes $v$ when the condition, $v^\top \mL' v \leq \polylogn \cdot \lambda_{\min}(\mL')$ from \Cref{thm:spectral_sparsification_informal}, no longer holds. 
Let $\mL'^{\old}$ denote the Laplacian at the last time $v$ was updated, where $v = v_{\min}(\mL'^{\old})$, and note that $v^\top \mL'^{\old} v = \lambda_{\min}(\mL'^{\old})$. After each edge deletion, the Rayleigh quotient can only decrease. After each vertex deletion, our algorithm restricts $v$ onto the new vertex support, and re-normalizes it so that it remains a unit vector. This re-normalization does not increase the Rayleigh quotient by more than a constant factor as a result of \Cref{thm:unifomity_v}. Our algorithm restarts whenever more than half of the edges are deleted, and \Cref{thm:unifomity_v} ensures that $v$ is close to the all-ones vector, so no set of deleted vertices can be too large in $v$, i.e., $\sum_{i \in S} v_i^2$ of the deleted vertices $S$ is bounded by a constant. %
 More formally, we show that $v^\top \mL' v \leq 3 \cdot v^\top \mL'^{\old} v$. 
 Therefore, whenever the algorithm recomputes $v$, we must have
\[
\lambda_{\min}(\mL') < \frac{1}{\polylogn} \cdot v^{\top} \mL' v \leq \frac{3}{\polylogn} \cdot v^\top \mL'^{\old} v = \frac{3}{\polylogn} \cdot \lambda_{\min}(\mL'^{\old}). 
\]
This inequality can occur at most $O(\log(\epsad^{-1}))$ times since $\lambda_{\min}(\mL') \geq \epsad$.

One last remaining issue is that it is computationally expensive to directly compute the least eigenvalue in order to check if $v^\top \mL' v \leq \polylogn \cdot \lambda_{\min}(\mL')$ still holds. We instead test whether the norm $\|\md' \hpv\|_2$ exceeds our budget $\|\mb' \hpv\|_2$ by using a Johnson-Lindenstrauss sketch $\mj\mb'$. By \Cref{thm:spectral_sparsification_informal}, if $\|\md' \hpv\|_2 > \polylogn \cdot \|\mb' \hpv\|_2$, then the spectral approximation $\mL_G \approx_{\polylogn} \md - (1 - \lambda)dvv^\top$ no longer holds, and so the Rayleigh quotient must have decreased significantly.

Having established how to maintain the approximate eigenvector $v$, next we explain how to use it in the heavy hitter queries. Given a query vector $h$, we decompose it as $\hv = v v^{\top} h$ and $\hpv = (\mI - v v^{\top}) h$, and compute the heavy hitters of both $\mb_G \hpv$ and $\mb_G \hv$.
\begin{itemize}
\item For $\mb_G \hpv$, we check all neighbours of each vertex $i$ where $\hpv_i \geq \epsilon/2$, and this step takes $O(\|\md \hpv\|_2^2 \epsilon^{-2}) \leq \wt{O}(\|\mb_G h\|_2^2 \epsilon^{-2})$ time. 
\item For $\mb_G \hv$, we compute $\mb_G v$ explicitly whenever $v$ is updated and use this precomputed vector to find the heavy entries of $\mb_G \hv$. This step takes $O(\|\mb_G \hv\|_2^2 \epsilon^{-2}) \leq \wt{O}(\|\mb_G h\|_2^2 \epsilon^{-2})$ time.
\end{itemize}
This covers the central ideas of our heavy hitter data structure for balanced lossy expanders. Our spectral results and full algorithm provided in \Cref{sec:v_uniformity}, \ref{sec:spectral_approximation}, and \ref{sec:heavy_hitters_balanced_expanders} are slightly more complex in how they handle arbitrary degrees, and work with the normalized Laplacian instead of the Laplacian.

\subsection{Heavy Hitters on Two-Sparse Matrices}\label{sec:tech_heavy_hitter_two_sparse}

In this section, we discuss in greater detail how to build upon our heavy hitter data structure for balanced lossy Laplacians to obtain a heavy hitter data structure for general two-sparse matrices, using a sequence of reductions shown in \Cref{fig:chain_reductions}.

We first reduce the problem of heavy hitters on general two-sparse matrices to the problem of heavy hitters on two-sparse matrices with a positive and negative entry in each row, i.e., incidence matrix of general weighted lossy graphs. This reduction is standard and is similar to the one in \cite{h04}; the formal result is given in \Cref{lem:reduction_two_sparse_to_lossy_graph}. 

Next we reduce the heavy hitter problem on general weighted lossy graphs to that on balanced lossy graphs, which are unweighted lossy graphs with bounded flow multipliers (\Cref{sec:heavy_hitter_general}). To do this, we first partition the edges of the weighted lossy graph so that, in each subgraph, all edge weights are within a constant factor of each other. Each subgraph can then be treated as an unweighted lossy graph, with a single scalar scaling factor applied later. We further decompose each unweighted subgraph into bipartite components with all edges oriented from left to right, bucket the edges based on their flow multipliers, and rescale the vertices so that all flow multipliers lie in the range $\allone_{E} \leq \eta \leq (1+ \beta) \allone_{E}$ for a sufficiently small $\beta$. 

Finally, through the use of the dynamic expander maintenance data structures in \cite{sw21}, we reduce the heavy hitter problem on balanced lossy graphs, to the heavy hitter problem on balanced lossy expanders (\Cref{sec:heavy_hitter_balanced}). The heavy hitter algorithm on balanced lossy expanders requires that the vertex degrees remain approximately regular. To maintain this property, we delete vertices once their degrees fall below a $\polylogn$ fraction of their initial degrees. We show that this degree preservation process, combined with the expander pruning, results in at most $\polylogn$ overhead.

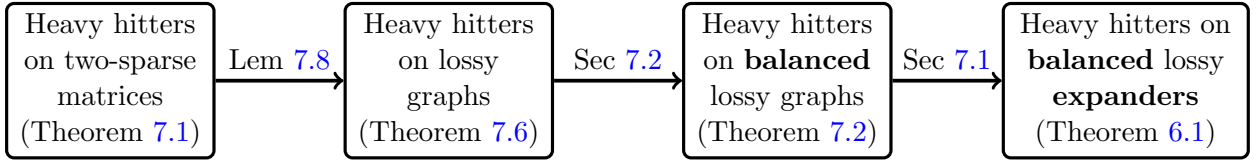
\begin{figure}[!ht]
\centering
\begin{tikzpicture}
    \node (A) [draw, rectangle,rounded corners=1mm, minimum width=2cm, very thick, node distance=4.5cm, minimum height=1cm,text width=2.5cm,align=center] {Heavy hitters on two-sparse matrices\\ (\Cref{thm:heavy_hitter_two_sparse_general})};
    \node (B) [draw, rectangle,rounded corners=1mm,right of=A, minimum width=2cm, very thick, node distance=4.5cm, minimum height=1cm,text width=2.5cm,align=center] {Heavy hitters on lossy graphs \\ (\Cref{thm:heavy_hitter_general})};
    \node (C) [draw, rectangle,rounded corners=1mm, right of=B,very thick,node distance=4.5cm, text width=2.5cm,align=center] {Heavy hitters on \textbf{balanced} lossy graphs \\(\Cref{thm:heavy_hitter_balanced})};
    \node (D) [draw, rectangle,rounded corners=1mm, right of=C,very thick, node distance=4.5cm, text width=3cm,align=center] {Heavy hitters on \\\textbf{balanced} lossy \textbf{expanders} \\(\Cref{thm:heavy_hitter_expander})};

    \draw[->, line width=0.5mm] (A) -- (B) node[midway, above] {Lem~\ref{lem:reduction_two_sparse_to_lossy_graph}};
    \draw[->, line width=0.5mm] (B) -- (C) node[midway, above] {Sec~\ref{sec:heavy_hitter_general}};
    \draw[->, line width=0.5mm] (C) -- (D) node[midway, above] {Sec~\ref{sec:heavy_hitter_balanced}};
\end{tikzpicture}
\caption{An illustration of the chain of reductions for the heavy hitter data structures.}\label{fig:chain_reductions}
\end{figure}

\subsection{Other Data Structures}\label{sec:tech_combine}

We conclude the overview by describing how we implement the other two data structures, i.e., the sampler and inverse maintenance, required by the IPM framework of \cite{bll+21}.  (The formal definition of inverse maintenance and our result can be found in \Cref{def:inverse_maintenance} and \Cref{thm:inverse_maintenance}.)

\paragraph{Sampler.} This data structure for a two-sparse matrix $\ma \in \R^{m\times n}$ is required to support, for any query vector $h$, sampling each $e \in [m]$ independently with probability
\[
p_e \geq \frac{m}{\sqrt{n}} \frac{(\ma h)_e^2}{\|\ma h\|_2^2}
\]
in $\wt{O}(\frac{m}{\sqrt{n}}) \cdot \polylog(\frac{W}{\delta})$ time. %
Similar to prior results \cite{bln+20,bll+21}, we implement the sampler using the same approach as the heavy hitter data structure.  
We support the sampling operation using a similar chain of reductions as the heavy hitter operation, so that it reduces to sampling an edge of a balanced lossy expander $G$ with probability $p_e' \geq C \cdot (\mb_{G} h)_e^2$, and estimating the $\ell_2$ norm $\|\mb_{G} h\|_2$. 
Using the approximate eigenvector $v$ that is maintained as in \Cref{sec:tech_heavy_hitter_balanced_expander} that satisfies $(\md - (1-\lambda) v v^{\top}) \approx_{\polylogn} \mL_{G}$, we again decompose the query vector $h = \hv + \hpv$ where $\hv$ is in the direction of $v$, and $\hpv$ is orthogonal to $v$. 

To sample an edge $e = (i,j)$, our spectral approximation results ensures that 
\[
(\mb_{G} h)_e^2 \approx (\mb_{G} \hv)_e^2 + \frac{(\hpv_i)^2}{d_i} + \frac{(\hpv_j)^2}{d_j}.
\]
Since the algorithm already maintains $\mb_{G} v$, we can sample from the first term $(\mb_{G} \hv)_e^2$ efficiently. For the second term $\frac{(\hpv_i)^2}{d_i} + \frac{(\hpv_j)^2}{d_j}$, we sample uniformly each incident edge of every vertex $i$ with probability $\frac{(\hpv_i)^2}{d_i}$. The total time is proportional to the expected number of edges sampled.

To estimate the $\ell_2$ norm, the spectral approximation also implies that %
\[
\|\mb_{G} h\|_2^2 \approx \|\mb_{G} \hv\|_2^2 + (\hpv)^{\top} \md \hpv\,.
\]
We can compute the first term using the maintained value of $\|\mb_{G} v\|_2^2$, and compute the second term in $O(n)$ time.

The formal definition of the sampler and our result regarding an efficient sampler can be found in \Cref{def:heavy_sampler} and \Cref{cor:sampler}.

\paragraph{Inverse Maintenance.} This data structure supports approximately solving a linear system with $\ma^{\top} \mv \ma$ in $\wt{O}(n) \cdot \polylog(\frac{W}{\delta})$ time. We follow the same idea as \cite{bll+21} that uses leverage score sampling to maintain a $\wt{O}(n)$-sparse diagonal matrix $\ms$ such that $\ma^{\top} \ms \ma \approx \ma^{\top} \mv \ma$. We then use the M-matrix solver of \cite{ds08} (see \Cref{thm:fastmsolver}) to solve this lossy Laplacian system in $\wt{O}(n) \cdot \polylog(\frac{W}{\delta})$ time.

\section{Preliminaries}\label{sec:prelim}

\paragraph{Vectors.}
We use $\indicVec{i}$ to denote the $i$-th standard unit vector, and we use $\allone_n$ and $\allzero_n$ to denote all-ones and all-zeros vectors in $\R^{n}$.
When the dimension of the vector is clear from context, omit the subscript.

Given any $a,b\in \R^{n}$, we use $a \cdot b$ to denote the vector obtained from coordinate wise multiplication. Similarly, we also use other scalar operations to denote the corresponding entrywise vector operations when clear from context.

\paragraph{Matrices.}
We use $\mzero_{m,n}$ to denote the all-zeros matrix of dimension $m \times n$, and we use $\mI_n$ to denote the identity matrix of dimension $n \times n$. When the dimension of the matrix is clear from context, we also use $\mzero$ and $\mI$ without subscripts to denote all-zeros and identity matrices. 

For any $\ma \in \R^{m \times n}$, $I \subseteq [m]$, and $J \subseteq [n]$, we use $\ma_{I,J}$ to denote the submatrix of $\ma$ with rows in $I$ and columns in $J$, and we use $\ma_{I,:}$ to denote the submatrix of $\ma$ with rows in $I$, and $\ma_{:,J}$ to denote the submatrix of $\ma$ with columns in $J$.

We use $\mm \succeq \mzero$ %
to denote that the matrix $\mm$ is positive semidefinite (PSD). We use $\ma \succeq \mb$ to denote $\ma - \mb \succeq \mzero$. 
We use $\ma\approx_{c}\mb$
to denote that $c^{-1} \mb\preceq\ma\preceq c\mb$.
For any PSD matrix $\mm\in\R^{n\times n}$ we let $\lambda_{1}(\mm)\leq\ldots\leq\lambda_{n}(\mm)$
denote the eigenvalues of $\mm$. Let $v_{1}(\mm),\ldots,v_{n}(\mm)$
denote a corresponding orthonormal basis of eigenvectors, and let $\lambda_{\min}(\mm) \defeq \lambda_1(\mm)$ and $\lambda_{\max}(\mm) \defeq \lambda_n(\mm)$. When $\lambda_{\min}(\mm)$ is a simple eigenvalue, we also define $v_{\min}(\mm)$ to be the unit eigenvector corresponding to $\lambda_{\min}(\mm)$ such that its first non-zero coordinate is positive. %
When the matrix is clear from context, we also use the shorthands $\lambda_{\min}$, $\lambda_{\max}$, and $v_{\min}$. %

For any $\mm \in \R^{n \times n}$, we use $\mdiag(\mm)$ to denote the diagonal matrix consisting of the diagonal entries of $\mm$, and for any vector $v \in \R^n$, we also use $\mdiag(v)$ to denote the diagonal matrix with $v$ on the diagonal. 
Additionally, for any vector denoted in lowercase, we also refer to the diagonal matrix with the vector being on the diagonal with the same letter in uppercase, e.g., $\md = \mdiag(d)$. 

For any non-degenerate (i.e., full column rank) matrix $\ma \in \R^{m \times n}$, we denote the leverage scores of $\ma$ as $\sigma(\ma) \in \R^{m}$ where $\sigma(\ma)_i = (\ma (\ma^{\top} \ma)^{-1} \ma^{\top})_{i,i}$.

\paragraph{Norms.} For any vector $w\in \R^n_{>0}$, we define $\|v\|_w \defeq (\sum_{i=1}^n w_i v_i^2)^{1/2}$. For any norm $\|\cdot \|$, we define its induced matrix norm as $\|\mm\| \defeq \sup_{\|x\|=1} \|\mm x\|$. 

\paragraph{Sets.} We let $[n] \defeq \{1,2,\cdots,n\}$. For any $v \in \R^n$ and $\alpha \in \R$, we define $S_{\geq \alpha}(v) \defeq \{i \in [n] \mid v_i \geq \alpha\}$ and $S_{\leq \alpha}(v) \defeq \{i \in [n] \mid v_i \leq \alpha\}$.  %

\paragraph{Graphs.}
In this paper we consider undirected graphs that allow multi-edges but no self-loops, and we assign an orientation  to each edge. 
For any graph $G$, we use $V(G)$ and $E(G)$ to denote its vertices and edges. For any $V_1, V_2 \subseteq V(G)$, we use $E_G(V_1, V_2)$ to denote the set of edges between vertices in $V_1$ and vertices in $V_2$ (in either direction). With an abuse of notation, for any vertex $v \in V(G)$, we also denote $E_G(v, V_1) = E_G(\{v\}, V_1)$. For any vertex $v \in V(G)$, we use $\mathcal{N}_G(v)$ to denote the neighboring vertices of $v$ in $G$. We omit the subscript of $E_G(\cdot)$ and $\mathcal{N}_G(\cdot)$ when the graph is clear from context.

For any weighted graph $G = (V,E,w)$, for any $E' \subseteq E$, we define $w(E') \defeq \sum_{e \in E'} w_e$, and for any vertex $v \in V$, we define its weighted degree $\deg_w(v)$ as the sum of the weights of all edges incident to $v$, and for any $S \subseteq V$, we define $\vol_G(S) \defeq \sum_{v \in S} \deg_w(v)$. We omit the subscript of $\vol_G(\cdot)$ when the graph and its weights are clear from context. %
We define $\mb_{G} \in \R^{E \times V}$ as the incidence matrix of $G$ where row $e$ of $\mb_{G}$ is $\indicVec{b_e} -\indicVec{a_e}$. We define the (weighted) Laplacian of $G$ as $\mL_{G} \defeq \mb_{G}^{\top} \mw \mb_{G}$, and let $\md_{G} \defeq \mdiag(\mL_{G})$ denote its diagonal. We define the normalized Laplacian of $G$ as $\mn_{G} \defeq \md_{G}^{-1/2} \mL_{G} \md_{G}^{-1/2}$, and we let $d_{G}\in \R^V$ denote the vector of diagonal entries of $\md_{G}$. 
We define the conductance $\phi(G)$ as follows:%
\begin{align*}
    \phi(G) \defeq \min_{S \subseteq V}\frac{w(E_{G}(S,V\backslash S))}{\min\{\vol_G(S),\vol_G(V\backslash S)\}}.
\end{align*}
We say a graph $G$ has expansion $\phi$ if $\phi(G) \geq \phi$, and we also call such $G$ a \emph{$\phi$-expander}.

\paragraph{Lossy Flow and Lossy Graph Matrices.}
We call $G=(V,E,\eta,w)$ a \emph{weighted lossy flow graph }if $E$
is a multiset of pairs of distinct vertices in $V$, i.e., for each $e \in E$, $e=(a_{e},b_{e})$ 
with $a_{e},b_{e}\in V$ and $a_{e}\neq b_{e}$, $\eta\in\R_{\geq 1}^{E}$ are the flow multipliers, and $w \in \R_{\geq 0}^E$ are the edge weights. 

We define $V(G)$, $E(G)$, and $E_G(V_1,V_2)$ similarly as for graphs. 
We define $\mb_{G}\in\R^{E\times V}$
as the incidence matrix of $G$ where row $e$ of $\mb_{G}$
is $\indicVec{b_{e}} - \eta_{e} \indicVec{a_{e}}$.%

We define the (weighted) Laplacian of $G$ as $\mL_{G}\defeq\mb_{G}^{\top} \mw \mb_{G}$, %
and let $\md_{G}\defeq\mdiag(\mL_{G})$ denote
its diagonal. We also refer to the Laplacian of a lossy graph as a lossy Laplacian, to distinguish it from the standard graph Laplacian. %
We further define the normalized Laplacian of $G$ as $\mn_{G}\defeq\md_{G}^{-1/2}\mL_{G}\md_{G}^{-1/2}$. We also let $d_G \in \R^{V}$ denote the vector of diagonal entries of $\md_G$. We say that the graph is unweighted when $w$ is all one, and with an abuse of notation we use the shorthand $G = (V,E,\eta)$ to denote $G=(V,E,\eta,\allone_m)$.

We define $\gbar=(V,E,w)$ as the \emph{smoothed graph} associated with $G=(V,E,\eta,w)$, where $\gbar$ has the same set of edges as $G$ and its edges are non-lossy. We say a lossy graph $G$ is connected if $\gbar$ is connected. 
We call a lossy graph $G=(V,E,\eta,w)$ \emph{$\balloss$-balanced} if $\eta_e \leq 1 + \balloss$ for all $e\in E$. We say a lossy graph $G$ has expansion $\phi$, or equivalently we call $G$ a \emph{$\phi$-expander}, if the underlying smoothed graph $\ov{G}$ has conductance $\phi(\ov{G}) \geq \phi$. %

\paragraph{Data Structure Guarantees.}  
In all data structures provided in this paper, amortized bounds are taken over the entire sequence of all operations, unless an operation is explicitly stated to have a worst-case time bound. 
Only limited effort is made to optimize the polylogarithmic terms throughout this paper.

\section{Spectral Approximation of Lossy Graphs}\label{sec:spectral_approximation}

In this section we first prove \Cref{thm:spectral_sparsification}, the formal version of \Cref{thm:spectral_sparsification_informal}. Towards the end of this section, we also use the spectral results developed to prove \Cref{thm:spectral_sparsification} to show that the standard power iteration can be used to approximately compute $v_{\min}$ of the lossy Laplacian.

\Cref{thm:spectral_sparsification} shows that in order to obtain a constant spectral approximation of the form $\mI - (1 - \lambda)vv^\top$ to $\mn_G$, we only require $v^\top \mn_G v \leq c \lambda_1(\mn_G)$, for some constant $c$. To facilitate our later algorithmic development, we prove a more general version of this fact that allows more flexibility in the normalizing diagonal $\md$, as well as some additive error, $\epsad \mI$,.

\begin{theorem}[Spectral approximation from low Rayleigh quotient]\label{thm:spectral_sparsification} %
    Let $c_1, c_2 \geq 1$, let $0 \leq \epsad \leq 1$, let $G = (V,E,\eta,w)$ be a $\balloss$-balanced lossy flow graph with $\lambda_2(\mn_{\gbar}) \geq 20\balloss$, let $d\in \R_{\geq 0}^V$ be a vector that satisfies $d_G \leq d \leq c_1 d_G$. 
    Furthermore, let $v \in \R^V$ be a unit vector such that $v^\top (\md^{-1/2}\mL_G \md^{-1/2} + \epsad \mI)v \leq c_2 \lambda_1(\md^{-1/2}\mL_G \md^{-1/2} + \epsad \mI)$ where $\md = \mdiag(d)$. 
    Then, for $\lambda \defeq  v^\top (\md^{-1/2}\mL_G \md^{-1/2} + \epsad \mI)v$, %
    \begin{align*}
        \frac{(\lambda_2(\mn_{\gbar}))^2}{12c_1^2c_2^2}\left(\mI - (1 - \lambda)vv^\top\right) \preceq \md^{-1/2}\mL_G\md^{-1/2} + \epsad\mI \preceq \frac{24c_1c_2}{\lambda_2(\mn_{\gbar})}\left(\mI - (1 - \lambda)vv^\top\right).
    \end{align*}
\end{theorem}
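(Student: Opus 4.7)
The plan is to reduce the general Rayleigh quotient assumption back to the case of the actual smallest eigenvector $u \defeq v_{\min}(\mm)$, where $\mm \defeq \md^{-1/2}\mL_G\md^{-1/2} + \epsad \mI$. First I would use the hypothesis $d_G \le d \le c_1 d_G$ to transfer spectral comparisons between $\mm$ and the ``canonical'' operator $\mn_G + \epsad \mI$, incurring only a factor of $c_1$: concretely, $c_1^{-1}\mn_G \preceq \md^{-1/2}\mL_G\md^{-1/2} \preceq \mn_G$ by monotonicity of conjugation by positive diagonals. Then I would appeal to the earlier spectral decomposition result for the true smallest eigenvector (\Cref{lem:eta_to_not,lem:N_approx_Ivvt}) to obtain the base approximation
\[
\Omega\!\left(\tfrac{\lambda_2(\mn_{\gbar})}{c_1}\right)\left(\mI - (1-\lambda_1)uu^\top\right) \preceq \mm \preceq O\!\left(\tfrac{c_1}{\lambda_2(\mn_{\gbar})}\right)\left(\mI - (1-\lambda_1)uu^\top\right),
\]
where $\lambda_1 \defeq \lambda_1(\mm)$ and $u$ is its unit eigenvector.

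Next I would show that the low-Rayleigh-quotient hypothesis forces $v$ to be highly aligned with $u$. Writing $v = \alpha u + \beta w$ with $w\perp u$, $\alpha^2+\beta^2=1$, the assumption $v^\top \mm v = \lambda \le c_2 \lambda_1$ gives $\alpha^2 \lambda_1 + \beta^2 \lambda_2(\mm) \le c_2 \lambda_1$, and combined with the gap $\lambda_2(\mm)\ge \Omega(\lambda_2(\mn_{\gbar})/c_1)$ from the base approximation this yields $\beta^2 \le O(c_1 c_2\, \lambda_1/\lambda_2(\mn_{\gbar}))$. This is exactly the quantitative version of ``$v$ close to $u$'' referenced in the overview via \Cref{lem:bound_on_v_topv}.

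With closeness in hand, I would transfer the base approximation from the rank-one matrix $\mI-(1-\lambda_1)uu^\top$ to $\mI - (1-\lambda)vv^\top$ by a direct quadratic form comparison. For the upper bound $\mm \preceq \tfrac{24c_1c_2}{\lambda_2(\mn_{\gbar})}(\mI-(1-\lambda)vv^\top)$, I would verify it separately on $v$ (where both sides are $\lambda$, up to the constant) and on unit vectors $x\perp v$; for the latter, $x^\top \mm x \le C_0$ follows from the base upper bound since $(x^\top u)^2\le 1$. For the lower bound $\mm \succeq \tfrac{\lambda_2^2(\mn_{\gbar})}{12c_1^2c_2^2}(\mI-(1-\lambda)vv^\top)$, the equation on $v$ is again immediate (both sides equal $\lambda$, up to the constant, which is at most $1$), and on $x\perp v$ I would expand $u$ in the $(v, v^{\perp})$ frame to show $(x^\top u)^2 \le \beta^2$, giving $x^\top\mm x \ge c_0(1 - (1-\lambda_1)\beta^2) \ge c_0(\alpha^2 + \lambda_1\beta^2)$; plugging in the bound $\beta^2 \le O(c_1 c_2\lambda_1/\lambda_2(\mn_{\gbar}))$ from the previous step yields a lower bound of $\Omega(c_0)$ with the right dependence on $\lambda_2(\mn_{\gbar}), c_1, c_2$.

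The main obstacle is matching the explicit constants $12c_1^2 c_2^2$ and $24 c_1 c_2$ claimed in the statement. The cleanest path handles the two cases $\lambda_1 \lesssim \lambda_2(\mn_{\gbar})/(c_1 c_2)$ and $\lambda_1 \gtrsim \lambda_2(\mn_{\gbar})/(c_1 c_2)$ separately: in the first, the bound $\beta^2 \ll 1$ gives $\alpha^2 \ge 1/2$ and the argument above goes through; in the second, $\lambda$ is already comparable to $1/(c_1 c_2)$, so the rank-one matrix $\mI - (1-\lambda)vv^\top \succeq \lambda \mI$ is spectrally close to $\mI$, and the claimed sandwich for $\mm$ follows directly from $c_1^{-1}\mn_G \preceq \mm \preceq \mn_G + \epsad\mI$ combined with the base approximation for $u$. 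Carefully tracking the constants across this case split (and the $c_1$ factors from the degree reduction) is where the $12c_1^2 c_2^2/24 c_1 c_2$ asymmetry arises.
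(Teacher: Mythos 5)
Your overall skeleton matches the paper's: start from the rank-one approximation $\mm \approx \mI - (1-\lambda_1)uu^\top$ for the true eigenvector $u = v_{\min}(\mm)$ via \Cref{lem:eta_to_not,lem:N_approx_Ivvt}, deduce from the low Rayleigh quotient and the spectral gap that $v$ is highly aligned with $u$ (this is exactly \Cref{lem:bound_on_v_topv}), and then transfer the rank-one approximation from $uu^\top$ to $vv^\top$. The divergence is in the last step, and that is where there is a real gap.

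You propose to establish $\mm \preceq C(\mI - (1-\lambda)vv^\top)$ and the matching lower bound by ``verifying the quadratic form separately on $v$ and on unit vectors $x\perp v$.'' Checking a PSD inequality $A \preceq B$ on two complementary subspaces does \emph{not} establish it, because it ignores the cross terms $v^\top(B-A)x$ for $x\perp v$. Concretely, the right-hand side $\mI - (1-\lambda)vv^\top$ is block diagonal in the $\{v, v^\perp\}$ frame, but $\mm$ is not: the off-block piece is $\mm v - \lambda v$, which has norm $\sqrt{v^\top\mm^2 v - \lambda^2}$ and is generically comparable to $\sqrt{\lambda}$. After conjugating by $(\mI - (1-\lambda)vv^\top)^{-1/2}$ this contributes an $O(1)$ off-diagonal term that must be explicitly absorbed (e.g.\ by AM--GM against the diagonal surplus, tracking constants), or handled by the 2D eigenvalue computation that the paper actually performs. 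That computation is the content of \Cref{lem:rayleigh_implies_spectral}: the paper reduces to the 2D span of the two unit vectors, writes down the $2\times 2$ generalized eigenproblem for $\mm_1^{-1/2}(\mm_2-\mm_1)\mm_1^{-1/2}$, and bounds its eigenvalues by solving a quadratic from the determinant condition. Your proposal as written skips the step that controls these off-block interactions, and that is precisely the nontrivial part of the transfer. (A secondary slip: the asserted ordering ``$c_1^{-1}\mn_G \preceq \md^{-1/2}\mL_G\md^{-1/2} \preceq \mn_G$ by monotonicity of conjugation by positive diagonals'' is false as a matrix inequality; conjugation by a diagonal does not preserve the Loewner order, only the eigenvalue bounds of \Cref{fact:eigenvalue_scaling} hold, and that is the form the paper uses inside \Cref{lem:N_approx_Ivvt}.) The case split you describe at the end is not needed in the paper's route: after using \Cref{lem:rayleigh_implies_spectral} with $c = 2c_1c_2/\lambda_2(\mn_{\gbar})$ and then passing from $\lambda_1(\mm)$ to $\lambda$ at the cost of a factor $c_2$ (inequality \eqref{eq:thm_3_1_eq_3}), the constants $24c_1c_2/\lambda_2$ and $12c_1^2c_2^2/\lambda_2^2$ fall out directly without splitting on the size of $\lambda_1$.
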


Applying \Cref{thm:spectral_sparsification} with $\md = \md_G$, and $\epsad = 0$ immediately yields the following corollary:

\begin{corollary}
    Let $G = (V,E,\eta,w)$ be a $\balloss$-balanced lossy flow graph with $\lambda_2(\mn_{\gbar}) \geq 20\balloss$. Let $v$ be a unit vector such that $v^\top \mn_Gv \leq c \lambda_1(\mn_G)$ for some $c \geq 1$. 
    Then for $\lambda \defeq v^\top \mn_Gv$,
    \begin{align*}
        \frac{(\lambda_2(\mn_{\gbar}))^2}{12c^2}\left(\mI - (1 - \lambda)vv^\top\right) \preceq \mn_G \preceq \frac{24c}{\lambda_2(\mn_{\gbar})}\left(\mI - (1 - \lambda)vv^\top\right).
    \end{align*}
\end{corollary}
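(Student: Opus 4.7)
The plan is to transfer the spectral approximation from the exact smallest eigenvector of $\mm \defeq \md^{-1/2}\mL_G\md^{-1/2} + \epsad\mI$ to the approximate eigenvector $v$. Let $v_* \defeq v_{\min}(\mm)$, $\lambda_* \defeq \lambda_1(\mm)$, and define $Y \defeq \mI - (1-\lambda_*)v_*v_*^\top$ and $X \defeq \mI - (1-\lambda)vv^\top$. I would invoke the preceding spectral approximation result of this section (the formal analog of the informally-stated \Cref{lem:N_approx_Ivvt}) to obtain
\[
\alpha_- Y \preceq \mm \preceq \alpha_+ Y
\]
for constants $\alpha_- = \Theta(\lambda_2(\mn_{\gbar})/c_1)$ and $\alpha_+ = \Theta(c_1/\lambda_2(\mn_{\gbar}))$. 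Monotonicity of eigenvalues under the Loewner order then yields $\lambda_2(\mm) \geq \alpha_-$, which will be the key handle on how small the perpendicular component of $v$ must be.

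Next I would quantify closeness of $v$ to $v_*$. Decompose $v = \alpha v_* + \beta v_\perp$ with $v_\perp \perp v_*$ a unit vector, $\alpha^2+\beta^2=1$, and (WLOG) $\alpha \geq 0$. Since $v_\perp \perp v_* = v_{\min}(\mm)$, we have $v_\perp^\top \mm v_\perp \geq \lambda_2(\mm) \geq \alpha_-$, so the hypothesis $\lambda = v^\top \mm v = \alpha^2 \lambda_* + \beta^2 v_\perp^\top \mm v_\perp \leq c_2 \lambda_*$ forces the smallness estimate $\beta^2 \leq c_2 \lambda_*/\alpha_-$. For the upper direction $\mm \preceq (\cdot)X$, I would decompose any $u$ as $u = \gamma v + u_\perp$ with $u_\perp \perp v$, observe $u^\top X u = \lambda \gamma^2 + \|u_\perp\|^2$, and apply Cauchy--Schwarz in the $\mm$-inner product, $(v^\top \mm u_\perp)^2 \leq \lambda \cdot u_\perp^\top \mm u_\perp$, together with AM--GM to obtain $u^\top \mm u \leq 2\lambda\gamma^2 + 2 u_\perp^\top \mm u_\perp \leq 2\alpha_+ u^\top X u$, where in the last step I use $\mm \preceq \alpha_+ Y \preceq \alpha_+ \mI$. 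Plugging in $\alpha_+$ recovers the upper bound of the claimed form.

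For the lower direction $\mm \succeq (\cdot)X$, I would combine $\mm \succeq \alpha_- Y$ with an inequality of the form $X \preceq C' Y$ for $C' = O(c_2/\alpha_-)$, yielding $\mm \succeq (\alpha_-/C') X$. Since both $X$ and $Y$ act as $\mI$ on the orthogonal complement of $\mathrm{span}(v, v_*)$, proving $X \preceq C' Y$ reduces to a $2\times 2$ PSD check in the $\{v_*, v_\perp\}$ basis. Writing out $C'Y - X$ in this basis and requiring nonnegative diagonal entries forces $C' \geq c_2 + \beta^2/\lambda_*$; picking $C'$ a constant factor larger then ensures nonnegative determinant as well, after using $\beta^2/\lambda_* \leq c_2/\alpha_-$ from the previous step. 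Inserting the bounds on $\alpha_\pm$ and tracking constants through both directions then yields the claimed asymmetric bounds $\tfrac{(\lambda_2(\mn_{\gbar}))^2}{12c_1^2c_2^2}X \preceq \mm \preceq \tfrac{24c_1c_2}{\lambda_2(\mn_{\gbar})}X$. The main obstacle is this lower-bound step: a naive perturbation estimate via $\|X - Y\|_{\mathrm{op}} \lesssim \beta + |\lambda-\lambda_*|$ is too lossy because it scales like $\sqrt{\beta^2/\lambda_*}$ rather than $\beta^2/\lambda_*$; working inside the $2\times 2$ invariant subspace exploits the rank-one structure of both $X$ and $Y$ to capture the cancellation and yield the correct dependence on $\alpha_-$.
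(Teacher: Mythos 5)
The paper proves this corollary in one sentence, by specializing \Cref{thm:spectral_sparsification} (stated and proved just above) to $\md = \md_G$ and $\epsad = 0$, so $c_1 = 1$ and $c_2 = c$. You have instead re-derived (a version of) \Cref{thm:spectral_sparsification}, and your argument is correct; it differs from the paper's proof of that theorem in a way worth noting. The paper chains \Cref{lem:N_approx_Ivvt} (the exact-eigenvector bound $\alpha_- Y \preceq \mm \preceq 4Y$), \Cref{lem:bound_on_v_topv} (which is exactly your estimate $\beta^2 \le c_2\lambda_*/\alpha_-$), and \Cref{lem:rayleigh_implies_spectral}, a \emph{symmetric} two-sided comparison $\mI - (1-\lambda)v_1v_1^\top \approx_{3c} \mI - (1-\lambda)v_2v_2^\top$ proved by explicitly solving a $2\times 2$ generalized eigenproblem in $\mathrm{span}(v_1,v_2)$, followed by a separate step swapping $\lambda_1(\mm)$ for $\lambda$ via $\lambda_1 \le \lambda \le c_2\lambda_1$. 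You handle the two Loewner directions separately and asymmetrically. The upper direction is the genuinely different piece: your Cauchy--Schwarz-in-the-$\mm$-inner-product plus AM--GM argument $u^\top\mm u \le 2\lambda\gamma^2 + 2u_\perp^\top\mm u_\perp \le 2\alpha_+ u^\top X u$ uses only $v^\top\mm v = \lambda$ and $\mm \preceq \alpha_+\mI$, never invokes $v_{\min}$, $\beta$, or the eigengap, and thereby bypasses \Cref{lem:rayleigh_implies_spectral} entirely in that direction. Your lower-bound $2\times 2$ PSD check of $C'Y - X$ in $\mathrm{span}(v, v_{\min})$ plays the role of \Cref{lem:rayleigh_implies_spectral} but compares $X$ (at scalar $\lambda$) and $Y$ (at scalar $\lambda_*$) directly, folding the paper's $\lambda$-vs-$\lambda_1$ step into the same computation; both routes correctly exploit the rank-one structure in the invariant $2$-dimensional subspace, and you are right that a crude operator-norm perturbation of $X-Y$ would be too lossy by a square root. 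Two small calibration points: \Cref{lem:N_approx_Ivvt} actually gives the absolute constant $\alpha_+ = 4$, not $\Theta(c_1/\lambda_2(\mn_{\gbar}))$ (this only tightens your upper bound, though you should note the clean-up step needs $\alpha_+\ge 1$); and tracking your lower-bound argument gives a constant near $\lambda_2(\mn_{\gbar})^2/(16c_1^2c_2)$, which has a better $c_2$-exponent but a worse numerical prefactor than the paper's $\lambda_2(\mn_{\gbar})^2/(12c_1^2c_2^2)$ --- for $c$ near $1$ this does not literally dominate the stated $\tfrac{(\lambda_2(\mn_{\gbar}))^2}{12c^2}$, so recovering the exact stated constants requires loosening your estimate.
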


Before proving \Cref{thm:spectral_sparsification}, we prove some structural lemmas. First, we show that the normalized graph Laplacian and a sufficiently balanced normalized lossy Laplacian are close spectrally.

\begin{lemma}
\label{lem:eta_to_not} 
If $G=(V,E,\eta,w)$ \emph{is a $\balloss$-balanced lossy flow graph} for $\balloss \leq 1$, then,
\[
	\md_{\gbar} \preceq \md_G \preceq (1 + 3 \balloss)\md_{\gbar}
	\text{ and }
	\norm{\mn_{G}-\mn_{\gbar}}_{2}\leq 10 \balloss\,.
\]
\end{lemma}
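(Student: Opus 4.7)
The plan is to handle the diagonal containment by a direct per-vertex computation and then the spectral norm bound by decomposing $\mn_G - \mn_{\gbar}$ into a change in the (unnormalized) Laplacian, $\mL_G - \mL_{\gbar}$, and a change of normalizing diagonal from $\md_{\gbar}$ to $\md_G$.

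For the diagonal inequality I expand
\[
(d_G)_v \;=\; \sum_{e:\,b_e=v} w_e \;+\; \sum_{e:\,a_e=v} w_e \eta_e^2
\qquad\text{and}\qquad
(d_{\gbar})_v \;=\; \sum_{e:\,b_e=v} w_e \;+\; \sum_{e:\,a_e=v} w_e\,.
\]
Since $\eta_e \in [1, 1+\balloss]$ and $\balloss \le 1$, we have $1 \le \eta_e^2 \le (1+\balloss)^2 = 1 + 2\balloss + \balloss^2 \le 1 + 3\balloss$, which immediately yields $\md_{\gbar} \preceq \md_G \preceq (1+3\balloss)\md_{\gbar}$.

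For the norm bound, I first rewrite $\mb_G = \mb_{\gbar} - \mvar{\Delta} \mvar{Y}$, where $\mvar{\Delta} \defeq \mdiag(\eta - \allone_E) \in \R^{E\times E}$ and $\mvar{Y} \in \R^{E\times V}$ is the matrix with $(\mvar{Y})_{e, a_e} = 1$ for each $e$ and zeros elsewhere. This gives $\mL_G = \mL_{\gbar} + \mvar{E}$ with $\mvar{E} \defeq -\mb_{\gbar}^{\top}\mw\mvar{\Delta}\mvar{Y} - \mvar{Y}^{\top}\mvar{\Delta}\mw\mb_{\gbar} + \mvar{Y}^{\top}\mvar{\Delta}\mw\mvar{\Delta}\mvar{Y}$. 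Adding and subtracting $\md_G^{-1/2}\mL_{\gbar}\md_G^{-1/2}$ then yields
\[
\mn_G - \mn_{\gbar} \;=\; \md_G^{-1/2}\mvar{E}\md_G^{-1/2} \;+\; \bigl(\md_G^{-1/2}\mL_{\gbar}\md_G^{-1/2} - \mn_{\gbar}\bigr)\,.
\]
Introducing the diagonal similarity $\mq \defeq \md_G^{-1/2}\md_{\gbar}^{1/2}$, the diagonal inequality gives $\|\mq\|_2 \le 1$ and, via the elementary estimate $(1+x)^{-1/2} \ge 1 - x/2$ for $x \ge 0$, also $\|\mq - \mI\|_2 \le 1 - (1+3\balloss)^{-1/2} \le 3\balloss/2$.

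To bound the first piece I note $\md_G^{-1/2}\mvar{E}\md_G^{-1/2} = \mq(\md_{\gbar}^{-1/2}\mvar{E}\md_{\gbar}^{-1/2})\mq$ and handle each of the three summands of $\md_{\gbar}^{-1/2}\mvar{E}\md_{\gbar}^{-1/2}$ using the factorization bound $\|\md_{\gbar}^{-1/2}\mb_{\gbar}^{\top}\mw^{1/2}\|_2 = \|\mn_{\gbar}\|_2^{1/2} \le \sqrt{2}$ together with the observation that $\md_{\gbar}^{-1/2}\mvar{Y}^{\top}\mvar{\Delta}\mw\mvar{\Delta}\mvar{Y}\md_{\gbar}^{-1/2}$ is the diagonal matrix whose $(v,v)$ entry equals $(d_{\gbar})_v^{-1} \sum_{e:\,a_e=v} w_e (\eta_e - 1)^2 \le \balloss^2$; the three terms then contribute at most $\sqrt{2}\balloss$, $\sqrt{2}\balloss$, and $\balloss^2$, for a total of at most $(2\sqrt{2}+1)\balloss$ when $\balloss \le 1$. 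For the second piece, the identity $\md_G^{-1/2}\mL_{\gbar}\md_G^{-1/2} - \mn_{\gbar} = \mq\mn_{\gbar}\mq - \mn_{\gbar} = (\mq-\mI)\mn_{\gbar}\mq + \mn_{\gbar}(\mq-\mI)$ combined with $\|\mn_{\gbar}\|_2 \le 2$ gives a norm of at most $2 \cdot (3\balloss/2) \cdot 2 = 6\balloss$. Summing yields $\|\mn_G - \mn_{\gbar}\|_2 \le (2\sqrt{2}+7)\balloss < 10\balloss$. The main care point is keeping the constants tight enough to end below $10$, which is why I split into these two pieces rather than bounding the full difference in a single step.
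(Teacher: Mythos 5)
Your proof is correct. You use the same two-piece decomposition as the paper, $\mn_G - \mn_{\gbar} = \md_G^{-1/2}(\mL_G-\mL_{\gbar})\md_G^{-1/2} + \bigl(\mq\mn_{\gbar}\mq - \mn_{\gbar}\bigr)$ with $\mq = \md_G^{-1/2}\md_{\gbar}^{1/2}$, but bound the first piece differently. The paper splits $\mL_G-\mL_{\gbar}$ into its diagonal $\md^{\Delta}$ and (negated) off-diagonal $\ma^{\Delta}$, bounds $\norm{\md_G^{-1/2}\md^{\Delta}\md_G^{-1/2}}_2 \le 3\balloss$, and bounds $\norm{\md_G^{-1/2}\ma^{\Delta}\md_G^{-1/2}}_2 \le \balloss$ by a Perron--Frobenius argument comparing $\ma^\Delta$ to the weighted adjacency matrix, for a total of $4\balloss$; you instead factor $\mb_G=\mb_{\gbar}-\mvar{\Delta}\mvar{Y}$, expand $\mL_G-\mL_{\gbar}$ as the cross and quadratic terms, and control each via $\norm{\md_{\gbar}^{-1/2}\mb_{\gbar}^{\top}\mw^{1/2}}_2 = \norm{\mn_{\gbar}}_2^{1/2} \le \sqrt{2}$ together with a direct diagonal computation, getting $(2\sqrt{2}+1)\balloss \approx 3.83\balloss$. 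Both routes are valid; yours avoids Perron--Frobenius entirely and is a bit more self-contained (pure submultiplicativity). One small but welcome improvement: you invoke $\norm{\mn_{\gbar}}_2 \le 2$ rather than the paper's stated $\norm{\md_{\gbar}^{-1/2}\mL_{\gbar}\md_{\gbar}^{-1/2}}_2 \le 1$, which is not a valid bound in general (normalized Laplacian eigenvalues can reach $2$, e.g.\ for bipartite graphs), and you compensate with the sharper estimate $\norm{\mq-\mI}_2 \le 3\balloss/2$ from $(1+x)^{-1/2} \ge 1-x/2$, so the arithmetic for the second piece ($6\balloss$) genuinely closes and the overall total $(2\sqrt{2}+7)\balloss < 10\balloss$ holds.
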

\begin{proof}
Recall that each edge $e\in E$ is denoted as $(a_e, b_e)$, and row $e$ of $\mb_G$ and $\mb_{\gbar}$ are $\indicVec{b_{e}} - \eta_{e} \indicVec{a_{e}}$ and $\indicVec{b_{e}} - \indicVec{a_{e}}$ respectively. We have that
\begin{align*}
(\mb_G)_{e,:}^{\top} (\mb_G)_{e,:} - (\mb_{\gbar})_{e,:}^{\top} (\mb_{\gbar})_{e,:} = &~ (\indicVec{b_{e}} - \eta_{e} \indicVec{a_{e}}) (\indicVec{b_{e}} - \eta_{e} \indicVec{a_{e}})^{\top} - (\indicVec{b_{e}} - \indicVec{a_{e}}) (\indicVec{b_{e}} - \indicVec{a_{e}})^{\top} \\
= &~ (\eta_e^2-1)\indicVec{a_{e}}\indicVec{a_{e}}^{\top}-(\eta_e-1)\left(\indicVec{a_{e}}\indicVec{b_{e}}^{\top}+\indicVec{b_{e}}\indicVec{a_{e}}^{\top}\right).
\end{align*}
Using this equation, and by the definition of the Laplacian, %
\begin{align*}
\mL_{G}-\mL_{\gbar} & =\mb_{G}^{\top}\mw\mb_{G}-\mb_{\gbar}^{\top}\mw\mb_{\gbar}
=\sum_{e\in E}w_e\left[(\eta_e^2-1)\indicVec{a_{e}}\indicVec{a_{e}}^{\top}-(\eta_e-1)\left(\indicVec{a_{e}}\indicVec{b_{e}}^{\top}+\indicVec{b_{e}}\indicVec{a_{e}}^{\top}\right)\right].
\end{align*}
Let $\md^{\Delta}\defeq\mdiag(\mL_{G}-\mL_{\gbar})$
and let $\ma^{\Delta}\defeq \mL_{\gbar}-\mL_{G}+\md^{\Delta}$.
Recall that $\eta_e \geq 1$, $\balloss \leq 1$, and $\eta_e \leq 1 + \balloss$. Since $\eta_e^2-1 \leq 3 (\eta_e-1) \leq 3 \balloss$,
we have that entrywise
\begin{align*}
\mzero\leq\md^{\Delta} & = \sum_{e\in E}w_e(\eta_e^2-1)\indicVec{a_{e}}\indicVec{a_{e}}^{\top}
\leq \sum_{e\in E} 3 \balloss w_e\indicVec{a_{e}}\indicVec{a_{e}}^{\top}
\leq 3 \balloss\md_{G}\text{ and }\\
\mzero\leq\ma^{\Delta} & = \sum_{e \in  E} w_e (\eta_e - 1) \left(\indicVec{a_{e}}\indicVec{b_{e}}^{\top}+\indicVec{b_{e}}\indicVec{a_{e}}^{\top}\right) \leq\sum_{e\in E}\balloss w_e\left(\indicVec{a_{e}}\indicVec{b_{e}}^{\top}+\indicVec{b_{e}}\indicVec{a_{e}}^{\top}\right)\,.
\end{align*}
Since the entries of $\md_{G}^{-1/2}\md^{\Delta}\md_{G}^{-1/2}$ are non-negative and the matrix is diagonal,
\[
\norm{\md_{G}^{-1/2}\md^{\Delta}\md_{G}^{-1/2}}_{2}\leq\norm{\md_{G}^{-1/2}
\left(3 \balloss\md_{G}\right)\md_{G}^{-1/2}}_{2}
\leq 3 \balloss\,.
\]
This proves the first claim that $\md_{\gbar} \preceq \md_G \preceq (1 + 3 \balloss)\md_{\gbar}$.

Similarly, since the entries of $\md_{G}^{-1/2}\ma^{\Delta}\md_{G}^{-1/2}$ are non-negative, by the Perron–Frobenius theorem the largest eigenvalue of this matrix is equals to its spectral radius, and the corresponding eigenvector is non-negative, so 
\[
\norm{\md_{G}^{-1/2}\ma^{\Delta}\md_{G}^{-1/2}}_{2}
\leq \Big\|\md_{G}^{-1/2} \sum_{e\in E}\balloss w_e\left(\indicVec{a_{e}}\indicVec{b_{e}}^{\top}+\indicVec{b_{e}}\indicVec{a_{e}}^{\top}\right) \md_{G}^{-1/2}\Big\|_2
\leq\balloss\,,
\]
where in the last step we used that the matrix $\sum_{e\in E}w_e(\indicVec{a_{e}}\indicVec{b_{e}}^{\top}+\indicVec{b_{e}}\indicVec{a_{e}}^{\top})$ is the adjacency
matrix of a weighted graph where every vertex $a\in V$ has weighted degree at most
$[\md_{G}]_{aa}$. 

Consequently, we have that 
\[
\norm{\md_{G}^{-1/2}(\mL_{G}-\mL_{\gbar})\md_{G}^{-1/2}}_{2}
=\norm{\md_{G}^{-1/2}(\md^{\Delta}-\ma^{\Delta})\md_{G}^{-1/2}}_{2}\leq
4\balloss\,.
\]
Additionally, note that $1 \leq \eta_e^2 \leq (1 + \balloss)^2\leq 1 + 3\balloss$, and hence $\md_{\gbar}\preceq\md_{G}\preceq(1+3\balloss)\md_{\gbar}$. This then implies that
\begin{align*}
    \|\mI - \md^{-1/2}_G \md^{1/2}_{\gbar} \|_2 
    \leq 1 - \frac{1}{\sqrt{1 + 3\balloss}}
    \leq 3 \balloss.  %
\end{align*}
Since
\[
\norm{\md_{G}^{-1/2}\mL_{\gbar}\md_{\gbar}^{-1/2}}_{2} \leq \norm{\md_{\gbar}^{-1/2}\mL_{\gbar}\md_{\gbar}^{-1/2}}_{2}\leq 1,
\]
we have that:
\begin{align*}
&~ \|\md_{\gbar}^{-1/2}\mL_{\gbar}\md_{\gbar}^{-1/2} - \md_G^{-1/2}\mL_{\gbar} \md_G^{-1/2} \|_2 \\
\leq &~ \|(\mI - \md^{-1/2}_G \md^{1/2}_{\gbar})\md_{\gbar}^{-1/2}\mL_{\gbar} \md_{\gbar}^{-1/2}\|_2 + \|\md_{G}^{-1/2}\mL_{\gbar} \md^{-1/2}_{\gbar}(\mI - \md^{-1/2}_{G} \md^{1/2}_{\gbar})\|_2
\leq 6 \balloss,
\end{align*}
and hence:
\begin{equation*}
    \norm{\mn_{G}-\mn_{\gbar}}_{2}\leq \norm{\md_{G}^{-1/2}(\mL_{G}-\mL_{\gbar})\md_{G}^{-1/2}}_{2} + \norm{\md_{\gbar}^{-1/2}\mL_{\gbar}\md_{\gbar}^{-1/2} - \md_G^{-1/2}\mL_{\gbar} \md_G^{-1/2}}_2
    \leq 10\balloss. \qedhere
\end{equation*}
\end{proof}

A consequence of \Cref{lem:eta_to_not} is that so long as $\lambda_2(\mn_{\gbar})$ is greater than $20\balloss$, we can obtain a nontrivial lower bound on $\lambda_2(\mn_G)$. %

Next we prove that given the exact least eigenvalue $\lambda_1$ and the corresponding unit eigenvector $v_{\min}$ of $\md^{-1/2}\mL_G\md^{-1/2} + \epsad \mI$, the matrix $\mI - (1 - \lambda_1)v_{\min}v_{\min}^\top$ is a good spectral approximation of $\md^{-1/2}\mL_G\md^{-1/2} + \epsad \mI$. 
\begin{lemma}\label{lem:N_approx_Ivvt}
    Let $G = (V,E,\eta)$ be a $\balloss$-balanced lossy flow graph with $20\balloss \leq \lambda_2(\mn_{\gbar}) \leq 1$. Let $d\in \R_{\geq 0}^V$ satisfy $d_G \leq d \leq c \cdot d_G$ for some $c \geq 1$.  %
    Let $\lambda_1 = \lambda_1(\md^{-1/2}\mL_G\md^{-1/2} + \epsad \mI)$ for some $0 \leq \epsad \leq 1$, and let $v_{\min}$ be a corresponding unit eigenvector. Then,
    \begin{align*}
        \frac{\lambda_2(\mn_{\gbar})}{2c} \Big(\mI - (1 - \lambda_1)v_{\min}v_{\min}^\top\Big) \preceq \md^{-1/2}\mL_G\md^{-1/2}  + \epsad\mI \preceq 4 \Big(\mI - (1 - \lambda_1)v_{\min}v_{\min}^\top\Big).
    \end{align*}
    In particular, $\lambda_2(\md^{-1/2}\mL_G\md^{-1/2}  + \epsad\mI) \geq \frac{\lambda_2(\mn_{\gbar})}{2c}$ and $\lambda_n(\md^{-1/2}\mL_G\md^{-1/2}  + \epsad\mI) \leq 4.$
\end{lemma}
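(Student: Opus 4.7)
The plan is to first control the extremal eigenvalues of $\md^{-1/2}\mL_G\md^{-1/2}+\epsad\mI$, namely to show $\lambda_n \le 4$ and $\lambda_2 \ge \lambda_2(\mn_{\gbar})/(2c)$, and then to obtain the two-sided spectral sandwich by decomposing any test vector into its components parallel and orthogonal to $v_{\min}$.

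The key general fact I will use is a generalized Courant-Fischer comparison between $\md$- and $\md_G$-normalizations. Since $\md_G \preceq \md \preceq c\md_G$, for every nonzero $z$ the Rayleigh ratios satisfy
\[
\frac{1}{c}\cdot\frac{z^\top \mL_G z}{z^\top \md_G z}\;\le\;\frac{z^\top \mL_G z}{z^\top \md z}\;\le\;\frac{z^\top \mL_G z}{z^\top \md_G z}\,,
\]
so taking the appropriate min-max over subspaces yields $\lambda_k(\mn_G)/c \le \lambda_k(\md^{-1/2}\mL_G\md^{-1/2}) \le \lambda_k(\mn_G)$ for every $k$. Combining this with \Cref{lem:eta_to_not}, which under the hypothesis $20\balloss \le \lambda_2(\mn_{\gbar})$ gives $\|\mn_G - \mn_{\gbar}\|_2 \le 10\balloss \le \lambda_2(\mn_{\gbar})/2 \le 1/2$, and applying Weyl's inequality together with the standard fact that the normalized Laplacian of any graph has spectrum in $[0,2]$, I get $\lambda_n(\mn_G) \le 5/2$ and $\lambda_2(\mn_G) \ge \lambda_2(\mn_{\gbar})/2$. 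Adding the shift $\epsad\mI \preceq \mI$ then yields $\lambda_n(\md^{-1/2}\mL_G\md^{-1/2}+\epsad\mI) \le 7/2 \le 4$ and $\lambda_2(\md^{-1/2}\mL_G\md^{-1/2}+\epsad\mI) \ge \lambda_2(\mn_{\gbar})/(2c)$.

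For the final step, I decompose any $x \in \R^V$ as $x = \alpha v_{\min} + y$ with $y \perp v_{\min}$, so $\|x\|_2^2 = \alpha^2 + \|y\|_2^2$. Since $v_{\min}$ is an eigenvector of $\md^{-1/2}\mL_G\md^{-1/2}+\epsad\mI$ with eigenvalue $\lambda_1$,
\[
x^\top(\md^{-1/2}\mL_G\md^{-1/2}+\epsad\mI)x = \lambda_1\alpha^2 + y^\top(\md^{-1/2}\mL_G\md^{-1/2}+\epsad\mI)y,
\]
while $x^\top(\mI-(1-\lambda_1)v_{\min}v_{\min}^\top)x = \lambda_1\alpha^2 + \|y\|_2^2$. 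The upper bound then follows from $y^\top(\md^{-1/2}\mL_G\md^{-1/2}+\epsad\mI)y \le 4\|y\|_2^2$ together with $\lambda_1 \le 4$, yielding $\lambda_1\alpha^2 + 4\|y\|_2^2 \le 4(\lambda_1\alpha^2+\|y\|_2^2)$. The lower bound follows from $y^\top(\md^{-1/2}\mL_G\md^{-1/2}+\epsad\mI)y \ge \tfrac{\lambda_2(\mn_{\gbar})}{2c}\|y\|_2^2$ combined with $\tfrac{\lambda_2(\mn_{\gbar})}{2c}\le 1$, which holds because $\lambda_2(\mn_{\gbar})\le 1$ and $c\ge 1$.

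The main subtle point I anticipate is keeping the factor $c$ in the correct direction throughout the generalized Courant-Fischer step: one must remember that moving to the larger denominator $\md$ can only shrink the Rayleigh ratio by at most a factor of $c$, and correspondingly $\lambda_2$ shrinks by at most a factor of $c$ (never by $c^2$). Everything else reduces to routine applications of Weyl's inequality, \Cref{lem:eta_to_not}, and the orthogonal decomposition of $x$ with respect to $v_{\min}$.
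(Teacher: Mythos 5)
Your proof is correct and follows essentially the same route as the paper's: you control $\lambda_2$ and $\lambda_n$ of $\md^{-1/2}\mL_G\md^{-1/2}$ via the $\md_G \preceq \md \preceq c\md_G$ comparison (your direct generalized Courant--Fischer argument is equivalent to the paper's \Cref{fact:eigenvalue_scaling}) combined with \Cref{lem:eta_to_not}, then conclude the two-sided PSD bound via the orthogonal decomposition with respect to $v_{\min}$. The paper states this last step only tersely (``Finally using that $0 \leq \epsad \leq 1$ \ldots''), so your explicit decomposition is a faithful unpacking of what the paper leaves implicit; the only cosmetic difference is that the factor $\lambda_1 \le 4$ you invoke in the upper-bound direction is not actually needed (only $\lambda_1 \geq 0$ is).
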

To prove this lemma, we use the following fact. %
\begin{fact}\label{fact:eigenvalue_scaling}
	Let $\mm,\md\in\R^{n\times n}$ be positive semi-definite matrices. If $\frac{1}{c_1} \mI \preceq \md \preceq c_2 \mI$ for $c_1,c_2 \geq 1$, then $\mm' \defeq \md \mm \md$ satisfies that
\[
\frac{\lambda_i(\mm)}{c_1^2} \leq \lambda_i(\mm') \leq c_2^2\lambda_i(\mm)
\text{ for all }i\in[n]\,.
\]
\end{fact}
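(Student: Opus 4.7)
\textbf{Proof plan for \Cref{fact:eigenvalue_scaling}.}
The plan is to reduce the asymmetric looking product $\mm' = \md \mm \md$ to a symmetric conjugation of $\mm$, where the hypothesis on $\md$ translates cleanly into a two-sided Loewner bound, and then conclude by Weyl's eigenvalue monotonicity.

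First I would use the PSD square roots $\mm^{1/2}$ and $\md^{1/2}$ (which exist since $\mm, \md \succeq \mzero$) to rewrite
\[
\md \mm \md \;=\; (\mm^{1/2} \md)^{\top} (\mm^{1/2} \md)\,.
\]
A standard fact (e.g.\ via the SVD of $\mm^{1/2}\md$, or equivalently via $\det(\lambda\mI - A^\top A) = \det(\lambda \mI - AA^\top)$ for square $A$) gives that $A^\top A$ and $AA^\top$ have the same eigenvalues with multiplicity. Applying this to $A = \mm^{1/2}\md$ yields the identity
\[
\lambda_i(\md \mm \md) \;=\; \lambda_i\!\left(\mm^{1/2}\md^{2}\mm^{1/2}\right) \quad \text{for all } i \in [n]\,.
\]

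Next I would translate the hypothesis $\frac{1}{c_1}\mI \preceq \md \preceq c_2 \mI$ into the squared form $\frac{1}{c_1^2}\mI \preceq \md^2 \preceq c_2^2 \mI$ (which holds since $\md \succeq \mzero$ implies $\md^2$ is monotone in $\md$ in the PSD order for commuting matrices, applied here to scalar multiples of $\mI$). Conjugating this inequality by $\mm^{1/2}$ — which preserves the Loewner order, since $x^\top (\mm^{1/2} \mn \mm^{1/2}) x = (\mm^{1/2}x)^\top \mn (\mm^{1/2}x) \geq 0$ whenever $\mn \succeq \mzero$ — gives the sandwich
\[
\tfrac{1}{c_1^2}\,\mm \;\preceq\; \mm^{1/2}\md^{2}\mm^{1/2} \;\preceq\; c_2^2\,\mm\,.
\]

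Finally, Weyl's monotonicity theorem for the PSD order states that $\ma \preceq \mb$ implies $\lambda_i(\ma) \leq \lambda_i(\mb)$ for every $i$. Applying this to both sides of the sandwich and combining with the eigenvalue identity from the first step yields
\[
\frac{\lambda_i(\mm)}{c_1^2} \;\leq\; \lambda_i\!\left(\mm^{1/2}\md^{2}\mm^{1/2}\right) \;=\; \lambda_i(\md\mm\md) \;\leq\; c_2^2\,\lambda_i(\mm)\,,
\]
which is precisely the claimed bound. The only subtlety is the eigenvalue identity $\lambda_i(A^\top A) = \lambda_i(AA^\top)$ when $A = \mm^{1/2}\md$ is possibly singular; this is the main step to justify carefully, but it is standard and follows from the SVD. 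Everything else is a direct application of operator monotonicity.
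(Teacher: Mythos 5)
Your proof is correct. It takes a genuinely different route from the paper's. The paper proves the bound directly from Courant--Fischer: writing $\lambda_i(\md\mm\md) = \min_{\dim U = i}\max_{x\in U}\frac{x^\top \md\mm\md x}{x^\top \md^2 x}\cdot\frac{x^\top \md^2 x}{x^\top x}$, bounding the second factor by $c_2^2$ (resp.\ $c_1^{-2}$), and observing that the first factor's min-max over subspaces equals $\lambda_i(\mm)$ via the invertible change of variables $y = \md x$. Your argument instead symmetrizes the product: you identify $\md\mm\md$ with $A^\top A$ for $A = \mm^{1/2}\md$, invoke $\lambda_i(A^\top A) = \lambda_i(AA^\top)$ to pass to $\mm^{1/2}\md^2\mm^{1/2}$, use the operator-monotone sandwich $c_1^{-2}\mm \preceq \mm^{1/2}\md^2\mm^{1/2} \preceq c_2^2 \mm$ via conjugation, and finish with Weyl's eigenvalue monotonicity. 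The paper's route is shorter and self-contained (one application of min-max, plus an implicit substitution that relies on $\md \succ \mzero$); yours is more modular, decomposing the argument into three standard facts ($A^\top A \sim A A^\top$, congruence preserves the Loewner order, Weyl monotonicity), at the cost of introducing matrix square roots. Both rely in some form on $\md$ being positive definite — the paper for the subspace substitution to be a bijection, you for the lower bound $c_1^{-2}\mI \preceq \md^2$ — which the hypothesis $c_1^{-1}\mI \preceq \md$ guarantees. One minor remark: your aside about ``$\md^2$ monotone in $\md$ in the PSD order for commuting matrices'' is slightly muddled as phrased; the cleanest justification is simply that the eigenvalues of $\md$ lie in $[c_1^{-1}, c_2]$ and squaring is monotone on $[0,\infty)$, so the eigenvalues of $\md^2$ lie in $[c_1^{-2}, c_2^2]$. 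This does not affect correctness.
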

\begin{proof}
The Courant-Fischer min-max theorem implies that
\begin{align*}
\lambda_i(\mm') = \min_{\dim(U)=i} \max_{x\in U} \frac{x^{\top} \mm' x}{x^{\top} x}
= &~ \min_{\dim(U)=i} \max_{x\in U} \frac{x^{\top} \md \mm \md x}{x^{\top} \md^2 x} \cdot \frac{x^{\top} \md^2 x}{x^{\top} x} \\
\leq &~ \min_{\dim(U)=i} \max_{x\in U} \frac{x^{\top} \md \mm \md x}{x^{\top} \md^2 x} \cdot c_2^2 
=  \lambda_i(\mm) \cdot c_2^2.
\end{align*}
An analogous argument implies that $\lambda_i(\mm') \geq \lambda_i(\mm) / c_1^2$.
\end{proof}

\begin{proof}[Proof of \Cref{lem:N_approx_Ivvt}]
    \Cref{lem:eta_to_not} implies that $\lambda_2(\mn_{G}) \geq \lambda_2(\mn_{\gbar}) - 10 \balloss \geq \lambda_2(\mn_{\gbar})/2$. %
    Since $\lambda_n(\mn_{\gbar}) \leq 2$, we also have that $\lambda_n(\mn_G) \leq 2 + 10 \balloss \leq 3$. %
    Now, since $\frac{1}{c}\mI \preceq \md_G\md^{-1} \preceq \mI$, \Cref{fact:eigenvalue_scaling} implies that $\lambda_2(\md^{-1/2}\mL_G\md^{-1/2})\geq \frac{1}{c}\lambda_2(\mn_G) \geq \lambda_2(\mn_{\gbar})/(2c)$, and $\lambda_n(\md^{-1/2}\mL_G\md^{-1/2})\leq \lambda_n(\mn_G) \leq 3$. Finally using that $0 \leq \epsad \leq 1$ we get the inequalities as desired. 
\end{proof}

In order to prove \Cref{thm:spectral_sparsification}, we also need to guarantee that there is a gap between the smallest and second smallest eigenvalues of $\md^{-1/2}\mL_G\md^{-1/2} + \epsad \mI$. 
The following lemma shows that when $\balloss$ is small this gap is at least $\Omega(\lambda_2(\mn_{\gbar}))$. 
Since the rescaling matrix $\md$ is not exactly $\md_G$, the next lemma requires an even smaller $\balloss$ than \Cref{lem:N_approx_Ivvt} to ensure such a spectral gap. 
\begin{lemma}\label{lem:spectral_gap}
    Let $G = (V,E,\eta)$ be a $\balloss$-balanced lossy flow graph with $\lambda_2(\mn_{\gbar}) \leq 1$. Let $d\in \R_{\geq 0}^V$ satisfy $d_G \leq d \leq c \cdot d_G$ for some $c \geq 1$, and let $\md = \mdiag(d)$. If $\balloss \leq \frac{\lambda_2(\mn_{\gbar})}{40c}$ then,
    \begin{align*}
        \lambda_2\left(\md^{-1/2}\mL_G\md^{-1/2} + \epsad \mI\right) - \lambda_1\left(\md^{-1/2}\mL_G\md^{-1/2} + \epsad \mI\right) \geq \frac{\lambda_2(\mn_{\gbar})}{4c}.
    \end{align*}
\end{lemma}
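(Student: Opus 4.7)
The plan is to bound $\lambda_1 = \lambda_1(\md^{-1/2}\mL_G\md^{-1/2} + \epsad \mI)$ from above and $\lambda_2 = \lambda_2(\md^{-1/2}\mL_G\md^{-1/2} + \epsad \mI)$ from below separately, then subtract. Since $\epsad \mI$ commutes with $\md^{-1/2}\mL_G\md^{-1/2}$ and shifts every eigenvalue by exactly $\epsad$, this additive term will cancel out in the difference, so effectively we are bounding the gap $\lambda_2(\md^{-1/2}\mL_G\md^{-1/2}) - \lambda_1(\md^{-1/2}\mL_G\md^{-1/2})$. The key insight is that the $\balloss$-balanced assumption and \Cref{lem:eta_to_not} let us transfer spectral information from $\mn_{\gbar}$ (where the gap between $\lambda_1 = 0$ and $\lambda_2$ is known by assumption) to $\mn_G$, and then the rescaling bound $\md_G \preceq \md \preceq c\md_G$ together with \Cref{fact:eigenvalue_scaling} moves us from $\mn_G$ to $\md^{-1/2}\mL_G\md^{-1/2}$ at the cost of a factor of $c$.

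For the upper bound on $\lambda_1$, I would use that $\lambda_1(\mn_{\gbar}) = 0$ (with eigenvector $\md_{\gbar}^{1/2}\allone_V$), so by \Cref{lem:eta_to_not} and Weyl's inequality $\lambda_1(\mn_G) \leq \|\mn_G - \mn_{\gbar}\|_2 \leq 10\balloss$. To pass from $\mn_G$ to $\md^{-1/2}\mL_G\md^{-1/2}$, I would use the Rayleigh quotient characterization: since $\md \succeq \md_G$,
\[
\lambda_1(\md^{-1/2}\mL_G\md^{-1/2}) = \min_{x\neq 0}\frac{x^\top \mL_G x}{x^\top \md x} \leq \min_{x\neq 0}\frac{x^\top \mL_G x}{x^\top \md_G x} = \lambda_1(\mn_G) \leq 10\balloss,
\]
so $\lambda_1 \leq 10\balloss + \epsad$.

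For the lower bound on $\lambda_2$, following the opening of the proof of \Cref{lem:N_approx_Ivvt}, Weyl's inequality together with \Cref{lem:eta_to_not} gives $\lambda_2(\mn_G) \geq \lambda_2(\mn_{\gbar}) - 10\balloss \geq \lambda_2(\mn_{\gbar})/2$, where the last step uses $\balloss \leq \lambda_2(\mn_{\gbar})/(40c) \leq \lambda_2(\mn_{\gbar})/40$. Then writing $\md^{-1/2}\mL_G\md^{-1/2} = \mk\,\mn_G\,\mk$ for the diagonal matrix $\mk \defeq \md_G^{1/2}\md^{-1/2}$, the assumption $\md_G \preceq \md \preceq c\md_G$ gives $c^{-1/2}\mI \preceq \mk \preceq \mI$, so \Cref{fact:eigenvalue_scaling} yields $\lambda_2(\md^{-1/2}\mL_G\md^{-1/2}) \geq \lambda_2(\mn_G)/c \geq \lambda_2(\mn_{\gbar})/(2c)$, hence $\lambda_2 \geq \lambda_2(\mn_{\gbar})/(2c) + \epsad$.

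Subtracting, $\lambda_2 - \lambda_1 \geq \lambda_2(\mn_{\gbar})/(2c) - 10\balloss$, and plugging in $\balloss \leq \lambda_2(\mn_{\gbar})/(40c)$ gives $10\balloss \leq \lambda_2(\mn_{\gbar})/(4c)$, yielding the claimed lower bound $\lambda_2(\mn_{\gbar})/(4c)$. I do not expect a genuine obstacle here; all tools are already in place from \Cref{lem:eta_to_not} and \Cref{fact:eigenvalue_scaling}, and the only care needed is tracking the two separate rescalings (the $10\balloss$ gap between $\mn_G$ and $\mn_{\gbar}$, and the factor-$c$ distortion between $\mn_G$ and $\md^{-1/2}\mL_G\md^{-1/2}$) and verifying the chosen constant $1/40$ is large enough to absorb both.
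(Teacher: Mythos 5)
Your proposal is correct and follows essentially the same route as the paper's proof: the paper gets the lower bound $\lambda_2(\md^{-1/2}\mL_G\md^{-1/2}) \geq \lambda_2(\mn_{\gbar})/(2c)$ by citing \Cref{lem:N_approx_Ivvt} directly, which you instead re-derive via Weyl's inequality plus \Cref{fact:eigenvalue_scaling} (exactly the steps inside that lemma's proof), and the upper bound $\lambda_1 \leq 10\balloss$ and final subtraction are identical modulo whether one writes the conclusion as $\lambda_1 \leq \frac{1}{2}\lambda_2$ or subtracts outright. No gaps.
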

\begin{proof}
    $\lambda_2(\md^{-1/2}\mL_G\md^{-1/2}) \geq \frac{\lambda_2(\mn_{\gbar})}{2c}$ by \Cref{lem:N_approx_Ivvt} with $\epsad = 0$. \Cref{lem:eta_to_not} also implies that $\lambda_1(\mn_G) \leq \lambda_1(\mn_{\gbar}) + 10 \balloss = 10\balloss$. Since \Cref{fact:eigenvalue_scaling}, $\lambda_1(\md^{-1/2}\mL_G\md^{-1/2}) \leq 10\balloss$. 
    When $\balloss \leq \frac{\lambda_2(\mn_{\gbar})}{40c}$, we have that $\lambda_1(\md^{-1/2}\mL_G\md^{-1/2}) \leq \frac{1}{2} \lambda_2(\md^{-1/2}\mL_G\md^{-1/2})$, yielding the desired result. 
\end{proof}

Next we show that if two unit vectors $v_{1}$ and $v_2$ are close to each other, i.e., $v_{1}^{\top} v_2$ is close to $1$, then the two matrices $\mI-(1-\lambda) v_{1}v_{1}^{\top}$ and $\mI-(1-\lambda) v_{2}v_{2}^{\top}$ are good spectral approximations of each other. This gives a sufficient condition for a vector $v$ to approximate the true eigenvector $v_{\min}$ so that $\mI-(1-\lambda) v v^{\top} \approx \mI-(1-\lambda) v_{\min}v_{\min}^{\top}$. The proof of the following lemma is inspired by the proof of Lemma~27 of \cite{clmps16}.

\begin{lemma}\label{lem:rayleigh_implies_spectral}
    Let $v_{1} \neq v_2 \in \R^{n}$ be unit vectors where $1 - (v_{1}^\top v_2)^2 \leq c\lambda$, for $\lambda \in (0,1]$ and $c \geq 1$. Additionally, let $\mm_{i}\defeq\mI-(1-\lambda) v_{i}v_{i}^{\top}$ for $i\in \{1,2\}$. Then,
   $\frac{1}{3c} \mm_1 \preceq \mm_2 \preceq 3c\mm_1$.
\end{lemma}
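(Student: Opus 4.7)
The plan is to reduce the problem to a two-dimensional computation. Since $\mm_i - \mI$ has rank one with range $\mathrm{span}(v_i)$, both $\mm_1$ and $\mm_2$ act as the identity on the orthogonal complement of $U := \mathrm{span}(v_1, v_2)$. Hence the desired inequalities $\frac{1}{3c}\mm_1 \preceq \mm_2 \preceq 3c\mm_1$ hold trivially outside $U$, and it suffices to verify them for the restrictions of $\mm_1, \mm_2$ to $U$, where they are $2 \times 2$ matrices. Introduce the orthonormal basis $\{v_1, w\}$ of $U$ with $w \perp v_1$, write $v_2 = \alpha v_1 + \beta w$ with $\alpha = v_1^\top v_2$ and $\alpha^2 + \beta^2 = 1$, and note the hypothesis reads $\beta^2 = 1 - \alpha^2 \leq c\lambda$.

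In the basis $\{v_1, w\}$ one has $\mm_1 = \mathrm{diag}(\lambda, 1)$ and $\mm_2 = \mI_2 - (1-\lambda) \begin{pmatrix} \alpha^2 & \alpha\beta \\ \alpha\beta & \beta^2 \end{pmatrix}$. To prove $\mm_2 \preceq 3c\mm_1$ I would check that the symmetric matrix
\[
3c\mm_1 - \mm_2 = \begin{pmatrix} (3c-1)\lambda - (1-\lambda)\beta^2 & (1-\lambda)\alpha\beta \\ (1-\lambda)\alpha\beta & (3c-1) + (1-\lambda)\beta^2 \end{pmatrix}
\]
is positive semidefinite. Both diagonal entries are nonnegative for $c \geq 1$ (using $\beta^2 \leq c\lambda$ for the top-left one), so the only substantive check is that the determinant is nonnegative. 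A direct expansion, or equivalently parametrizing unit $u \in U$ by an angle and minimizing $u^\top(3c\mm_1 - \mm_2)u$, reduces the claim to the scalar inequality $(3c-1)^2 \lambda \geq 3c(1-\lambda)^2 \beta^2$. Using $\beta^2 \leq c\lambda$ and $(1-\lambda)^2 \leq 1$, the right-hand side is at most $3c^2 \lambda$, and $(3c-1)^2 \geq 3c^2$ holds for all $c \geq 1$ since it rearranges to $6c^2 - 6c + 1 \geq 0$. This establishes $\mm_2 \preceq 3c\mm_1$.

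The other inequality $\mm_1 \preceq 3c\mm_2$ follows by the same argument, swapping the roles of $v_1$ and $v_2$, since the hypothesis $1 - (v_1^\top v_2)^2 \leq c\lambda$ is symmetric in its two vectors. I expect the main delicate point to be keeping the constants tight: the factor $3c$ is not obvious a priori, and the reduction must avoid wasting the $c$ in $\beta^2 \leq c\lambda$ by using it in tandem with the gap between $(3c-1)^2$ and $3c^2$. If pursuing a more conceptual proof were desired, one could alternatively diagonalize $\mm_2^{-1/2} \mm_1 \mm_2^{-1/2}$ in closed form using the Sherman--Morrison formula for $\mm_2^{-1/2}$, but the 2D reduction above is cleaner and avoids square roots of $2\times 2$ matrices.
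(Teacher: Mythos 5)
Your proof is correct, and it is essentially the same computation as the paper's, though packaged more elementarily. The paper rewrites the target as $\mm_1^{-1/2}(\mm_2-\mm_1)\mm_1^{-1/2}\preceq 2c\mI$, observes that nonzero eigenvalues $\gamma$ of $\mm_1^{-1/2}(\mm_2-\mm_1)\mm_1^{-1/2}$ correspond to vectors in $\mathrm{span}(v_1,v_2)$, solves the resulting $2\times 2$ generalized eigenvalue problem to get the quadratic $\lambda\gamma^2-(1-\lambda)^2\beta^2\gamma-(1-\lambda)^2\beta^2=0$ with $\beta^2=1-(v_1^\top v_2)^2$, and bounds the roots via the quadratic formula to obtain $|\gamma|\leq c(1-\lambda)^2+\sqrt{c}(1-\lambda)\leq 2c$, i.e., $\mm_2\preceq(1+2c)\mm_1$, which they then loosen to $3c$. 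Your determinant of $3c\mm_1-\mm_2$ on the subspace $U$, after expansion, equals $(3c-1)^2\lambda-3c(1-\lambda)^2\beta^2$, which is precisely the value of that same quadratic polynomial in $\gamma$ at $\gamma=3c-1$; so you are verifying that $3c-1$ dominates its largest root rather than solving for the roots. The gain in your version is avoiding $\mm_1^{-1/2}$ and the quadratic formula entirely: the $2\times 2$ PSD criterion (nonnegative diagonal entries plus nonnegative determinant) does the work. The underlying inputs ($\beta^2\leq c\lambda$, $(1-\lambda)^2\leq 1$, $c\geq 1$) are the same in both; the paper's route happens to give the slightly sharper $1+2c$ constant, though both suffice for the lemma as stated.
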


\begin{proof}
    By symmetry, it suffices to show that $\mm_2 \preceq (1 + 2c)\mm_1$, which is equivalent to
    \begin{align*}
        \mm_1^{-1/2}(\mm_2 - \mm_1)\mm_1^{-1/2} \preceq 2c\mI.
    \end{align*}
   We will explicitly compute the eigenvalues of $\mm_1^{-1/2}(\mm_2 - \mm_1)\mm_1^{-1/2}$, and show that they are at most $2c$. Note that $\mm_{1}^{-1/2}(\mm_{2}-\mm_{1})\mm_{1}^{-1/2}v=\gamma v$
if and only if $x=\mm_{1}^{-1/2}v$ satisfies 
\[
(\mm_{2}-\mm_{1})x=\gamma\mm_{1}x\,.
\]
Additionally, since $\mm_{2}-\mm_{1}=(1-\lambda)(v_{1}v_{1}^{\top}-v_{2}v_{2}^{\top})$, we see that $\gamma$ is a non-zero eigenvalue of $\mm_{1}^{-1/2}(\mm_{2}-\mm_{1})\mm_{1}^{-1/2}$ only when $x$ is in the span of $v_{\min}$ and $v_2$. Let $x=\alpha v_{1}+\beta v_{2}$ such that $(\mm_{2}-\mm_{1})x=\gamma\mm_{1}x$.
Since
\begin{align*}
(\mm_{2}-\mm_{1})x & =(1-\lambda)\left[\big(\alpha+\beta(v_{1}^{\top}v_{2})\big)v_{1}-\big(\alpha(v_{1}^{\top}v_{2})+\beta\big)v_{2}\right]\text{ and}\\
\gamma\mm_{1}x & =\gamma\left[\big(\lambda\alpha-\beta(1-\lambda)(v_{1}^{\top}v_{2})\big)v_{1}+\beta v_{2}\right]\,,
\end{align*}
we have $(\mm_{2}-\mm_{1})x=\gamma\mm_{1}x$ is equivalent to
\begin{align*}
(1-\lambda)\left(\begin{array}{cc}
1 & v_{1}^{\top}v_{2}\\
-v_{1}^{\top}v_{2} & -1
\end{array}\right)\left(\begin{array}{c}
\alpha\\
\beta
\end{array}\right)=\gamma\left(\begin{array}{cc}
\lambda & -(1-\lambda)v_{1}^{\top}v_{2}\\
0 & 1
\end{array}\right)\left(\begin{array}{c}
\alpha\\
\beta
\end{array}\right),
\end{align*}
or equivalently:
\begin{align*}
\left(\begin{array}{cc}
1-\lambda-\gamma\lambda & (1 - \lambda)(1 + \gamma) v_{1}^\top v_2\\
-(1-\lambda)v_{1}^{\top}v_{2} & \lambda- 1 - \gamma
\end{array}\right)\left(\begin{array}{c}
\alpha\\
\beta
\end{array}\right)=0.
\end{align*}
Denote the matrix in the last equation above as $\mn$. A non-trivial solution $(\alpha,\beta) \neq (0,0)$ exists if and only if $\det(\mn) = 0$, and 
\begin{align*}
0 = \det(\mn) = &~ (1-\lambda - \gamma\lambda)\cdot(\lambda-1-\gamma) - \big((1 - \lambda)(1 + \gamma) v_{1}^\top v_2\big)\cdot \big(-(1-\lambda) v_{1}^\top v_2\big) \\
= &~ \lambda\gamma^2 -(1 - \lambda)^2\gamma -(1-\lambda)^2+(1 - \lambda)^2(v_{1}^\top v_2)^2\gamma + (1 - \lambda)^2(v_{1}^\top v_2)^2 \\
= &~ \lambda\gamma^2 - (1 -\lambda)^2 \big(1-(v_{1}^\top v_2)^2 \big)\gamma - (1 - \lambda)^2 \big(1-(v_{1}^\top v_2)^2\big).
\end{align*}
Solving this quadratic equation, the two possible values of $\gamma$ are:
\begin{align*}
    \gamma = \frac{(1 - \lambda)^2\big(1 - (v_{1}^\top v_2)^2\big)\pm \sqrt{(1 - \lambda)^4\big(1 - (v_{1}^\top v_2)^2\big)^2+4\lambda(1 - \lambda)^2\big(1 - (v_{1}^\top v_2)^2\big)}}{2\lambda}.
\end{align*}
Since $\sqrt{a^2 + b^2} \leq a + b$ for any $a \geq 0$ and $b \geq 0$, we have that
\begin{align*}
    |\gamma| \leq \frac{2(1 - \lambda)^2\big(1-(v_{1}^\top v_2)^2\big) + 2\sqrt{\lambda}(1 - \lambda)\sqrt{\big(1-(v_{1}^\top v_2)^2\big)}}{2\lambda}.
\end{align*}
Since $v_{1}$ and $v_2$ satisfy $1 - (v_{1}^\top v_2)^2 \leq c\lambda$, we have that
\begin{align*}
|\gamma| \leq c(1 - \lambda)^2 + \sqrt{c}(1 - \lambda) \leq 2c.
\end{align*}
Consequently, each eigenvalue of $\mm_1^{-1/2}(\mm_2 - \mm_1)\mm_1^{-1/2}$ has absolute value at most $2c$, and therefore $\mm_2 \preceq (1 + 2c) \mm_1 \preceq 3c \mm_1$. By symmetry, $\mm_1 \preceq 3c \mm_2$ as well, completing the proof.
\end{proof}

Next, in \Cref{lem:bound_on_v_topv} we prove that $1 - v^{\top} v_{\min}(\mm)$ is small if the Rayleigh quotient $v^{\top} \mm v$ is small. Combined with \Cref{lem:rayleigh_implies_spectral}, \Cref{lem:bound_on_v_topv} shows that a vector $v$ with a small Rayleigh quotient is sufficient to obtain the spectral approximation $\mm \approx \mI - (1-\lambda) v v^{\top}$. 

\begin{lemma}%
\label{lem:bound_on_v_topv} 
$1 - \left(v^{\top} v_{1}(\mm)\right)^2 \leq \frac{v^{\top} \mm v}{\lambda_2(\mm)}$ for any PSD  $\mm \in \R^{n \times n}$ and unit $v \in \R^n$.
\end{lemma}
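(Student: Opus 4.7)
The plan is to diagonalize $\mm$ in its orthonormal eigenbasis and then compare the two scalar quantities directly term by term. Since $\mm$ is PSD, let $v_1(\mm), \ldots, v_n(\mm)$ be an orthonormal basis of eigenvectors with corresponding eigenvalues $0 \leq \lambda_1(\mm) \leq \lambda_2(\mm) \leq \cdots \leq \lambda_n(\mm)$. Write the unit vector $v$ in this basis as $v = \sum_{i=1}^n \alpha_i v_i(\mm)$, where $\sum_{i=1}^n \alpha_i^2 = 1$.

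In these coordinates, the two relevant quantities become transparent: $v^\top v_1(\mm) = \alpha_1$, hence $1 - (v^\top v_1(\mm))^2 = \sum_{i \geq 2} \alpha_i^2$; and $v^\top \mm v = \sum_{i=1}^n \lambda_i(\mm)\, \alpha_i^2$. Dropping the nonnegative $\lambda_1(\mm)\alpha_1^2$ term (which is valid since $\mm$ is PSD) and bounding each remaining $\lambda_i(\mm) \geq \lambda_2(\mm)$ for $i \geq 2$, we obtain
\[
v^\top \mm v \;\geq\; \sum_{i \geq 2} \lambda_i(\mm)\,\alpha_i^2 \;\geq\; \lambda_2(\mm) \sum_{i \geq 2} \alpha_i^2 \;=\; \lambda_2(\mm)\bigl(1 - (v^\top v_1(\mm))^2\bigr).
\]
Rearranging (and noting the claim is trivial if $\lambda_2(\mm) = 0$, interpreting the bound appropriately, since then $v^\top\mm v \geq 0$ suffices only when $1-(v^\top v_1)^2 = 0$; otherwise we divide by $\lambda_2(\mm)$) yields the desired inequality.

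There is essentially no obstacle here: the argument is a one-line consequence of the spectral decomposition together with the definitional fact that $v_1(\mm)$ realizes the smallest eigenvalue. The only subtlety to flag is the edge case where $v_1(\mm)$ is not uniquely defined (i.e., $\lambda_1(\mm)$ has multiplicity greater than one); in that setting the lemma should be read with a particular choice of unit eigenvector $v_1(\mm)$, and the bound continues to hold since the derivation only relied on $v_1(\mm)$ spanning (part of) the $\lambda_1(\mm)$-eigenspace and all other eigenvalues being $\geq \lambda_2(\mm)$.
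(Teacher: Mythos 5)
Your proof is correct and follows essentially the same argument as the paper: expand $v$ in the orthonormal eigenbasis of $\mm$, observe $v^\top \mm v = \sum_i \lambda_i(\mm)\alpha_i^2$, drop the $i=1$ term and lower-bound $\lambda_i(\mm) \geq \lambda_2(\mm)$ for $i \geq 2$ to get $v^\top \mm v \geq \lambda_2(\mm)(1-(v^\top v_1(\mm))^2)$, then rearrange. Your side remarks about the degenerate case $\lambda_2(\mm)=0$ and non-uniqueness of $v_1(\mm)$ are reasonable caveats that the paper leaves implicit, but they do not change the substance of the argument.
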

\begin{proof}
Note that $v = \sum_{i \in [n]} (v_i(\mm)^\top v) v_i(\mm)$, and $1 = \|v\|_2^2 = \sum_{i  \in [n]} (v_i(\mm)^\top v)^2$. Consequently,
\begin{align*}
v^\top \mm v = 
 \sum_{i = 1}^{n} (v_i(\mm)^\top v)^2 \lambda_i(\mm)
\geq
\sum_{i = 2}^{n} (v_i(\mm)^\top v)^2 \lambda_2(\mm)
= \left(1 - (v_{1}(\mm)^\top v)^2\right) \lambda_2(\mm)\,.
\end{align*}
Rearranging terms proves this lemma.
\end{proof}

Putting these lemmas together, we now have the tools needed to prove \Cref{thm:spectral_sparsification}.
\begin{proof}[Proof of \Cref{thm:spectral_sparsification}]
In this proof we define $\mm \defeq \md^{-1/2}\mL_G\md^{-1/2} + \epsad\mI$, and we denote $v_{\min} \defeq v_{\min}(\mm)$. 
By \Cref{lem:N_approx_Ivvt}, 
    \begin{align}\label{eq:thm_3_1_eq_1}
        \frac{\lambda_2(\mn_{\gbar})}{2c_1}\left(\mI - (1 - \lambda_1(\mm))v_{\min}v_{\min}^\top\right) 
        \preceq  \mm
        \preceq 4\left(\mI - (1 - \lambda_1(\mm))v_{\min}v_{\min}^\top\right).
    \end{align}
    In particular, the above equation implies that $\lambda_2(\mm) \geq \frac{\lambda_2(\mn_{\gbar})}{2c_{1}}$. Now, applying \Cref{lem:bound_on_v_topv} to $\mm= \md^{-1/2}\mL_G\md^{-1/2} + \epsad\mI$: %
    \begin{align*}
        1 - (v^\top v_{\min})^2 \leq \frac{v^{\top} \mm v}{\lambda_2(\mm)} \leq \frac{2c_1c_2}{\lambda_2(\mn_{\gbar})}\lambda_1(\mm). 
    \end{align*}
    Using this upper bound on $1 - (v^\top v_{\min})^2$, we can apply \Cref{lem:rayleigh_implies_spectral} to $v$,$v_{\min}$ and $\lambda_1(\mm)$:
    \begin{align}\label{eq:thm_3_1_eq_2}
        \frac{\lambda_2(\mn_{\gbar})}{6c_1c_2}\left(\mI - (1 - \lambda_1(\mm))vv^\top\right) \preceq \mI - (1 - \lambda_1(\mm))v_{\min}v_{\min}^\top \preceq \frac{6c_1c_2}{\lambda_2(\mn_{\gbar})}\left(\mI -(1 - \lambda_1(\mm))vv^\top\right).
    \end{align}
    Since $\lambda_1(\mm) \leq \lambda \leq c_2\lambda_1(\mm)$, 
    \begin{align}\label{eq:thm_3_1_eq_3}
        \frac{1}{c_2}\left(\mI - (1 - \lambda)vv^\top\right) \preceq \mI - (1 - \lambda_1(\mm))vv^\top \preceq \mI - (1 - \lambda)vv^\top.
    \end{align}
    Combining Eq.~\eqref{eq:thm_3_1_eq_1}, \eqref{eq:thm_3_1_eq_2}, and \eqref{eq:thm_3_1_eq_3} implies the desired bounds of
    \begin{align*}
        \mm &\preceq 4\left(\mI - (1 - \lambda_1(\mm))v_{\min}v_{\min}^\top\right)
        \preceq \frac{24c_1c_2}{\lambda_2(\mn_{\gbar})}\left(\mI - (1 - \lambda_1(\mm))vv^\top\right)
        \preceq \frac{24c_1c_2}{\lambda_2(\mn_{\gbar})}\left(\mI - (1 - \lambda)vv^\top\right)
    \end{align*}
   and
   \begin{align*}
       \mm &\succeq \frac{\lambda_2(\mn_{\gbar})}{2c_1}\left(\mI - (1 - \lambda_1(\mm))v_{\min}v_{\min}^\top\right) \\ 
       &\succeq \frac{(\lambda_2(\mn_{\gbar}))^2}{12c_1^2c_2}\left(\mI - (1 - \lambda_1(\mm))vv^\top\right) 
       \succeq \frac{(\lambda_2(\mn_{\gbar}))^2}{12c_1^2c_2^2}\left(\mI - (1 - \lambda)vv^\top\right)\,. \qedhere
   \end{align*}
   \end{proof}

\paragraph{Power iteration for lossy Laplacian.} 
Using the spectral results proven in this section, we can also show that the standard power iteration can compute a constant approximation to the least eigenvector of the lossy Laplacian. We will use the following standard power iteration with a random start (see e.g.~\cite{kw92}). %
\begin{theorem}[Power Iteration]\label{thm:power_method}
There is an algorithm $\textsc{PowerIteration}(\mm, \epsilon)$ that for any input PSD $\mm \in \mathbb{R}^{n\times n}$, and a parameter $\epsilon \in (0,1)$,
returns w.h.p.~a unit vector $v \in \R^{n}$ such that $v^\top \mm v \geq (1 
-\epsilon)\lambda_{n}(\mm)$ in $O\Big(\nnz(\mm)\log(\frac{n}{\epsilon})\max\{1, (\frac{\lambda_{n}(\mm)}{\lambda_{n-1}(\mm)}-1)^{-1}\} \Big)$ time.
\end{theorem}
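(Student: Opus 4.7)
The plan is to implement \textsc{PowerIteration} as the standard random-start power method: sample $v_0$ uniformly from the unit sphere in $\R^n$ (equivalently, a normalized Gaussian), iterate $v_{k+1} \gets \mm v_k / \|\mm v_k\|_2$ for $K$ rounds, and return $v_K$. Each iteration performs a single matrix-vector product and a normalization, costing $O(\nnz(\mm))$ time, so the total running time is $O(K \cdot \nnz(\mm))$, and the task reduces to choosing
\[
K \;=\; O\!\left(\log(n/\epsilon)\cdot\max\Big\{1,\; \bigl(\tfrac{\lambda_n(\mm)}{\lambda_{n-1}(\mm)} - 1\bigr)^{-1}\Big\}\right)
\]
and verifying the Rayleigh-quotient guarantee.

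For the correctness analysis, I would expand $v_0 = \sum_i c_i v_i(\mm)$ in the orthonormal eigenbasis of $\mm$. The normalized iterate $v_K = \mm^K v_0 / \|\mm^K v_0\|_2$ then has Rayleigh quotient
\[
v_K^\top \mm v_K \;=\; \frac{\sum_i c_i^2 \lambda_i(\mm)^{2K+1}}{\sum_i c_i^2 \lambda_i(\mm)^{2K}}\,.
\]
By rotational invariance of the uniform distribution on the sphere, standard Gaussian anti-concentration gives that with high probability $c_n^2 \geq 1/\poly(n)$ and $\max_i c_i^2 \leq O(\log n / n)$, so in particular $\max_i c_i^2 / c_n^2 \leq \poly(n)$. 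This is the essential role of the random start: without it an adversarial $v_0$ orthogonal to $v_n(\mm)$ would defeat the algorithm entirely.

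To upgrade this to the Rayleigh-quotient guarantee, I would split indices into $A = \{i : \lambda_i(\mm) \geq (1-\epsilon)\lambda_n(\mm)\}$, whose total contribution to $\sum_i c_i^2 \lambda_i^{2K}(\lambda_i - (1-\epsilon)\lambda_n)$ is non-negative, and $B = [n]\setminus A$. Using that the $i=n$ term alone dominates the positive contribution within $A$, it suffices to ensure
\[
\sum_{i \in B} c_i^2 \left(\frac{\lambda_i(\mm)}{\lambda_n(\mm)}\right)^{2K} \leq \epsilon \cdot c_n^2\,.
\]
Since $\lambda_i(\mm) \leq \lambda_{n-1}(\mm)$ for every $i \in B$, it is enough to choose $K$ so that $(\lambda_{n-1}(\mm)/\lambda_n(\mm))^{2K} \leq \epsilon / \poly(n)$, yielding $K = O(\log(n/\epsilon)/\log(\lambda_n(\mm)/\lambda_{n-1}(\mm)))$. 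The elementary inequalities $\log(1+x) \geq x/2$ for $x \in (0,1]$ and $\log(1+x) \geq \log 2$ for $x \geq 1$ convert this bound into the advertised $K = O(\log(n/\epsilon)\cdot \max\{1, (\lambda_n/\lambda_{n-1}-1)^{-1}\})$.

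The main subtlety is the anti-concentration step bounding $c_n^2$ from below with high probability, because this is what allows polylogarithmic dependence on $n$ despite the worst-case possibility of a nearly-orthogonal start. This probabilistic step, together with the overall iteration count, is exactly what the cited Kuczynski-Wozniakowski analysis establishes, so my proof would follow the chain above and invoke that reference for the sharp concentration bound rather than reproving it from scratch. No new spectral ideas beyond standard power iteration theory are needed for the statement as given.
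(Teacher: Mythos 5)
Your proposal is correct and is a faithful reconstruction of the standard power-method analysis. The paper itself does not prove this statement at all: it simply cites Kuczy\'nski--Wo\'zniakowski~\cite{kw92} and treats the theorem as a black box, so your sketch is strictly more detailed than what appears in the text. The split into $A = \{i : \lambda_i(\mm) \geq (1-\epsilon)\lambda_n(\mm)\}$ and $B = [n]\setminus A$, the observation that the $i=n$ term alone provides the needed positive contribution, the bound $\sum_{i\in B}c_i^2(\lambda_i/\lambda_n)^{2K} \leq (\lambda_{n-1}/\lambda_n)^{2K}$, and the conversion of $1/\log(\lambda_n/\lambda_{n-1})$ to $\max\{1,(\lambda_n/\lambda_{n-1}-1)^{-1}\}$ via $\log(1+x)\geq x/2$ on $(0,1]$ are all the right ingredients. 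The only probabilistic input you genuinely need is $c_n^2 \geq 1/\poly(n)$ with high probability (the auxiliary claim $\max_i c_i^2 \leq O(\log n / n)$ is not used anywhere in your chain, since $\sum_i c_i^2 = 1$ already gives $\sum_{i\in B}c_i^2 \leq 1$), and that single bound is precisely what the citation supplies. One small point worth stating explicitly if you write this up: $B \subseteq [n-1]$ always, since $n\in A$, which is what licenses the uniform upper bound $\lambda_i \leq \lambda_{n-1}$ for $i\in B$; and when $\lambda_n(\mm)=\lambda_{n-1}(\mm)$ the advertised running time is infinite, so there is nothing to prove in that degenerate case.
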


Next we apply the spectral results proven in this section to prove the guarantees of \textsc{PowerIteration} when used to compute the least eigenvalue of $\md^{-1/2}\mL_G \md^{-1/2}+\epsad \mI$.
\begin{theorem}\label{thm:power_iteration_least_eigenvalue}
Let $G = (V,E,\eta)$ be a $\balloss$-balanced lossy flow graph with $\lambda_2(\mn_{\gbar}) \leq 1$, let $d\in \R_{\geq 0}^V$ be a vector that satisfies $d_G \leq d \leq c \cdot d_G$ for some $c \geq 1$, and let $\md = \mdiag(d)$. Furthermore, assume that $\balloss \leq \frac{\lambda_2(\mn_{\gbar})}{40c}$.
Let the unit vector 
\[
v = \textsc{PowerIteration}\left(4 \mI - (\md^{-1/2}\mL_G \md^{-1/2}+\epsad \mI),\epsad/4\right),
\]
where $\textsc{PowerIteration}$ is defined in \Cref{thm:power_method}. Then w.h.p.~$v$ satisfies 
\[
v^{\top} \left(\md^{-1/2}\mL_G\md^{-1/2}+\epsad \mI\right) v \leq 2 \lambda_1 \left(\md^{-1/2}\mL_G\md^{-1/2}+\epsad \mI\right),
\]
and can be computed in $O\left(m(\lambda_2(\mn_{\gbar}))^{-1}c \log{(\frac{n}{\epsad})}\right)$ time.
\end{theorem}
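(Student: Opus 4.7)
The plan is to reduce the task of computing a small-Rayleigh-quotient vector for $\mm \defeq \md^{-1/2}\mL_G\md^{-1/2} + \epsad\mI$ to running a standard power iteration on the ``flipped'' matrix $\mm' \defeq 4\mI - \mm$, and then to bound its runtime using the spectral gap established in Lemma~\ref{lem:spectral_gap}. Concretely, I would first invoke Lemma~\ref{lem:N_approx_Ivvt} to conclude that $\lambda_n(\mm) \le 4$, so $\mm' \succeq \mzero$, and observe that $\mm$ and $\mm'$ share the same eigenvectors with $\lambda_i(\mm') = 4 - \lambda_{n+1-i}(\mm)$. In particular, the top eigenvector of $\mm'$ is precisely the bottom eigenvector of $\mm$, so applying \textsc{PowerIteration} to $\mm'$ is the natural way to reach $v_{\min}(\mm)$.

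For correctness, the guarantee of Theorem~\ref{thm:power_method} yields w.h.p.\ a unit vector $v$ with
\[
v^\top \mm' v \ge (1-\epsad/4)\,\lambda_n(\mm') = (1-\epsad/4)\,(4 - \lambda_1(\mm)).
\]
Rearranging $v^\top \mm v = 4 - v^\top \mm' v$ gives
\[
v^\top \mm v \le 4 - (1-\epsad/4)(4-\lambda_1(\mm)) = \lambda_1(\mm) + (\epsad/4)(4 - \lambda_1(\mm)) \le \lambda_1(\mm) + \epsad.
\]
Since $\md^{-1/2}\mL_G\md^{-1/2} \succeq \mzero$, we have $\lambda_1(\mm) \ge \epsad$, hence $v^\top \mm v \le 2\lambda_1(\mm)$, as required.

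For the runtime, Theorem~\ref{thm:power_method} costs $O(\nnz(\mm')\log(n/\epsad) \cdot \max\{1,(\lambda_n(\mm')/\lambda_{n-1}(\mm') - 1)^{-1}\})$. Since $\mm'$ has the same sparsity pattern as $\mL_G$ up to the diagonal, $\nnz(\mm') = O(m+n)$. The key quantity is the multiplicative gap $\lambda_n(\mm')/\lambda_{n-1}(\mm') - 1 = (\lambda_2(\mm)-\lambda_1(\mm))/(4-\lambda_2(\mm))$. Here I would invoke Lemma~\ref{lem:spectral_gap}, whose hypothesis $\balloss \le \lambda_2(\mn_{\gbar})/(40c)$ is exactly what the statement assumes, to get $\lambda_2(\mm)-\lambda_1(\mm) \ge \lambda_2(\mn_{\gbar})/(4c)$; combined with $4-\lambda_2(\mm) \le 4$, this bounds the multiplicative gap from below by $\Omega(\lambda_2(\mn_{\gbar})/c)$, producing the claimed $O(m c \,\lambda_2(\mn_{\gbar})^{-1}\log(n/\epsad))$ runtime.

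The only substantive step is ensuring the spectral gap: without Lemma~\ref{lem:spectral_gap}, the power-iteration runtime bound would blow up. That is where the balanced-expander assumption really enters, and it is the reason the hypothesis $\balloss \le \lambda_2(\mn_{\gbar})/(40c)$ is needed (strictly stronger than what was required for Lemma~\ref{lem:N_approx_Ivvt} alone). Everything else is a one-line manipulation of the power-iteration guarantee.
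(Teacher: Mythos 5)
Your proposal is correct and follows essentially the same argument as the paper: invoke Lemma~\ref{lem:N_approx_Ivvt} to get $\lambda_n(\mm)\le4$ and hence PSD-ness of $4\mI-\mm$, rearrange the power-iteration guarantee and absorb the $\epsad$ slack using $\lambda_1(\mm)\ge\epsad$, and then bound the multiplicative eigengap of $4\mI-\mm$ via Lemma~\ref{lem:spectral_gap} and $\lambda_{n-1}(4\mI-\mm)\le4$ to get the stated runtime. The only cosmetic difference is that your intermediate bound is written as $\lambda_1(\mm)+\epsad$ rather than the paper's $(1-\epsad/4)\lambda_1(\mm)+\epsad$, which is immaterial.
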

\begin{proof}
In this proof we denote $\mm \defeq \md^{-1/2}\mL_G\md^{-1/2}+\epsad \mI$.

\textbf{Correctness:} We first show that 
$v^{\top} \mm v \leq 2 \lambda_1 (\mm)$. 
\Cref{lem:N_approx_Ivvt} implies that $\lambda_n(\mm) \leq 4$, and as such $4\mI - \mm$ is PSD. Let $v_{\min} \defeq v_{\min}(\mm) = v_n(4\mI - \mm)$. The guarantees of $\textsc{PowerIteration}$ in \Cref{thm:power_method} imply that
\begin{align*}
v^\top (4\mI - \mm)v &\geq (1 - \epsad/4) v_{\min}^\top (4\mI - \mm) v_{\min},
\end{align*}
and re-arranging this inequality yields
\begin{align*}
v^\top \mm v &\leq (1 - \epsad/4) v_{\min}^\top\mm v_{\min} + \epsad 
\leq 2\lambda_1(\mm).
\end{align*}

\textbf{Time complexity:} By \Cref{thm:power_method}, the time of each call to \textsc{PowerIteration} is 
\[
O\Big(m\log(\frac{n}{\epsad}) (\frac{\lambda_n(4 \mI - \mm)}{\lambda_{n-1}(4 \mI - \mm)}-1)^{-1}\Big).
\]
We now bound $\frac{\lambda_n(4 \mI - \mm)}{\lambda_{n-1}(4 \mI - \mm)} - 1$. By \Cref{lem:spectral_gap} %
we have that $\lambda_{2}(\mm) - \lambda_1(\mm) \geq \frac{\lambda_2(\mn_{\gbar})}{4c}$, which implies that
\begin{align*}
    \lambda_n(4 \mI - \mm) - \lambda_{n-1}(4 \mI - \mm) \geq \frac{\lambda_2(\mn_{\gbar})}{4c}.
\end{align*}
Since $\mm$ is PSD, we also have that $\lambda_{n-1}(4 \mI - \mm) \leq 4$, which gives us
\begin{align*}
    \frac{\lambda_n(4 \mI - \mm)}{\lambda_{n-1}(4 \mI - \mm)} - 1 &= \frac{\lambda_{n}(4 \mI - \mm)-\lambda_{n-1}(4 \mI - \mm)}{\lambda_{n-1}(4 \mI - \mm)}
    \geq \frac{\lambda_2(\mn_{\gbar})}{16c}. %
\end{align*}
This lower bound implies that $\left(\frac{\lambda_n(4 \mI - \mm)}{\lambda_{n-1}(4 \mI - \mm)} -1\right)^{-1} \leq \frac{16c}{\lambda_2(\mn_{\gbar})}$, so the total time complexity is $O\Big(m\log(\frac{n}{\epsad}) (\frac{\lambda_n(4 \mI - \mm)}{\lambda_{n-1}(4 \mI - \mm)}-1)^{-1}\Big) \leq O\left(m(\lambda_2(\mn_{\gbar}))^{-1} c \log{(\frac{n}{\epsad})}\right)$.
\end{proof}

\section{Uniformity of $v_{\min}(\mL_G)$}\label{sec:v_uniformity}
In this section we show that $v_{\min}$ of a lossy Laplacian is approximately the all-ones vector when the underlying lossy graph is an expander and its flow multipliers are close to $1$. We use this structural property in later sections when performing vertex deletions in our heavy hitter data structure for lossy incidence matrices.

Our proof of the uniformity of $v_{\min}$ can be viewed as a strengthening of the standard argument that the all-ones vector is in the kernel of a non-lossy Laplacian (and is therefore an eigenvector corresponding to the smallest eigenvalue). For a standard graph Laplacian $\mL_{\gbar}$, any eigenvector $v$ with eigenvalue $\lambda$ satisfies 
\begin{align*}
    \lambda v_i = (\mL_{\gbar}v)_i = d_iv_i - \sum_{j \in \mathcal{N}(i)}v_j.
\end{align*}
Consequently, when $\lambda = 0$, $
    (v_{\min})_i = \frac{1}{d_i}\sum_{j \in \mathcal{N}(i)}(v_{\min})_j$, 
or in other words, for each vertex $i$, $(v_{\min})_i$ has the average value of its neighboring vertices. This is consistent with the fact that $v_{\min}$ is a scaling of $\allone$.

Now consider the lossy Laplacian $\mL_G$. %
Let $V_{\text{large}}$ denote the set of vertices for which $(v_{\min})_i $ is at least some threshold $\zeta$, i.e., $V_{\text{large}} = S_{\geq \zeta}(v)$. We show that many vertices adjacent to $V_{\text{large}}$ must have $v_{\min}$ values that are larger than $\zeta-\epsilon$ for some small $\epsilon$. If the underlying graph is an expander, after repeating this argument for roughly $\log{n}$ steps, we can bound each entry of $v_{\min}$. (Similar proof ideas were used for proving spectral clustering results, see e.g.~\cite{aalg18,gklms21}.)

Applying this approach, we prove the following theorem.

\begin{theorem}[Uniformity of $\frac{v_{\min}}{\sqrt{d}}$]\label{thm:unifomity_v}
Let $G=(V,E,\eta)$ be a $\balloss$-balanced lossy flow graph that is connected and has expansion $\phi$, %
with $|E|=m$. Let $d \in \R^{V}_{\geq 0}$ satisfy $d_G \leq d \leq c\cdot d_G$ for $c \geq 1$, %
and let $\md = \mdiag({d})$. Let $\lambda \defeq \lambda_{\min}(\md^{-1/2}\mL_G\md^{-1/2})$, and let $v \defeq v_{\min}(\md^{-1/2}\mL_G\md^{-1/2})$. 
If $\balloss < 0.1$, $\phi < 0.1$, $c\lambda < 1$, and $\balloss + c\lambda \leq \frac{\phi^2}{100 \log{m}}$, then 
\[
\frac{\max_{i \in V} z_i}{\min_{i \in V} z_i} 
\leq  \exp\left(O\left(\frac{(\balloss + c\lambda) \log m}{\phi^2}\right)\right), \text{ where } z_i \defeq \frac{v_i}{\sqrt{d_i}} \text{ for all } i \in V.
\]
\end{theorem}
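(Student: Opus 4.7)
The plan is to work directly with $z \defeq \md^{-1/2} v \in \R^V$, which satisfies the eigenvalue identity $\mL_G z = \lambda \md z$. Because the off-diagonal entries of $\mL_G$ are non-positive, for any sufficiently large $c'$ the matrix $c'\mI - \md^{-1/2}\mL_G\md^{-1/2}$ is entrywise non-negative and, by connectedness of $G$, irreducible; Perron--Frobenius then lets me orient $v$ so that $z > 0$ coordinate-wise. Expanding $(\mL_G z)_i = \lambda d_i z_i$ using the rows $\indicVec{b_e} - \eta_e\indicVec{a_e}$ of $\mb_G$ yields the local identity
\begin{equation*}
\sum_{e \ni i} \eta_e\, z_{\mathrm{other}(e,i)} \;=\; \big((d_G)_i - \lambda d_i\big)\, z_i,
\end{equation*}
where $\mathrm{other}(e,i)$ is the endpoint of $e$ different from $i$ and $(d_G)_i = \sum_{a_e=i}\eta_e^2 + \sum_{b_e=i}1$. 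Using $\eta_e \in [1, 1+\balloss]$, $\deg(i) \leq (d_G)_i \leq (1+\balloss)^2 \deg(i)$, and $d_i \leq c\,(d_G)_i$, this translates into the two-sided neighbor-sum bound
\begin{equation*}
(1 - \delta_+)\,\deg(i)\, z_i \;\leq\; \sum_{j \in \mathcal{N}(i)} z_j \;\leq\; (1 + \delta_-)\,\deg(i)\, z_i,
\end{equation*}
with $\delta_+ = O(\balloss + c\lambda)$ and $\delta_- = O(\balloss)$.

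The next step is a conductance-based propagation. For the upper level set $T \defeq S_{\geq \alpha}(z)$, I sum the lower-bound inequality over $i \in T$ and observe that the contribution of any internal edge $\{u,v\} \subseteq T$ to $\sum_{i \in T}\sum_{j \in \mathcal{N}(i)} z_j$ equals $z_u + z_v$, which is exactly its contribution to $\sum_{i \in T} \deg(i)\, z_i$. The internal terms therefore cancel in the difference and leave the clean boundary telescope
\begin{equation*}
\sum_{\substack{(u,v) \in \partial T\\ u \in T}} (z_u - z_v) \;\leq\; \delta_+ \sum_{i \in T} \deg(i)\, z_i \;\leq\; \delta_+\, z_{\max}\, \vol(T).
\end{equation*}
Whenever $\vol(T) \leq \vol(V)/2$, the conductance hypothesis gives $|\partial T| \geq \phi\, \vol(T)$, and a Markov argument with $\epsilon \defeq 2\delta_+ z_{\max}/\phi$ shows that at least $\phi\vol(T)/2$ boundary edges satisfy $z_u - z_v \leq \epsilon$. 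Each outside endpoint of such a ``good'' edge lies in $S_{\geq \alpha - \epsilon}(z) \setminus T$ and is incident to at most $\deg(v)$ good edges, so $\vol(S_{\geq \alpha-\epsilon}(z)) \geq (1 + \phi/2)\,\vol(T)$. An identical computation using the upper-bound inequality on the lower level set $S_{\leq \alpha}(z)$ yields multiplicative growth $\vol(S_{\leq \alpha(1+2\delta_-/\phi)}(z)) \geq (1+\phi/2)\,\vol(S_{\leq \alpha}(z))$.

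Iterating these growth steps starting from the singleton level sets at $z_{\max}$ and $z_{\min}$ (each of volume $\geq 1$ since $G$ is connected and thus has no isolated vertices), after $K = O(\log m / \phi)$ iterations both level-set volumes exceed $\vol(V)/2$. The resulting upper threshold satisfies $\alpha_K \geq z_{\max} - K\epsilon = z_{\max}\bigl(1 - O(\delta_+ \log m / \phi^2)\bigr)$ and the lower threshold satisfies $\alpha'_K \leq z_{\min}(1 + 2\delta_-/\phi)^K \leq z_{\min}\exp\bigl(O(\delta_- \log m/\phi^2)\bigr)$. Under the hypothesis $\balloss + c\lambda \leq \phi^2/(100\log m)$ both exponents remain below a small constant, so $1 - x \geq e^{-2x}$ applies and $\alpha_K \geq z_{\max}\exp\bigl(-O(\delta_+ \log m/\phi^2)\bigr)$. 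Because the two level sets together carry total volume strictly greater than $\vol(V)$, they share a common vertex $i^\star$; combining the two bounds at $i^\star$ yields $z_{\max}/z_{\min} \leq \exp\bigl(O((\balloss + c\lambda) \log m/\phi^2)\bigr)$, which is the desired bound.

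I expect the main obstacle to be the clean cancellation of internal edges in the boundary telescope: the upper bound on the boundary sum must depend only on $\vol(T)$ and $z_{\max}$, since a residual factor of $z_{\max}/\alpha$ or dependence on global quantities would destroy the geometric iteration. A secondary technical point is the asymmetry between the additive-in-$\alpha$ upper-direction iteration and the multiplicative-in-$\alpha$ lower-direction iteration; combining them into a single exponential bound controlled by $\balloss + c\lambda$ requires the hypothesis $\balloss + c\lambda \leq \phi^2/(100\log m)$ to guarantee that the cumulative upper-direction drift stays in the regime $1 - x \geq e^{-2x}$.
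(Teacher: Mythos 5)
Your proof is correct and uses essentially the same sweep-cut level-set propagation argument as the paper's \Cref{lem:v_unifomity_one_step} and \Cref{lem:v_unifomity_one_step_increase}: sum the eigenvector identity over a level set, cancel internal edges, use expansion to show the boundary has many edges with small $z$-difference, and iterate the resulting volume growth for $O(\log m / \phi)$ steps. The only technical variations — an additive threshold decrement of size $O((\balloss + c\lambda) z_{\max}/\phi)$ for the upper level set where the paper uses a multiplicative one (equivalent here since the hypothesis $\balloss + c\lambda \leq \phi^2/(100\log m)$ keeps the threshold above $z_{\max}/2$), and a Markov-counting step where the paper argues by contradiction — are cosmetic and do not change the structure.
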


Before proving this theorem, consider a lossy graph where $\balloss \leq \frac{\phi^2}{2000 c \log^2(m)}$, then \Cref{lem:eta_to_not} and \Cref{fact:eigenvalue_scaling} imply that $\lambda = \lambda_{\min}(\md^{-1/2}\mL_G\md^{-1/2}) \leq 10\balloss$, so the parameters satisfy $\balloss + c \lambda \leq \frac{\phi^2}{100 \log^2(m)}$, which means $\exp\left(O\left(\frac{(\balloss + c\lambda) \log m}{\phi^2}\right)\right) \leq 1 + O(\frac{(\balloss + c\lambda) \log m}{\phi^2}) \leq 1+ O(\frac{1}{\log m})$.
\begin{corollary}\label{cor:uniformity_v_with_parameters}
In the setting of \Cref{thm:unifomity_v}, if the parameters satisfy $\balloss \leq \frac{\phi^2}{2000 c \log^2(m)}$, then
\[
\frac{\max_{i \in V} \frac{v_i}{\sqrt{d_i}}}{\min_{i \in V} \frac{v_i}{\sqrt{d_i}}} 
\leq 1 + O\left(\frac{1}{\log m}\right).
\]
\end{corollary}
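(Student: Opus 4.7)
The plan is to derive the corollary from \Cref{thm:unifomity_v} by verifying that the hypothesis $\balloss \leq \phi^2/(2000 c \log^2 m)$ implies all hypotheses of the theorem and then simplifying the resulting bound. The key observation, already hinted at in the paragraph preceding the corollary, is that the parameter bound on $\balloss$ is stronger by a factor of $\log m$ than what is required to apply \Cref{thm:unifomity_v}, and this extra $\log m$ slack is exactly what lets the exponential bound in the conclusion collapse to $1 + O(1/\log m)$.

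First I would bound $\lambda = \lambda_{\min}(\md^{-1/2}\mL_G \md^{-1/2})$ in terms of $\balloss$. By \Cref{lem:eta_to_not}, $\|\mn_G - \mn_{\ov G}\|_2 \leq 10 \balloss$, and since $\lambda_1(\mn_{\ov G}) = 0$ (because $\ov G$ is a connected graph and its normalized Laplacian has $\sqrt{d_{\ov G}}$ in its kernel), we obtain $\lambda_1(\mn_G) \leq 10 \balloss$. Then \Cref{fact:eigenvalue_scaling}, applied with the rescaling $\md_G \md^{-1}$ which satisfies $\tfrac{1}{c}\mI \preceq \md_G \md^{-1} \preceq \mI$, yields $\lambda = \lambda_1(\md^{-1/2}\mL_G \md^{-1/2}) \leq \lambda_1(\mn_G) \leq 10 \balloss$.

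Next I would use the assumed bound $\balloss \leq \phi^2/(2000 c \log^2 m)$ to verify the hypotheses of \Cref{thm:unifomity_v}. Since $c \geq 1$ and $m \geq 2$ (so $\log m \geq 1$), we have $\balloss < 0.1$; moreover $c\lambda \leq 10 c \balloss \leq \phi^2/(200 \log^2 m) < 1$, so the hypothesis $c\lambda < 1$ holds. Finally,
\[
\balloss + c\lambda \leq \balloss + 10 c \balloss \leq 11 c \balloss \leq \frac{11 \phi^2}{2000 \log^2 m} \leq \frac{\phi^2}{100 \log^2 m},
\]
which is at least as strong as the required $\balloss + c\lambda \leq \phi^2/(100 \log m)$. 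Thus all hypotheses of \Cref{thm:unifomity_v} are satisfied.

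Applying \Cref{thm:unifomity_v} and plugging the bound $\balloss + c\lambda \leq \phi^2/(100 \log^2 m)$ into its conclusion gives
\[
\frac{\max_i z_i}{\min_i z_i} \leq \exp\!\left(O\!\left(\frac{(\balloss + c\lambda) \log m}{\phi^2}\right)\right) \leq \exp\!\left(O\!\left(\frac{1}{\log m}\right)\right).
\]
The final step is the elementary estimate $e^x \leq 1 + 2x$ for $x \in [0,1]$, which upgrades this to $1 + O(1/\log m)$ and proves the corollary. There is no real technical obstacle here; the only care needed is to track that the hypothesis in the corollary contains an extra $\log m$ factor compared to what the theorem strictly requires, which is precisely the improvement that makes the exponent $O(1/\log m)$ rather than $O(1)$.
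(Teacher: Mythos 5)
Your proof is correct and follows the same route as the paper's, which is given in the paragraph immediately preceding the corollary: apply \Cref{lem:eta_to_not} to bound $\lambda_1(\mn_G)\leq 10\balloss$, use \Cref{fact:eigenvalue_scaling} to transfer this to $\lambda=\lambda_1(\md^{-1/2}\mL_G\md^{-1/2})\leq 10\balloss$, deduce $\balloss+c\lambda\leq \phi^2/(100\log^2 m)$, and observe that the extra $\log m$ slack collapses the exponential in \Cref{thm:unifomity_v} to $1+O(1/\log m)$. You add a bit more bookkeeping (explicitly re-checking $\balloss<0.1$ and $c\lambda<1$, and spelling out $e^x\leq 1+2x$), but the substance is identical.
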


To prove \Cref{thm:unifomity_v}, we first prove the following basic fact that $\lambda_{\min}(\md^{-1/2}\mL_G\md^{-1/2})$ is a simple eigenvalue and all entries of $v_{\min}(\md^{-1/2}\mL_G\md^{-1/2})$ are positive. 
\begin{fact}\label{fact:v_min_positive}
    Let $G = (V,E,\eta)$ be a lossy flow graph %
    that is connected. Let $d \in \R^{V}_{\geq 0}$ satisfy $d_G \leq d \leq c\cdot d_G$ for some $c \geq 1$, and let $\md = \mdiag(d)$. Then $\lambda_{\min}(\md^{-1/2}\mL_G\md^{-1/2})$ is a simple eigenvalue, and $v = v_{\min}(\md^{-1/2}\mL_G\md^{-1/2})$ is strictly positive, i.e., $v_i > 0$ for all $i \in V$. %
\end{fact}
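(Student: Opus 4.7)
The plan is to reduce to a standard application of the Perron--Frobenius theorem after shifting by a multiple of the identity. First I would compute the entries of $\mL_G$ explicitly: expanding $\mL_G = \sum_{e \in E}(\indicVec{b_e} - \eta_e \indicVec{a_e})(\indicVec{b_e} - \eta_e \indicVec{a_e})^\top$, the off-diagonal entry at position $(i,j)$ with $i \neq j$ equals $-\sum_{e=(i,j) \in E} \eta_e - \sum_{e=(j,i) \in E} \eta_e$, which is non-positive since $\eta_e \geq 1$. Conjugating by $\md^{-1/2}$ preserves signs of entries, so $\mathbf{N} \defeq \md^{-1/2}\mL_G\md^{-1/2}$ also has non-positive off-diagonal entries, and its off-diagonal entry at $(i,j)$ is strictly negative whenever $\{i,j\}$ is an edge of the underlying smoothed graph $\gbar$.

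Next, pick any constant $c > \lambda_{\max}(\mathbf{N})$ and consider $\mm \defeq c\mI - \mathbf{N}$. Then $\mm$ is symmetric with strictly positive diagonal and non-negative off-diagonal entries, and $\mm_{ij} > 0$ exactly when $\{i,j\} \in E(\gbar)$. Since $G$ is connected by assumption, $\gbar$ is connected, so $\mm$ is an irreducible non-negative matrix. By the Perron--Frobenius theorem for irreducible non-negative matrices, the spectral radius $\lambda_{\max}(\mm)$ is a simple eigenvalue and admits a strictly positive eigenvector. The eigenvalues of $\mm$ and $\mathbf{N}$ are in bijection via $\mu \mapsto c - \mu$, and the eigenvectors coincide, so $\lambda_{\max}(\mm)$ corresponds to $\lambda_{\min}(\mathbf{N})$. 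This shows $\lambda_{\min}(\mathbf{N})$ is a simple eigenvalue with a strictly positive eigenvector.

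Finally I would argue that this positive eigenvector is exactly the $v_{\min}$ defined by the convention in the preliminaries (unit norm, first nonzero coordinate positive). Since the eigenspace is one-dimensional and contains a strictly positive unit vector $v^\star$, the only two unit vectors in the eigenspace are $\pm v^\star$, and the convention picks $v^\star$ itself. Therefore $v_i = v^\star_i > 0$ for every $i \in V$.

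The only step that requires a little care is verifying irreducibility of $\mm$, which just amounts to checking that the support pattern of the off-diagonal entries of $\mm$ matches the edge set of $\gbar$ (no cancellation occurs because all contributions $-\eta_e$ have the same sign). I do not expect any serious obstacle; the main point is simply that the lossy Laplacian, like the ordinary Laplacian, is a symmetric $Z$-matrix whose associated adjacency pattern is the connected graph $\gbar$, so Perron--Frobenius applies to a shifted version.
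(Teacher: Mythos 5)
Your proof is correct and follows essentially the same route as the paper's: both rewrite $\md^{-1/2}\mL_G\md^{-1/2}$ as a shift of a non-negative matrix, observe that connectivity of $\gbar$ gives irreducibility, and invoke Perron--Frobenius on that shift. You are a bit more explicit about the off-diagonal entries, the choice of shift constant, and the sign/normalization convention, but these are refinements, not a different argument.
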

\begin{proof}
    First, note that $\md^{-1/2}\mL_G \md^{-1/2}$ is a matrix with non-positive off diagonal entries, and a positive diagonal. As such, we can write it as $\md^{-1/2}\mL_G \md^{-1/2} = \mu\mI - \ma$ for some positive integer $\mu$, and some non-negative matrix $\ma$. Now, $\md^{-1/2}\mL_G \md^{-1/2}$ and $\ma$ have the same eigenspaces. $\ma$ is irreducible because $G$ is connected, so the Perron-Frobenius theorem implies that $\lambda_{\max}(\ma)$ is a simple eigenvalue, and it has a corresponding eigenvector $\lambda_{\max}(\ma)$ can be scaled to entrywise positive. %
    This vector is also an eigenvector of $\md^{-1/2}\mL_G \md^{-1/2}$. 
\end{proof}

Next we prove the following two key lemmas. The first lemma provides a lower bound on how much the %
volume of the set $S_{\geq \zeta}(z)$ increases when $\zeta$ is decreased.

\begin{lemma}[Sweep cut bound of $S_{\geq  \zeta}(z)$ with decreasing threshold]\label{lem:v_unifomity_one_step}

In the setting of \Cref{thm:unifomity_v}, for any $\zeta > 0$ such that $\max_{i \in V}{z} \leq 2\zeta$ and $\sum_{i \in S_{\geq \zeta}(z)}(d_{\gbar})_i \leq \frac{1}{2}\sum_{i \in V}(d_{\gbar})_i$, 
\[
\sum_{i \in S_{\geq \zeta'}(z)} (d_{\gbar})_i \geq \left(1 + \frac{\phi}{2}\right) \cdot \sum_{i \in S_{\geq \zeta}(z)} (d_{\gbar})_i~~
\text{ for } 
\zeta' \defeq \left(1 - \frac{10(\balloss + c\lambda)}{\phi}\right) \zeta\,.
\]
\end{lemma}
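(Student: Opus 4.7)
The plan is to prove the lemma by contradiction. Suppose $\sum_{i \in T}(d_{\gbar})_i < \tfrac{\phi}{2}\sum_{i \in S}(d_{\gbar})_i$, with $S \defeq S_{\geq \zeta}(z)$, $T \defeq S_{\geq \zeta'}(z)\setminus S$, $U \defeq V\setminus S_{\geq \zeta'}(z)$, and let $\epsilon \defeq 10(\balloss + c\lambda)/\phi$ so that $\zeta' = (1-\epsilon)\zeta$. The first step is to exploit the expansion of $\gbar$: since $\sum_{i \in S}(d_{\gbar})_i \leq \tfrac{1}{2}\sum_{i \in V}(d_{\gbar})_i$, the definition of conductance gives $|E(S, V\setminus S)| \geq \phi \sum_{i \in S}(d_{\gbar})_i$. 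Since every edge in $E(S, T)$ is incident to some vertex of $T$, $|E(S, T)| \leq \sum_{i \in T}(d_{\gbar})_i < \tfrac{\phi}{2}\sum_{i \in S}(d_{\gbar})_i$, forcing $|E(S, U)| > \tfrac{\phi}{2}\sum_{i \in S}(d_{\gbar})_i$. On each such boundary edge $z$ drops from at least $\zeta$ to below $\zeta'$, with gap at least $\zeta - \eta_e\zeta' \geq \tfrac{9}{10}\epsilon\zeta$ using $\epsilon \geq 10\balloss$ (which follows from $\phi \leq 1$).

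The main analytic step uses the eigenvalue identity $\mL_G z = \lambda \md z$, equivalent to $\md^{-1/2}\mL_G\md^{-1/2}v = \lambda v$ with $z = \md^{-1/2}v$. Multiplying this by $z_i$ and summing over $i \in S$, expanding $(\mL_G z)_i = \sum_{e:\,b_e=i} f_e - \sum_{e:\,a_e = i}\eta_e f_e$ with $f_e \defeq z_{b_e} - \eta_e z_{a_e}$, yields
\begin{equation*}
    \sum_{e \in E(S,S)} f_e^2 \;-\; \sum_{\substack{e:\, a_e \in S \\ b_e \notin S}} \eta_e z_{a_e} f_e \;+\; \sum_{\substack{e:\, b_e \in S \\ a_e \notin S}} z_{b_e} f_e \;=\; \lambda \sum_{i \in S} d_i z_i^2.
\end{equation*}
Each edge in $E(S, U)$ contributes at least $\tfrac{9}{10}\epsilon\zeta^2$ to the LHS in either orientation (case $a_e \in S$: $\eta_e z_{a_e}(\eta_e z_{a_e} - z_{b_e}) \geq \zeta\cdot\epsilon\zeta$; case $b_e \in S$: $z_{b_e}(z_{b_e} - \eta_e z_{a_e}) \geq \zeta\cdot\tfrac{9}{10}\epsilon\zeta$). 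Edges in $E(S, S)$ contribute $f_e^2 \geq 0$, and the only potentially negative contributions come from $E(S, T)$ edges with $b_e \in S, a_e \in T$, each bounded by $z_{b_e} f_e \geq -\balloss\zeta^2$ (since $f_e \geq -\balloss\zeta$ in that case while $z_{b_e} \geq \zeta$). The RHS is controlled using $\max z \leq 2\zeta$, $d_i \leq c(d_G)_i$, and $(d_G)_i \leq (1+\balloss)^2(d_{\gbar})_i$ from \Cref{lem:eta_to_not}, giving $\lambda\sum_{i \in S}d_i z_i^2 \leq 5c\lambda\zeta^2\sum_{i \in S}(d_{\gbar})_i$.

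Combining these with $|E(S, U)| > \tfrac{\phi}{2}\sum_{i \in S}(d_{\gbar})_i$, $|E(S, T)| < \tfrac{\phi}{2}\sum_{i \in S}(d_{\gbar})_i$, dividing by $\zeta^2\sum_{i \in S}(d_{\gbar})_i$, and substituting $\epsilon = 10(\balloss+c\lambda)/\phi$ yields an inequality of the form $\Omega(\balloss + c\lambda) \leq O(c\lambda) + O(\balloss\phi)$, which under the hypotheses $\phi < 0.1$ and $\balloss + c\lambda \leq \phi^2/(100\log m)$ cannot hold once the absolute constants are tracked, producing the desired contradiction. The main obstacle is the careful bookkeeping of the sign-indefinite boundary terms and of the $(1+\balloss)^2$ slack from the lossy degree matrix: the constant $10$ in $\zeta'$ is tuned precisely so that, after combining the $\tfrac{9}{10}$-factor from the gap estimate and the $(1+\balloss)^2$ factor from $\md_G$, the positive $\Omega(\epsilon\phi) = \Omega(\balloss+c\lambda)$ term strictly dominates both the eigenvalue error $O(c\lambda)$ and the negative boundary contribution $O(\balloss\phi)$. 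A subtle secondary point is that the bound $\lambda \leq O(\balloss)$ from \Cref{lem:eta_to_not} (via $\lambda_1(\mn_G) \leq 10\balloss$) may need to be invoked to close the inequality in the regime where $c\lambda$ is not strictly dominated by $\balloss$.
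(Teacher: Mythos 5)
Your approach is a genuinely different route from the paper's. The paper establishes the lemma by summing the \emph{linear} quantity $(\mL_G z)_i$ over $i\in S$, obtaining an identity between $\lambda\sum_{i\in S}d_iz_i$ and edge sums, while you sum the \emph{quadratic} quantity $z_i(\mL_G z)_i$, i.e., a restricted Dirichlet form. Both set up the same contradiction hypothesis $\vol(T) < \tfrac{\phi}{2}\vol(S)$ and both exploit the expansion to force $|E(S,U)| > \tfrac{\phi}{2}\vol(S)$. The quadratic version is a natural spectral-style argument, but it incurs an extra factor from $z_i^2\le 4\zeta^2$ on the right-hand side that the linear argument does not, which makes the constants noticeably tighter to close.

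There is a genuine gap in the calculation as you wrote it, and your proposed patch does not fix it. Tracking your stated constants: the positive contribution per $E(S,U)$ edge is $\tfrac{9}{10}\epsilon\zeta^2$, the negative contribution per bad $E(S,T)$ edge is $\gtrsim -\balloss\zeta^2$, and the right-hand side is bounded by $5c\lambda\zeta^2\vol(S)$. Under the contradiction hypothesis this yields $\tfrac{9}{10}\epsilon\cdot\tfrac{\phi}{2} - O(\balloss\phi) \le 5c\lambda$, i.e., $4.5(\balloss+c\lambda) - O(\balloss\phi) \le 5c\lambda$, which reduces to $\balloss \lesssim 0.1\,c\lambda$. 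That is \emph{not} a contradiction — the theorem hypotheses do not bound $c\lambda/\balloss$ — so the proof does not close. Your ``subtle secondary point'' of invoking $\lambda \le 10\balloss$ does not help either: that gives $c\lambda\le 10c\balloss$, and $c$ is only bounded below by $1$, not above.

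The argument \emph{can} be rescued, but only by tightening two of the constants you rounded. First, the gap factor $\tfrac{9}{10}$ should be replaced by $0.99$: you justified $\tfrac{9}{10}$ via ``$\epsilon\ge 10\balloss$ since $\phi\le 1$,'' but the theorem hypothesis actually gives $\phi<0.1$, so $\epsilon = 10(\balloss+c\lambda)/\phi \ge 100\balloss$, hence $\epsilon-\balloss\ge 0.99\epsilon$. Second, the right-hand side is $\le 4c\lambda(1+\balloss)^2\zeta^2\vol(S)$, and with $\balloss<0.1$ this is $\le 4.84\,c\lambda\zeta^2\vol(S)$; you must not round this up to $5$. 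With both refinements, the inequality under the contradiction hypothesis becomes $4.95(\balloss+c\lambda) - O(\balloss\phi) \le 4.84\,c\lambda$, which simplifies to $4.95\balloss + 0.11\,c\lambda \lesssim 0.1\balloss$ and is impossible. So the constant $10$ in $\zeta'$ is tight enough for the quadratic approach, but only with essentially no slack; the paper's linear argument has more room. Separately, your justification of the negative-term bound $z_{b_e}f_e \ge -\balloss\zeta^2$ ``since $z_{b_e}\ge\zeta$'' is backwards — when $f_e<0$ you need an \emph{upper} bound on $z_{b_e}$. Either use $z_{b_e}\le 2\zeta$ to get $\ge -2\balloss\zeta^2$, or note that $f_e<0$ forces $z_{b_e}<(1+\balloss)\zeta$, giving $\ge -(1+\balloss)\balloss\zeta^2$; in any case this term is dominated and does not affect whether the argument closes.
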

\begin{proof}
Let $S \defeq S_{\geq \zeta}(z)$ and $S' \defeq S_{\geq \zeta'}(z)$. Also recall that we use $d$ to denote our approximation to $d_G$, the diagonal of the Laplacian $\mL_G$, and $d_{\gbar}$ to denote the degrees of the smoothed graph $\gbar$. %

Since $v$ is the eigenvector corresponding to the smallest eigenvalue $\lambda$ of $\md^{-1/2}\mL_G\md^{-1/2}$, it satisfies that $\lambda v = \md^{-1/2}\mL_G\md^{-1/2}v$. Therefore $z = \md^{-1/2}v$ satisfies $\lambda \md z = \mL_Gz$. %
By considering $S$ and $z$ we see that,
\begin{align*}
    \lambda \cdot \sum_{i \in S} d_i \cdot z_i = &~ \sum_{i \in S} (\mL_G\cdot z)_i\\
    =&~\sum_{i \in S}\left(\sum_{e =(a,i)\in E}(z_i-\eta_ez_a) + \sum_{e =(i,a)\in E}(-\eta_ez_a+\eta_e^2z_i)\right)\\
    \geq&~\sum_{i \in S}\left((d_G)_i\cdot z_i- (1 + \balloss)\sum_{e = (i,a) \text{ or }(a,i) \in E}z_a\right),
\end{align*}
where the second step follows from $\mL_G = \mb_G^\top \mb_G$, and the row $e=(a,b)$ of $\mb_G$ is $\indicVec{b}-\eta_e\indicVec{a}$, and the third step follows from $\eta_e \leq (1 + \balloss)$. Rearranging yields that:
\begin{align*}
\sum_{i \in S}\Big((d_G)_i - \lambda d_i\Big) \cdot z_i \leq (1 + \balloss)\sum_{i \in S}\sum_{e = (a,i) \text{ or }(i,a) \in E }z_a.
\end{align*}
Since $d \leq c\cdot d_G$, the above equation implies that: %
\begin{align*}
    (1 - c\lambda)\sum_{i \in S}(d_G)_i \cdot z_i &\leq (1 + \balloss)\sum_{i \in S}\sum_{e = (a,i) \text{ or }(i,a) \in E }z_a \notag \\
    &= (1 + \balloss)\left(\sum_{i \in S}|E(i,S)|z_i + \sum_{i\in S} \sum_{e=(i,a) \text{ or } (a,i) \in E(S, \ov{S})} z_a\right),
\end{align*}
where in the second step we define $\ov{S} \defeq E \backslash S$ and split the last summation over edges from $S$ to $\overline{S}$ and edges from $S$ to $S$. 
Rearranging terms yields that
\begin{align}\label{eq:v_unifomity_one_step_01}
    &~ (1 - c\lambda) \left(\sum_{i \in S}(d_G)_i \cdot z_i - \sum_{i \in S}|E(i,S)|z_i \right) \notag \\
    \leq &~ (\balloss + c \lambda) \sum_{i \in S}|E(i,S)|z_i + (1 + \balloss)\sum_{i\in S} \sum_{e=(i,a) \text{ or } (a,i) \in E(S, \ov{S})} z_a .
\end{align}
Since for every $i \in S$, $(d_G)_i \geq |E(i,S)|$ and $z_i \geq \zeta$, 
\begin{align}\label{eq:v_unifomity_one_step_02}
(1 - c\lambda) \left(\sum_{i \in S}(d_G)_i \cdot z_i - \sum_{i \in S}|E(i,S)|z_i \right) \geq (1 - c\lambda) \left(\sum_{i \in S}(d_G)_i - \sum_{i \in S}|E(i,S)| \right) \zeta.
\end{align}
Since $z_i \leq 2 \zeta$, we also have
\begin{align}\label{eq:v_unifomity_one_step_03}
&~(\balloss + c \lambda) \sum_{i \in S}|E(i,S)|z_i + (1 + \balloss)\sum_{i\in S} \sum_{e=(i,a) \text{ or } (a,i) \in E(S, \ov{S})} z_a \notag \\
\leq &~ 2(\balloss + c \lambda) \sum_{i \in S}|E(i,S)| \zeta + (1 + \balloss)\sum_{i\in S} \sum_{e=(i,a) \text{ or } (a,i) \in E(S, \ov{S})} z_a.
\end{align}
Combining the above three equations \Cref{eq:v_unifomity_one_step_01}, \eqref{eq:v_unifomity_one_step_02}, and \eqref{eq:v_unifomity_one_step_03} yields that 
\begin{align}\label{eq:v_unifomity_one_step_1}
\left((1 - c\lambda) \sum_{i \in S}(d_G)_i - \Big(1 - c\lambda + 2(\balloss + c \lambda) \Big) \sum_{i \in S}|E(i,S)| \right) \zeta \leq (1 + \balloss)\sum_{i\in S} \sum_{e=(i,a) \text{ or } (a,i) \in E(S, \ov{S})} z_a.
\end{align}
Since $\sum_{i \in S} (d_G)_i \geq \sum_{i \in S} (d_{\gbar})_i = (\sum_{i \in S} |E(i,S)|) + |E(S, \ov{S})|$,
\begin{align*}
&~(1 - c\lambda) \cdot \sum_{i \in S} (d_G)_i - \Big(1 - c\lambda + 2(\balloss + c \lambda) \Big)\cdot \sum_{i \in S} |E(i,S)| \\
=&~ (1 - c\lambda) \cdot \sum_{i \in S} (d_G)_i - \Big(1 + c\lambda + 2\balloss\Big)\cdot \left(\sum_{i \in S} (d_{\gbar})_i-|E(S, \ov{S})|\right)\\
\geq &~ \left(1 + c\lambda + 2\balloss\right)\cdot|E(S, \ov{S})|-(2\balloss + 2 c\lambda)\sum_{i\in S}(d_G)_i.
\end{align*}
Since the graph $\gbar$ has expansion $\phi$, and since $\sum_{i \in S}(d_{\gbar})_i \leq \frac{1}{2} \sum_{i \in V}(d_{\gbar})_i$, we have that $|E(S,\ov{S})| \geq \phi \sum_{i \in S}(d_{\gbar})_i \geq \phi (1 + \balloss)^{-2}\sum_{i \in S}(d_G)_i$, hence the above equation gives
\begin{align*}
&~(1 - c\lambda) \cdot \sum_{i \in S} (d_G)_i - \Big(1 - c\lambda + 2(\balloss + c \lambda) \Big)\cdot \sum_{i \in S} |E(i,S)| \\
\geq &~ \left(1 + c\lambda + 2\balloss - \frac{(2\balloss+ 2c\lambda)(1 + \balloss)^2}{\phi}\right) \cdot |E(S,\ov{S})|\\
\geq &~\left(1 - \frac{3(\balloss + c\lambda)}{\phi}\right)\cdot |E(S,\ov{S})|.
\end{align*}
Combining this equation with \Cref{eq:v_unifomity_one_step_1} yields that
\begin{align}\label{eq:v_unifomity_one_step_2}
\sum_{i\in S} \sum_{e=(i,a) \text{ or } (a,i) \in E(S, \ov{S})} z_a &\geq 
\left(1 - \frac{3(\balloss + c\lambda)}{\phi}-\balloss\right) \cdot |E(S, \ov{S})| \cdot \zeta \notag \\
&\geq \left(1 - \frac{4\balloss + 3c\lambda}{\phi}\right) \cdot |E(S, \ov{S})| \cdot \zeta.
\end{align}

Let $T \subseteq \ov{S}$ denote the set of vertices such that $z_a \geq \zeta' = (1 - \frac{10 (\balloss + c\lambda)}{\phi}) \cdot \zeta$. \Cref{eq:v_unifomity_one_step_2} implies that we must have $|E(S,T)| \geq \frac{\phi}{2} \cdot \sum_{i \in S} (d_{\gbar})_i$, and next we prove this claim by contradiction. Assume that we instead have $|E(S,T)| < \frac{\phi}{2} \cdot \sum_{i \in S} (d_{\gbar})_i$. First note that since any $a \notin S$ satisfies $z_a \leq \zeta$ and any $a \notin S \cup T$ satisfies $z_a \leq (1 - \frac{10 (\balloss + c\lambda)}{\phi}) \cdot \zeta$, 
\begin{align*}
\sum_{i \in S} \sum_{e = (i,a) \text{ or } (a,i) \in E(S, \ov{S})} z_a = &~ \sum_{i \in S} \sum_{e = (i,a) \text{ or } (a,i) \in E(S, T)} z_a + \sum_{i \in S} \sum_{e = (i,a) \text{ or } (a,i) \in E(S, \ov{S}\backslash T)} z_a \\
\leq &~ \zeta \cdot |E(S,T)| + \left(1 - \frac{10 (\balloss + c\lambda)}{\phi}\right) \cdot \zeta \cdot |E(S, \ov{S}\backslash T)| \\
= &~ \frac{10 (\balloss + c\lambda)}{\phi} \cdot \zeta \cdot |E(S,T)| + \left(1 - \frac{10 (\balloss + c\lambda)}{\phi}\right) \cdot \zeta \cdot |E(S, \ov{S})|,
\end{align*}
where in the third step we used that $|E(S, \ov{S}\backslash T)| = |E(S, \ov{S})| - |E(S, T)|$. Using the assumption that $|E(S,T)| < \frac{\phi}{2} \cdot \sum_{i \in S} (d_{\gbar})_i$, and since $|E(S,\ov{S})| \geq \phi \sum_{i \in S}(d_{\gbar})_i$, the above equation gives
\begin{align*}
\sum_{i \in S} \sum_{e = (i,a) \text{ or } (a,i) \in E(S, \ov{S})} z_a < &~ \frac{10(\balloss + c\lambda)}{2} \cdot \zeta \cdot \sum_{i \in S} (d_{\gbar})_i + \left(1 - \frac{10(\balloss + c\lambda)}{\phi}\right) \cdot \zeta \cdot |E(S, \ov{S})| \\
\leq &~ \left( 1 - \frac{10 (\balloss + c\lambda)}{\phi} + \frac{10(\balloss + c\lambda)}{2 \phi} \right) \cdot \zeta \cdot |E(S, \ov{S})| \\
< &~ \left(1 - \frac{4\balloss + 3c\lambda}{\phi}\right) \cdot \zeta \cdot |E(S, \ov{S})|.
\end{align*}
This inequality contradicts with \Cref{eq:v_unifomity_one_step_2}, so we have proved $|E(S,T)| \geq \frac{\phi}{2} \cdot \sum_{i \in S} (d_{\gbar})_i$. 

Finally note that by definition $S' = S \cup T$, and $|E(S,T)| \leq \sum_{i \in T} (d_{\gbar})_i$, so we have %
\begin{equation*}
\sum_{i \in S'} (d_{\gbar})_i = \sum_{i \in S} (d_{\gbar})_i + \sum_{i \in T} (d_{\gbar})_i \geq \sum_{i \in S} (d_{\gbar})_i + |E(S,T)| \geq \left(1 + \frac{\phi}{2}\right) \cdot \sum_{i \in S} (d_{\gbar})_i. \qedhere
\end{equation*}
\end{proof}

Note that \Cref{lem:v_unifomity_one_step} only holds for sets $S \subseteq V$ where $\sum_{i \in S}(d_{\gbar})_i \leq \frac{1}{2}\sum_{i \in V}(d_{\gbar})_i$. For the rest of the vertices, the following lemma provides a lower bound on how much the volume of the set $S_{\leq \zeta}(z)$ increases when $\zeta$ is increased.

\begin{lemma}[Sweep cut bound of $S_{\leq \zeta}(z)$ with increasing threshold]
\label{lem:v_unifomity_one_step_increase}
In the setting of \Cref{thm:unifomity_v}, for any $\zeta > 0$ such that $\sum_{i \in S_{\leq \zeta}(z)}(d_{\gbar})_i \leq \frac{1}{2}\sum_{i \in V}(d_{\gbar})_i$,
\[
\sum_{i \in S_{\leq \zeta'}(z)} (d_{\gbar})_i \geq \left(1 + \frac{\phi}{2}\right) \cdot \sum_{i \in S_{\leq \zeta}(z)} (d_{\gbar})_i~~ \text{for }\zeta' \defeq \left(1 + \frac{10\balloss}{\phi}\right) \zeta.
\]

\end{lemma}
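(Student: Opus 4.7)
The plan is to mirror the contradiction argument of \Cref{lem:v_unifomity_one_step}, adapted to the small side. Let $S \defeq S_{\leq \zeta}(z)$, $S' \defeq S_{\leq \zeta'}(z)$, and $\sigma \defeq \sum_{i \in S}(d_{\gbar})_i$. Since $\sum_{i \in S'}(d_{\gbar})_i \geq \sigma + |E(S, S' \setminus S)|$, it will suffice to show $|E(S, S' \setminus S)| \geq \frac{\phi}{2}\sigma$.

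The derivation will start from $\lambda d_i z_i = (\mL_G z)_i = (d_G)_i z_i - \sum_{e : i \in e} \eta_e z_{\text{other}}$, sum over $i \in S$, and re-index by edges to produce
\begin{equation*}
\sum_{e \in E(S, \ov{S})} \eta_e z_{e, \ov{S}} = \sum_{i \in S}\bigl((d_G)_i - \lambda d_i\bigr) z_i - \sum_{e \in E(S,S)} \eta_e (z_a + z_b),
\end{equation*}
where $z_{e, \ov{S}}$ denotes the value of $z$ at the $\ov{S}$-endpoint of $e$. Dropping the non-negative $\lambda d_i z_i$ term, expanding $(d_G)_i z_i$ into per-edge contributions $\eta_e^2 z_a + z_b$ on $E(S,S)$ and cancelling against $\eta_e(z_a + z_b)$, collapses the internal-edge contribution to $(\eta_e - 1)(\eta_e z_a - z_b)$ per edge, each bounded in absolute value by $2\balloss\zeta$. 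Combined with $\eta_e \geq 1$ and $z_i \leq \zeta$, this yields the sharp upper bound
\begin{equation*}
\sum_{e \in E(S, \ov{S})} z_{e, \ov{S}} \;\leq\; 2 \balloss \zeta \,|E(S,S)| + (1 + 3\balloss)\zeta \,|E(S, \ov{S})|.
\end{equation*}

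To close the argument, I would assume for contradiction that $|E(S, S' \setminus S)| < \frac{\phi}{2}\sigma$. The expansion hypothesis $|E(S, \ov{S})| \geq \phi\sigma$ then forces $|E(S, \ov{S} \setminus S')| > \frac{\phi}{2}\sigma$; since $a \in \ov{S}$ implies $z_a > \zeta$ and $a \in \ov{S} \setminus S'$ implies $z_a > \zeta' = (1 + 10\balloss/\phi)\zeta$, the matching lower bound is
\begin{equation*}
\sum_{e \in E(S, \ov{S})} z_{e, \ov{S}} \;\geq\; \zeta\,|E(S, \ov{S})| + (\zeta' - \zeta)|E(S, \ov{S} \setminus S')| \;>\; \zeta|E(S, \ov{S})| + 5\balloss \zeta \sigma.
\end{equation*}
Chaining these two inequalities and dividing by $\balloss\zeta$ gives $5\sigma < 2|E(S,S)| + 3|E(S, \ov{S})|$, which upon substituting $\sigma = 2|E(S,S)| + |E(S, \ov{S})|$ rearranges to $8|E(S,S)| + 2|E(S, \ov{S})| < 0$, the desired contradiction (as $S$ nonempty and $\gbar$ connected guarantees at least one edge incident to $S$).

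The main obstacle is the sharpness required in the upper bound on $\sum z_{e, \ov{S}}$. Unlike \Cref{lem:v_unifomity_one_step}, whose threshold shrinks by the larger factor $O((\balloss + c\lambda)/\phi)$, here the threshold grows by only $O(\balloss/\phi)$, so any $O(1)\zeta\sum_{i \in S}(d_G)_i$-style bound on the boundary sum is too weak to beat the lower bound from the contradiction hypothesis. Isolating the small $(\eta_e - 1)(\eta_e z_a - z_b)$ contribution on internal edges (which is $O(\balloss \zeta)$ per edge rather than the naive $O(\zeta)$) is precisely what produces the factor-of-$5$ margin needed to contradict the identity $\sigma = 2|E(S,S)| + |E(S,\ov{S})|$.
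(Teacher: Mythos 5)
Your proof is correct and follows essentially the same route as the paper's: sum the eigenvector identity $\lambda \md z = \mL_G z$ over the sweep cut $S$, upper-bound the boundary sum $\sum_{e\in E(S,\ov{S})} z_{e,\ov{S}}$ using that $(d_G)_i z_i$ on internal edges nearly cancels against $\eta_e(z_a+z_b)$, and derive a contradiction against the lower bound forced by the hypothesis $|E(S,S'\setminus S)|<\frac{\phi}{2}\sigma$. The only differences are bookkeeping — your per-edge cancellation $(\eta_e-1)(\eta_e z_a - z_b)$ replaces the paper's vertex-level rearrangement to $(1-\lambda)\big(\sum_{i\in S}(d_G)_i - \sum_{i\in S}|E(i,S)|\big)\zeta$, and you close via the exact identity $\sigma = 2|E(S,S)| + |E(S,\ov{S})|$ where the paper instead re-invokes expansion to get $\sigma \leq \phi^{-1}|E(S,\ov{S})|$ — and both produce the same $\Theta(\balloss/\phi)$ margin.
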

\begin{proof}
In this proof we define $S \defeq S_{\leq \zeta}(z)$ and $S' \defeq S_{\leq \zeta'}(z)$. %
Since $v$ is the eigenvector corresponding to the smallest eigenvalue $\lambda$ of $\md^{-1/2}\mL_G\md^{-1/2}$, we have that $\lambda v = \md^{-1/2}\mL_G\md^{-1/2}v$. So we have that $\lambda \md z = \mL_Gz$. By considering the vertices in $S$ for the previous equation, we get:
\begin{align*}
    \lambda \cdot \sum_{i \in S} d_i \cdot z_i = &~ \sum_{i \in S} (\mL_G\cdot z)_i\\
    =&~\sum_{i \in S}\left(\sum_{e =(a,i)\in E}(z_i-\eta_ez_a) + \sum_{e =(i,a)\in E}(-\eta_ez_a+\eta_e^2z_i)\right)\\
    \leq&~\sum_{i \in S}\left((d_G)_i\cdot z_i - \sum_{e = (i,a) \text{ or }(a,i) \in E }z_a\right),
\end{align*}
where the second step follows from $\mL_G = \mb_G^\top \mb_G$, and the row $e=(a,b)$ of $\mb_G$ is $\indicVec{b}-\eta_e\indicVec{a}$, and the third step follows since $\eta_e \geq 1$ and all $z_a \geq 0$ by \Cref{fact:v_min_positive}. Rearranging, we get
\begin{align*}
\sum_{i \in S}\Big((d_G)_i - \lambda d_i\Big) \cdot z_i \geq \sum_{i \in S}\sum_{e = (a,i) \text{ or }(i,a) \in E}z_a.
\end{align*}
Since $d_G \leq d \leq c\cdot d_G$, we have that $(d_G)_i - \lambda d_i \leq (1-\lambda) (d_G)_i$, and combining with the above equation we have
\begin{align*}
    (1 - \lambda)\sum_{i \in S}(d_G)_i \cdot z_i &\geq \sum_{i \in S}\sum_{e = (a,i) \text{ or }(i,a) \in E }z_a\\
    &= \sum_{i \in S}|E(i,S)|z_i + \sum_{i \in S} \sum_{e = (i,a) \text{ or } (a,i) \in E(S, \ov{S})} z_a,
\end{align*}
where in the second step we define $\ov{S} \defeq E \backslash S$ and split the last summation over edges from $S$ to $\overline{S}$ and edges from $S$ to $S$. 
Rearranging terms yields that
\begin{align*}
    (1 - \lambda) \left(\sum_{i \in S}(d_G)_i \cdot z_i - \sum_{i \in S}|E(i,S)|z_i \right) &\geq \lambda \sum_{i \in S}|E(i,S)|z_i + \sum_{i \in S} \sum_{e = (i,a) \text{ or } (a,i) \in E(S, \ov{S})} z_a.
\end{align*}
Since for every $i \in S$, $(d_G)_i \geq |E(i,S)|$ and $0 \leq z_i \leq \zeta$, the above inequality implies that
\begin{align*}
(1 - \lambda) \left(\sum_{i \in S}(d_G)_i - \sum_{i \in S}|E(i,S)| \right) \zeta \geq \sum_{i \in S} \sum_{e = (i,a) \text{ or } (a,i) \in E(S, \ov{S})} z_a.
\end{align*}
Since $\sum_{i \in S}(d_G)_i \leq (1+\balloss)^{2} \sum_{i \in S}(d_{\gbar})_i = (1+\balloss)^{2} \cdot \big(|E(S,\ov{S})| + \sum_{i \in S}|E(i,S)|\big)$, the above inequality implies that
\begin{align}\label{eq:v_unifomity_one_step_increase_1}
\sum_{i \in S} \sum_{e = (i,a) \text{ or } (a,i) \in E(S, \ov{S})} z_a \leq &~ (1 - \lambda) \left( (1+\balloss)^{2} |E(S,\ov{S})|  + (2\balloss+\balloss^2) \sum_{i \in S}|E(i,S)| \right) \zeta \notag \\
\leq &~ \left( (1+\balloss)^{2} |E(S,\ov{S})|  + \frac{(2\balloss+\balloss^2)}{\phi} |E(S,\ov{S})| \right) \zeta \notag \\
\leq &~ \left(1 + \frac{4\balloss}{\phi}\right) |E(S,\ov{S})| \zeta,
\end{align}
where the second step follows from $\gbar$ has expansion $\phi$, which means $|E(S,\ov{S})| \geq \phi \sum_{i \in S}(d_{\gbar})_i$, and so $\sum_{i \in S} |E(i,S)| \leq \sum_{i\in S}(d_{\gbar})_i \leq \frac{1}{\phi} |E(S,\ov{S})|$.

Let $T \subseteq \ov{S}$ denote the set of vertices such that $z_a \leq \zeta' = (1 + \frac{10\balloss}{\phi}) \cdot \zeta$. \Cref{eq:v_unifomity_one_step_increase_1} implies that we must have $|E(S,T)| \geq \frac{\phi}{2} \cdot \sum_{i \in S} (d_{\gbar})_i$, and next we prove this claim by contradiction. Assume that we instead have $|E(S,T)| < \frac{\phi}{2} \cdot \sum_{i \in S} (d_{\gbar})_i$. First note that since any $a \notin S$ satisfies $z_a \geq \zeta$ and any $a \notin S \cup T$ satisfies $z_a \geq (1 + \frac{10\balloss}{\phi}) \cdot \zeta$, 
\begin{align*}
\sum_{i \in S} \sum_{e = (i,a) \text{ or } (a,i) \in E(S, \ov{S})} z_a = &~ \sum_{i \in S} \sum_{e = (i,a) \text{ or } (a,i) \in E(S, T)} z_a + \sum_{i \in S} \sum_{e = (i,a) \text{ or } (a,i) \in E(S, \ov{S}\backslash T)} z_a \\
\geq &~ \zeta \cdot |E(S,T)| + \left(1 + \frac{10 \balloss}{\phi}\right) \cdot \zeta \cdot |E(S, \ov{S}\backslash T)| \\
= &~ -\frac{10 \balloss}{\phi} \cdot \zeta \cdot |E(S, T)| +  \left(1 + \frac{10 \balloss}{\phi}\right) \cdot \zeta \cdot |E(S, \ov{S})|,
\end{align*}
where in the third step we used that $|E(S, \ov{S}\backslash T)| = |E(S, \ov{S})| - |E(S, T)|$. Using the assumption that $|E(S,T)| < \frac{\phi}{2} \cdot \sum_{i \in S} (d_{\gbar})_i$, and since $|E(S,\ov{S})| \geq \phi \sum_{i \in S}(d_{\gbar})_i$, the above equation gives
\begin{align*}
\sum_{i \in S} \sum_{e = (i,a) \text{ or } (a,i) \in E(S, \ov{S})} z_a 
> &~ -\frac{10\balloss}{2} \cdot \zeta \cdot \sum_{i \in S} (d_{\gbar})_i + \left(1 + \frac{10\balloss}{\phi}\right) \cdot \zeta \cdot |E(S, \ov{S})| \\
\geq &~ \left( 1 + \frac{10 \balloss}{\phi} - \frac{10\balloss}{2 \phi} \right) \cdot \zeta \cdot |E(S, \ov{S})| \\
\geq &~ \left(1 + \frac{5\balloss}{\phi}\right) \cdot \zeta \cdot |E(S, \ov{S})|.
\end{align*}
This inequality contradicts with \Cref{eq:v_unifomity_one_step_increase_1}, so we have proved $|E(S,T)| \geq \frac{\phi}{2} \cdot \sum_{i \in S} (d_{\gbar})_i$. 

Finally note that by definition $S' = S \cup T$, and $|E(S,T)| \leq \sum_{i \in T} (d_{\gbar})_i$, so we have
\begin{equation*}
\sum_{i \in S'} (d_{\gbar})_i = \sum_{i \in S} (d_{\gbar})_i + \sum_{i \in T} (d_{\gbar})_i \geq \sum_{i \in S} (d_{\gbar})_i + |E(S,T)| \geq \left(1 + \frac{\phi}{2}\right) \cdot \sum_{i \in S} (d_{\gbar})_i. \qedhere
\end{equation*}
\end{proof}

When $S$ is the set of vertices with $z$ value greater than $\zeta$, we have shown in \Cref{lem:v_unifomity_one_step} that a large fraction of the %
neighboring vertices of $S$ has $z$ value that is only slightly smaller than $\zeta$. Similarly we have also shown in \Cref{lem:v_unifomity_one_step_increase} that when $S$ is the set of vertices with $z$ value less than $\zeta$, a large fraction of the neighboring vertices of $S$ has $z$ value that is only slightly larger than $\zeta$. Using these two lemmas, together with the fact that the underlying graph is an expander, we can prove the bound on $z$ over the whole graph.

\begin{proof}[Proof of \Cref{thm:unifomity_v}]
Initially let $\zeta_0 \defeq \max_{i} z_i$ and let $S_0 \defeq S_{\geq \zeta_0}(z)$, and note that $|S_0| \geq 1$. For any integer $k\geq 1$, let $\zeta_k \defeq (1 - \frac{10(\balloss + c\lambda)}{\phi}) \zeta_{k-1}$, and let $S_k \defeq S_{\geq \zeta_k}(z)$. For any integer $k \geq 0$ such that $(1 - \frac{10(\balloss + c\lambda)}{\phi})^{k} \geq \frac{1}{2}$ and $\sum_{i\in S_k} (d_{\gbar})_i \leq \frac{1}{2} \sum_{i \in V} (d_{\gbar})_i$, any $k' < k$ also satisfies these two conditions, so applying \Cref{lem:v_unifomity_one_step} for $k+1$ times yields that
\begin{align*}
\sum_{i \in S_{k+1}} (d_{\gbar})_i \geq \left(1 + \frac{\phi}{2}\right) \cdot \sum_{i \in S_{k}} (d_{\gbar})_i \geq \cdots \geq \left(1 + \frac{\phi}{2}\right)^{k+1} \sum_{i \in S_{0}} (d_{\gbar})_i \geq \left(1 + \frac{\phi}{2}\right)^{k+1}.
\end{align*}
Let $k^*$ denote the largest integer such that $(1 - \frac{10(\balloss + c\lambda)}{\phi})^{k^*} \geq \frac{1}{2}$ and $\sum_{i\in S_{k^*}} (d_{\gbar})_i \leq \frac{1}{2} \sum_{i \in V} (d_{\gbar})_i$. Then the above inequality implies that $\left(1 + \frac{\phi}{2}\right)^{k^*} \leq \sum_{i \in S_{k^*}} (d_{\gbar})_i \leq m$, which gives that $k^* \leq \frac{2 \log m}{\phi}$. Also note that $(1 - \frac{10(\balloss + c\lambda)}{\phi})^{k^*} \geq \frac{1}{2}$ must be true because of the assumption $(\balloss + c\lambda)\log{m} \leq \frac{1}{100}\phi^2$. So by definition the set $S_{k^*+1}$ satisfies $\sum_{i\in S_{k^*+1}} (d_{\gbar})_i > \frac{1}{2} \sum_{i \in V} (d_{\gbar})_i$, and all vertices $i \in S_{k^*+1}$ satisfy
\begin{align}\label{eq:unifomity_v_step_1}
\frac{v_i}{\sqrt{d_i}} \geq \zeta_{k^*+1} \geq \left(1 - \frac{10(\balloss + c\lambda)}{\phi}\right)^{2 \log m / \phi + 1} \cdot \zeta_0 \geq \exp\left(-O\left(\frac{(\balloss + c\lambda) \log m}{\phi^2}\right)\right) \cdot \max_{i'} \frac{v_{i'}}{\sqrt{d_{i'}}}.
\end{align}

Similarly, we let $\zeta'_0 \defeq \min_{i} z_i$, and $S'_0 \defeq S_{\leq \zeta'_0}(z)$. Note that $\zeta'_0 > 0$ by \Cref{fact:v_min_positive}, and $|S'_0| \geq 1$. For any integer $k \geq 1$, let $\zeta'_k \defeq (1+\frac{10\balloss}{\phi}) \zeta'_{k-1}$, and $S'_k \defeq S_{\leq\zeta'_k}(z)$. For any integer $k \geq 0$ such that $\sum_{i \in S'_k} (d_{\gbar})_i \leq \frac{1}{2} \sum_{i \in V} (d_{\gbar})_i$, applying \Cref{lem:v_unifomity_one_step_increase} for $k+1$ times yields that 
\[
\sum_{i \in S'_{k+1}} (d_{\gbar})_i \geq (1+\frac{\phi}{2})^{k+1}.
\]
Let $k'^*$ denote the largest interger such that $\sum_{i \in S'_{k'^*}} (d_{\gbar})_i \leq \frac{1}{2} \sum_{i \in V} (d_{\gbar})_i$. Then the above inequality implies that $\left(1 + \frac{\phi}{2}\right)^{k'^*} \leq \sum_{i \in S'_{k'^*}} (d_{\gbar})_i \leq m$, which gives that $k'^* \leq \frac{2 \log m}{\phi}$. So by definition the set $S'_{k'^*+1}$ satisfies $\sum_{i\in S'_{k'^*+1}} (d_{\gbar})_i > \frac{1}{2} \sum_{i \in V} (d_{\gbar})_i$, and all vertices $i \in S'_{k'^*+1}$ satisfy
\begin{align}\label{eq:unifomity_v_step_2}
\frac{v_i}{\sqrt{d_i}} \leq \zeta'_{k'^*+1} \leq \left(1 + \frac{10\balloss}{\phi}\right)^{2 \log m / \phi + 1} \cdot \zeta'_0 \leq \exp\left(O\left(\frac{\balloss \log m}{\phi^2}\right)\right) \cdot \min_{i'} \frac{v_{i'}}{\sqrt{d_{i'}}},
\end{align}

Finally since $\sum_{i\in S_{k^*+1}} (d_{\gbar})_i > \frac{1}{2} \sum_{i \in V} (d_{\gbar})_i$ and $\sum_{i\in S'_{k'^*+1}} (d_{\gbar})_i > \frac{1}{2} \sum_{i \in V} (d_{\gbar})_i$, we have that $S_{k^*+1} \cap S'_{k'^*+1} \neq \emptyset$. Combining \Cref{eq:unifomity_v_step_1} and \Cref{eq:unifomity_v_step_2} for $i \in S_{k^*+1} \cap S'_{k'^*+1}$ gives us the claimed bound of this theorem.
\end{proof}

\section{Heavy Hitters for Balanced Lossy  Expanders}\label{sec:heavy_hitters_balanced_expanders}

In this section we present our heavy hitter data structure for lossy graphs that are $\balloss$-balanced $\phi$-expanders undergoing edge and vertex deletions, under the assumption that the degree of each vertex remains within a $1/9$ fraction of its original degree. 
The pseodocode for the data structure is given in Algorithm~\ref{alg:heavy_hitter_expander} and \ref{alg:heavy_hitter_expander_continued}. This section proves \Cref{thm:heavy_hitter_expander} below  which analyzes the pseudocode and thereby essentially provides the formal version of \Cref{sec:tech_heavy_hitter_balanced_expander} from the technical overview and serves as a key building block towards our heavy hitter data structure for general two-sparse matrices, which we will present in the next section.

Before analyzing  \Cref{thm:heavy_hitter_expander}, we first provide an overview of  $\textsc{QueryHeavy}$, a key operation of this data structure (Line~\ref{algline:query_heavy_start} to \ref{algline:query_heavy_end}), which find entries $e$ of $\mb h$ such that $(\mb h)_e \geq \epsilon$ for a query $h$. All other operations supported by the data structure are either straightforward or easily implementable using the variables maintained to efficiently answer these heavy hitter queries.

Finding the heavy hitters of $\mb h$ is equivalent to finding the heavy hitters of $\mb \md^{-1/2} \md^{1/2} h$, so $\textsc{QueryHeavy}$ finds the heavy hitters of $\mb\md^{-1/2} g$ for $g = \md^{1/2}h$. 
As discussed in the technical overview, the key technical part of the algorithm is to maintain an approximate vector $v$ to the smallest eigenvector $v_{\min}$ of the normalized lossy Laplacian $\mntilde$. Using $v$, the algorithm decomposes $g = \gv + \gpv$ where $\gv = v v^{\top} g$ and $\gpv = (\mI - v v^{\top}) g$, and then finds the heavy hitters of $\mb\md^{-1/2} \gv$ and $\mb\md^{-1/2} \gpv$ separately (see Line~\ref{algline:large_entries_gv} and Line~\ref{algline:large_entries_gpv}). 
By \Cref{thm:spectral_sparsification}, a unit vector $v$ is a sufficiently good approximation to $v_{\min}$ if $v^\top \mntilde v \leq 10\lambda_1(\mntilde)$. In each heavy hitter query, we check whether this condition still holds by testing the necessary condition $\|\gpv\|_2^2 \leq \wt{O}(g^{\top} \mntilde g)$ (see Line~\ref{algline:if_JL}), where the norms can be computed efficiently using a Johnson-Lindenstrauss sketch. %
If the condition no longer holds, then we recompute $v$. 
One final detail is that the Rayleigh quotient $v^\top \mntilde v$ may increase when $v$ is renormalized after vertex deletions (see Line~\ref{algline:vertex_deletion_renormalization}). Crucially, by \Cref{thm:unifomity_v}, $\frac{v}{\sqrt{d}}$ is approximately uniform, so as long as we do not delete too many edges, this renormalization cannot increase the Rayleigh quotient by too much. %

\begin{theorem}[Heavy hitters on balanced lossy expanders]\label{thm:heavy_hitter_expander}
There is a data structure (Algorithm~\ref{alg:heavy_hitter_expander}) that supports the following operations w.h.p.~against adaptive inputs:
\begin{itemize}
\item $\textsc{Initialize}(G=(V,E,\eta), \epsad \in (0,1), \ov{\tau} \in \R_{\geq 0}^E)$: Initializes with a lossy $\balloss$-balanced $\phi$-expander $G=(V,E,\eta)$ where $|V|=n$, $|E|=m$, and $\balloss \leq \frac{\phi^2}{10^5 \log^2(m)}$, a parameter $0 \leq \epsad\leq \frac{\phi^2}{10^5 \log^2(m)}$, and a vector $\ov{\tau} \geq \allzero$ in amortized $O(m\phi^{-2}\log^2(n / \epsad)\log(m))$ time. 
\item $\textsc{Delete}(F \subseteq E)$: %
Delete the edges in $F$ from $G$ in amortized $O(|F| \log m)$ time, also remove any vertex whose degree drops to $0$,  provided that the graph after deletion is still a $\phi$-expander and the degree of any vertex $v$ remains greater than $1/9$ of the original initialized degree. %
\item \textsc{ScaleTau}$(e \in E, b\in \R_{\geq 0})$: Sets $\ov{\tau}_e \leftarrow b$ in worst-case $O(1)$ time.
\item $\textsc{QueryHeavy}(h \in \R^V, \epsilon \in (0,1))$: Returns a set $I \subseteq E$ containing exactly all $e \in E$ that satisfies $|\mb_G h|_e \geq \epsilon$ in amortized $O\left(\phi^{-4} \epsilon^{-2} \|\mb_G h\|_2^2 + \phi^{-4} \epsilon^{-2} \epsad \|\md_G^{1/2}h\|_2^2 + n\right)$ time. 
\item \textsc{Norm}$(h \in \R^V)$: Returns $L \in \R_{\geq 0}$ in amortized $O(n)$ time such that 
\begin{align*}
\|\mb_G h\|_2^2 \leq L \leq O\left(\phi^{-4} \|\mb_G h\|_2^2 + \phi^{-4} \epsad \|\md_G^{1/2}h\|_2^2\right). %
\end{align*}
\item \textsc{Sample}$(h \in \R^V, C_0, \ov{C}_1, \ov{C}_2, C_3)$: Let a vector $p \in \R^E$ satisfy
\[
p_e \geq \min \left\{1, ~ \ov{C_1} \cdot (\mb_G h)_e^2 + \ov{C}_2 + C_3 \ov{\tau}_e \right\}.
\]
Let $S = \sum_{e \in E} p_e$.  Let $X$ be a random variable which equals to $p_e^{-1} \indicVec{e}$ with probability $p_e / S$ for all $e \in E$. This operation returns a random diagonal matrix $\mr = C_0^{-1} \sum_{j=1}^{C_0 S} \mdiag(X_j)$, where $X_j$ are i.i.d.~copies of $X$. The amortized time of this operation and also the output size of $\mr$ are bounded by 
\[
O\left(C_0 \ov{C}_1 \phi^{-4} \log m(\|\mb_G h\|_2^2 + \epsad \|\md_G^{1/2}h\|_2^2) + C_0 \ov{C}_2 m \log m + C_0 C_3 \|\ov{\tau}\|_1 \log m + n \log n \right).
\]
\end{itemize}
\end{theorem}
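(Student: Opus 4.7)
}

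The plan is to implement the data structure by maintaining, throughout all edge and vertex deletions, an approximate least eigenvector $v$ of the shifted normalized lossy Laplacian $\mm \defeq \md^{-1/2} \mL_G \md^{-1/2} + \epsad \mI$ along with the precomputed image vector $\mb_G \md^{-1/2} v$ and a single random Johnson--Lindenstrauss sketch $\mj \mb_G \md^{-1/2}$ (with $O(\log m)$ rows) so that norms of vectors of the form $\mb_G \md^{-1/2} g$ can be estimated in $O(n)$ time per query. In addition we store the current adjacency lists, the current degree vector, and the auxiliary vector $\ov{\tau}$. The invariant we maintain is that $v$ is a unit vector with
\[
v^\top \mm v \;\le\; C \cdot \lambda_{\min}(\mm)
\]
for a suitable constant $C = O(1)$. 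Given this invariant, \Cref{thm:spectral_sparsification} (applied with $c_1 = c_2 = O(1)$, using the spectral-gap hypothesis $\lambda_2(\mn_{\gbar}) \ge \phi^2/2$ for a $\phi$-expander via Cheeger and \Cref{lem:eta_to_not}) guarantees
\[
\mm \;\approx_{O(\phi^{-2})}\; \mI - (1-\lambda) vv^\top
\quad\text{with}\quad \lambda = v^\top \mm v,
\]
which is precisely the spectral identity driving the heavy-hitter and sampling steps.

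\smallskip
For \textsc{QueryHeavy}, I would set $g = \md^{1/2} h$ and split it as $g^v = (v^\top g) v$ and $g^{\perp v} = (\mI - vv^\top)g$, so that $\mb_G h = \mb_G \md^{-1/2} g^v + \mb_G \md^{-1/2} g^{\perp v}$ and any heavy entry of $\mb_G h$ is heavy in at least one summand. On the $g^v$ side, $\mb_G \md^{-1/2} g^v = (v^\top g)\cdot (\mb_G \md^{-1/2} v)$ is a scalar multiple of the stored vector, whose entries we keep sorted so that all heavy hitters are reported in time $O(\epsilon^{-2} \|\mb_G \md^{-1/2} g^v\|_2^2 + \log m)$. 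On the $g^{\perp v}$ side, I would iterate over all vertices $i$ with $|g^{\perp v}_i|/\sqrt{d_i} \ge \epsilon/3$ and check their incident edges; since the graph is $\balloss$-balanced this finds every $\ge \epsilon$ entry, and the work is bounded by $\sum_i d_i \cdot (g^{\perp v}_i)^2 / (d_i \epsilon^2) \cdot O(d_i) = O(\epsilon^{-2} \sum_i (g^{\perp v}_i)^2) = O(\epsilon^{-2} \|g^{\perp v}\|_2^2)$. The spectral approximation then converts $\|g^{\perp v}\|_2^2 = (g^{\perp v})^\top (\mI - (1-\lambda)vv^\top) g^{\perp v}$ into $O(\phi^{-4})\cdot(\|\mb_G \md^{-1/2} g^{\perp v}\|_2^2 + \epsad \|g^{\perp v}\|_2^2)$, which by the triangle inequality is absorbed into the target bound $O(\phi^{-4} \epsilon^{-2} (\|\mb_G h\|_2^2 + \epsad \|\md_G^{1/2} h\|_2^2) + n)$.

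\smallskip
The main technical step, and the only delicate one, is maintaining the invariant on $v$ efficiently across a long sequence of edge and vertex deletions. I would test the invariant at the start of each query by comparing $\|\gpv\|_2$ to $\|\mb_G \md^{-1/2} \gpv\|_2^2 + \epsad \|\gpv\|_2^2$ using the stored JL sketch (which can be applied to $\md^{-1/2} \gpv$ in $O(n \log m)$ time); if the ratio exceeds the threshold dictated by \Cref{thm:spectral_sparsification}, the algorithm recomputes $v$ from scratch by calling \Cref{thm:power_iteration_least_eigenvalue} on the current $\mm$ in $O(m \phi^{-2} \log(n/\epsad))$ time, and refreshes the stored $\mb_G \md^{-1/2} v$ in $O(m)$ time. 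To bound how often this can happen, I would argue that (i) between recomputations, edge deletions only decrease $v^\top \mL_G v$, so the Rayleigh quotient of $v$ in $\mm$ can only decrease from edges; (ii) when a vertex is deleted the algorithm truncates $v$ onto the surviving coordinates and renormalizes, and by the uniformity result \Cref{cor:uniformity_v_with_parameters} applied at the last recomputation time, together with the hypothesis that vertex degrees only drop by a factor of $1/9$, the total mass of $v$ on deleted vertices stays bounded by a constant fraction, so renormalization increases $v^\top \mm v$ by at most a fixed multiplicative constant; (iii) whenever the invariant fails, $\lambda_{\min}(\mm)$ itself must have shrunk by a constant factor compared to its value at the previous recomputation, since $v$ was the minimizer then. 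Because $\lambda_{\min}(\mm) \ge \epsad$ throughout, this bounds the number of recomputations by $O(\log(1/\epsad))$, giving the claimed amortized cost. The hard part here is combining (ii) and (iii) cleanly: uniformity is stated for the true $v_{\min}$, whereas the $v$ we store is only an approximate eigenvector, so I would use \Cref{lem:bound_on_v_topv} to transfer closeness in Rayleigh quotient to coordinate-wise closeness before invoking the uniformity bound.

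\smallskip
The \textsc{Norm} operation reuses the JL sketch to estimate $\|\mb_G \md^{-1/2} g\|_2^2$, with the spectral approximation yielding the two-sided bound in $O(n)$ time; the test vector for recomputation is piggybacked onto this. For \textsc{Sample}, I would decompose the target distribution exactly as in the query: sample proportional to $(\mb_G \md^{-1/2} g^v)_e^2$ using the stored sorted list in $O(\log m)$ per sample, and sample from the $\gpv$ component by picking a vertex $i$ with probability $\propto (\gpv_i)^2/d_i$ and then a uniformly random incident edge. Applying the spectral approximation once more shows these capture the target probabilities up to $O(\phi^{-4})$ factors, and adding independent per-edge $O(\ov{C}_2 + C_3 \ov{\tau}_e)$ terms (via straightforward per-edge Bernoulli sampling aided by a tree over $\ov{\tau}$) produces the claimed matrix $\mr$ in the stated time. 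The operations \textsc{Delete} and \textsc{ScaleTau} only update the adjacency structure, the degree vector, the JL sketch (rank-one updates on deleted rows), and the $\ov{\tau}$ tree, all in the stated amortized cost, with the invariant-repair cost for $v$ charged through the amortization argument above.
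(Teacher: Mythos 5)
Your proposal follows the same overall architecture as the paper's proof (maintain an approximate least eigenvector $v$, split $g$ into $\gv$ and $\gpv$, use \Cref{thm:spectral_sparsification} and a JL sketch to decide when to recompute $v$, and count recomputations by tracking $\lambda_{\min}(\mm)$ against its floor $\epsad$), and the query/norm/sample implementations you sketch are essentially the ones in the paper. However, there are two concrete gaps.

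First, the theorem promises correctness \emph{against adaptive inputs}, and your proposal keeps a single JL sketch for the lifetime of the structure. The decision of whether to call \textsc{Reset} is made by comparing $\|\gpv\|_2^2$ to a JL estimate of $\|\mb_G h\|_2^2$, and this decision feeds into the returned set $I$ (since a reset changes $v$ and hence $\gv,\gpv$). So the output leaks information about the sketch, and an adaptive adversary can eventually find a query vector $h$ on which the JL estimate fails. The paper handles this by regenerating a fresh JL matrix inside every \textsc{Reset} and then arguing (by comparison to an ``exact-test'' variant of the algorithm) that, conditioned on the high-probability event that every JL estimate so far was accurate, the actual sequence of resets coincides with the exact-test sequence, so the output up to any point is independent of the \emph{current} JL matrix. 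Without this regeneration and coupling argument, the w.h.p.~guarantee against adaptive inputs does not hold.

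Second, in step (ii) of your recomputation bound you attribute the ``mass of $v$ on deleted vertices is a constant fraction'' claim to the hypothesis that degrees only drop to $1/9$ of their initial value. That hypothesis gives you $\md \approx_{9} \md_G$ (needed so that $c_1$ and $c$ in \Cref{thm:spectral_sparsification} and \Cref{thm:power_iteration_least_eigenvalue} are $O(1)$), but it says nothing about how much total volume can disappear: with only the $1/9$-floor constraint, all but one vertex could be deleted (each dropping straight to degree $0$), driving $\|v^{\ext}\|_2$ to nearly $0$ and blowing up the Rayleigh quotient after renormalization. What is actually needed, and what the paper uses at \Cref{eq:v_ext_bound}, is a separate restart trigger inside \textsc{Delete}: \textsc{Reset} whenever the number of edges has halved since the last reset. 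Then at most half the degree-volume lies on deleted vertices, and combining this with \Cref{cor:uniformity_v_with_parameters} and \Cref{lem:bound_on_v_topv} gives $\|v^{\ext}\|_2^2 \ge 1/3$, hence the factor-$3$ bound on the Rayleigh quotient increase. Your proposal omits this trigger entirely, so the amortization argument as written does not close. (As a minor point, \Cref{lem:bound_on_v_topv} controls the $\ell_2$ angle $1-(v^\top v_{\min})^2$, not coordinate-wise closeness; the paper uses it to split $v^{\old}$ into its $v_{\min}^{\old}$-component and an orthogonal remainder, applying uniformity to the former only.)
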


\begin{algorithm}[ht!]
\caption{Heavy hitter on $\balloss$-balanced lossy $\phi$-expander}
\label{alg:heavy_hitter_expander}
\DontPrintSemicolon
\SetKw{Global}{Global:}
\SetKwInOut{Input}{Input}
\SetKwInOut{Output}{Output}
\SetKwFunction{Initialize}{Initialize}
\SetKwFunction{QueryHeavy}{QueryHeavy}
\SetKwFunction{Delete}{Delete}
\SetKwFunction{ComputeEigenvector}{ComputeEigenvector}
\SetKwFunction{Reset}{Reset}
\SetKwFunction{ScaleTau}{ScaleTau}
\SetKwFunction{Norm}{Norm}
\SetKwFunction{Sample}{Sample}
\SetKwProg{Fn}{procedure}{:}{}

\Global $v \in \R^V$, $u \in \R^E$, $\ov{\tau} \in \R^E_{\geq 0}$, an array $\pi_u$ of size $|E|$, a diagonal matrix $\md \in \R^{V \times V}$, $\mj \in \R^{k \times E}$, $\mm \in \R^{k \times V}$ \tcp*{$k = O(\log|E|)$ chosen at initialization}

\Fn{\Initialize{$G=(V,E,\eta), \epsad \in (0,1), \ov{\tau} \in \R_{\geq 0}^E$}}{%
    $\md \leftarrow \md_{\gbar}$, $\ov{\tau} \leftarrow \ov{\tau}$ \; %
    $\textsc{Reset}()$
}

\Fn{\QueryHeavy{$h \in \mathbb{R}^V, \epsilon \in (0,1)$}}{\label{algline:query_heavy_start}
    $g \leftarrow \md^{1/2} h$,  $\gv \leftarrow v v^{\top} g$, and $\gpv \leftarrow (\mI - v v^{\top}) g$ \; \label{algline:first_line_queryheavy}
    Compute $t = \phi^{-4} \cdot \left(\|\mm h\|_2^2 + \epsad \|\md^{1/2}h\|_2^2\right)$ \; \label{algline:t_JL}
    \lIf{$\|\gpv\|_2^2 \geq 5\cdot10^6 t$} {\label{algline:if_JL} %
        $\textsc{Reset}()$ \label{algline:recompute_v}
    }
    Compute $I \leftarrow S_{\geq \frac{\epsilon}{2}}(\|\gv\|_2 \cdot |u|)$ by binary search on $\pi_u$  \label{algline:large_entries_gv} \\
    \tcp*{Compute heavy entries of $\mb_G \md^{-1/2} \gv = \|\gv\|_2 \cdot u$} 
    \For{$j \in V$}{\label{algline:for_loop_vertices}
        \If{$|{d_j^{-1/2}\gpv_j}| \geq \frac{\epsilon}{6}$}{\label{algline:large_entries_gpv}
            Add all edges $e \in E$ that are adjacent to vertex $j$ to $I$ \\
            \tcp*{Compute heavy entries of $\mb_G \md^{-1/2} \gpv$}
        }
    }
    Remove from $I$ all edges $e$ that doesn't satisfy $|\mb_G h|_e \geq \epsilon$. \;
    \Return $I$\label{algline:query_heavy_end}
}

\Fn{\Delete{$F \subseteq E$}}{
    $\mm \leftarrow \mm - \mj_{:,F} (\mb_G)_{F,:}$\;
    Delete the edges in $F$ from the graph $G$, the vector $u$, and the list $\pi_u$. \;
    Let $S \subset V$ be the set of vertices adjacent to edges in $F$ whose degree has dropped to $0$. Delete the vertices $S$ from the graph $G$ and the matrix $\md$. \;
    $\mm \leftarrow \mm_{:V\backslash S}$\;
    $R \leftarrow \|v_{V\backslash S}\|_2$, $v \leftarrow \frac{v_{V\backslash S}}{R}$, $u \leftarrow \frac{u}{R}$ \;\label{algline:vertex_deletion_renormalization}
    \lIf{number of edges in $G$ decreased by a factor of 2 since the last reset}{\label{algline:edge_decrease_by_2}
        $\textsc{Reset}()$
    }
}

\Fn{\Reset{}}{
\tcp*{Reset $v$ to a close approximation of $v_{\min}$, generate a new JL matrix $\mj$, and update $u$, $\pi_u$, $\mm$ accordingly}
    $v \leftarrow \textsc{PowerIteration}\big(4 \mI - (\md^{-1/2}\mb_G^\top \mb_G\md^{-1/2}+\epsad \mI),\epsad/4\big)$ \tcp*{\Cref{thm:power_method}} \label{algline:power_iteration}
    $u \leftarrow \mb_G\md^{-1/2} \cdot v$ \;
    Compute an array $\pi_u$ of indices such that $|u_{\pi_u(1)}| \geq |u_{\pi_u(2)}| \geq \cdots \geq |u_{\pi_u(|E|)}|$ \;
    $\mj \leftarrow \textsc{JL}(|V|,\poly(|E|),0.01,|V|^{-10}) \in \R^{k\times E}$ where $k = O(\log |E|)$ \tcp*{\Cref{lem:jl}} 
    $\mm \leftarrow \mj \mb_G$\;
}

\end{algorithm}

\begin{algorithm}[ht!]
\caption{Heavy hitter on $\balloss$-balanced lossy $\phi$-expander (Algorithm ~\ref{alg:heavy_hitter_expander} continued)}
\label{alg:heavy_hitter_expander_continued}
\DontPrintSemicolon
\SetKwInOut{Input}{Input}
\SetKwInOut{Output}{Output}
\SetKwFunction{Initialize}{Initialize}
\SetKwFunction{QueryHeavy}{QueryHeavy}
\SetKwFunction{Delete}{Delete}
\SetKwFunction{ComputeEigenvector}{ComputeEigenvector}
\SetKwFunction{Reset}{Reset}
\SetKwFunction{Norm}{Norm}
\SetKwFunction{Sample}{Sample}
\SetKwFunction{ScaleTau}{ScaleTau}
\SetKwProg{Fn}{procedure}{:}{}
\textbf{procedure} \ScaleTau{$e \in E, b \in\R_{\geq 0}$}\textbf{:} {$\ov{\tau}_e \leftarrow b$ }

\Fn{\Norm{$h \in \R^V$}}{
    Execute Line~\ref{algline:first_line_queryheavy}  to Line~\ref{algline:recompute_v} of $\textsc{QueryHeavy}$ to update $v$ if necessary, and to compute vectors $g, \gv, \gpv$. \;
    \Return $10 (\|\gv\|_2^2 \cdot \|u\|_2^2 + \|\gpv\|_2^2)$ %
}

\Fn{\Sample{$h \in \R^V, C_0, \ov{C}_1, \ov{C}_2, C_3$}}{
    Execute Line~\ref{algline:first_line_queryheavy}  to Line~\ref{algline:recompute_v} of $\textsc{QueryHeavy}$ to update $v$ if necessary, and to compute vectors $g, \gv, \gpv$. \;
    Implicitly define $p_e = \min\left\{1,~5\ov{C}_1 \big(\|\gv\|_2^2 u_e^2 + \frac{(\gpv_{a_e})^2}{d_{a_e}} + \frac{(\gpv_{b_e})^2}{d_{b_e}}\big) + \ov{C}_2 + C_3 \ov{\tau}_e \right\}$, $\forall e \in E$ \;
    $S_1 \leftarrow 5\ov{C}_1 \|\gv\|_2^2 \|u\|_2^2$, $S_2 \leftarrow \sum_{i \in V}\frac{5 \ov{C}_1 (\gpv_i)^2 \cdot (d_{\gbar})_i}{d_i}$, $S_3 \leftarrow \ov{C}_2 m + C_3 \|\ov{\tau}\|_1$, $S \leftarrow S_1 + S_2 + S_3$\;\label{algline:def_S_value}
    \For{$j \in [C_0 S]$}{
    Sample a uniformly random number $r \in [0,1]$.\;
    \If{$r \leq \frac{S_1}{S}$}{
    Let $x_j \leftarrow p_e^{-1} \indicVec{e}$ with probability $\frac{5\ov{C}_1 \|\gv\|_2^2 u_e^2}{S_1}$ for each $e \in E$.\;\label{algline:sample_by_u}
    }
    \If{$\frac{S_1}{S} < r \leq \frac{S_1+S_2}{S}$}{
    Sample a vertex $i \in V$ with probability $\frac{5 \ov{C}_1 (\gpv_i)^2 \cdot (d_{\gbar})_i}{d_i \cdot S_2}$, then uniformly sample an edge $e$ adjacent to $i$, and let $x_j \leftarrow p_e^{-1} \indicVec{e}$. \;\label{algline:sample_by_vertex}
    }
    \If{$\frac{S_1+S_2}{S} < r \leq 1$}{
    Let $x_j \leftarrow p_e^{-1} \indicVec{e}$ with probability $\frac{\ov{C}_2 + C_3  \ov{\tau}_e}{S_3}$ for each $e \in E$.\; \label{algline:sample_by_tau}
    }
    }
    \Return $\mr = C_0^{-1} \sum_{j=1}^{C_0 S} \mdiag(x_j)$
}
\end{algorithm}

To prove \Cref{thm:heavy_hitter_expander}, we use the following tools:
\paragraph{Cheeger's inequality.} We use Cheeger's inequality which relates the conductance of a graph and the second smallest eigenvalue of its normalized Laplacian. 
\begin{theorem}[Cheeger's Inequality]\label{thm:cheegers}
Let $G = (V,E,w)$ be a weighted graph, let $\mn_G$ be its normalized Laplacian, and let $\phi(G)$ be the conductance of $G$. Then, 
\begin{align*}
    \frac{\phi(G)^2}{2} \leq \lambda_2(\mn_G) \leq 2\cdot\phi(G).
\end{align*}
\end{theorem}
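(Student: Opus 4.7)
The plan is to prove the two inequalities separately, treating the upper bound $\lambda_2(\mn_G) \le 2\phi(G)$ as the easy direction and the lower bound $\lambda_2(\mn_G) \ge \phi(G)^2/2$ as the difficult one. For both directions I will work with the variational characterization
\[
\lambda_2(\mn_G) \;=\; \min_{x \perp \md_G^{1/2}\allone} \frac{x^\top \mn_G x}{x^\top x} \;=\; \min_{y \,:\, y^\top \md_G \allone = 0} \frac{\sum_{(i,j) \in E} w_{ij}(y_i-y_j)^2}{\sum_i (d_G)_i y_i^2}\,,
\]
where the substitution $y = \md_G^{-1/2} x$ converts the normalized Rayleigh quotient into the standard graph Laplacian quadratic form.

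For the easy direction, let $S \subseteq V$ be a set achieving conductance $\phi(G)$, and without loss of generality assume $\vol(S) \le \vol(V \setminus S)$. I will construct the test vector $y \in \R^V$ that equals $\vol(V \setminus S)$ on $S$ and $-\vol(S)$ on $V \setminus S$; this $y$ is $\md_G$-orthogonal to the all-ones vector by construction. A direct computation gives $\sum_{(i,j)\in E} w_{ij}(y_i-y_j)^2 = w(E(S, V \setminus S)) \cdot \vol(V)^2$ and $\sum_i (d_G)_i y_i^2 = \vol(S) \vol(V\setminus S) \vol(V)$, so plugging into the Rayleigh quotient yields
\[
\lambda_2(\mn_G) \;\le\; \frac{w(E(S, V\setminus S)) \cdot \vol(V)}{\vol(S)\vol(V\setminus S)} \;\le\; \frac{2\, w(E(S,V\setminus S))}{\vol(S)} \;=\; 2\phi(G)\,,
\]
using $\vol(V\setminus S) \ge \vol(V)/2$.

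For the hard direction, I will use the classical sweep-cut argument. Let $x$ be the unit eigenvector of $\mn_G$ corresponding to $\lambda_2$, set $y = \md_G^{-1/2} x$, and reorder the vertices so that $y_1 \ge y_2 \ge \cdots \ge y_n$. By shifting $y$ by a constant (without changing the Laplacian numerator), I can assume that the median vertex with respect to $\md_G$-volume has $y$-value zero, and then replacing $y$ by its positive or negative part (whichever gives the larger normalized contribution) I obtain a nonnegative vector $z$ supported on at most half the volume that still satisfies $\sum_{(i,j)} w_{ij}(z_i-z_j)^2 \le \lambda_2(\mn_G) \sum_i (d_G)_i z_i^2$. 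I then consider the sweep cuts $S_t = \{i : z_i^2 > t\}$ for $t \in [0, \max_i z_i^2]$ and show some level set has conductance at most $\sqrt{2\lambda_2(\mn_G)}$. The standard Cauchy–Schwarz trick is
\[
\sum_{(i,j)\in E} w_{ij} |z_i^2 - z_j^2| \;\le\; \Bigl(\sum_{(i,j)} w_{ij}(z_i - z_j)^2\Bigr)^{1/2}\Bigl(\sum_{(i,j)} w_{ij}(z_i+z_j)^2\Bigr)^{1/2}\,,
\]
and the second factor is bounded by $\sqrt{2 \sum_i (d_G)_i z_i^2}$. A telescoping/coarea identity writes $\sum_{(i,j)} w_{ij}|z_i^2 - z_j^2| = \int_0^\infty w(E(S_t, V\setminus S_t))\,dt$ and $\sum_i (d_G)_i z_i^2 = \int_0^\infty \vol(S_t)\,dt$; since every $S_t$ has $\vol(S_t) \le \vol(V)/2$, the ratio of the two integrals is at least $\phi(G)$, and combining with the Cauchy–Schwarz bound gives $\phi(G) \le \sqrt{2\lambda_2(\mn_G)}$, i.e.\ $\lambda_2(\mn_G) \ge \phi(G)^2/2$.

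The main obstacle, and the only genuinely nontrivial step, is arranging the Cauchy–Schwarz calculation and the median-shift so that the sweep cuts $S_t$ are all guaranteed to lie on the ``small-volume'' side (so that we can legitimately use $w(E(S_t,V\setminus S_t))/\vol(S_t) \ge \phi(G)$ on every level). Everything else is bookkeeping with the variational formula, and the JL/spectral machinery developed earlier in the paper is not needed here since this is a pure spectral-graph-theory fact about the smoothed graph $\gbar$.
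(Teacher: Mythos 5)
The paper does not prove \Cref{thm:cheegers}; it is stated as a classical fact and used as a black box (to deduce $\lambda_2(\mn_{\gbar}) \geq \phi^2/2$ in the proof of \Cref{thm:heavy_hitter_expander}), so there is no in-paper argument to compare against. Your proposal is the standard sweep-cut proof of Cheeger's inequality and is correct: the easy-direction test vector $y = \vol(V\setminus S)\cdot\allone_S - \vol(S)\cdot\allone_{V\setminus S}$ is $\md_G$-orthogonal to $\allone$ and the Rayleigh-quotient computation gives the factor $2$ exactly as you claim, and the hard-direction chain (median shift, positive/negative part, Cauchy--Schwarz, coarea identity) is the textbook argument. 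Two small bookkeeping points you would want to make explicit in a full write-up: (i) the median shift $y \mapsto y - c$ does not change the Laplacian numerator but can only \emph{increase} the denominator $\sum_i (d_G)_i(y_i-c)^2 = \sum_i (d_G)_i y_i^2 + c^2\vol(V)$ (using $y \perp_{\md_G}\allone$), so the shifted vector still witnesses a Rayleigh quotient $\le \lambda_2$; and (ii) the split $\tilde y = z^+ - z^-$ satisfies $\sum_{(i,j)} w_{ij}(\tilde y_i - \tilde y_j)^2 \ge \sum w_{ij}(z^+_i-z^+_j)^2 + \sum w_{ij}(z^-_i-z^-_j)^2$ and $\sum_i (d_G)_i \tilde y_i^2 = \sum (d_G)_i (z^+_i)^2 + \sum (d_G)_i(z^-_i)^2$, so \emph{at least one} of $z^+, z^-$ has Rayleigh quotient $\le\lambda_2$ (``smaller quotient,'' not ``larger contribution''), and its support has $\md_G$-volume at most $\vol(V)/2$ by the median choice, which is exactly what keeps every nonempty sweep set $S_t$ on the small-volume side.
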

Using Cheeger's inequality, the lossy $\balloss$-balanced $\phi$-expander $G$ of \Cref{thm:heavy_hitter_expander} satisfies $\lambda_2(\mn_{\gbar}) \geq \frac{\phi^2}{2} > 20\balloss$, allowing us to use the spectral result \Cref{thm:spectral_sparsification}.

\paragraph{JL estimate.}
We use the Johnson-Lindenstrauss Lemma that allows us to approximate the $\ell_2$ norm of a vector via a lower-dimensional embedding:
\begin{lemma}[Johnson-Lindenstrauss Lemma \cite{jl84}]\label{lem:jl} 
There exists a function $\textsc{JL}(n,m,\epsilon,\delta)$ that returns a random matrix $\mj \in \R^{k\times n}$ where $k = O(\epsilon^{-2}\log(m/\delta))$, and $\mj$ satisfies that for any $m$-element subset $V \subset \R^n$,
\begin{align*}
    \Pr\Big[\forall v \in V, (1 - \epsilon) \|v\|_2 \leq \|\mj v\|_2 \leq (1 + \epsilon)\|v\|_2\Big] \geq 1 - \delta. 
\end{align*}
Furthermore, the function $\textsc{JL}$ runs in $O(kn)$ time. 
\end{lemma}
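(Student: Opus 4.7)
The plan is to prove the Johnson-Lindenstrauss Lemma via the standard random Gaussian matrix construction combined with concentration bounds for sums of squared Gaussians. I will first describe the construction, then establish single-vector concentration via a $\chi^2$ tail bound, then apply a union bound over the $m$ vectors of $V$, and finally argue the runtime.

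For the construction, I would define $\textsc{JL}(n,m,\epsilon,\delta)$ to output the matrix $\mj \in \R^{k \times n}$ with entries $\mj_{i,j} = \frac{1}{\sqrt{k}} g_{i,j}$, where $\{g_{i,j}\}$ are i.i.d.~standard Gaussian random variables and $k = \lceil c' \epsilon^{-2} \log(m/\delta) \rceil$ for a sufficiently large absolute constant $c'$ to be fixed at the end. (A sub-Gaussian construction such as scaled Rademacher entries would also work, but Gaussians yield the cleanest analysis.) Drawing the $kn$ entries requires $O(kn)$ time, giving the stated runtime bound immediately.

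Next I would fix a single nonzero vector $v \in \R^n$ and analyze $\|\mj v\|_2^2$. By rotational invariance of i.i.d.~Gaussians, each coordinate $(\mj v)_i = \frac{1}{\sqrt{k}}\sum_j g_{i,j}\, v_j$ is distributed as $\frac{\|v\|_2}{\sqrt{k}} Z_i$ with $Z_i$ a standard normal, and the $(\mj v)_i$ are mutually independent across $i$. Hence
\[
\|\mj v\|_2^2 = \frac{\|v\|_2^2}{k}\sum_{i=1}^k Z_i^2,
\]
so $\|\mj v\|_2^2 / \|v\|_2^2$ equals $\frac{1}{k}$ times a $\chi^2_k$ random variable of mean $1$. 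I would then invoke the standard $\chi^2$ concentration inequality (proved by a Chernoff/Laplace transform argument using $\E[e^{t(Z_i^2-1)}] = (1-2t)^{-1/2}e^{-t}$ for $t < 1/2$) to obtain, for every $\epsilon \in (0,1)$ and an absolute constant $c > 0$,
\[
\Pr\!\left[\left|\tfrac{1}{k}\sum_{i=1}^k Z_i^2 - 1\right| \geq \epsilon\right] \leq 2\exp(-c k \epsilon^2).
\]
Passing from the squared-norm deviation to the norm deviation via $|x^2-1|\leq \epsilon/2 \Rightarrow |x-1|\leq \epsilon$ (after shrinking the constant) yields $\Pr[(1-\epsilon)\|v\|_2 \leq \|\mj v\|_2 \leq (1+\epsilon)\|v\|_2] \geq 1 - 2\exp(-c k \epsilon^2)$ for a suitable $c$.

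Finally I would choose the constant $c'$ in the definition of $k$ so that $2\exp(-c k \epsilon^2) \leq \delta/m$, making the per-vector failure probability at most $\delta/m$. A union bound over the $m$ vectors of $V$ then gives the simultaneous guarantee with probability at least $1-\delta$, which is exactly the statement. The only technically delicate step is the $\chi^2$ tail bound; the rotational-invariance reduction, the union bound, and the $O(kn)$ runtime accounting are all routine. I expect the main obstacle to be writing the concentration inequality cleanly and tracking the absolute constants so that the final $k = O(\epsilon^{-2}\log(m/\delta))$ bound is visibly satisfied.
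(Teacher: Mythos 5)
The paper states this lemma as a cited result from \cite{jl84} and does not include a proof, so there is no in-paper argument to compare against. Your proof via the i.i.d.~Gaussian construction, the reduction of $\|\mj v\|_2^2/\|v\|_2^2$ to a normalized $\chi^2_k$ variable by rotational invariance, the Chernoff-style $\chi^2$ tail bound, and the union bound over the $m$ vectors is the standard and correct argument, and the $O(kn)$ sampling time is also right; the only caveat worth flagging is that an exact $O(kn)$ \emph{bit-complexity} bound strictly requires a bounded-precision variant (e.g.~$\pm1/\sqrt{k}$ Rademacher entries, which admit the same sub-Gaussian analysis), whereas sampling true Gaussians is only $O(kn)$ in the real-RAM model the paper implicitly assumes.
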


\paragraph{Power Method.} We also use the standard power iteration to compute the maximum eigenvector of a matrix, as defined in \Cref{thm:power_method}. The guarantees of applying \textsc{PowerIteration} to our desired matrix $\md^{-1/2}\mL_G \md^{-1/2}+\epsad \mI$ are given in \Cref{thm:power_iteration_least_eigenvalue}.

\begin{proof}[Proof of Theorem~\ref{thm:heavy_hitter_expander}]
\textbf{Correctness of JL estimates against adaptive inputs:} 
We first prove that w.h.p.~in any execution of $\textsc{QueryHeavy}(\cdot)$, $\|\mm h\|_2 \approx_{1.01} \|\mb_G h\|_2$. Since the IPM makes at most $\poly(m)$ calls to $\textsc{QueryHeavy}$, the Johnson-Lindenstrauss lemma (\Cref{lem:jl}) implies that $\|\mm h\|_2 \approx_{1.01} \|\mb_G h\|_2$ for all queries with high probability, provided the query vectors $h$ are oblivious to the JL matrix $\mj$. 

Next we prove that the query vectors $h$ are indeed oblivious to $\mj$. We consider an auxiliary algorithm \textsc{AlgSlow} that computes $t' = 2 \phi^{-4} \cdot \left(\|\mb_G h\|_2^2 + \epsad \|\md^{1/2}h\|_2^2\right)$ on Line~\ref{algline:t_JL}, and uses $t'$ instead of $t$ on Line~\ref{algline:if_JL} to determine if $\|\gpv\|_2^2 \geq 5\cdot10^6 t'$. The outputs of Algorithm~\ref{alg:heavy_hitter_expander} and \textsc{AlgSlow} are exactly the same until one of them enters the if-clause on Line~\ref{algline:if_JL}. Conditioned on the high probability event that $\|\mm h\|_2 \approx_{1.01} \|\mb_G h\|_2$ for all previous queries, whenever \textsc{AlgSlow} enters the if-clause on Line~\ref{algline:if_JL}, Algorithm~\ref{alg:heavy_hitter_expander} also does so, and Algorithm~\ref{alg:heavy_hitter_expander} generates a new JL matrix $\mj$ whenever this happens. Consequently, with high probability the output of Algorithm~\ref{alg:heavy_hitter_expander} is independent of the current JL matrix $\mj$.

For the remainder of the proof, we condition on the high probability event that $\|\mm h\|_2 \approx_{1.01} \|\mb_G h\|_2$ in any execution of $\textsc{QueryHeavy}(\cdot)$.

\textbf{Time complexity of \textsc{Initialize} and \textsc{Reset}:} We first bound the running time of one call to $\textsc{Reset}(\cdot)$. The most time-consuming step in $\textsc{Reset}(\cdot)$ is the call to $\textsc{PowerIteration}(\cdot)$, which by \Cref{thm:power_iteration_least_eigenvalue} takes $O(m\phi^{-2}\log{(\frac{n}{\epsad})})$ time since $c = 9$, and because $\textsc{Delete}(\cdot)$ guarantees the degree of any vertex $v$ remains greater than $1/9$ of the original initialized degree or drops to $0$. All other steps, including computing $\mm = \mj \mb_G$, and sorting the list $\pi_u$, can all be computed in $O(m \log m)$ time.

Note that apart from $\textsc{Initialize}(\cdot)$, $\textsc{Reset}(\cdot)$ is only called inside $\textsc{QueryHeavy}(\cdot)$ and $\textsc{Delete}(\cdot)$. Next we prove that during all $\textsc{QueryHeavy}(\cdot)$ and $\textsc{Delete}(\cdot)$, $\textsc{Reset}(\cdot)$ is invoked for at most $O(\log(\frac{n}{\epsad}) + \log(m))$ times, which then proves that the total runtime of all $\textsc{Reset}(\cdot)$ is bounded by $O(m\phi^{-2}\log^2(\frac{n}{\epsad}) \log(m))$ as claimed. Note that since $\textsc{Reset}(\cdot)$ is invoked inside a $\textsc{Delete}(\cdot)$ only when the number of edges in $G$ decreased by a factor of 2 since the last reset (see Line~\ref{algline:edge_decrease_by_2} of Algorithm~\ref{alg:heavy_hitter_expander}), we have that during all $\textsc{Delete}(\cdot)$, $\textsc{Reset}(\cdot)$ is invoked for at most $O(\log(m))$ times. It remains to prove that during all $\textsc{QueryHeavy}(\cdot)$, $\textsc{Reset}(\cdot)$ (see Line~\ref{algline:recompute_v} of Algorithm~\ref{alg:heavy_hitter_expander}) is invoked for at most $\log(\epsad^{-1})$ number of times.

Line~\ref{algline:recompute_v} is executed only if $\|\gpv\|_2^2 \geq 5\cdot10^6 t$, and we will show that in this case we must have 
\begin{align}\label{eq:rayleigh_vs_lambda_1}
v^\top\left(\md^{-1/2}\mL_G \md^{-1/2} + \epsad \mI\right)v \geq 10\lambda_1\left(\md^{-1/2}\mL_G \md^{-1/2} + \epsad \mI\right).
\end{align}
Suppose on the contrary that $v^\top(\md^{-1/2}\mL_G \md^{-1/2} + \epsad \mI)v < 10\lambda_1(\md^{-1/2}\mL_G \md^{-1/2} + \epsad \mI)$. This assumption satisfies the requirement on $v$ for \Cref{thm:spectral_sparsification} with $c_2 = 10$. We next show that $\md \approx_{30} \md_{G}$, which would satisfy the requirement on $d$ with $c_1 = 30$ for \Cref{thm:spectral_sparsification}. %
The guarantees of $\textsc{Delete}(\cdot)$ ensure that $\md \approx_{9} \md_{\gbar}$. Since $\balloss \leq 0.01$, \Cref{lem:eta_to_not} implies that $\md_{\gbar} \approx_{1.03} \md_G$, and so $\md \approx_{30} \md_{G}$. Finally, by Cheeger's inequality (\Cref{thm:cheegers}) and the fact that $G$ is a lossy $\balloss$-balanced $\phi$-expander with $\balloss \leq \frac{\phi^2}{10^5 \log^2(n)}$, the second smallest eigenvalue of $\mn_{\gbar}$ satisfies $\lambda_2(\mn_{\gbar}) \geq \frac{\phi^2}{2} > 20\balloss$. Hence, all requirements of \Cref{thm:spectral_sparsification} are satisfied. 
Now, applying \Cref{thm:spectral_sparsification} with $c_1 = 30, c_2 = 10$,
\begin{align*}
    \mI - (1 - \lambda)vv^\top \preceq \frac{48\cdot30^2\cdot10^2}{\phi^4} \left(\md^{-1/2}\mL_G \md^{-1/2} + \epsad \mI\right) 
    \preceq \frac{4.4\cdot10^6}{\phi^4} \left(\md^{-1/2}\mL_G \md^{-1/2} + \epsad \mI\right).
\end{align*}
Note that $\|\gpv\|_2^2 = g^{\top} \left(\mI - v v^{\top}\right) g \leq g^{\top} \left(\mI - (1 - \lambda) v v^{\top}\right) g$, so the above equation implies that
\begin{align*}
\|\gpv\|_2^2 %
\leq &~ \frac{4.4\cdot10^6}{\phi^4} g^{\top} \left(\md^{-1/2}\mL_G \md^{-1/2} + \epsad \mI\right) g \\
= &~ \frac{4.4\cdot10^6}{\phi^4} \left(\|\mb_G h\|_2^2 + \epsad \|\md^{1/2} h\|_2^2\right) \\
\leq &~ 1.01 \frac{4.4\cdot10^6}{\phi^4} \left(\|\mm h\|_2^2 + \epsad \|\md^{1/2} h\|_2^2\right) < 5\cdot10^6t,
\end{align*}
which contradicts with the condition that $\|\gpv\|_2^2 \geq 5\cdot 10^6 t$. Consequently, we have proven by contradiction that \Cref{eq:rayleigh_vs_lambda_1} holds.

Next we prove that the Rayleigh quotient $v^\top(\md^{-1/2}\mL_G \md^{-1/2} + \epsad \mI)v$ does not increase too much due to the vertex deletions that occur between two $\textsc{Reset}$s. %
Suppose the algorithm executes Line~\ref{algline:recompute_v} when running $\textsc{QueryHeavy}$ with vector $v$ and matrix $\mL_G$. Let $G^{\old}, V^{\old}, \mL_G^{\old},\md^{\old},d^{\old}$ and $v^{\old}$ be the $G, V, \mL_G,\md,d$ and $v$ at the last time when \textsc{Reset} was invoked, at which point
\[
(v^{\old})^{\top} ((\md^{\old})^{-1/2} \mL_G^{\old} (\md^{\old})^{-1/2} + \epsad \mI) v^{\old} \leq 2 \lambda_1 ((\md^{\old})^{-1/2} \mL_G^{\old} (\md^{\old})^{-1/2} + \epsad \mI)
\]
since $v^{\old}$ was recomputed to be a $2$-approximate least eigenvector by \Cref{thm:power_iteration_least_eigenvalue}. Our goal is to show that: 
\begin{align*}
    v^\top\left(\md^{-1/2}\mL_G \md^{-1/2} + \epsad \mI\right)v 
     \leq 3(v^{\old})^{\top} \left((\md^{\old})^{-1/2} \mL_G^{\old} (\md^{\old})^{-1/2} + \epsad \mI\right) v^{\old}.
\end{align*}
Let $v^{\ext}$ %
be a vector extended to the same length as $v^{\old}$, sharing the same values as $v^{\old}$ on the support of $V$, and $0$ otherwise. See Figure~\ref{fig:v_ext_v_old} for an illustration. Let $\lambda_i^{\old} \defeq \lambda_i((\md^{\old})^{-1/2} \mL_G^{\old} (\md^{\old})^{-1/2})$, and $v_i^{\old} \defeq v_i((\md^{\old})^{-1/2} \mL_G^{\old} (\md^{\old})^{-1/2})$ denote the true eigenvalues and eigenvectors. Since $v^{\old}$ is a constant factor approximation to $v^{\old}_1$ in the Rayleigh quotient, \Cref{lem:bound_on_v_topv} implies that $1 - ((v^{\old}_1)^\top v^{\old})^2 \leq \frac{10\lambda_1^{\old}}{\lambda_2^{\old}}$. 
\begin{figure}[!ht]
\centering
\begin{align*}
\underset{\substack{ \vphantom{a}\\\\\textstyle{v}}}{
\begin{array}{@{}c@{}}{
  \begin{bmatrix}
    (v^{\old})_1/{\|v^{\ext}\|_2} \\ 
    (v^{\old})_2/{\|v^{\ext}\|_2} \\
    \vdots \vphantom{\vdots}\vphantom{\vdots} \\
    (v^{\old})_{|V|}/{\|v^{\ext}\|_2}
  \end{bmatrix}}\\
  \vspace{3mm}\vphantom{(v^{\old})_{|V|+1}}    \\
  \vphantom{\vdots} \\
  \vphantom{(v^{\old})_{|V^{\old}|}}
\end{array}
}\;\;\;
\underset{\substack{ \\\\\textstyle{v^{\ext}}}}{
  \begin{bmatrix}
    (v^{\old})_1 \\
    (v^{\old})_2 \\
    \vdots \\
    (v^{\old})_{|V|} \vspace{3mm} \\ 
    0 \vphantom{(v^{\old})_{|V|+1}}    \\
    \vdots \\
    0 \vphantom{(v^{\old})_{|V^{\old}|}}
  \end{bmatrix}
  }\;\;\;\;\;
\underset{\substack{ \\\\\textstyle{v^{\old}}}}{
  \begin{bmatrix}
    (v^{\old})_1 \\ (v^{\old})_2 \\ \vdots \\ (v^{\old})_{|V|} \vspace{3mm} \\ (v^{\old})_{|V|+1} \\ \vdots \\ (v^{\old})_{|V^{\old}|}
  \end{bmatrix}
}
\begin{tabular}{l}
$\left.\lefteqn{\phantom{\begin{matrix} (v^{\old})_1 \\ (v^{\old})_2 \\ \vdots \\ (v^{\old})_{|V|} \ \end{matrix}}} \right\}V$\\
$\left.\lefteqn{\phantom{\begin{matrix} \vspace{3mm} (v^{\old})_{|V|+1} \\ \vdots \\ (v^{\old})_{|V^{\old}|} \ \end{matrix}}} \right\}V^{\old}\backslash V$
\end{tabular}
\end{align*}
\caption{Illustration of $v$, $v^{\ext}$, and $v^{\old}$.}
\label{fig:v_ext_v_old}
\end{figure}
\Cref{lem:eta_to_not} and \Cref{lem:N_approx_Ivvt} imply that $\lambda_1^{\old} \leq 10 \balloss + \epsad $, and that $\lambda_2^{\old} \geq \frac{\phi^2}{4c}$ where $c=9$. By our choices of parameters that $\balloss,\epsad \leq \frac{\phi^2}{10^5 \log^2(m)}\leq\frac{\phi^2}{400c\log{m}}$, we have that $\frac{10\lambda_1^{\old}}{\lambda_2^{\old}}\leq \frac{1}{\log{m}}$. Furthermore, since $(\balloss + 9\lambda_1^{\old})\log{m} \leq  \frac{\phi^2}{100 \log m}$, by \Cref{cor:uniformity_v_with_parameters},
\begin{align}\label{eq:unifomity_v}
    \frac{(v_{\min}^{\old})_i}{\sqrt{(d^{\old})_i}} \geq \left(1 - \frac{1}{\log{m}}\right) \cdot \max_{i'} \frac{(v_{\min}^{\old})_{i'}}{\sqrt{(d^{\old})_{i'}}}.
\end{align}
Next we bound $\|v^{\ext}\|_2 = (v^{\old})_i/(v)_i$, for any $i \in V$, the normalizing factor after deleting some subset of vertices. Let $\chi_V$ be the indicator vector that has support $V^{\old}$ and is $1$ on $V$ and $0$ otherwise. By definition, 
\begin{align*}
    \|v^{\ext}\|_2^2 &= \|{v^{\old}}\cdot\chi_{V}\|_2^2\\
    &\geq \frac{1}{2} \left\|(v_{\min}^{\old} (v_{\min}^{\old})^\top v^{\old})\cdot\chi_{V}\right\|_2^2 - \left\|(v^{\old} -v_{\min}^{\old} (v_{\min}^{\old})^\top v^{\old} )\cdot\chi_{V}\right\|_2^2\\
    &\geq \frac{1}{2} ((v_{\min}^{\old})^\top v^{\old})^2 \cdot \|v_{\min}^{\old}\cdot \chi_{V}\|_2^2 - \left(\|v^{\old}\|_2^2 + \big((v_{\min}^{\old})^\top v^{\old}\big)^2 \|v_{\min}^{\old}\|_2^2 - 2 \big((v_{\min}^{\old})^\top v^{\old}\big)^2 \right). 
    \end{align*}
We bound the $(v_{\min}^{\old})^\top v^{\old}$ term using $1-((v_{\min}^{\old})^\top v^{\old})^2 \leq \frac{10\lambda_1^{\old}}{\lambda_2^{\old}} \leq \frac{1}{\log{m}}$ that we've proved, and we bound the $\|v^{\old}\|_2$ term by $\|v^{\old}\|_2 = \|v_{\min}^{\old}\|_2 = 1$, so the above equation gives
\begin{align}\label{eq:v_ext_bound}
    \|v^{\ext}\|_2^2 %
    &\geq \frac{1}{2} \left(1 - \frac{1}{\log m}\right) \cdot \|v_{\min}^{\old}\cdot \chi_{V}\|_2^2 - \frac{1}{\log m} \notag \\
    &\geq \frac{1}{2} \left(1 - \frac{1}{\log m}\right) \cdot \frac{(1 - \frac{1}{\log m})\sum_{i \in V}(d^{\old})_i}{\sum_{i \in V^{\old}}(d^{\old})_i} - \frac{1}{\log{m}} \notag \\
    &\geq \frac{1}{3},
\end{align}
where the second line holds by \Cref{eq:unifomity_v}, and the third line holds since if we ever delete more than half the edges we reset (see Line~\ref{algline:edge_decrease_by_2}), giving us that $\sum_{i \in V} (d^{\old})_i \geq \frac{1}{2}\sum_{i \in V^{\old}} (d^{\old})_i$. 

The term that we want to bound is
\begin{align*}
v^\top(\md^{-1/2}\mL_G \md^{-1/2} + \epsad \mI)v 
     &=\epsad + \sum_{e=(j,i) \in G}\Big((d^{-1/2})_i(v)_i - \eta_e(d^{-1/2})_j(v)_j\Big)^2\\
     &=\epsad + \sum_{e = (j,i) \in G}\frac{1}{\|v^{\ext}\|_2^2}\Big((d^{-1/2})_i(v^{\old})_i - \eta_e(d^{-1/2})_j(v^{\old})_j\Big)^2.
\end{align*}
Using \Cref{eq:v_ext_bound}, the above equation gives
\begin{align*}
    v^\top(\md^{-1/2}\mL_G \md^{-1/2} + \epsad \mI)v 
     &\leq3\epsad + 3\sum_{e = (j,i) \in G}\Big((d^{-1/2})_i(v^{\old})_i - \eta_e(d^{-1/2})_j(v^{\old})_j\Big)^2\\
     &\leq 3\epsad + 3\sum_{e = (j,i) \in G^{\old}}\Big((d^{-1/2})_i(v^{\old})_i - \eta_e(d^{-1/2})_j(v^{\old})_j\Big)^2\\
     &= 3(v^{\old})^{\top} \Big((\md^{\old})^{-1/2} \mL_G^{\old} (\md^{\old})^{-1/2} + \epsad \mI\Big) v^{\old},
\end{align*}
where the second line follows from $G$ undergoing only edge deletions, and the last line follows from $d$ and $d^{\old}$ are the same on all vertices $i \in V$.

Combining with \eqref{eq:rayleigh_vs_lambda_1} that $v^\top\left(\md^{-1/2}\mL_G \md^{-1/2} + \epsad \mI\right)v \geq 10\lambda_1\left(\md^{-1/2}\mL_G \md^{-1/2} + \epsad \mI\right)$, and $(v^{\old})^{\top} \left((\md^{\old})^{-1/2} \mL_G^{\old} (\md^{\old})^{-1/2} + \epsad \mI\right) v^{\old} \leq 2 \lambda_1 \left((\md^{\old})^{-1/2} \mL_G^{\old} (\md^{\old})^{-1/2} + \epsad \mI\right)$, yields
\begin{align*}
     10\lambda_1\left(\md^{-1/2}\mL_G \md^{-1/2} + \epsad \mI\right) & \leq v^\top\left(\md^{-1/2}\mL_G \md^{-1/2} + \epsad \mI\right)v \\
     &\leq 3(v^{\old})^{\top} \left((\md^{\old})^{-1/2} \mL_G^{\old} (\md^{\old})^{-1/2} + \epsad \mI\right) v^{\old} \\
     &\leq 6 \lambda_1 \left((\md^{\old})^{-1/2} \mL_G^{\old} (\md^{\old})^{-1/2} + \epsad \mI\right).
\end{align*}

As such, we have that each time we call \textsc{Reset} in \textsc{QueryHeavy}, the smallest eigenvalue of $\md^{-1/2}\mL_G \md^{-1/2} + \epsad \mI$ must drop by at least a factor of 1.5. %
Since the smallest eigenvalue of $\md^{-1/2}\mL_G \md^{-1/2} + \epsad \mI$ is lower bounded by $\epsad$, and upper bounded by $O(1)$, we have that the smallest eigenvalue can drop for at most $O(\log(\epsad^{-1}))$ times.

\textbf{Correctness and time complexity of \textsc{Delete} and \textsc{ScaleTau}:}
The correctness of these two operations directly follow from their algorithm description.

In $\textsc{Delete}(\cdot)$, updating $\mm$ requires $O(|F| k) = O(|F| \log m)$ time, since each row of $\mb_G$ is 2-sparse. Deleting the edges in $F$ from $G$, $u$, and $\pi_u$ takes $O(|F|)$ time. 
Finally, it's straightforward to see $\textsc{ScaleTau}(\cdot)$ takes $O(1)$ time.

\textbf{Correctness of \textsc{QueryHeavy}:} If there is an entry $e$ of $\mb_G h$ such that $|\mb_G \md^{-1/2} g|_e \geq \epsilon$, then at least one of $|\mb_G \md^{-1/2} \gpv|_e$ and $|\mb_G \md^{-1/2} \gv|_e$ will be at least $\epsilon/2$. As such, it suffices to find the heavy hitters of $\mb_G \md^{-1/2}\gv$ and $\mb_G \md^{-1/2}\gpv$.
\begin{itemize}
    \item Heavy hitters of $\mb_G \md^{-1/2}\gv$: These are covered in line~\ref{algline:large_entries_gv}, which finds all the entries of $|\mb_G \md^{-1/2}\gv| = \|\gv\|_2 \cdot |u|$ that are greater than $\epsilon/2$. 
    \item Heavy hitters of $\mb_G \md^{-1/2}\gpv$: These are covered in line~\ref{algline:large_entries_gpv}. If some $e = (i,j)$ satisfies that
    \begin{align*}
        \left|d_j^{-1/2}\gpv_j - \eta_e d_i^{-1/2}\gpv_i\right| = |\mb_G \md^{-1/2}\gpv|_e \geq \epsilon/2,
    \end{align*} then since $|\eta_e | \leq 1 + \balloss \leq 1.5$, we have that either $|d_i^{-1/2}\gpv_i| \geq \epsilon/6$ or $|d_j^{-1/2}\gpv_j| \geq \epsilon/6$.
\end{itemize}

\textbf{Time complexity of \textsc{QueryHeavy}:} 
The runtime of $\textsc{QueryHeavy}(\cdot)$ has the following parts:
\begin{itemize}
\item Time to compute vectors $\gv$ and $\gpv$, and time to go over all vertices in the for loop on Line~\ref{algline:for_loop_vertices}, and these steps take $O(n)$ time in total.
\item Time to perform binary search over $\pi_u$: $O(\log m)$ time. %
\item Time to add edges into set $I$: We bound the size of the set $I$. There are two places where we add edges to $I$. In line~\ref{algline:large_entries_gv}, we add all the heavy hitters of $\mb_G\md^{-1/2}\gv$ to $I$. There can be at most $O(\|\mb_G\md^{-1/2}\gv\|^2_2\epsilon^{-2})$ such heavy hitters. In line~\ref{algline:large_entries_gpv}, we check to see if a vertex $j$ satisfies $|{d_j^{-1/2}\gpv_j}| \geq \epsilon/6$, which is exactly when $d_j^{-1} (\gpv_j)^2 \epsilon^{-2} \geq 1/36$. For each of these vertices, we have to add $d_j$ entries to $I$, so the total number of edges added to $I$ is bounded by $O(\| \gpv\|_2^2\epsilon^{-2})$. 

We first bound the latter norm. By Johnson-Lindenstrauss lemma, $\|\mm h\|_2 \approx_{1.01}\|\mb_G h\|_2$, and we have also proved that $\md \approx_{30} \md_G$, so
\begin{align}\label{eq:bound_gpv}
    \|\gpv\|_2^2 &\leq O(t) 
    \leq O\Big(\phi^{-4}\|\mb_G h\|_2^2 + \phi^{-4} \epsad \|\md_G^{1/2}h\|_2^2\Big).
\end{align}

To bound the first norm, see that since $g = \gpv + \gv$,
\begin{align}\label{eq:bound_B_gv}
    \|\mb_G\md^{-1/2}\gv\|_2^2 &\leq 2 \|\mb_G\md^{-1/2}g\|_2^2 + 2\|\mb_G\md^{-1/2}\gpv\|_2^2 \notag \\
    &\leq 2\|\mb_G h\|_2^2 + 2\|\mb_G \md^{-1/2}\|_2^2\|\gpv\|_2^2 \notag \\
    &\leq 2\|\mb_G h\|_2^2 + 8 \|\gpv\|_2^2,
\end{align}
where the last step makes use of \Cref{lem:N_approx_Ivvt}, bounding $\|\mb_G\md^{-1/2}\|_2^2$ by 4. Thus, this step takes $O(\phi^{-4}\epsilon^{-2}\|\mb_G h\|_2^2 + \phi^{-4} \epsad \epsilon^{-2}\|\md_G^{1/2}h\|_2^2)$ time. 
\end{itemize}
Adding these terms together, the total amortized time for $\textsc{QueryHeavy}(\cdot)$ is $O(\phi^{-4} \epsilon^{-2} \|\mb_G h\|_2^2 + \phi^{-4} \epsilon^{-2} \epsad \|\md_G^{1/2}h\|_2^2 + n + \log m)$.

{\bf Correctness and time complexity of \textsc{Norm}.} %
The lower bound follows from 
\begin{align*}
\|\gv\|_2^2 \cdot \|u\|_2^2 + \|\gpv\|_2^2 = &~ \|\mb_G \md^{-1/2} \gv\|_2^2 + \|\gpv\|_2^2 \\
\geq &~ \|\mb_G \md^{-1/2} \gv\|_2^2 + \frac{\|\mb_G \md^{-1} \gpv\|_2^2}{4}
\geq \frac{\|\mb_G h\|_2^2}{8},
\end{align*}
where the second step follows from $\lambda_n(\md^{-1/2} \mb_G^{\top} \mb_G \md^{-1/2}) \leq 4$ by \Cref{lem:N_approx_Ivvt}, the third step follows from $\|\mb_G \md^{-1/2} \gv\|_2^2 + \|\mb_G \md^{-1} \gpv\|_2^2 \geq (\|\mb_G \md^{-1/2} \gv\|_2 + \|\mb_G \md^{-1} \gpv\|_2)^2 / 2 \geq \|\mb_G \md^{-1} g\|_2^2 / 2 = \|\mb_G h\|_2^2 / 2$.

The upper bound follows from 
\begin{align*}
\|\gv\|_2^2 \cdot \|u\|_2^2 + \|\gpv\|_2^2 = &~ \|\mb_G \md^{-1/2} \gv\|_2^2 + \|\gpv\|_2^2 \\
\leq &~ 2 \|\mb_G h\|_2^2 + 9 \|\gpv\|_2^2 \\
\leq &~ O\Big( \phi^{-4} \|\mb_G h\|_2^2 + \phi^{-4} \epsad \|\md^{1/2}h\|_2^2 \Big),
\end{align*}
where the second step follows from Eq.~\eqref{eq:bound_B_gv} that $\|\mb_G \md^{-1/2} \gv\|_2^2 \leq 2 \|\mb_G h\|_2^2 + 8 \|\gpv\|_2^2$, and the third step follows from Eq.~\eqref{eq:bound_gpv} that $\|\gpv\|_2^2 \leq O(\phi^{-4}\|\mb_G h\|_2^2 + \phi^{-4} \epsad \|\md^{1/2}h\|_2^2)$.

Finally we bound the time complexity of $\textsc{Norm}(\cdot)$. Note that the algorithm can maintain $\|u\|_2$ whenever it re-computes $u$, and computing $\|\gv\|_2$ and $\|\gpv\|_2$ takes $O(n)$ time.

{\bf Correctness of \textsc{Sample}.} In the algorithm our goal is to sample each edge $e = (a_e,b_e)$ with probability $p_e/S$ where $p_e = \min\{1,~5 \ov{C}_1 \big(\|\gv\|_2^2 u_e^2 + \frac{(\gpv_{a_e})^2}{d_{a_e}} + \frac{(\gpv_{b_e})^2}{d_{b_e}}\big) + \ov{C}_2 + C_3 \ov{\tau}_e \}$ and $S = \sum_{e\in E} p_e$. First note that by the definitions of $S_1,S_2,S_3,S$ on Line~\ref{algline:def_S_value}, the $S$ computed by the algorithm satisfies $S = \sum_{e \in E} p_e$. We first prove that $5 (\|\gv\|_2^2 u_e^2 + \frac{(\gpv_{a_e})^2}{d_{a_e}} + \frac{(\gpv_{b_e})^2}{d_{b_e}}) \geq (\mb_G h)_e^2$, which will imply $p_e \geq \min \left\{1, ~ \ov{C_1} \cdot (\mb_G h)_e^2 + \ov{C}_2 + C_3 \ov{\tau}_e \right\}$ as required. By definitions $(\mb_G h)_e^2 = (\mb_G \md^{-1/2} \gv + \mb_G \md^{-1/2} \gpv)_e^2 \leq 2 (\mb_G \md^{-1/2} \gv)_e^2 + 2 (\mb_G \md^{-1/2} \gpv)_e^2$. Since $(\mb_G \md^{-1/2} \gv)_e = \|\gv\|_2^2 u_e^2$ and $(\mb_G \md^{-1/2} \gpv)_e = \frac{\gpv_{b_e}}{d_{b_e}^{1/2}} - \frac{\eta_{e} \gpv_{a_e}}{d_{a_e}^{1/2}}$, we have
\begin{align*}
(\mb_G h)_e^2 %
\leq &~ 2 \|\gv\|_2^2 u_e^2 + 2 (\frac{\gpv_{b_e}}{d_{b_e}^{1/2}} - \frac{\eta_{e} \gpv_{a_e}}{d_{a_e}^{1/2}})^2 \\
\leq &~ 2 \|\gv\|_2^2 u_e^2 + 4 \frac{(\gpv_{b_e})^2}{d_{b_e}} + 5 \frac{(\gpv_{a_e})^2}{d_{a_e}},
\end{align*}
where we used that $\eta_e \leq 1+\balloss < 1.1$.

Next we prove that each edge is indeed sampled with probability $p_e/S$. For each $j \in [C_0 S]$, and for each edge $e$, we compute the probability that $x_j = p_e^{-1} \indicVec{e}$ by summing the probability of the three cases of Line~\ref{algline:sample_by_u}, \ref{algline:sample_by_vertex}, and \ref{algline:sample_by_tau}:
\begin{align*}
\Pr[x_j = p_e^{-1} \indicVec{e}] 
= &~ \frac{S_1}{S} \cdot \frac{5\ov{C}_1 \|\gv\|_2^2 u_e^2}{S_1} + \frac{S_3}{S} \cdot \frac{\ov{C}_2 + C_3  \ov{\tau}_e}{S_3} \\
&~ + \frac{S_2}{S} \cdot (\frac{5 \ov{C}_1 (\gpv_{a_e})^2 (d_{\gbar})_{a_e}}{d_{a_e} \cdot S_2} \cdot \frac{1}{(d_{\gbar})_{a_e}} + \frac{5 \ov{C}_1 (\gpv_{b_e})^2 (d_{\gbar})_{b_e}}{d_{b_e} \cdot S_2} \cdot \frac{1}{(d_{\gbar})_{b_e}}) 
= \frac{p_e}{S}.
\end{align*}

{\bf Time complexity of \textsc{Sample}.} 
$\textsc{Sample}(\cdot)$ has $C_0 S$ iterations, where each iteration has the following parts:
\begin{itemize}
\item Sample each edge $e$ with probability $\frac{5\ov{C}_1 \|\gv\|_2^2 u_e^2}{S_1}$ on Line~\ref{algline:sample_by_u}: Since the algorithm maintains the vector $u$, using a binary tree that stores partial sums of the probabilities, this sampling step can be implemented in time $O(\log m)$. 
\item Sample each edge $e$ with probability $\frac{(\gpv_{a_e})^2}{d_{a_e}} + \frac{(\gpv_{b_e})^2}{d_{b_e}}$ on Line~\ref{algline:sample_by_vertex}: The algorithm first samples a vertex $i$ with probability $\frac{5 \ov{C}_1 (\gpv_i)^2 \cdot (d_{\gbar})_i}{d_i \cdot S_2}$, which can be done in $O(\log n)$ time using a binary tree that stores partial sums of the probabilities. This binary tree can be precomputed in $O(n \log n)$ time and reused for all $C_0 S$ samples. The algorithm then uniformly samples an incident edge of $i$ in $O(\log (d_{\gbar})_i)$ time. So the total amortized time of this step is $O(\log n + \frac{n \log n}{C_0 S})$.
\item Sample each edge $e$ with probability $\frac{\ov{C}_2 + C_3  \ov{\tau}_e}{S_3}$ on Line~\ref{algline:sample_by_tau}: Since the algorithm maintains the vector $\tau$, using a binary tree that stores partial sums of the probabilities, this sampling step can be implemented in time $O(\log m)$. 
\end{itemize}
Combining these three parts, we have that sampling one edge takes $O(\log m +\frac{n \log n}{C_0 S})$ time. Next we bound $S$ using Eq.~\eqref{eq:bound_gpv} and \eqref{eq:bound_B_gv}:
\begin{align*}
S = &~ \sum_e \left(5 \ov{C}_1 \big(\|\gv\|_2^2 u_e^2 + \frac{(\gpv_{a_e})^2}{d_{a_e}} + \frac{(\gpv_{b_e})^2}{d_{b_e}}\big) + \ov{C}_2 + C_3 \ov{\tau}_e \right) \\
\leq &~ O\left(\ov{C}_1 \big(\|\gv\|_2^2 \|u\|_2^2 + \|\gpv\|_2^2\big) + \ov{C}_2 m + C_3 \|\ov{\tau}\|_1 \right) \\
\leq &~ O\Big(\ov{C}_1 \phi^{-4} \|\mb_G h\|_2^2 + \ov{C}_1 \phi^{-4} \epsad \|\md^{1/2}h\|_2^2 + \ov{C}_2 m + C_3 \|\ov{\tau}\|_1 \Big). 
\end{align*}
So the total time $C_0 S \cdot O(\log m +\frac{n \log n}{C_0 S})$ is bounded as claimed in the theorem statement.
\end{proof}

\section{General Heavy Hitters
and Sampler for Two-Sparse Matrices
}\label{sec:general_heavy_hitters}

In this section we present our general heavy hitter and sampler data structure for two-sparse matrices. In \Cref{sec:other_data_structures} we show that this data structure suffices to implement the $\textsc{HeavyHitter}$ and $\textsc{HeavySampler}$ data structures required by the reduction in \cite{bll+21}. While the reduction in \cite{bll+21} only requires the ability to rescale and delete rows, we prove a stronger data structure that supports row insertions and deletions.

We begin by stating our main theorem, \Cref{thm:heavy_hitter_two_sparse_general} for two-sparse matrices. %
In the remaining sections we present the proof of \Cref{thm:heavy_hitter_general} in a bottom-up fashion. In \Cref{sec:heavy_hitter_balanced}, we show that the heavy hitter data structure on balanced lossy expanders (\Cref{thm:heavy_hitter_expander}) implies a heavy hitter data structure on all balanced lossy graphs (\Cref{thm:heavy_hitter_balanced}). In \Cref{sec:heavy_hitter_general}, we further show that the heavy hitter data structure on balanced lossy graphs (\Cref{thm:heavy_hitter_balanced}) implies a heavy hitter data structure on general lossy graphs (\Cref{thm:heavy_hitter_general}). %
Finally in \Cref{sec:heavy_hitter_reduction_two_sparse_to_lossy}, we prove our main theorem \Cref{thm:heavy_hitter_two_sparse_general} using the heavy hitter data structure general lossy graphs (\Cref{thm:heavy_hitter_general}).

\paragraph{Additional notation.}
Since all data structures in this section support update operations --- including row insertions and deletions, edge insertions and deletions, and vector entry updates --- the matrix $\ma$ and the lossy graph $G$ can change over time, and as such all bounding parameters $W_g, W_\eta, \lambda_1$ are defined as their maximum values over all times when a data structure operation is called. Throughout this section, we use the superscript $^{(t)}$ to denote the object after the $t$-th update, e.g., $\ma^{(t)}$ denotes the matrix $\ma$ after the $t$-th update. We omit this superscript when the object is clear from context, using it only when distinguishing between updates is necessary.

\begin{theorem}[Heavy hitters and sampler on two-sparse matrices]\label{thm:heavy_hitter_two_sparse_general}
Let $\ma \in \R^{m\times n}$ be a dynamic two-sparse matrix undergoing row insertions and deletions. Let $\lambda_1$ be the largest value such that $\|\ma h\|_2 \geq \sqrt{\lambda_1}\|h\|_2$ at any time the algorithm calls a query operation with input $h$, and let $W_{A} \defeq \max_t W_{\ma^{(t)}}$ (\Cref{def:W_A}). There is a data structure that w.h.p.~supports the following operations:
\begin{itemize}
\item \textsc{Initialize}$(\ma \in \R^{m \times n}, \ov{\tau} \in \R_{\geq 0}^m)$: Initializes with a two-sparse matrix $\ma$ and a vector $\ov{\tau}$ in amortized $\wt{O}(m \log^2(\lambda_1^{-1}) \log^{8}(W_A))$ time.
\item \textsc{Insert}$(a \in \R^n)$:
	Appends two-sparse row $a$ to $\ma$ in amortized $\wt{O}\left(\log^2(\lambda_1^{-1}) \log^6(W_A)\right)$ time. 
\item \textsc{Delete}$(e \in [m])$: 
    Deletes the row $e$ in $\ma$ in the same time as $\textsc{Insert}$. %
\item \textsc{ScaleTau}$(e\in [m], b\in \R)$: Sets $\ov{\tau}_e \leftarrow b$ in worst-case $O(1)$ time.
\item \textsc{QueryHeavy}$(h \in \R^{n}, \epsilon \in (0,1) )$:
	Returns $I \subseteq [m]$ containing exactly those $i$ with $|(\ma h)_i| \ge \epsilon$
	in amortized $\wt{O}( \epsilon^{-2} \|\ma h \|_2^2 + n \log^2(W_A))$ time. 
\item \textsc{Sample}$(h \in \R^V, C_0, C_1, C_2, C_3)$: Let a vector $p \in \R^E$ satisfy
\[
p_e \geq \min \left\{1, ~ C_1 \frac{m}{\sqrt{n}} \cdot \frac{(\ma h)_e^2}{\|\ma h\|_2^2} + C_2 \frac{1}{\sqrt{n}} + C_3 \ov{\tau}_e \right\}.
\]
Let $S = \sum_{e \in E} p_e$.  Let $X$ be a random variable which equals to $p_e^{-1} \indicVec{e}$ with probability $p_e / S$ for all $e \in E$. This operation returns a random diagonal matrix $\mr = C_0^{-1} \sum_{j=1}^{C_0 S} \mdiag(X_j)$, where $X_j$ are i.i.d.~copies of $X$. The amortized time of this operation and also the output size of $\mr$ are bounded by 
\[
\wt{O}\left(C_0 C_1 \frac{m}{\sqrt{n}} + ( C_0 C_2 \frac{m}{\sqrt{n}} + n) \cdot \log^2(W_A) + C_0 C_3 \|\ov{\tau}\|_1 \right).
\]
\end{itemize}
\end{theorem}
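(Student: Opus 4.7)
The plan is to prove \Cref{thm:heavy_hitter_two_sparse_general} as the top of the chain of reductions in \Cref{fig:chain_reductions}: compose \Cref{thm:heavy_hitter_general} (heavy hitters on general lossy graphs) with the reduction from two-sparse matrices to lossy graphs, \Cref{lem:reduction_two_sparse_to_lossy_graph}. The reduction writes any two-sparse matrix as $\ma = \mdiag(\alpha)\,\mb_G\,\mdiag(s)$ for a weighted lossy graph $G = (V, E, \eta)$ with diagonal rescalings $\alpha \in \R^m$ and $s \in \R^n$ whose nonzero entries have magnitudes $\poly(W_A)$. Concretely, for a row $a$ with nonzeros $a_j, a_k$, we orient the associated edge from the smaller- to the larger-magnitude index so that $\eta_e = \max(|a_j|,|a_k|)/\min(|a_j|,|a_k|) \geq 1$, absorb the scalar $\alpha_e$ into an edge weight, and route any sign changes into $s$.

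With this factorization, every operation on $\ma$ is forwarded to the corresponding operation on $\mb_G$. Row insertions and deletions become edge insertions and deletions (together with, at most, adding or removing an isolated vertex which only changes $\md$ trivially); $\textsc{ScaleTau}$ passes through unchanged; a query vector $h$ becomes $h' = \mdiag(s) h$, and the returned candidate heavy entries of $\mb_G h'$ are mapped back to entries of $\ma h$ via multiplication by $\alpha_e$ and verified in $O(1)$ per candidate, so that any false positives are discarded at no asymptotic cost. Because $(\ma h)_e = \alpha_e (\mb_G h')_e$, the appropriate threshold to pass to $\mb_G$ scales with $|\alpha_e|$; we handle the varying row scales by bucketing rows into $O(\log W_A)$ geometric buckets of $|\alpha_e|$, each handled by a separate invocation of \Cref{thm:heavy_hitter_general}. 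The \textsc{Sample} operation is reduced analogously: we invoke \textsc{Sample} of \Cref{thm:heavy_hitter_general} with the rescaled query $h'$ and with $\ov{C}_1 = C_1 (m/\sqrt{n}) / \wt{L}$ where $\wt L$ is the \textsc{Norm}-estimate of $\|\ma h\|_2^2$, so that the resulting $p_e$ satisfies the required lower bound in the form of \Cref{thm:heavy_hitter_two_sparse_general}.

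The parameter $\lambda_1$ of \Cref{thm:heavy_hitter_two_sparse_general} plays the role of the additive regularizer $\epsad$ introduced in \Cref{sec:tech_heavy_hitter_balanced_expander}: setting $\epsad = \Theta(\lambda_1/\poly(W_A))$ in \Cref{thm:heavy_hitter_general} makes the $\epsad \|\md^{1/2} h\|_2^2$ error term dominated by $\|\ma h\|_2^2 \geq \lambda_1 \|h\|_2^2$ on every query, while lower-bounding the smallest eigenvalue of the regularized normalized lossy Laplacian by $\epsad$. The latter caps the total number of internal \textsc{Reset} invocations at $O(\log \epsad^{-1}) = O(\log(\lambda_1^{-1}) + \log W_A)$, producing the $\log^2(\lambda_1^{-1})$ factor in the amortized initialization cost when multiplied by the per-reset work.

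The main obstacle will be tracking how polylogarithmic factors accumulate through the three reductions: edge-weight bucketing and orientation/vertex-rescaling to obtain $\balloss$-balanced lossy graphs, dynamic expander decomposition with pruning to obtain balanced lossy expanders, and the additional maintenance needed to keep degrees within the constant factor required by \Cref{thm:heavy_hitter_expander}. Each step contributes a $\polylog(m, W_A)$ factor in both the subgraph count and the per-update cost, and their product yields the $\log^8(W_A)$ overhead in \textsc{Initialize} and $\log^6(W_A)$ in \textsc{Insert}/\textsc{Delete}. A secondary point is that \Cref{lem:reduction_two_sparse_to_lossy_graph} is a static reduction, so we must verify that each dynamic row insertion requires only constant-sized updates to $\alpha$, $s$, and $G$ (one edge insertion and at most two new isolated vertices), so that the per-update cost of the top layer matches the per-update cost at the lossy-graph level.
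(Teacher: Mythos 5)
Your overall structure — reduce the two-sparse matrix to a weighted lossy graph via a static row-to-edge mapping, then apply \Cref{thm:heavy_hitter_general}, forwarding insertions, deletions, and queries while tracking the $\lambda_1$-to-$\epsad$ correspondence — is the right plan and matches the paper's route. However, there is a genuine gap in how you handle signs, and it is not a detail you can patch with the objects you wrote down.

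You propose the factorization $\ma = \mdiag(\alpha)\,\mb_G\,\mdiag(s)$ with a column (vertex) rescaling $s\in\R^n$ that ``route[s] any sign changes into $s$.'' This cannot work for a general two-sparse matrix. A row of $\mb_G$ has the signed form $\indicVec{b}-\eta\indicVec{a}$ (one positive and one negative entry), so for a row of $\ma$ with two nonzeros at columns $i,j$, fitting it into the factorization forces the sign of $\ma_{e,i}s_i$ to be opposite to that of $\ma_{e,j}s_j$. Fixing $s\in\{\pm 1\}^n$ to make this hold for every row is exactly the problem of two-coloring the ``signed conflict graph'' of $\ma$, and it fails whenever $\ma$ has, e.g., a triangle of rows with two same-sign rows and one mixed-sign row, or whenever $\ma$ has a row with a single nonzero (which cannot be written as a difference at all). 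Your plan never addresses these cases, so as stated the reduction does not exist.

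The paper's \Cref{lem:reduction_two_sparse_to_lossy_graph} resolves this precisely by doubling the column set: it constructs $\wh{\ma}\in\R^{m\times 2n}$ on vertex set $V\cup V'$ with $|V|=|V'|=n$, and passes the query vector $\begin{bmatrix}h\\-h\end{bmatrix}$. Since entry $n+i$ of this vector is $-h_i$, an entry $\beta$ placed in column $n+i$ acts like $-\beta$ on $h_i$, so any row can be given the required $(+,-)$ sign pattern regardless of the original signs, and single-nonzero rows are split as $(\alpha/2, -\alpha/2)$ across a vertex and its mirror. Without this $2n$-vertex construction (or an equivalent sign-fixing device beyond a diagonal $\mdiag(s)$), you do not obtain a valid lossy-graph incidence matrix and \Cref{thm:heavy_hitter_general} is not applicable.

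A secondary remark: your top-level bucketing of rows by $|\alpha_e|$ into $O(\log W_A)$ instances of \Cref{thm:heavy_hitter_general} is harmless but redundant; the paper performs the magnitude bucketing internally inside the proof of \Cref{thm:heavy_hitter_general} via the edge-weight vector $g$, and the top-level reduction only needs to hand $\ov{\ma}$, the edge weights $\mg$, and the doubled query $\begin{bmatrix}h\\-h\end{bmatrix}$ to a single instance. Your accounting for $\epsad$ and $\lambda_1$ is morally right, though in the paper $\epsad$ is set inside \Cref{thm:heavy_hitter_general}, while the top-level proof only needs to observe that $\|\ma h\|_2\ge\sqrt{\lambda_1}\|h\|_2$ implies $\|\mg\ov{\ma}\begin{bmatrix}h\\-h\end{bmatrix}\|_2\ge\sqrt{\lambda_1/2}\,\|\begin{bmatrix}h\\-h\end{bmatrix}\|_2$, so $\lambda_1$ is preserved up to a constant.
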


We note that $\lambda_1 = \lambda_1(\ma^\top \ma)$ suffices for the above theorem to hold, and that we prove the slightly stronger statement only requiring the bound on the Rayleigh quotient of $\ma^\top \ma$ for the query vectors $h$, as opposed to all vectors.

\subsection{Heavy Hitters on Balanced Lossy Graphs (Reduction to Balanced Lossy Expanders)}\label{sec:heavy_hitter_balanced}
In this section we present a heavy hitter data structure for any $\balloss$-balanced lossy graph, built using the data structure of \Cref{thm:heavy_hitter_expander}. The reduction in this section decomposes the input graph into subgraphs and maintains the following three properties for each subgraphs, as required by \Cref{thm:heavy_hitter_expander}: (1) Updates are only deletions. (2) The subgraphs are expanders. (3) The degrees of the subgraphs are approximately preserved under updates. 

\begin{theorem}[Heavy hitters on balanced lossy graphs]\label{thm:heavy_hitter_balanced}
Let $\balloss \leq 0.0005$. There is a data structure (Algorithm~\ref{alg:heavy_hitter_balanced} and \ref{alg:heavy_hitter_balanced_continued}) that w.h.p.~supports the following operations:
\begin{itemize}
\item $\textsc{Initialize}(G=(V,E,\eta), \epsad \in (0,1), \phi \in ((2000\balloss)^{1/2}, 1), \ov{\tau} \in \R_{\geq 0}^m)$: Initializes with a $\balloss$-balanced lossy graph $G=(V,E,\eta)$ with $|V|=n$ and $|E|=m$, parameters $\epsad$ and $\phi$, and a vector $\ov{\tau} \geq 0$ in amortized $O(m \phi^{-2}\log^2{(n/\epsad)})$ time. %
\item $\textsc{Delete}(e \in E)$: Deletes the edge $e$ from $G$ in amortized $O(\phi^{-3}\log(m) \log^2{(n/\epsad)})$ time. 
\item $\textsc{Insert}(a_e \in V, b_e \in V, \eta_e \in [1, 1+\balloss])$: Inserts a new edge $e = (a_e, b_e)$ with multiplier $\eta_e$ to $G$ in amortized $O(\phi^{-2}\log(m) \log^2{(n/\epsad)})$ time.
\item \textsc{ScaleTau}$(e \in E, b\in \R_{\geq 0})$: Sets $\ov{\tau}_e \leftarrow b$ in worst-case $O(1)$ time.
\item $\textsc{QueryHeavy}(h \in \R^V, \epsilon \in (0,1))$: Returns a set $I \subseteq E$ containing exactly all $e \in E$ that satisfies $|\mb_G h|_e \geq \epsilon$ in amortized $O(\phi^{-4} \epsilon^{-2} \|\mb_G h\|_2^2 + \phi^{-4} \epsad \epsilon^{-2} \|\md_G^{1/2}h\|_2^2 + n \log^2 m)$ time.
\item \textsc{Norm}$(h \in \R^V)$: Returns a value $L$ in amortized $O(n \log^2(m))$ time such that
\begin{align*}
\|\mb_G h\|_2^2 \leq L \leq O(\phi^{-4} \|\mb_G h\|_2^2 + \phi^{-4} \epsad \|\md_G^{1/2}h\|_2^2).
\end{align*}
\item \textsc{Sample}$(h \in \R^V, C_0, \ov{C}_1, \ov{C}_2, C_3)$: Let a vector $p \in \R^E$ satisfy
\[
p_e \geq \min \left\{1, ~ \ov{C_1} \cdot (\mb_G h)_e^2 + \ov{C}_2 + C_3 \ov{\tau}_e \right\}.
\]
Let $S = \sum_{e \in E} p_e$.  Let $X$ be a random variable which equals to $p_e^{-1} \indicVec{e}$ with probability $p_e / S$ for all $e \in E$. This operation returns a random diagonal matrix $\mr = C_0^{-1} \sum_{j=1}^{C_0 S} \mdiag(X_j)$, where $X_j$ are i.i.d.~copies of $X$. The amortized time of this operation and also the output size of $\mr$ are bounded by 
\[
O\left(C_0 \ov{C}_1 \phi^{-4} \log m(\|\mb_G h\|_2^2 + \epsad \|\md_G^{1/2}h\|_2^2) + C_0 \ov{C}_2 m \log m + C_0 C_3 \|\ov{\tau}\|_1 \log m + n \log^3(m) \right).
\]
\end{itemize}
\end{theorem}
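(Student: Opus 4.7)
The plan is to reduce to \Cref{thm:heavy_hitter_expander} by maintaining a dynamic expander decomposition of $G$ into a collection of edge-disjoint balanced lossy subgraphs $G_1, \ldots, G_k$ that are each $\phi'$-expanders for some $\phi' = \Omega(\phi / \polylog m)$, together with a small "extra" collection $E_{\ext}$ of edges. On each $G_i$ I would instantiate the data structure of \Cref{thm:heavy_hitter_expander}, using the fact that balancedness and flow multipliers are preserved when restricting to subsets of edges. Since both heavy hitters and norm computations decompose additively across an edge partition (because $\|\mb_G h\|_2^2 = \sum_i \|\mb_{G_i} h\|_2^2 + \sum_{e \in E_{\ext}} (\mb_G h)_e^2$, and similarly for the heavy coordinate set), queries reduce to querying each piece and concatenating answers, plus brute-forcing over $E_{\ext}$.

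Concretely, for \textsc{Initialize} I would compute an initial expander decomposition of $\gbar$ using a standard $\polylog$-approximate expander decomposition (applied to the smoothed graph, which inherits balancedness for $G$), then call $\textsc{Initialize}$ on each $G_i$. For \textsc{Delete}, I would dispatch the deletion to the relevant $G_i$; but deletions may cause two problems, (i) $G_i$ may cease to be an expander, and (ii) some vertex degree in $G_i$ may drop below the $1/9$ threshold of \Cref{thm:heavy_hitter_expander}. I would use the dynamic expander pruning framework of \cite{sw21} to eject a small set of edges (moving them either to $E_{\ext}$ or back into a rebuild queue) so the surviving piece remains a $\phi'$-expander. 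Whenever a vertex's current degree in $G_i$ falls below $1/9$ of its original value, I would "retire" the vertex by deleting all its remaining incident edges from $G_i$ (routing them to $E_{\ext}$ or triggering a local rebuild). \textsc{Insert} is handled by appending the new edge to $E_{\ext}$ and triggering a rebuild of a piece containing both endpoints whenever $|E_{\ext}|$ grows past a $1/\polylog m$ fraction of the relevant subgraph's size; each rebuild calls a fresh \textsc{Initialize} on the merged component. \textsc{ScaleTau} just forwards the update. \textsc{QueryHeavy}, \textsc{Norm}, and \textsc{Sample} call the corresponding routine on each $G_i$ and combine outputs, adding $O(|E_{\ext}|)$ work for the extras (which by construction is dominated by the $n \log^2 m$ overhead).

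The main obstacle is the amortized accounting. Each edge can participate in several destructive events: expander pruning after a deletion, retirement when an incident vertex loses too much degree, or a rebuild triggered by nearby insertions. I would charge these events using the standard expander-pruning potential argument (each edge can be pruned at most $O(\polylog m)$ times when $\phi' = \phi / \polylog m$), combined with a degree-based potential ensuring each vertex is retired at most $O(\log m)$ times (since its degree halves on each retirement). The key quantitative claim is that across any sequence of $T$ updates the total edge/vertex churn is $(m + T) \cdot \polylog m$, so the aggregate cost of invoking \Cref{thm:heavy_hitter_expander} across all pieces — which pays $O(m_i \phi'^{-2} \log^2(n/\epsad))$ per initialization — fits into the stated $O(\phi^{-2} \log(m) \log^2(n/\epsad))$ amortized bound per insertion and $O(\phi^{-3} \log(m) \log^2(n/\epsad))$ per deletion (the extra $\phi^{-1}$ absorbing the expander pruning inefficiency).

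The correctness of \textsc{QueryHeavy}, \textsc{Norm}, and \textsc{Sample} is then inherited directly from \Cref{thm:heavy_hitter_expander} applied to each piece: the output guarantees are additive (heavy coordinates, norm upper/lower bounds, and sampling probabilities lower-bounded by $\ov C_1 (\mb_G h)_e^2 + \ov C_2 + C_3 \ov\tau_e$ all compose under edge partitions), and the $\|\md_G^{1/2} h\|_2^2$ and $\|\ov\tau\|_1$ terms sum consistently across pieces. The one subtlety is bounding $\|\mb_{G_i} h\|_2^2 \le \|\mb_G h\|_2^2$ when passing the per-piece query budgets through, which holds termwise so the stated polylogarithmic overheads suffice.
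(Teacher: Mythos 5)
Your high-level strategy matches the paper's: reduce to \Cref{thm:heavy_hitter_expander} by maintaining an edge-disjoint collection of balanced lossy expander pieces via expander decomposition and \cite{sw21}-style pruning, retire vertices whose degree falls below a $1/9$ fraction, and lean on the fact that heavy hitters, norms, and sampling compose additively across edge partitions. Those ingredients are all in the paper's proof of this theorem.

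The genuine gap is your handling of insertions and the "extra" set $E_{\ext}$. You propose to append newly-inserted edges (and pruned/retired edges) into $E_{\ext}$, brute-force over $E_{\ext}$ on every query, and rebuild only when $|E_{\ext}|$ exceeds a $1/\polylog m$ fraction of the relevant piece's size. But the \textsc{QueryHeavy} budget is $O(\phi^{-4}\epsilon^{-2}\|\mb_G h\|_2^2 + \phi^{-4}\epsad\epsilon^{-2}\|\md_G^{1/2}h\|_2^2 + n\log^2 m)$, so the brute-force cost $O(|E_{\ext}|)$ must be $O(n\log^2 m)$ at all times. Keying the rebuild trigger to a fraction of a subgraph's \emph{edge} count does not give this: for a dense piece with $m_i = \Theta(m)$ edges, $m_i/\polylog m$ can be far larger than $n\log^2 m$, so $E_{\ext}$ blows the per-query budget. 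You also would need to explain what it means to "rebuild a piece containing both endpoints" when $E_{\ext}$ holds edges spanning many pieces; merging pieces on insertion is exactly what makes amortization fragile.

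The paper sidesteps all of this: there is no $E_{\ext}$ and no brute-force. Instead, Algorithm~\ref{alg:heavy_hitter_balanced} maintains a Bentley--Saxe style hierarchy of $\lceil\log m\rceil$ levels, where each newly-inserted edge immediately becomes its own singleton subgraph at level $1$ (a single edge is trivially a $\phi$-expander with no degree constraints, so \Cref{thm:heavy_hitter_expander} applies), levels cascade upward and are re-decomposed via \Cref{lem:expander_decomposition} when a level overflows $2^\ell$ edges, and edges ejected by pruning or degree retirement are re-inserted into level $1$ rather than parked in an unstructured extra set. This guarantees every edge is covered by some expander instance at every moment, so queries always stay inside the \Cref{thm:heavy_hitter_expander} interfaces, and the $n\log^2 m$ overhead comes only from the $\sum_j |V(G^{(\ell,j)})| = O(n\log n)$ bound per level (Part~2 of \Cref{lem:invariant_heavy_hitter_balanced}). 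The amortization then hinges on a concrete token invariant — $m^{(\ell,j)}_{\cnt} \leq 7t$ after $t$ deletions on $G^{(\ell,j)}$ — which carefully separates the cut edges $F_{\exp}$ from internal edges $F'_{\exp}$ of the pruned set and uses per-vertex tokens to pay for degree pruning; your "total churn is $(m+T)\polylog m$" claim is the right target, but as stated it does not yet give a proof.
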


\begin{algorithm}
\caption{Heavy hitter on $\balloss$-balanced lossy graphs}
\label{alg:heavy_hitter_balanced}
\DontPrintSemicolon
\SetKwInOut{Input}{Input}
\SetKwInOut{Output}{Output}
\SetKwInOut{Invariant}{Invariant}
\SetKwInOut{Member}{Member}

\SetKwFunction{Initialize}{Initialize}
\SetKwFunction{QueryHeavy}{QueryHeavy}
\SetKwFunction{Delete}{Delete}
\SetKwFunction{Insert}{Insert}
\SetKwFunction{Rebuild}{Rebuild}
\SetKwFunction{ScaleTau}{ScaleTau}
\SetKwFunction{Norm}{Norm}
\SetKwFunction{Sample}{Sample}
\SetKwFunction{ExpanderPrune}{ExpanderPrune}
\SetKwFunction{DegreePrune}{DegreePrune}

\Member{Parameters $\epsad$, $\phi$, and $k_1, \cdots, k_{\lceil\log m\rceil}$. 

Graph decomposition $G = \bigcup_{\ell=1}^{\lceil\log m\rceil} \bigcup_{j=1}^{k_{\ell}} G^{(\ell, j)}$, along with data structures $\DS^{(\ell,j)}$ of \Cref{thm:heavy_hitter_expander}, initial degrees $d^{(\ell,j)} \in \R^V$, initial number of edges $m_{\init}^{(\ell,j)}$ and a counter $m_{\cnt}^{(\ell,j)}$.}

\SetKwProg{Fn}{procedure}{:}{}
\Fn{\Initialize{$G=(V,E,\eta), \epsad \in (0,1), \phi \in ((2000\balloss)^{1/2}, 1), \ov{\tau} \in \R_{\geq 0}^E$}}{
    $\epsad \leftarrow \epsad$, $\phi \leftarrow \phi$. $k_1 \leftarrow 0, \cdots, k_{\lceil\log m\rceil-1} \leftarrow 0$, and $k_{\lceil\log m\rceil} \leftarrow 1$. \;
    Set $\ov{\tau} \leftarrow \ov{\tau}$, $G^{(\lceil\log m\rceil, 1)} \leftarrow G$, and call $\textsc{Rebuild}(\lceil\log m\rceil)$.
}

\Fn{\Delete{$e_{\start} \in E$}}{
    Find $G^{(\ell,j)}$ that contains edge $e_{\start}$, and set $F_{\tot} \leftarrow \emptyset$, $F \leftarrow \{e_{\start}\}$.\;
    \While{$F$ is not empty}{\label{algline:while_start}
        $F_{\tot} \leftarrow F_{\tot} \cup F$, increment $m_{\cnt}^{(\ell,j)} \leftarrow m_{\cnt}^{(\ell,j)} + |F|$ \;\label{algline:increment_m_cnt}
        $F_{\exp}, F'_{\exp} \leftarrow \textsc{ExpanderPrune}(F, \ell, j)$, delete edges in $F \cup F_{\exp} \cup F'_{\exp}$ from $G^{(\ell,j)}$ \;\label{algline:expander_prune}
        $F_{\deg} \leftarrow \textsc{DegreePrune}(F \cup F_{\exp}, \ell, j)$, delete edges in $F_{\deg}$ from $G^{(\ell,j)}$ \;\label{algline:degree_prune}
        $F \leftarrow F_{\deg}$, $F_{\tot} \leftarrow F_{\tot} \cup F_{\exp} \cup F'_{\exp}$  \;\label{algline:while_end}
    }
    Call $\DS^{(\ell,j)}.\textsc{Delete}\big(F_{\tot})$. Call $\textsc{Insert}(a_{e}, b_{e}, \eta_{e})$ for each edge $e \in F_{\tot}\backslash \{e_{\start}\}$. \; \label{algline:insert_and_delete_edges_in_delete}
    \If{$m^{(\ell,j)}_{\cnt} \geq (\phi/10)m_{\init}^{(\ell,j)}$}{\label{algline:delete_restart}
        Call $\textsc{Insert}(a_{e}, b_{e}, \eta_{e})$ for all edges $e \in E(G^{(\ell,j)})$, and destruct $G^{(\ell,j)}$ and $\DS^{(\ell,j)}$.
    }
}

\Fn{\Insert{$a_e \in V, b_e \in V, \eta_e \in [1, 1+\balloss]$}}{
    $k_1 \leftarrow k_1+1$, and let $G^{(1, k_1)}$ contain the single edge $e=(a_e, b_e)$ with multiplier $\eta_e$. \;
    \For{$\ell \in [\lceil\log m\rceil]$}{
        \If{total number of edges in level $\ell$ is at most $2^{\ell}$}{\label{algline:if_level_l_size}
            Call $\textsc{Rebuild}(\ell)$, and \textbf{break} \;
        }
        \Else{
            $G^{(\ell+1,j+k_{\ell+1})} \leftarrow G^{(\ell,j)}$ for all $j \in [k_{\ell}]$ \; 
            $k_{\ell+1} \leftarrow k_{\ell+1} + k_{\ell}$, $k_{\ell} \leftarrow 0$ \;\label{algline:if_level_l_size_end}
        }
    }
}

\Fn{\Rebuild{$\ell \in [\lceil\log m\rceil]$}}{
    Let $G^{(\ell)}$ be the union of all graphs of level $\ell$. \;
    Use \Cref{lem:expander_decomposition} to decompose the smoothed graph $\ov{G}^{\ell}$ of $G^{\ell}$ into edge-disjoint $10 \phi$-expanders, and let $G^{(\ell,1)}, \cdots, G^{(\ell,k_{\ell})}$ be the corresponding lossy graphs. \;
    \For{$j \in [k_{\ell}]$}{
    Initialize a data structure of \Cref{thm:heavy_hitter_expander}: $\DS^{(\ell,j)}.\textsc{Initialize}(G^{(\ell,j)}, \epsad, \ov{\tau}_{E(G^{(\ell,j)})})$.\;
    Let $d^{(\ell,j)} \in \R^V$ be the degree of $G^{(\ell,j)}$, $m_{\init}^{(\ell,j)} \leftarrow |E(G^{(\ell,j)})|$, $m_{\cnt}^{(\ell,j)} \leftarrow 0$.
    }
}

\Fn{\ExpanderPrune{$F \subset E, \ell, j$}}{\label{algline:def_expander_prune}
    Use \Cref{lem:expander_pruning} to find the set of vertices $S$ to be pruned from $G^{(\ell,j)}$ so that it remains a $\phi$-expander after deleting vertices in $F$. \; \label{algline:pruning}
    \Return $F_{\exp} \leftarrow E_{G^{(\ell,j)}}(S, V \backslash S)$ and $F'_{\exp} \leftarrow E_{G^{(\ell,j)}}(S, S)$
}

\Fn{\DegreePrune{$F \subset E, \ell, j$}}{\label{algline:def_degree_prune}
    $F_{\deg} \leftarrow \emptyset$. \;
    For $e \in F$, compute the degree $d'_{a_e}$, $d'_{b_e}$ of the two endpoints $a_e$, $b_e$ in $G^{(\ell,j)}$. \;
    For $e \in F$, for $v \in \{a_e, b_e\}$, if $d'_{v} < d^{(\ell,j)}_{v} / 9$ then add to $F_{\deg}$ all edges adjacent to $v$.\;\label{algline:degree_drop}
    \Return $F_{\deg}$
}
\end{algorithm}

\begin{algorithm}
\caption{Heavy hitter on $\balloss$-balanced lossy graphs (continued from Algorithm~\ref{alg:heavy_hitter_balanced})}
\label{alg:heavy_hitter_balanced_continued}
\DontPrintSemicolon
\SetKwInOut{Input}{Input}
\SetKwInOut{Output}{Output}
\SetKwInOut{Invariant}{Invariant}
\SetKwInOut{Member}{Member}

\SetKwFunction{Initialize}{Initialize}
\SetKwFunction{QueryHeavy}{QueryHeavy}
\SetKwFunction{Delete}{Delete}
\SetKwFunction{Insert}{Insert}
\SetKwFunction{Rebuild}{Rebuild}
\SetKwFunction{ScaleTau}{ScaleTau}
\SetKwFunction{Norm}{Norm}
\SetKwFunction{Sample}{Sample}
\SetKwProg{Fn}{procedure}{:}{}

\Fn{\ScaleTau{$e \in E, b \in \R_{\geq 0}$}}{
    $\ov{\tau}_e \leftarrow b$\;
    Find the subgraph $G^{(\ell,j)}$ that contains edge $e$, and call $\DS^{(\ell,j)}.\textsc{ScaleTau}(e, b)$. \;
}

\Fn{\QueryHeavy{$h \in \mathbb{R}^V, \epsilon \in (0,1)$}}{
    \lFor{$\ell \in [\lceil\log m\rceil]$ and $j \in [k_{\ell}]$}{
        $I^{(\ell,j)} \leftarrow \DS^{(\ell,j)}.\textsc{QueryHeavy}(h, \epsilon)$ 
    }
    \Return $I \leftarrow \bigcup_{\ell=1}^{\lceil\log m\rceil} \bigcup_{j=1}^{k_{\ell}} I^{(\ell,j)}$ \;
}

\Fn{\Norm{$h \in \mathbb{R}^V$}}{
    \lFor{$\ell \in [\lceil\log m\rceil]$ and $j \in [k_{\ell}]$}{
        $L^{(\ell,j)} \leftarrow \DS^{(\ell,j)}.\textsc{Norm}(h, \epsilon)$ 
    }
    \Return $L \leftarrow \sum_{\ell=1}^{\lceil\log m\rceil} \sum_{j=1}^{k_{\ell}} L^{(\ell,j)}$ \;
}

\Fn{\Sample{$h \in \mathbb{R}^V, \ov{C}_1, \ov{C}_2, C_3$}}{
    \For{$\ell \in [\lceil\log m\rceil]$ and $j \in [k_{\ell}]$}{
        $\mr^{(\ell,j)} \leftarrow \DS^{(\ell,j)}.\textsc{Sample}(h, \ov{C}_1, \ov{C}_2, C_3)$ \;
    }
    \Return $\mr$ which is a concatenation of all $\mr^{(\ell,j)}$ \;
}
\end{algorithm}

\paragraph{Expander decomposition and pruning.}
The data structure of \Cref{thm:heavy_hitter_balanced} uses the following tools of expander decomposition and expander pruning. We note that the expander decomposition and expander pruning results extend naturally to graphs with multi-edges, in the same way that they extend to weighted graphs in Section~4.1 of \cite{sw21}.
\begin{lemma}[Expander decomposition, Theorem~5.1 of \cite{cklpgs22} and Theorem~1.2 of \cite{sw21}]\label{lem:expander_decomposition}
There is an algorithm $\textsc{Decompose}(G)$ that takes as input any unweighted, undirected graph $G$, and w.h.p.~in $O(m \log^7(m))$ time computes an edge-disjoint partition of $G$ into graphs $G_0, G_1, \cdots, G_k$ for $k = O(\log n)$ such that each nontrivial connected component $X$ of $G_i$ is a $\phi$-expander for $\phi = \Theta(1/\log^3(m))$. %
\end{lemma}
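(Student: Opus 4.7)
The plan is to prove the lemma by a recursive decomposition driven by a balanced sparse-cut oracle, following the standard ``cut-or-certify'' paradigm. I would maintain a collection of edge-disjoint subgraphs; initially this collection is $\{G\}$. At each step, for a subgraph $H$ not yet certified as a $\phi$-expander, I invoke an oracle that either (i) certifies that $H$ has conductance at least $\phi$, in which case $H$ is output and removed from further processing, or (ii) returns a cut $(S, V(H) \setminus S)$ of conductance at most $\phi$ that is sufficiently balanced, i.e., separating an $\Omega(1/\polylogn)$ fraction of $\vol(H)$. In case (ii) the cut edges $F = E_H(S, V(H)\setminus S)$ are removed from $H$, assigned to a designated ``level'' of the output partition, and the two resulting pieces are recursed upon.

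The central subroutine is the cut-matching game of Khandekar--Rao--Vazirani, run for $O(\log^2 m)$ rounds. Each round embeds a random perfect matching on $V(H)$ into $H$ by solving a multicommodity-flow feasibility instance; a feasible embedding after sufficiently many rounds certifies expansion of $H$, while infeasibility lets us extract, from the dual of the flow LP, a sparse cut of conductance $O(\phi \cdot \polylog m)$. I would implement each round via a near-linear approximate max-flow solver so that each call runs in $\wt{O}(|E(H)|)$ time.

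For the partition structure, I would color each edge by the level of the recursion tree at which it is cut. Because each cut is $\Omega(1/\polylogn)$-balanced, the recursion depth is $O(\log n)$, and within a single level all cuts act on disjoint subgraphs; this yields the claimed $k = O(\log n)$ edge-disjoint levels $G_0, \dots, G_k$ whose nontrivial connected components are $\phi$-expanders with $\phi = \Theta(1/\log^3 m)$ (the cube arising from one $\log$ factor from cut-matching rounds, one from the balance loss, and one from the recursion depth). A geometric-series argument charging work to $|E(H)|$ at each recursive call gives a per-level cost of $\wt{O}(m)$.

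The hard part is handling the ``very unbalanced'' case: when the oracle persistently returns cuts separating only a $1/\polylogn$ fraction, naive recursion on the small side is fine, but re-running the expensive oracle on the nearly-expanding large side would blow up the runtime to super-linear. The fix, following Saranurak--Wang, is \emph{expander trimming}: after removing a very unbalanced sparse cut, we do not rerun cut-matching from scratch on the large side but instead locally prune vertices adjacent to $F$ using a local flow computation whose cost is $\wt{O}(\phi^{-1}|F|)$ rather than $\wt{O}(|E(H)|)$. Amortizing this local cost against the deleted edges yields the claimed $O(m \log^7 m)$ total running time, where the exponent $7$ bookkeeps the $\polylog m$ factors from the cut-matching rounds, the approximate flow solver, the trimming subroutine, and the recursion depth.
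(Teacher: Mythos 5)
The paper does not actually prove this lemma; it is imported as a black-box citation of Theorem~1.2 of Saranurak--Wang and Theorem~5.1 of~\cite{cklpgs22}. Your sketch reconstructs the underlying machinery (cut-matching game plus local trimming on the large side of unbalanced cuts to avoid super-linear rework), which is indeed the correct technology for the single-level version (Saranurak--Wang Theorem~1.2). However, there is a concrete gap in how you produce the claimed edge partition into $G_0,\dots,G_k$: you ``color each edge by the level of the recursion tree at which it is cut,'' but the set of cut edges removed at a fixed recursion depth is just a union of arbitrary bipartite boundary sets --- it has no reason to be a disjoint union of $\phi$-expanders, so the resulting $G_j$ violates the lemma's guarantee. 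What you do get from a single run of the recursive decomposition is (i) a collection of edge-disjoint $\phi$-expanders, and (ii) a residual set $R$ of inter-cluster edges with $|R| \le \tilde{O}(\phi)\, m = m/\polylog(m)$. The correct route to the lemma, and what Theorem~5.1 of~\cite{cklpgs22} does, is to \emph{iterate} this: assign the expanders found in the first run to $G_0$, then re-run the expander decomposition on the residual graph $(V,R)$ and assign its expanders to $G_1$, and so on. Since the residual edge count shrinks by a $\polylog(m)$ factor each round, the process terminates in $O(\log m / \log\log m) = O(\log n)$ rounds, giving the claimed $k = O(\log n)$ levels, and the geometric decay also yields the $\wt{O}(m)$ total time. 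Your recursion-depth bound is instead controlling a different quantity (the depth within one decomposition), which conveniently is also $O(\log n)$ but does not by itself give the edge partition into unions of expanders.
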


\begin{lemma}[Expander pruning, Theorem~1.3 of \cite{sw21}]\label{lem:expander_pruning}
Let $G = (V,E)$ be a $\phi$-expander with $m$ edges. There is a deterministic algorithm with access to adjacency lists of $G$ such that, given an online sequence of $k \leq \phi m / 10$ edge deletions in $G$, can maintain a pruned set $P \subseteq V$ such that the following property holds. Let $G_i$ and $P_i$ be the graph $G$ and the set $P$ after the $i$-th deletion. We have, for all $i$,
\begin{enumerate}
    \item $P_0 = \emptyset$ and $P_i \subseteq P_{i+1}$,
    \item $\vol(P_i) \leq 8 i / \phi$ and $|E(P_i, V-P_i)| \leq 4i$, and
    \item $G_i\{V - P_i\}$ is a $\phi/6$-expander.
\end{enumerate}
The total time for updating $P_0, \cdots, P_k$ is $O(k \phi^{-2} \log(m))$. %
\end{lemma}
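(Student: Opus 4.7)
The plan is to prove the lemma via an online local-flow procedure combined with an amortized charging argument. At a high level, the algorithm maintains $P$ monotonically by running a bounded-work local flow each time an edge is deleted; whenever the flow fails to route, a sparse cut is extracted and added to $P$.

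The first step is to set up the invariant. Let $P_0 = \emptyset$ and define a vector of \emph{excess} $\Delta \in \R^{V}_{\geq 0}$, initially zero. After the deletion of the $i$-th edge $e_i = (u,v)$, we bump $\Delta_u$ and $\Delta_v$ by a constant (say $2$), and also place excess on the endpoint of any edge newly incident to $P_{i-1}$. This excess is to be routed as a unit flow to the rest of the graph along edges of $G_i \setminus P_{i-1}$, with edge capacities set to $2/\phi$ and vertex sinks of capacity proportional to their degrees. The key point is that if this flow can be routed, then the local "witness" certifies that no small sparse cut has emerged in $G_i \setminus P_{i-1}$; if it cannot, the residual graph yields a sparse cut, which we add to $P$. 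We set $P_i$ to be $P_{i-1}$ together with any cut extracted.

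The second step is the correctness analysis. By induction, $G_{i-1} \setminus P_{i-1}$ is a $\phi/6$-expander. One shows that a local flow with source mass equal to the accumulated deletion excess either routes fully, in which case the expansion invariant continues to hold after adding $e_i$'s boundary mass, or else a standard level-cut argument on the residual graph (as in Henzinger--Rao--Wang or the original Saranurak--Wang analysis) yields a set $S \subseteq V \setminus P_{i-1}$ with $|E(S, V \setminus (P_{i-1} \cup S))| \leq (\phi/6)\vol(S)$ whose edge-boundary to non-$P$ vertices has dropped below the expansion threshold. Adding $S$ to $P$ restores the invariant. Together, the total source mass placed on the system through step $i$ is $O(i)$, and each unit of routed (or absorbed) flow contributes $O(1/\phi)$ to $\vol(P_i)$ and $O(1)$ to $|E(P_i, V \setminus P_i)|$, yielding the claimed bounds $\vol(P_i) \leq 8i/\phi$ and $|E(P_i, V \setminus P_i)| \leq 4i$.

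The third step is the running-time analysis. The local unit-flow primitive (implemented via a bounded push-relabel with height cutoff $O(\phi^{-1} \log m)$) does work proportional to the source mass times the height bound, giving $O(\phi^{-2} \log m)$ amortized per deletion. Extracting a sparse cut from the terminal residual configuration is a level-cut that touches only edges already explored, so it does not increase the asymptotic cost. Summing over all $k \leq \phi m / 10$ deletions yields total time $O(k \phi^{-2} \log m)$, as claimed. The precondition $k \leq \phi m/10$ is what guarantees that source mass never exceeds the total expansion capacity of $G$, so the algorithm never prunes more than half the volume.

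The hardest part of the argument is the coupled analysis of the local flow primitive: showing simultaneously (i) that bounded-height push--relabel with $O(1/\phi)$ capacities can always detect a $\phi/6$-sparse cut when excess cannot be fully absorbed, and (ii) that the volume charged to $P$ per unit of unrouted excess is at most $O(1/\phi)$, so the invariants do not degrade by more than the constants stated. The delicate balance is choosing the capacity $2/\phi$ and the height bound so that (i), (ii), and the $O(\phi^{-2}\log m)$ work bound all hold simultaneously; this is essentially the content of the push--relabel analysis of Saranurak--Wang, and I would follow their template of bounding total work by (height) $\times$ (source mass) $+$ (number of relabels), then using the monotonicity of $P$ to amortize relabels against increases in $\vol(P)$.
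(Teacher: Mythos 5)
This lemma is not proved in the paper at all: it is imported verbatim as Theorem~1.3 of Saranurak--Wang \cite{sw21}, and the paper simply cites it as a black box (with the remark that it extends to multi-edges, as discussed in Section~4.1 of \cite{sw21}). There is therefore no ``paper's own proof'' to compare against. Your sketch is essentially a high-level outline of the actual Saranurak--Wang argument --- bounded-height local push--relabel with $O(1/\phi)$ edge capacities and degree-proportional sinks, level-cut extraction on failure, and amortizing work by (source mass) $\times$ (height bound) --- so in spirit you are reconstructing the cited proof rather than giving an alternative.

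That said, as a standalone proof your write-up has real gaps. The claim that each unit of unrouted excess charges only $O(1/\phi)$ to $\vol(P)$ and $O(1)$ to $|E(P, V\setminus P)|$ is exactly the delicate part of the SW analysis; it relies on a specific accounting where the sink capacities and the choice of level cut guarantee that the absorbed sink mass inside the pruned set dominates its volume, and you do not actually carry this out. Similarly, you assert that the bounded push--relabel certifies the $\phi/6$ expansion of $G_i\{V-P_i\}$ whenever the flow routes, but the direction ``flow routes $\Rightarrow$ no sparse cut crossing $V\setminus P$'' requires a separate argument (it is where the flow witness / cut-matching style duality is used), and your sketch just gestures at it. You also do not handle how the excess generated when a set $S$ is pruned (its boundary edges to $V\setminus(P\cup S)$ become new sources) is fed back into the invariant without blowing up the source mass. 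Since you explicitly defer all of these to ``the template of Saranurak--Wang,'' this reads as a reasonable summary of \cite{sw21} rather than a proof; for the purposes of this paper that is the correct stance --- the result is cited, not reproved --- but the sketch should not be mistaken for a self-contained derivation.
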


Before proving \Cref{thm:heavy_hitter_balanced}, we first prove some invariants that the algorithm maintains.
\begin{lemma}[Invariants of Algorithm~\ref{alg:heavy_hitter_balanced} and \ref{alg:heavy_hitter_balanced_continued}]\label{lem:invariant_heavy_hitter_balanced}
Let $\balloss \leq 0.0005$. Assuming the input graph $G$ to the data structure of Algorithm~\ref{alg:heavy_hitter_balanced} and \ref{alg:heavy_hitter_balanced_continued} is a $\balloss$-balanced lossy graph, and the input parameter $\phi \geq (2000\balloss)^{1/2}$, then the data structure maintains the following invariants after any operation:
\begin{enumerate}
\item {\bf (Edge-disjoint decomposition)} Each data structure $\DS^{(\ell,j)}$ maintains a subgraph $G^{(\ell,j)}$ that is edge disjoint with each other, and %
$\bigcup_{\ell = 1}^{\lceil\log m\rceil} \bigcup_{j = 1}^{k_{\ell}} E(G^{(\ell,j)}) = E(G)$, where $G$ is maintained by the data structure such that each $\textsc{Delete}(e)$ operation deletes the edge $e$ from $G$, and each $\textsc{Insert}(a_e, b_e, \eta_e)$ operation inserts a new edge $(a_e, b_e)$ with multiplier $\eta_e$ to $G$. 
\item {\bf (Size of each level)} For any $\ell \in [\lceil\log m\rceil]$, the subgraphs of level $\ell$ satisfy $\sum_{j \in [k_{\ell}]} |E(G^{(\ell,j)})| \leq 2^{\ell}$, and $\sum_{j \in [k_{\ell}]} |V(G^{(\ell,j)})| \leq O(n \log n)$.
\item {\bf (Properties of subgraphs)} $\forall \ell \in [\lceil\log m\rceil]$ and $\forall j \in [k_{\ell}]$, the subgraph $G^{(\ell,j)}$ is a $\balloss$-balanced lossy $\phi$-expander. Furthermore, after any $\DS^{(\ell,j)}.\textsc{Delete}$ operation, the degree of any vertex $v$ in $G^{(\ell,j)}$ either remains at least $1/9$ of the original degree $d^{(\ell,j)}_v$, or it drops to $0$.
\end{enumerate}
\end{lemma}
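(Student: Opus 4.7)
The plan is to prove the three invariants by induction on the sequence of operations, establishing the base case from \textsc{Initialize} and then verifying that each of \textsc{Delete}, \textsc{Insert}, \textsc{Rebuild}, and \textsc{ScaleTau} preserves them. The base case is immediate: after \textsc{Initialize} we have a single subgraph $G^{(\lceil\log m\rceil, 1)} = G$ which is then passed through \textsc{Rebuild}, so invariant~1 holds trivially, invariant~2 holds since only level $\lceil \log m \rceil$ is nonempty with $m \leq 2^{\lceil \log m \rceil}$ edges, and invariant~3 follows directly from the properties of \Cref{lem:expander_decomposition} (which produces $10\phi$-expanders) together with the fact that freshly built subgraphs satisfy $d^{(\ell,j)} = d_{G^{(\ell,j)}}$ trivially. \textsc{ScaleTau} is irrelevant to the invariants.

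For invariant~1, I would check that \textsc{Insert} either places a new edge in a fresh level-$1$ subgraph and possibly promotes entire level-$\ell$ subgraphs to level $\ell+1$ (which preserves edge-disjointness and the union), and that \textsc{Delete} removes edges $F_{\tot}$ from $G^{(\ell,j)}$ and then re-inserts $F_{\tot} \setminus \{e_{\start}\}$ via \textsc{Insert} (and in the restart branch at Line~\ref{algline:delete_restart} re-inserts all remaining edges of $G^{(\ell,j)}$ before destruction). Since only $e_{\start}$ is permanently removed, this matches the operation on $G$. For invariant~2, I would use the standard binary-counter analysis: the first level $\ell^*$ whose size is at most $2^{\ell^*}$ triggers the \textbf{break}, so after \textsc{Insert} every level $\ell < \ell^*$ is empty, and the only level that grows is $\ell^*$, which by the condition on Line~\ref{algline:if_level_l_size} has at most $2^{\ell^*}$ edges. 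For the vertex bound, each \textsc{Rebuild} applies \Cref{lem:expander_decomposition}, which outputs $O(\log n)$ edge-disjoint subgraphs per level, each of which has at most $n$ vertices, giving the $O(n \log n)$ total.

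For invariant~3, the $\balloss$-balanced property is inherited since flow multipliers are never altered. For the $\phi$-expander property, the key points are: after \textsc{Rebuild} each subgraph is a $10\phi$-expander; each \textsc{Delete} applies the pruning procedure of \Cref{lem:expander_pruning} to the smoothed graph (Line~\ref{algline:pruning}), which guarantees that after at most $10\phi \cdot m_{\init}^{(\ell,j)}/10 = \phi \cdot m_{\init}^{(\ell,j)}$ deletions the remaining subgraph is a $10\phi/6 \geq \phi$-expander; the restart condition on Line~\ref{algline:delete_restart} ($m_{\cnt}^{(\ell,j)} \geq (\phi/10) m_{\init}^{(\ell,j)}$) ensures we rebuild well before reaching this threshold, so the hypothesis of \Cref{lem:expander_pruning} is always satisfied. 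The degree-preservation claim follows because \textsc{DegreePrune} on Line~\ref{algline:degree_drop} explicitly deletes every remaining edge of any vertex whose degree falls below $d^{(\ell,j)}_v/9$, so after the while loop terminates each surviving vertex either has degree $\geq d^{(\ell,j)}_v/9$ or degree $0$, exactly as required.

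The main obstacle is verifying that the while loop in \textsc{Delete} (Lines~\ref{algline:while_start}–\ref{algline:while_end}) terminates with a consistent state that actually realizes both the expander and degree invariants simultaneously; expander pruning can lower degrees (triggering more degree pruning), and degree pruning removes edges that may require further expander pruning. Termination is straightforward since each iteration strictly shrinks $E(G^{(\ell,j)})$ and edges are only ever removed, so the loop stops after at most $|E(G^{(\ell,j)})|$ iterations. The subtlety is that \Cref{lem:expander_pruning} operates on an \emph{online} sequence of deletions in the smoothed graph, so I would feed each round's combined $F \cup F_{\exp} \cup F'_{\exp} \cup F_{\deg}$ into the pruning oracle before the next round begins, so that at the end of the loop the pruned-out vertex set $S_{\mathrm{final}}$ produced by \Cref{lem:expander_pruning} certifies that $G^{(\ell,j)}\{V \setminus S_{\mathrm{final}}\}$ remains a $\phi$-expander. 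As long as the aggregate count fed to the pruning oracle across all of $\DS^{(\ell,j)}$'s lifetime never exceeds $\phi \cdot m_{\init}^{(\ell,j)}/10$, which is enforced by Line~\ref{algline:increment_m_cnt} and the restart condition on Line~\ref{algline:delete_restart}, the invariant is preserved.
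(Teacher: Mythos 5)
Your inductive structure and three-part argument mirror the paper's proof closely, and Parts 1--2 are essentially correct (one phrasing nitpick: \Cref{lem:expander_decomposition} returns $O(\log n)$ \emph{graphs}, each of which may have many connected components --- so there may be far more than $O(\log n)$ subgraphs $G^{(\ell,j)}$ at a level; the $O(n\log n)$ vertex bound comes from each of those $O(\log n)$ graphs having at most $n$ vertices in total, not from there being only $O(\log n)$ subgraphs).

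The genuine issue is in your final paragraph on the expander/degree-pruning interleaving. You propose to "feed each round's combined $F \cup F_{\exp} \cup F'_{\exp} \cup F_{\deg}$ into the pruning oracle," and to certify expansion of $G^{(\ell,j)}\{V \setminus S_{\mathrm{final}}\}$. This misunderstands the interface of \Cref{lem:expander_pruning} in two ways, and the second would actually break the analysis. First, the pruned set $P$ is \emph{output} by the oracle in response to deletions, so $F_{\exp}=E(S,V\setminus S)$ and $F'_{\exp}=E(S,S)$ are precisely the edges removed when the algorithm materializes the restriction to $V\setminus P$ inside $G^{(\ell,j)}$; they are not external deletions and must not be re-fed to the oracle. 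After deleting $F\cup F_{\exp}\cup F'_{\exp}$, the data structure's $G^{(\ell,j)}$ already equals $G_i\{V\setminus P_i\}$, so $G^{(\ell,j)}$ itself is the $\phi$-expander (not $G^{(\ell,j)}\{V\setminus S_{\mathrm{final}}\}$). Only $F$, and in later rounds $F_{\deg}$, are genuine new deletions fed to the oracle, which is exactly why the algorithm sets $F\leftarrow F_{\deg}$ and why $m^{(\ell,j)}_{\cnt}$ is incremented by $|F|$ alone. Second, re-feeding $F_{\exp}$ and especially $F'_{\exp}$ would destroy the budget argument: by \Cref{lem:expander_pruning}, $\sum|F'_{\exp}| \leq \vol(P_i) \leq 8i/\phi$, which is a factor $\phi^{-1}$ larger than $\sum|F|$, so the oracle's capacity $\phi' m_{\init}/10$ (here $\phi'=10\phi$, giving budget $\phi\,m_{\init}$, not $\phi\,m_{\init}/10$ as you wrote) would be blown through before the restart condition on \Cref{algline:delete_restart} ever triggers. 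The paper's bookkeeping --- oracle sees only $m^{(\ell,j)}_{\cnt}$ deletions, restart at $m^{(\ell,j)}_{\cnt}\geq (\phi/10)m^{(\ell,j)}_{\init}$, oracle budget $\phi\,m^{(\ell,j)}_{\init}$ --- is what makes the slack work out, and your version does not preserve that slack.
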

\begin{proof}
{\bf Part 1 (Edge-disjoint decomposition).} 
First note that by expander decomposition (\Cref{lem:expander_decomposition}), after each $\textsc{Rebuild}(\ell)$ operation we are guaranteed that every edge of level $\ell$ is included in one of the edge-disjoint expanders $G^{(\ell, 1)}, \cdots, G^{(\ell, k_{\ell})}$, and each data structure $\DS^{(\ell,j)}$ maintains an expander $G^{(\ell,j)}$.

$\textsc{Initialize}(\cdot)$ first sets the initial graph $G$ to have level $\lceil\log m\rceil$, and it then calls $\textsc{Rebuild}(\lceil\log m\rceil)$. This $\textsc{Rebuild}(\lceil\log m\rceil)$ ensures that initially $\bigcup_{\ell = 1}^{\lceil\log m\rceil} \bigcup_{j = 1}^{k_{\ell}} E(G^{(\ell,j)}) = E(G)$, and all the subgraphs are edge disjoint. Next we prove that we maintain an edge-disjoint decomposition after insertions and deletions.

Each $\textsc{Insert}(\cdot)$ inserts a new edge $e=(a_e, b_e)$ into a new subgraph $G^{(1,k_1)}$ in level $1$, so we still have $\bigcup_{\ell = 1}^{\lceil\log m\rceil} \bigcup_{j = 1}^{k_{\ell}} E(G^{(\ell,j)}) = E(G)$, this operation then calls $\textsc{Rebuild}(\cdot)$ recursively that still maintain an edge-disjoint decomposition.

Each $\textsc{Delete}(e)$ inserts all the deleted edges except $e$ back to the data structure by using $\textsc{Insert}(\cdot)$. So we still have $\bigcup_{\ell = 1}^{\lceil\log m\rceil} \bigcup_{j = 1}^{k_{\ell}} E(G^{(\ell,j)}) = E(G)$.

{\bf Part 2 (Size of each level).} We first prove the total number of edges in level $\ell$ is at most $2^{\ell}$. This bound holds because the algorithm could only add edges to level $\ell$ during $\textsc{Insert}(\cdot)$, and it would move all edges from level $\ell$ to $\ell+1$ if the total number of edges in level $\ell$ exceeds $2^{\ell}$ (see the if-else-clause from Line~\ref{algline:if_level_l_size} to \ref{algline:if_level_l_size_end} of Algorithm~\ref{alg:heavy_hitter_balanced}).

Next we bound the number of vertices in level $\ell$. Note that after each $\textsc{Rebuild}(\ell)$, \Cref{lem:expander_decomposition} implies that $\sum_{j=1}^{k_{\ell}} |V(G^{(\ell,j)})| \leq O(n \log n)$. The algorithm could only add more subgraphs to level $\ell$ in $\textsc{Insert}(\cdot)$, and there are two cases: (1) either the total number of edges in level $\ell$ is bounded by $2^{\ell}$, and the algorithm rebuilds level $\ell$, we again restore $\sum_{j=1}^{k_{\ell}} |V(G^{(\ell,j)})| \leq O(n \log n)$, (2) or the total number of edges in level $\ell$ exceeds $2^{\ell}$, and the algorithm moves all graphs in level $\ell$ to $\ell+1$, and this level becomes empty.

{\bf Part 3 (Properties of subgraphs).} Every subgraph $G^{(\ell,j)}$ is a $\balloss$-balanced lossy graph because the initial graph is $\balloss$-balanced, and any added edge satisfies $\eta_e \in [1, 1+\balloss]$. 

Every subgraph $G^{(\ell,j)}$ remains a $\phi$-expander throughout the algorithm because the subgraphs are constructed to be $10\phi$-expanders in $\textsc{Rebuild}$ by \Cref{lem:expander_decomposition}, and whenever the algorithm deletes an edge from a subgraph in $\textsc{Delete}$, we immediately use the pruning procedure of \Cref{lem:expander_pruning} to ensure that it remains a $\phi$-expander (see Line~\ref{algline:expander_prune} and \ref{algline:def_expander_prune}). Moreover, the algorithm uses the if-clause on Line~\ref{algline:delete_restart} to ensure that at most $\phi/10$ fraction of edges are pruned using \Cref{lem:expander_pruning}, so the pruning procedure is correct. Finally note that we never add any edge to a subgraph.

Finally note that the \textsc{DegreePrune} procedure of the algorithm ensures that at the end of each while-loop in \textsc{Delete} (Line~\ref{algline:while_end}), only the vertices that are endpoints of edges in $F$ may have dropped below $1/9$, so when the \textsc{Delete} operation ends, we must have that the degree of any vertex $v$ in $G^{(\ell,j)}$ either remains at least $1/9$ of the original degree or directly drops to $0$.
\end{proof}

Using these invariants, we are ready to prove the main result of this section.
\begin{proof}[Proof of \Cref{thm:heavy_hitter_balanced}]
First note that by the invariants proved in Part 3 of \Cref{lem:invariant_heavy_hitter_balanced}, all the requirements of \Cref{thm:heavy_hitter_expander} are satisfied by the subgraphs $G^{(\ell,j)}$, so all the function calls to data structures $\DS^{(\ell,j)}$ are correct. The invariants also ensure the correctness of $\textsc{Delete}(\cdot)$, $\textsc{Insert}(\cdot)$, and $\textsc{Rebuild}(\cdot)$. Also note that the correctness of $\textsc{ScaleTau}(\cdot)$ is straightforward, and its runtime is bounded by $O(n)$ by \Cref{thm:heavy_hitter_expander}. It remains to bound the time complexity of all other operations, and prove the correctness of $\textsc{QueryHeavy}(\cdot)$, $\textsc{Norm}(\cdot)$, and $\textsc{Sample}(\cdot)$.

{\bf Time complexity of $\textsc{Rebuild}$ and $\textsc{Initialize}$.}
By Part 2 of \Cref{lem:invariant_heavy_hitter_balanced}, $\sum_{j=1}^{k_{\ell}} |E(G^{(\ell,j)})| \leq 2^{\ell}$, so the expander decomposition algorithm of \Cref{lem:expander_decomposition} runs in $O(2^{\ell} \log^7 m)$ time. Initialization of the data structures $\DS^{(\ell,1)}, \cdots, \DS^{(\ell,k_{\ell})}$ of \Cref{thm:heavy_hitter_expander} takes $O(2^{\ell} \phi^{-2}\log^2{(n/\epsad)})$ time. 
Since $\textsc{Initialize}$ calls $\textsc{Rebuild}(\lceil\log m\rceil)$, it takes $O(m \phi^{-2}\log^2{(n/\epsad)})$ time. %

{\bf Amortized time complexity of $\textsc{Insert}$.} 
Whenever we perform $\textsc{Insert}(\cdot)$, we first assign $O\big(\log(m) \phi^{-2}\log^2{(n/\epsad)}\big)$ number of tokens to the edge $e$ that is being inserted. Next we show that when $\textsc{Insert}(\cdot)$ makes recursive calls to $\textsc{Rebuild}(\cdot)$, it has enough tokens to cover their cost. Whenever we call $\textsc{Rebuild}(\ell)$ during $\textsc{Insert}$, there must have been at least $2^{\ell-1}$ number of edges that just got moved up to level $\ell$ from level $\ell-1$, and we charge $O(\phi^{-2}\log^2{(n/\epsad)})$ number of tokens from every such edge. In this way we collect enough tokens to cover the time complexity of $\textsc{Rebuild}(\ell)$.

Note that every single edge that is moved up from some level $\ell-1$ to level $\ell$ must have been inserted by some $\textsc{Insert}(\cdot)$ since the initial edges are all in level $\lceil\log m\rceil$, and each edge can move up at most $\lceil\log m\rceil$ levels, so every edge has enough tokens to pay the charges of $\textsc{Rebuild}(\cdot)$.

{\bf Amortized time complexity of $\textsc{Delete}$.} 
We say that $\textsc{Delete}(e_{\start})$ is performed on a subgraph $G^{(\ell,j)}$ if $e_{\start} \in E(G^{(\ell,j)})$. The algorithm maintains a counter $m_{\cnt}^{(\ell,j)}$ for each subgraph $G^{(\ell,j)}$, which records the total size of all sets $F$ passed as inputs to $\textsc{ExpanderPrune}(F,\ell,j)$ (see Line~\ref{algline:increment_m_cnt} and \ref{algline:expander_prune}). We first prove the following key amortization claim: for any $G^{(\ell,j)}$ and any integer $t$, after $t$ number of $\textsc{Delete}(\cdot)$ performed on $G^{(\ell,j)}$, 
\begin{equation}\label{eq:counter_amortization}
m_{\cnt}^{(\ell,j)} \leq 7 t.
\end{equation}
We prove this claim by a token-based amortization argument. We maintain a pool of tokens that pays for each increment of $m_{\cnt}^{(\ell,j)}$ on Line~\ref{algline:increment_m_cnt}. Every time $\textsc{Delete}(e_{\start})$ is called on an edge $e_{\start} \in E(G^{(\ell,j)})$, we assign $7$ tokens to $e_{\start}$. We maintain the invariant that at the beginning of each while-loop with set $F$ (Line~\ref{algline:while_start}), every edge in $F$ has at least $7$ tokens. We also maintain the invariant that for any vertex that remains in the graph, it has 1 token for each edge deleted from it. These invariants holds initially, since the first while-loop starts with $F = \{e_{\start}\}$. In each iteration of the while-loop (Line~\ref{algline:while_start}), tokens are transferred among edges and vertices as follows: 
\begin{itemize}
\item \emph{Counter increment step.} When $m_{\cnt}^{(\ell,j)}$ is incremented by $|F|$ on Line~\ref{algline:increment_m_cnt}, each edge in $F$ pays $1$ token for this increment.
\item \emph{Expander pruning step.} In each call to $\textsc{ExpanderPrune}(F, \ell, j)$ on Line~\ref{algline:expander_prune} (defined on Line~\ref{algline:def_expander_prune}), each edge in $F$ pays $6$ tokens. Among these, $2$ tokens are given to each of the two endpoints of each edge in $F$, and for every cut edge $e = (u,v) \in E(S, V \backslash S)$ with $u \in S$ and $v \in V \backslash S$, $1$ token is given to the vertex $v$, restoring the second invariant. 

By \Cref{lem:expander_pruning}, $|E(S, V\backslash S)|$ is at most $4$ times the total number of edges passed as inputs to the expander pruning procedure of \Cref{lem:expander_pruning}, so $6 = 2 + 4$ tokens per edge is enough to pay for this step.
\item \emph{Degree pruning step.} In each call to $\textsc{DegreePrune}(F, \ell, j)$ on Line~\ref{algline:degree_prune} (defined on Line~\ref{algline:def_degree_prune}), each vertex $v$ detected on Line~\ref{algline:degree_drop} with $d'_{v} < d^{(\ell,j)}_{v} / 9$ pays $8 d'_{v}$ tokens. Among these, for every edge $e = (u,v)$ adjacent to $v$ that is added to $F_{\deg}$, $7$ tokens are given to $e$, and $1$ token is given to the other endpoint $u$ of $e$.

Since $d'_{v} < d^{(\ell,j)}_{v} /9$, and each deleted edge adjacent to $v$ gives $1$ token to $v$, $v$ has accumulated $d^{(\ell,j)}_{v} - d'_{v} > (8/9) d^{(\ell,j)}_{v} > 8 d'_{v}$ number of tokens, so $v$ has enough tokens to pay for this step. 
\end{itemize}
Finally, since each edge in $F_{\deg}$ receives $7$ tokens, and the next iteration of the while-loop begins with $F \leftarrow F_{\deg}$, the invariant that each edge in $F$ has at least $7$ tokens is preserved. This finishes the proof of \eqref{eq:counter_amortization}.

Next we use \eqref{eq:counter_amortization} to bound the amortized time complexity of $\textsc{Delete}(\cdot)$. Consider a fixed $\textsc{Delete}(e_{\start})$ performed on $G^{(\ell,j)}$, and let $\Delta m_{\cnt}^{(\ell,j)}$ denote the increase in $m_{\cnt}^{(\ell,j)}$ during this operation. In each while-loop with set $F$ (Line~\ref{algline:while_start}), $m_{\cnt}^{(\ell,j)}$ is incremented by $|F|$, and the edges in $F$, $F_{\exp}$, and $F'_{\exp}$ are added to $F_{\tot}$. By \Cref{lem:expander_pruning}, we have $|F_{\exp} \cup F'_{\exp}| \leq 8 |F| / \phi$, so when all while-loops terminate and Line~\ref{algline:insert_and_delete_edges_in_delete} is reached,
\begin{equation*}
|F_{\tot}| \leq (1 + 8/\phi) \cdot \Delta m_{\cnt}^{(\ell,j)}.
\end{equation*}
Apart from the if-clause on Line~\ref{algline:delete_restart} (which is called at most once for each $G^{(\ell,j)}$), the most time-consuming step of this $\textsc{Delete}(e_{\start})$ are the following two steps: (1) expander pruning steps of Line~\ref{algline:expander_prune}, (2) the calls to $\DS^{(\ell,j)}.\textsc{Delete}(F_{\tot})$ and $\textsc{Insert}(\cdot)$ on Line~\ref{algline:insert_and_delete_edges_in_delete}. So by \Cref{lem:expander_pruning}, \Cref{thm:heavy_hitter_expander} and the time complexity of $\textsc{Insert}(\cdot)$ proved earlier, the worst-case runtime of this $\textsc{Delete}(e_{\start})$ is 
\begin{equation*}
\Delta m_{\cnt}^{(\ell,j)} \cdot \phi^{-2} \log(m) + |F_{\tot}| \cdot O\left(\log(m) \phi^{-2}\log^2{(n/\epsad)}\right) \leq \Delta m_{\cnt}^{(\ell,j)} \cdot O\left(\log(m) \phi^{-3}\log^2{(n/\epsad)}\right).
\end{equation*}
Combining the above equation and \Cref{eq:counter_amortization}, we have that apart from the if-clause on Line~\ref{algline:delete_restart}, the amortized time of each $\textsc{Delete}(\cdot)$ is $O\left(\log(m) \phi^{-3}\log^2{(n/\epsad)}\right)$. Finally, since the if-clause on Line~\ref{algline:delete_restart} is executed only when $m^{(\ell,j)}_{\cnt} \geq (\phi/10)m_{\init}^{(\ell,j)}$, by \Cref{eq:counter_amortization} this if-clause is only executed after $O(\phi m_{\init}^{(\ell,j)})$ number of $\textsc{Delete}(\cdot)$ performed on $G^{(\ell,j)}$. Since the if-clause on Line~\ref{algline:delete_restart} inserts at most $m_{\init}^{(\ell,j)}$ number of edges, the amortized cost of this step is also $O(1/\phi) \cdot O\left(\log(m) \phi^{-2}\log^2{(n/\epsad)}\right) = O\left(\log(m) \phi^{-3}\log^2{(n/\epsad)}\right)$.

{\bf Correctness and time complexity of \textsc{QueryHeavy}, \textsc{Norm}, and \textsc{Sample}.} 
By Part 1 of \Cref{lem:invariant_heavy_hitter_balanced}, the algorithm maintains $\bigcup_{\ell = 1}^{\lceil\log m\rceil} \bigcup_{j = 1}^{k_{\ell}} E(G^{(\ell,j)}) = E(G)$, so the correctness of these three operations directly follows from the correctness of the three operations of $\DS^{(\ell,j)}$ by \Cref{thm:heavy_hitter_expander}.

Using \Cref{thm:heavy_hitter_expander}, the runtime of $\textsc{QueryHeavy}(\cdot)$ is
\begin{align*}
&~ \sum_{\ell=1}^{\lceil\log m\rceil} \sum_{j=1}^{k_{\ell}} O\left(\phi^{-4} \epsilon^{-2} \|\mb^{(\ell,j)} h\|_2^2 + \phi^{-4} \epsad \epsilon^{-2} \|(\md^{(\ell,j)})^{1/2}h\|_2^2 + |V(G^{(\ell,j)})|\right) \\
\leq &~ O\left(\phi^{-4} \epsilon^{-2} \|\mb_G h\|_2^2 + \phi^{-4} \epsad \epsilon^{-2} \|\md^{1/2}h\|_2^2 + n \log^2 m\right),
\end{align*}
where we bound the first term by Part 1 of \Cref{lem:invariant_heavy_hitter_balanced} that $\{G^{(\ell,j)}\}$ forms an edge-disjoint decomposition of $G$, and we bound the second term by $\sum_{\ell=1}^{\lceil\log m\rceil} \sum_{j=1}^{k_{\ell}} \md^{(\ell,j)} = \md$ because this decomposition is edge disjoint, and we bound the third term by Part 2 of \Cref{lem:invariant_heavy_hitter_balanced} that $\sum_{j=1}^{k_{\ell}} |V(G^{(\ell,j)})| \leq n \log n$. 

Using \Cref{thm:heavy_hitter_expander}, the runtime of $\textsc{Norm}(\cdot)$ is
\begin{align*}
\sum_{\ell=1}^{\lceil\log m\rceil} \sum_{j=1}^{k_{\ell}} O\left(|V(G^{(\ell,j)})|\right) \leq O(n \log^2(m)),
\end{align*}
which follows from Part 2 of \Cref{lem:invariant_heavy_hitter_balanced} that $\sum_{j=1}^{k_{\ell}} |V(G^{(\ell,j)})| \leq n \log n$. 

Using \Cref{thm:heavy_hitter_expander}, and denote $m^{(\ell,j)} = |E(G^{(\ell,j)})|$, $n^{(\ell,j)} = |V(G^{(\ell,j)})|$, and $\ov{\tau}^{(\ell,j)} = \ov{\tau}_{E(G^{(\ell,j)})}$, the runtime of $\textsc{Sample}(\cdot)$ is
{\small
\begin{align*}
& \sum_{\ell=1}^{\lceil\log m\rceil} \sum_{j=1}^{k_{\ell}} O\left( \big(\ov{C}_1 (\phi^{-4} \|\mb^{(\ell,j)} h\|_2^2 + \phi^{-4} \epsad \|(\md^{(\ell,j)})^{1/2}h\|_2^2) + \ov{C}_2 m^{(\ell,j)} + C_3 \|\ov{\tau}^{(\ell,j)}\|_1 \big) C_0 \log m + n^{(\ell,j)} \log n \right) \\
& \leq O\left(C_0 \ov{C}_1 \phi^{-4} \log m(\|\mb_G h\|_2^2 + \epsad \|\md_G^{1/2}h\|_2^2) + C_0 \ov{C}_2 m \log m + C_0 C_3 \|\ov{\tau}\|_1 \log m + n \log^3 m \right),
\end{align*}
}
where we again used that $\{G^{(\ell,j)}\}$ forms an edge-disjoint decomposition of $G$, so $\sum_{\ell=1}^{\lceil\log m\rceil} \sum_{j=1}^{k_{\ell}} \md^{(\ell,j)} = \md$, and by Part 2 of \Cref{lem:invariant_heavy_hitter_balanced} that $\sum_{j=1}^{k_{\ell}} |V(G^{(\ell,j)})| \leq n \log n$. 
\end{proof}

\subsection{Heavy Hitters on General Lossy Graphs (Reduction to Balanced Lossy Graphs)}\label{sec:heavy_hitter_general}

In this section we present a heavy hitter data structure for general weighted lossy graphs  (\Cref{thm:heavy_hitter_general}), built using the data structure of \Cref{thm:heavy_hitter_balanced}. The reduction in this section ensures two properties that are required by \Cref{thm:heavy_hitter_balanced}: (1) We decompose a general lossy graph to $\balloss$-balanced lossy graphs. (2) We decompose the weighted graph into unweighted graphs, and implement $\textsc{Scale}(\cdot)$ using $\textsc{Insert}(\cdot)$ and $\textsc{Delete}(\cdot)$.

\begin{theorem}[Heavy hitters and sampler on general lossy graphs]\label{thm:heavy_hitter_general}
Let $G = (V,E,\eta)$ be a dynamic lossy graph undergoing edge insertions and deletions, and let $\ma$ denote its incidence matrix. 
Let $g \in \R^{E}_{>0}$ be a positive weight vector. %
Let $m \defeq \max_t |E^{(t)}|$, and $n \defeq \max_t |V^{(t)}|$. 
Let $W_{\eta} \defeq \max_{e,t} \eta^{(t)}_e$, and let $W_g \defeq \max_{t} \frac{\max_{e} g^{(t)}_e}{\min_{e} g^{(t)}_e}$. 
Let $\lambda_1$ be the largest value such that $\|\ma h\|_2 \geq \sqrt{\lambda_1}\|h\|_2$ at any time the algorithm calls a query operation with input $h$. There is a data structure that w.h.p.~supports the following operations:
\begin{itemize}
\item \textsc{Initialize}$(\ma \in \R^{E \times V}, g \in \R_{>0}^E, \ov{\tau} \in \R^E)$: 
Initializes with the incidence matrix $\ma$ of a lossy graph $G = (V,E,\eta)$, and two vectors $g$ and $\ov{\tau}$ in amortized time 
\[
\wt{O}\Big(m \log^2({\lambda}_1^{-1}) \log^3(W_{\eta}) \log^3(W_g) \Big).
\]
\item \textsc{Insert}$(i\in V, j\in V, \eta\in \R, b\in \R)$:
	Appends $\ma$ with the row $\indicVec{i} - (1 + \eta)\indicVec{j}$, and appends a corresponding entry of $b$ to $g$ in amortized $\wt{O}\left(\log^2({\lambda}_1^{-1}) \log^2(W_{\eta}) \log^2(W_g)\right)$ time. 
\item \textsc{Delete}$(e \in E)$: 
    Deletes the row $e$ in both $g$ and $\ma$ in the same time as $\textsc{Insert}$. %
\item \textsc{Scale}$(e \in E, b \in \R)$:
	Sets $g_e \leftarrow b$ in the same time as $\textsc{Insert}$.
\item \textsc{ScaleTau}$(e\in E, b \in \R)$: Sets $\ov{\tau}_e \leftarrow b$ in worst-case $O(1)$ time.
\item \textsc{QueryHeavy}$(h \in \R^V, \epsilon \in (0,1) )$:
	Returns $I \subset E$ containing exactly those $e$ with $|(\mg \ma h)_e| \ge \epsilon$
	in amortized $O( \epsilon^{-2} \| \mg \ma h \|_2^2) + \wt{O}(n \log(W_{\eta}) \log(W_g)) $ time.
\item \textsc{Sample}$(h \in \R^V, C_0, C_1, C_2, C_3)$: Let a vector $p \in \R^E$ satisfy
\[
p_e \geq \min \left\{1, ~ C_1 \frac{|E|}{\sqrt{|V|}} \cdot \frac{(\mg\ma h)_e^2}{\|\mg\ma h\|_2^2} + C_2 \frac{1}{\sqrt{|V|}} + C_3 \ov{\tau}_e \right\}.
\]
Let $S = \sum_{e \in E} p_e$.  Let $X$ be a random variable which equals to $p_e^{-1} \indicVec{e}$ with probability $p_e / S$ for all $e \in E$. This operation returns a random diagonal matrix $\mr = C_0^{-1} \sum_{j=1}^{C_0 S} \mdiag(X_j)$, where $X_j$ are i.i.d.~copies of $X$. The amortized time of this operation and also the output size of $\mr$ are bounded by 
\[
\wt{O}\Big(C_0 C_1 \frac{m}{\sqrt{n}} + ( C_0 C_2 \frac{m}{\sqrt{n}} + n) \cdot \log(W_{\eta}) \log(W_g) + C_0 C_3 \|\ov{\tau}\|_1 \Big).
\]
\end{itemize}
\end{theorem}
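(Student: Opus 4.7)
The plan is to reduce this theorem to \Cref{thm:heavy_hitter_balanced} via a three-level edge-disjoint decomposition of $G$. First, I partition the edges into $O(\log W_g)$ weight buckets of geometric ratio $2$ and factor out the common weight scalar in each bucket, so that only unweighted incidence matrices need to be handled by the sub-data-structures. Second, I decompose each bucket into $O(\log n)$ bipartite oriented subgraphs by assigning each vertex an $O(\log n)$-bit identifier and routing each edge $(a,b)$ to the subgraph indexed by the lowest bit on which $a,b$ differ, declaring the $0$-bit endpoint as the left vertex; thus every edge lies in exactly one such subgraph, and each vertex is unambiguously left or right within each subgraph. Third, I partition each bipartite subgraph into $O(\balloss^{-1}\log W_\eta)$ multiplier buckets of ratio $(1+\balloss)$ for a sufficiently small constant $\balloss$ satisfying the hypothesis of \Cref{thm:heavy_hitter_balanced}.

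For each resulting triple-indexed subgraph $H$ with $\eta_e \in [\eta^\star,(1+\balloss)\eta^\star]$, rescaling the left-endpoint coordinates of any query vector by the factor $\eta^\star$ is equivalent to dividing every multiplier $\eta_e$ in $H$ by $\eta^\star$, producing a $\balloss$-balanced lossy graph (this rescaling is well-defined because $H$ is bipartite with a fixed orientation). I instantiate \Cref{thm:heavy_hitter_balanced} on each such rescaled $H$ with a constant $\phi$ and additive parameter $\epsad = \Theta(\lambda_1/(g^\star_H)^2)$. Operations are routed as follows: $\textsc{Insert}$, $\textsc{Delete}$, and $\textsc{ScaleTau}$ act on the unique subgraph containing the edge; $\textsc{Scale}$ becomes a delete-plus-insert (geometric bucketing keeps this $\wt{O}(1)$ amortized); $\textsc{QueryHeavy}(h,\epsilon)$ queries each sub-data-structure on its locally rescaled vector $\tilde h$ with threshold $\epsilon/(2g^\star_H)$, then concatenates and filters to the true heavy hitters of $\mg\ma h$; $\textsc{Sample}$ aggregates samples from each sub-$\textsc{Sample}$ call, with constants adjusted by $(g^\star_H)^2$ so that the sub-data-structure guarantee on $(\mb_H \tilde h)_e^2$ translates back to $(\mg\ma h)_e^2$.

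The runtime bounds follow by summing the per-subgraph guarantees of \Cref{thm:heavy_hitter_balanced} over the $\wt{O}(\log W_g \log W_\eta)$ subgraphs: the decomposition is edge-disjoint so $\sum_H |E(H)|=m$, while each vertex lies in $\wt{O}(\log W_g \log W_\eta)$ subgraphs, yielding the $n\log W_\eta \log W_g$ contributions in $\textsc{QueryHeavy}$ and $\textsc{Sample}$. The main obstacle is preserving the $m/\sqrt{n}$ factor in the $\textsc{Sample}$ lower bound $p_e \geq C_1(m/\sqrt{n})(\mg\ma h)_e^2/\|\mg\ma h\|_2^2 + C_2/\sqrt{n}$: each sub-$\textsc{Sample}$ only provides a bound of the form $\ov{C}_1(\mb_H \tilde h)_e^2 + \ov{C}_2$, and naively setting $\ov{C}_1 = C_1 m/(\sqrt{n}\|\mg\ma h\|_2^2)$ would require knowing $\|\mg\ma h\|_2^2$ in advance. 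I plan to handle this by issuing a prior $\textsc{Norm}$ call at each subgraph to obtain a polylog-factor approximation of $\|\mg\ma h\|_2^2$, then passing the appropriately scaled constants to each sub-$\textsc{Sample}$; this introduces only the polylogarithmic overhead already accounted for in the claimed runtime.
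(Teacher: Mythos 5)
Your decomposition strategy (weight bucketing, bit-based bipartite splitting, flow-multiplier bucketing, vertex rescaling, invoking \Cref{thm:heavy_hitter_balanced} on each piece, and using a preliminary $\textsc{Norm}$ call to calibrate $\textsc{Sample}$) matches the paper's reduction in essentially every structural respect, so the overall plan is sound. However, two of your parameter choices are wrong in ways that would break the runtime guarantees.

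First, you cannot instantiate \Cref{thm:heavy_hitter_balanced} with a constant $\phi$ and constant $\balloss$. Although the stated interface nominally allows $\phi$ up to $1$, the internal \textsc{Rebuild} procedure uses \Cref{lem:expander_decomposition}, which only produces expanders of conductance $\Theta(1/\log^3 m)$, and the underlying \Cref{thm:heavy_hitter_expander} requires $\balloss \leq \phi^2/(10^5\log^2 m)$. The paper's choices $\phi = \log^{-3}(m)$ and $\balloss = \log^{-11}(m)$ (Table~\ref{tab:parameters}) respect these constraints; a constant $\phi$ does not. Second, $\epsad = \Theta(\lambda_1/(g^\star_H)^2)$ is far too large. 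The additive-error contribution to a sub-query is $\phi^{-4}\epsad(\epsilon^{(k)})^{-2}\|\md^{1/2}h^{(j)}\|_2^2$, and since $\md$ has diagonal entries up to $O(m)$ while $h^{(j)} = (\ms^{(j)})^{-1}h$ inflates the $\ell_2$-norm by up to a factor $W_\eta$, this term is of order $\phi^{-4}\epsad\,g_{\max}^2\,m\,W_\eta^2\,\epsilon^{-2}\|h\|_2^2$. To absorb it into $O(\epsilon^{-2}\|\mg\ma h\|_2^2)=\Omega(\epsilon^{-2}g_{\min}^2\lambda_1\|h\|_2^2)$ across all $T\log W_g$ subgraphs one needs $\epsad\lesssim \phi^4\,W_g^{-2}W_\eta^{-2}m^{-1}\lambda_1/(T\log W_g)$, which is what the paper sets. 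Your $\epsad$ omits the $m^{-1}$, $W_\eta^{-2}$, and polylog factors, so the query bound of $O(\epsilon^{-2}\|\mg\ma h\|_2^2)+\wt{O}(n\log W_\eta\log W_g)$ would not follow.
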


Before proving \Cref{thm:heavy_hitter_general}, we first prove the following key lemma which provides a procedure for decomposing any general lossy graph into $\balloss$-balanced lossy graphs. This lemma first decomposes a general lossy graph into bipartite graphs whose edges are all oriented in the same direction, then partitions the edges in each bipartite graph by their flow multipliers $\eta_e$, and finally finds an appropriate vertex scaling factor to ensure all flow multipliers $\eta_e$ are bounded by $1 + \balloss$.

\begin{lemma}[Decomposition into balanced lossy graphs]\label{lem:decomposition_balanced}
Consider any vertex set $V$ and any $W_\eta > 1, \balloss >  0$. %
There exists a map $f:\{(u,v,\eta_e) \mid u,v \in V, \eta_e \leq 1 + W_\eta\}\rightarrow [\lceil\log n\rceil \cdot 2 \lceil\frac{\log(W_{\eta})}{\log(1+\balloss)}\rceil]$ mapping edges into buckets, as well as a diagonal vertex scaling matrix $\ms^{(i)} \in \R^{V \times V}_{> 0}$ for each bucket $i \in [\lceil\log n\rceil\cdot 2\lceil\frac{\log(W_{\eta})}{\log(1+\balloss)}\rceil]$ such that for any lossy graph $G = (V,E,\eta)$ with $\eta_e \leq 1 + W_\eta$, we have: 
\begin{enumerate}
\item Let $E^{(i)} \subseteq E$ denote the edges mapped into $i$ by $f$. Let $\mb^{(i)}$ denote the submatrix of $\mb_G$ with rows in $E^{(i)}$. Then the matrix $\mb^{(i)} \cdot \ms^{(i)}$ is the incidence matrix of a $\balloss$-balanced lossy graph. 
\item $\forall i,k, W_{\eta}^{-1} \leq \ms^{(i)}_{k,k} \leq 1$.
\item $f$ is computable in $O(\log{n})$ time. 
\end{enumerate}
\end{lemma}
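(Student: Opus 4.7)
The plan is to construct the bucketing map $f$ and the scaling matrices $\ms^{(i)}$ in product form, indexing each bucket by a triple $(j, o, k)$ where $j \in [\lceil \log n \rceil]$ encodes a bipartition of $V$, $o \in \{0,1\}$ encodes an edge orientation across that bipartition, and $k \in \{0, 1, \ldots, K-1\}$ with $K \defeq \lceil \log(W_\eta) / \log(1+\balloss) \rceil$ encodes a geometric range for the flow multiplier. This yields exactly $\lceil \log n \rceil \cdot 2 \cdot K$ buckets, matching the statement. The two ingredients (bipartite-with-orientation decomposition and geometric multiplier bucketing) are independent and combine by taking products.

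For the bipartite structure, I would identify each vertex $v \in V$ with a binary label of length $\lceil \log n \rceil$. Given an input $(u, v, \eta_e)$ with $u = a_e, v = b_e$, let $j(u,v)$ be the highest bit position where the labels of $u$ and $v$ differ, and let $o \defeq \mathrm{bit}_j(a_e)$. Edges with a common $j$ restrict to a bipartite subgraph between $V_0^{(j)} \defeq \{v : \mathrm{bit}_j(v) = 0\}$ and $V_1^{(j)} \defeq \{v : \mathrm{bit}_j(v) = 1\}$, and the orientation bit $o$ then guarantees that within sub-bucket $(j, o)$ every tail $a_e$ lies in $V_o^{(j)}$ and every head $b_e$ lies in $V_{1-o}^{(j)}$. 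Both $j$ and $o$ are determined from the binary labels of $u$ and $v$ in $O(\log n)$ time.

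For the flow multiplier, I would let $k$ be the unique integer in $\{0, \ldots, K-1\}$ with $\eta_e \in [(1+\balloss)^k, (1+\balloss)^{k+1}]$, which covers all $\eta_e \in [1, 1 + W_\eta]$ by the choice of $K$. For each bucket $(j, o, k)$ define the scaling matrix by $\ms^{(j,o,k)}_{v,v} \defeq (1+\balloss)^{-k}$ if $\mathrm{bit}_j(v) = o$, and $\ms^{(j,o,k)}_{v,v} \defeq 1$ otherwise. The key calculation is that for any edge $e$ assigned to bucket $(j,o,k)$, the corresponding row of $\mb^{(j,o,k)} \ms^{(j,o,k)}$ equals $\indicVec{b_e} - \eta_e (1+\balloss)^{-k} \indicVec{a_e}$, which has the form of a lossy incidence matrix row with new multiplier $\eta'_e \defeq \eta_e (1+\balloss)^{-k} \in [1, 1+\balloss]$, verifying that the bucket is a $\balloss$-balanced lossy graph. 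Moreover, the diagonal entries of $\ms^{(j,o,k)}$ lie in $\{(1+\balloss)^{-k}, 1\} \subseteq [W_\eta^{-1}, 1]$ since $(1+\balloss)^K \geq W_\eta$, giving the scaling bound.

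I do not expect any significant technical obstacle; the construction is by design. The only subtlety is that the scaling $\ms^{(j,o,k)}$ must be well-defined as a function of the bucket alone (independent of the specific edge and even the specific graph $G$), which is precisely why the orientation bit is defined as $o = \mathrm{bit}_j(a_e)$ rather than by an arbitrary choice per edge: once $o$ is fixed, the sides $V_o^{(j)}$ and $V_{1-o}^{(j)}$ are unambiguous and the scaling can be set uniformly. Given this, all three claimed properties (balancedness, the $[W_\eta^{-1}, 1]$ bound, and $O(\log n)$ computability of $f$) follow by direct verification from the construction above.
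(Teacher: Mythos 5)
Your proposal is correct and takes essentially the same route as the paper's proof: it uses the same binary-label trick to decompose into bipartite subgraphs, the same geometric bucketing of $\eta_e$, and the same vertex scaling of $(1+\balloss)^{-k}$ on the tail side; your explicit triple index $(j,o,k)$ is just a cleaner re-parametrization of the paper's packing of orientation and multiplier bucket into a single second index. One cosmetic note: the multiplier interval should be half-open (e.g., $[(1+\balloss)^k, (1+\balloss)^{k+1})$) so that $k$ is genuinely unique.
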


\begin{proof}
For clarity, we construct this mapping in two steps, mapping each edge of form $(u,v,\eta_e)$ where $u,v \in V$ and $\eta_e \leq 1 + W_\eta$ into a bucket $[\lceil\log{n}\rceil]\times [2 \lceil\frac{\log(W_\eta)}{\log(1+\balloss)}\rceil]$. First, label each vertex by an integer from $1$ to $n$. For an edge $e = (u,v)$, let $i$ be the first index in which $u$ and $v$ differ in their binary representations, and assign this edge to the $i$-th bucket $E^{(i)}$, consisting of the set of edges that would map to this bucket, i.e., the preimage of $i$ in $E$. 

Note that edges mapped to $E^{(i)}$ are bipartite, since edges in $E^{(i)}$ are only between vertices whose $i$-th bit differs. Let us consider this bipartitioning. Let $L^{(i)}$ be the set of vertices whose $i$-th bit are $0$, and let $R^{(i)}$ be the set of vertices whose $i$-th bit are $1$. Denote the edges in $E^{(i)}$ that are directed from $L^{(i)}$ to $R^{(i)}$ as $E_{L^{(i)}\to R^{(i)}}$, and $R^{(i)}$ to $L^{(i)}$ as $E_{R^{(i)}\to L^{(i)}}$. For $j = 0, 1, \cdots, \lceil\frac{\log(W_{\eta})}{\log(1+\balloss)}\rceil$, define the edge sets:
\[
E^{(i,j)} \defeq \Big\{e \in E_{L^{(i)}\to R^{(i)}} : \eta_e \in [(1+\balloss)^{j}, (1+\balloss)^{j+1})\Big\}.
\]
This partitions $E^{(i)}$, and forms the map $f$ for edges going from $L^{(i)}$ to $R^{(i)}$. More precisely, for any edge $e = (u,v,\eta_e)$, we map this edge to $(i,j)$, where $i$ is the first bit that differs in the bit representations of $u$ and $v$, and $j$ is $\lfloor \frac{\log{(1+\eta_e)}}{\log (1 +\balloss)}\rfloor$ if the edge $u$ has $0$ as its $i$-th bit.

Next, define the scaling matrix $\ms^{(i,j)}$ as follows: for all $v \in R^{(i)}$ set $\ms^{(i,j)}_{v,v} = 1$, and for all $v \in L^{(i)}$ set $\ms^{(i,j)}_{v,v} = (1+\balloss)^{-j}$. It is easy to see that both conditions hold. Every edge set $E^{(i,j)}$ is constructed as $E^{(i,j)} = \{e \in E_{L^{(i)}\to R^{(i)}} : \eta_e \in [(1+\balloss)^{j}, (1+\balloss)^{j+1})\}$, and since we define the scaling matrix $\ms^{(i,j)}$ to have $(1+\balloss)^{-j}$ on entries corresponding to $v \in L$, after scaling all the flow multipliers are between $1$ and $1 + \balloss$. 

Likewise, define $E^{(i,j)}$ for $j = \lceil\frac{\log(W_{\eta})}{\log(1+\balloss)}\rceil + 1, \lceil\frac{\log(W_{\eta})}{\log(1+\balloss)}\rceil + 2, \cdots, 2\lceil\frac{\log(W_{\eta})}{\log(1+\balloss)}\rceil$ for the edges going from $R^{(i)}$ to $L^{(i)}$. This gives us the full map. For any edge $e = (u,v,\eta_e)$, we map this edge to:
\begin{align}
f(e)=\begin{cases}
(i,\lfloor \frac{\log{\eta_e}}{\log (1 +\balloss)}\rfloor),& \text{if }e \in E_{L^{(i)}\rightarrow R^{(i)}},\\
(i, \lceil\frac{\log(W_\eta)}{\log(1 +\balloss)}\rceil+\lfloor\frac{\log{\eta_e}}{\log (1 +\balloss)}\rfloor, & \text{if } e \in E_{R^{(i)}\rightarrow L^{(i)}}.\\
\end{cases}\,
\end{align}
where $i$ is the first bit where $u$ and $v$ differ. 
\end{proof}

\begin{table}[!ht]
\renewcommand{\arraystretch}{1.2}
\centering
    \begin{tabular}{|c|c|c|c|}
    \hline
    {\bf Param} & $\phi$ & $\balloss$ & $\epsad$ \\ \hline
    {\bf Value} & $\log^{-3}(m)$ & $\log^{-11} (m)$ & $m^{-1} W_\eta^{-2} W_g^{-2} \lambda_1 \log^{-1}(W_{\eta}) \log^{-1}(W_{g}) \log^{-12}(m)$\\ \hline
    \end{tabular}
    \caption{Choice of parameters for \Cref{thm:heavy_hitter_general}, where $W_{\eta}, W_g, \lambda_1$ are defined in the theorem statement.}
    \label{tab:parameters}
\end{table}

Now we are ready to prove \Cref{thm:heavy_hitter_general}.
\begin{proof}[Proof of \Cref{thm:heavy_hitter_general}]
In the proof we show how to perform the operations \textsc{Initialize}, \textsc{Scale}, \textsc{ScaleTau}, \textsc{QueryHeavy}, and \textsc{Sample} using the heavy hitter data structure for balanced lossy graphs in \Cref{thm:heavy_hitter_balanced}. In this proof we will use parameters $\balloss = 1/\log^{11} m$, $\phi = 1/\log^3 m$, and $\epsad = m^{-1} W_\eta^{-2} W_g^{-2} \lambda_1 \log^{-1}(W_{\eta}) \log^{-1}(W_{g}) \log^{-12}(m)$, as summarized in Table~\ref{tab:parameters}. We also denote $g_{\min} = \min_{e' \in E} g_{e'}$ and $g_{\max} = \max_{e' \in E} g_{e'}$. 

{\bf Initialization.} Let $\ma \in \R^{m \times n}$ and $g \in \R_{>0}^m$ denote the initial inputs, where $\ma$ is the incidence matrix of a lossy graph $G = (V, E, \eta)$. We first use \Cref{lem:decomposition_balanced} with parameter $\balloss$ to decompose the edges of $G$ into $E = \bigcup_{j=1}^T E^{(j)}$, where 
\[
T = O\big(\log(m) \log(W_{\eta}) \balloss^{-1}\big) = O\big(\log^{12}(m) \log(W_{\eta}) \big).
\]
Let $\ma^{(j)}$ denote the submatrix of $\ma$ with rows in $E^{(j)}$, let $\ms^{(j)}$ be the vertex scaling matrix for $\ma^{(j)}$ given by \Cref{lem:decomposition_balanced}. %

Next for every edge set $E^{(j)}$, we decompose it into $E^{(j)} = \bigcup_{k=1}^{\lceil\log(W_g)\rceil} E^{(j,k)}$, where $E^{(j,k)}$ include all $e \in E^{(j)}$ such that 
\[
g_{\min} \cdot 2^{k-1} \leq g_e \leq g_{\min} \cdot 2^{k}.
\]
Note that the decomposition covers all $e \in E^{(j)}$ since every $g_e$ satisfies $g_{\min} \leq g_e \leq g_{\min} \cdot 2^{\lceil\log(W_g)\rceil} = g_{\max}$ because we defined $W_g = \frac{g_{\max}}{g_{\min}}$. Let $\ma^{(j,k)}$ denote the submatrix of $\ma$ with rows in $E^{(j,k)}$. Let $G^{(j,k)}$ denote the subgraph of $G$ with edges in $E^{(j,k)}$ and multipliers $\eta^{(j,k)}_e = (1+\eta_e) \cdot \ms^{(j)}_{a_e,a_e} - 1$ for all $e = (a_e,b_e) \in E^{(j,k)}$, which is equivalent to $G^{(j,k)}$ being an unweighted lossy graph with incidence matrix $\ma^{(j,k)} \cdot \ms^{(j)}$. Let $\md^{(j,k)} = \md_{G^{(j,k)}}$. Let $g^{(j,k)}$ and $\ov{\tau}^{(j,k)}$ be the subvector of $g$ and $\ov{\tau}$ with entries in $E^{(j,k)}$.

For every $j \in [T]$ and every $k \in [\lceil\log(W_g)\rceil]$, we initialize a heavy hitter data structure $\DS^{(j,k)}$ of \Cref{thm:heavy_hitter_balanced} for the lossy graph $G^{(j,k)}$, with parameters $\epsad$ and $\phi$, and vector $\ov{\tau}^{(j,k)}$.

By \Cref{lem:decomposition_balanced}, computing the first decomposition takes $O(m \log(m))$ time. Computing the second decomposition takes $O(n)$ time. Initializing the $T \cdot \lceil\log(W_g)\rceil$ data structures takes $T \log(W_g) \cdot O(m \phi^{-2}\log^2{(n/\epsad)})$ time by \Cref{thm:heavy_hitter_balanced}. So the total initialization time is 
\begin{align*}
P = &~ O\Big(m \phi^{-2}\log^2{(n/\epsad)}\log{(m)} \log(W_{\eta}) \log(W_g) \balloss^{-1} \Big) \\
= &~ O\Big(m \log^{20}(m) \log^2(\lambda_1^{-1}) \log(W_{\eta})^3 \log(W_g)^3 \Big),
\end{align*}
where we used the choice of parameters $\epsad = m^{-1} W_\eta^{-2} W_g^{-2} \lambda_1\log^{-1}(W_{\eta}) \log^{-1}(W_{g}) \log^{-12}(m)$, and $\balloss = 1/\log^{11} m$.

{\bf Scale.} Given $i \in [m]$ and $b \in \R$, let $a_i$ and $b_i$ denote the two endpoints of edge $i$. We first find the set $E^{(j,k)}$ such that $i \in E^{(j,k)}$, and update $g_i$ and $g^{(j,k)}_i$ to be $b$. Next we find $k' \in [1,\lceil\log(W_g)\rceil]$ such that the new scalar $b$ satisfies 
\[
g_{\min} \cdot 2^{k'-1} \leq b \leq g_{\min} \cdot 2^{k'}.
\]

If $k' = k$, then we don't change anything else. 

If $k' \neq k$, then we call $\mathrm{DS}^{(j,k)}.\textsc{Delete}(i)$, and $\mathrm{DS}^{(j,k')}.\textsc{Insert}(a_i, b_i, \eta^{(j,k)}_i)$, and we also delete $g^{(j,k)}_i$ from $g^{(j,k)}$, and add a new entry of value $b$ to $g^{(j,k')}$ that corresponds to the newly added edge. This takes $O\left(\phi^{-3}\log(m) \log^2{(n/\epsad)} \right) = O(\log^{12}(m) \log^2(\lambda_1^{-1}) \log(W_{\eta})^2 \log(W_g)^2)$ time by \Cref{thm:heavy_hitter_balanced}.

{\bf Insert.} Given a new edge $e = (u,v,\eta_e)$, and its edge weight $b$, we use the map $f$ in \cref{lem:decomposition_balanced} to find the bucket $E^{(j)}$ for $j = f(u,v,\eta_e)$ that $e$ belongs to. Next, we find $k \in [1,\lceil\log(W_g)\rceil]$ such that the new scalar $b$ satisfies 
\[
g_{\min} \cdot 2^{k-1} \leq b \leq g_{\min} \cdot 2^{k}.
\]
and we call $\DS^{(j,k)}.\textsc{Insert}(u,v,\eta_e)$. Note that the guarantee of \cref{lem:decomposition_balanced} gives us that adding this edge to $E^{(j)}$ maintains the property that the subset of edges $E^{(j)}$ form a $\balloss$-balanced lossy graph. 

Note here that as the graph evolves, $g_{\min}$ may change. We keep this notation for ease of understanding, but it is easy to implement this data structure efficiently without actually changing each set, simply by defining an offset term. 

This takes $O\left(\phi^{-2}\log(m) \log^2{(n/\epsad)} \right) = O(\log^{9}(m) \log^2(\lambda_1^{-1}) \log(W_{\eta})^2 \log(W_g)^2)$ time by \Cref{thm:heavy_hitter_balanced}.

{\bf Delete.} Given $e \in [m]$, we find the set $E^{(j,k)}$ that $e \in E^{(j,k)}$, then we call $\mathrm{DS}^{(j,k)}.\textsc{Delete}(i)$. 

This takes $O\left(\phi^{-3}\log(m) \log^2{(n/\epsad)} \right) = O(\log^{12}(m) \log^2(\lambda_1^{-1}) \log(W_{\eta})^2 \log(W_g)^2)$ time by \Cref{thm:heavy_hitter_balanced}.

{\bf ScaleTau.} Given $e\in [m]$ and $b \in \R$, we find the set $E^{(j,k)}$ such that $e \in E^{(j,k)}$, and call $\DS^{(j,k)}.\textsc{ScaleTau}(e, b)$. This takes $O(1)$ time by \Cref{thm:heavy_hitter_balanced}.

{\bf QueryHeavy.} Given query vector $h \in \R^n$ and error parameter $\epsilon \in (0,1)$, we compute $h^{(j)} = (\ms^{(j)})^{-1} \cdot h$ for all $j \in [T]$ and $\epsilon^{(k)} = \epsilon \cdot 2^{-k} g_{\min}^{-1}$ for all $k \in [\lceil\log(W_g)\rceil]$. We then call $\mathrm{DS}^{(j,k)}.\textsc{QueryHeavy}(h^{(j)}, \epsilon^{(k)})$ for all $j \in [T]$ and $k \in [\lceil\log(W_g)\rceil]$. By \Cref{thm:heavy_hitter_balanced}, the time of the function call $\mathrm{DS}^{(j,k)}.\textsc{QueryHeavy}$ is
\begin{align}\label{eq:query_heavy_bound}
&~ O\left(\phi^{-4} (\epsilon^{(k)})^{-2} \|(\ma^{(j,k)} \ms^{(j)}) h^{(j)}\|_2^2 + \phi^{-4} \epsad (\epsilon^{(k)})^{-2} \|(\md^{(j,k)})^{1/2}h^{(j)}\|_2^2 + n \log^2 m\right) \notag \\
\leq &~ O\left(\phi^{-4} \epsilon^{-2} \|\mg^{(j,k)} \ma^{(j,k)} h\|_2^2 + \phi^{-4} \epsad \epsilon^{-2} g_{\max}^2 m W_{\eta}^{2} \cdot \|h\|_2^2 + n \log^2 m\right),
\end{align}
where we bound the first term using the definition of $h^{(j)}$ and $\epsilon^{(k)}$, and that $g^{(j,k)}_e = \Theta(2^{k} g_{\min})$, and we bound the second term using that $G^{(j,k)}$ is $\balloss$-balanced, so each diagonal entry of $\md^{(j,k)}$ is at most $m(1+\balloss) \leq O(m)$, %
and $\ms^{(j)}_{i,i} \geq W_{\eta}^{-1}$ by \Cref{lem:decomposition_balanced}, and $\epsilon^{(k)} \geq \epsilon \cdot g_{\max}^{-1}$.

So the total runtime of this operation is
\begin{align}\label{eq:query_heavy_bound_total}
&~ \sum_{j=1}^T \sum_{k=1}^{\log(W_g)} O\left(\phi^{-4} \epsilon^{-2} \|\mg^{(j,k)} \ma^{(j,k)} h\|_2^2 + \phi^{-4} \epsad \epsilon^{-2} g_{\max}^2 m W_{\eta}^{2} \cdot \|h\|_2^2 + n \log^2 m\right) \notag \\
\leq &~ O\Big( \phi^{-4} \epsilon^{-2} \|\mg \ma h\|_2^2 + n \log^{14}(m) \log(W_{\eta}) \log(W_g) \Big),
\end{align}
where we bound the first term using the fact that $E^{(j,k)}$'s form an edge-disjoint decomposition of $E$, we bound the second term by $\|h\|_2^2 \leq \|\ma h\|_2^2 \cdot \lambda_1^{-1} \leq \|\mg \ma h\|_2^2 \cdot g_{\min}^{-2} \cdot \lambda_1^{-1}$ and our choice of $\epsad = m^{-1} W_{\eta}^{-2} W_g^{-2} \lambda_1 \cdot T^{-1} \log^{-1} (W_g)$, and lastly we bound the third term by $T = O\big(\log(m) \log(W_{\eta}) \balloss^{-1}\big)$, $\balloss = 1/\log^{11} m$.

{\bf Sample.} Given a vector $h \in \R^n$ and parameters $ C_1, C_2, C_3$, we first compute $h^{(j)} = (\ms^{(j)})^{-1} \cdot h$ call $L^{(j,k)} = \DS^{(j,k)}.\textsc{Norm}(h^{(j)})$ for all $j \in [T]$ and $k \in [\lceil\log(W_g)\rceil]$. We then compute
\[
L = \sum_{j=1}^T \sum_{k=1}^{\lceil\log(W_g)\rceil} 2^{2k} g_{\min}^{2} \cdot L^{(j,k)}.
\]
By \Cref{thm:heavy_hitter_balanced} each returned value $L^{(j,k)}$ satisfies
\begin{align*}
\|(\ma^{(j,k)} \ms^{(j)}) h^{(j)}\|_2^2 \leq L^{(j,k)} \leq O\left(\phi^{-4} \|(\ma^{(j,k)} \ms^{(j)}) h^{(j)}\|_2^2 + \phi^{-4} \epsad \|(\md^{(j,k)})^{1/2}h\|_2^2\right).
\end{align*}
Using a similar argument as how we proved Eq.~\eqref{eq:query_heavy_bound}, 
\begin{align*}
\|\mg \ma h\|_2^2 \leq L \leq O\left(\phi^{-4} \|\mg \ma h\|_2^2 + \phi^{-4} \epsad \cdot m g_{\max}^2 W_{\eta}^2 \|h\|_2^2\right),
\end{align*}
then using $\|h\|_2^2 \leq \|\mg \ma h\|_2^2 \cdot g_{\min}^{-2} \cdot \lambda_1^{-1}$ and our choice of $\epsad \leq m^{-1} W_{\eta}^{-2} W_g^{-2} \lambda_1$, 
\begin{align*}
\|\mg \ma h\|_2^2 \leq L \leq O\left(\phi^{-4} \|\mg \ma h\|_2^2\right).
\end{align*}

Next we define
\[
\ov{C}_1^{(k)} = C_1 \cdot \frac{m}{\sqrt{n}} \cdot \frac{2^{2k} g_{\min}^2}{\phi^4 L}, ~~ \ov{C}_2 = \frac{C_2}{\sqrt{n}},
\]
and for all $j \in [T]$ and $k \in [\lceil\log(W_g)\rceil]$, we call $\DS^{(j,k)}.\textsc{Sample}(h^{(j)}, \ov{C}_1^{(k)}, \ov{C}_2, C_3)$ to obtain a diagonal matrix $\mr^{(j,k)}$ of size $|E^{(j,k)}| \times |E^{(j,k)}|$. We concatenate all $\mr^{(j,k)}$ to form a diagonal matrix $\mr$ of size $m \times m$. By \Cref{thm:heavy_hitter_balanced}, every $e \in E^{(j,k)}$ is sampled with probability
\begin{align*}
p_e \geq &~ \min \left\{1, ~ \ov{C_1}^{(k)} \cdot (\ma^{(j,k)} \ms^{(j)} h^{(j)})_e^2 + \ov{C}_2 + C_3 \ov{\tau}_e \right\} \\
= &~ \min \left\{1, ~ C_1 \cdot \frac{m}{\sqrt{n}} \cdot \frac{2^{2k} g_{\min}^2}{\phi^4 L} \cdot (\ma^{(j,k)} h)_e^2 + \frac{C_2}{\sqrt{n}} + C_3 \ov{\tau}_e \right\} \\
\geq &~ \min \left\{1, ~ C_1 \frac{m}{\sqrt{n}} \cdot \frac{(\mg\ma h)_e^2}{\|\mg\ma h\|_2^2} + C_2 \frac{1}{\sqrt{n}} + C_3 \ov{\tau}_e \right\},
\end{align*}
where the second step follows from the definition that $h^{(j)} = (\ms^{(j)})^{-1} \cdot h$ and the definition of $\ov{C}_1^{(k)}$, and the third step follows from $L \leq O(\phi^{-4} \|\mg \ma h\|_2^2)$ and $g^{(j,k)}_e = \Theta(2^{k} g_{\min})$. 

Finally we bound the runtime of $\textsc{Sample}(\cdot)$. The dominating term of the runtime is the time to call $\DS^{(j,k)}.\textsc{Sample}(h^{(j)}, \ov{C}_1^{(k)}, \ov{C}_2, C_3)$, which by \Cref{thm:heavy_hitter_balanced} is bounded by 
\[
O\Big(\big(\ov{C}_1^{(k)} (\phi^{-4} \|\ma^{(j,k)} \ms^{(j)} h^{(j)}\|_2^2 + \phi^{-4} \epsad \|(\md^{(j,k)})^{1/2}h\|_2^2) + \ov{C}_2 m + C_3 \|\ov{\tau}^{(j,k)}\|_1\big) C_0 \log m + n \log^3(m) \Big). 
\]
Using the same argument as how we proved Eq.~\eqref{eq:query_heavy_bound} and \eqref{eq:query_heavy_bound_total},  
\begin{align*}
\sum_{j=1}^T \sum_{k=1}^{\lceil\log(W_g)\rceil}  2^{2k} g_{\min}^2 \cdot \left(\phi^{-4} \|\ma^{(j,k)} \ms^{(j)} h^{(j)}\|_2^2 + \phi^{-4} \epsad \|(\md^{(j,k)})^{1/2}h\|_2^2\right) 
\leq &~ O\left( \phi^{-4} \|\mg \ma h\|_2^2 \right).
\end{align*}
Using this equation and that $L \geq \|\mg \ma h\|_2^2$, $\ov{C}_1^{(k)} = C_1 \cdot \frac{m}{\sqrt{n}} \cdot \frac{2^{2k} g_{\min}^2}{\phi^4 L}$, $\ov{C}_2 = \frac{C_2}{\sqrt{n}}$, and $T = O\big(\log^{10}(m) \log(W_{\eta}) \big)$, the total time of all $\DS^{(j,k)}.\textsc{Sample}$ calls is at most
\begin{align*}
O\left(C_0 C_1 \frac{m}{\sqrt{n}} \phi^{-8} + ( C_0 C_2 \frac{m}{\sqrt{n}} + n) \cdot \log(W_{\eta}) \log(W_g) \log^{13}(m) + C_0 C_3 \|\ov{\tau}\|_1 \log(m) \right),
\end{align*}
the claimed time complexity in the theorem statement then follows from $\phi = \log^{-3}(m)$. 
\end{proof}

\subsection{Heavy Hitters on Two-Sparse Matrices (Reduction to General Lossy Graphs)}\label{sec:heavy_hitter_reduction_two_sparse_to_lossy}

In this section we prove our main theorem \Cref{thm:heavy_hitter_two_sparse_general}. We prove \Cref{thm:heavy_hitter_two_sparse_general} through a standard reduction from two-sparse matrices to lossy graphs. The proof of the following lemma closely follows the reduction of \cite{h04}, and we include it here for completeness. %
\begin{lemma}[Reduction from two-sparse matrices to lossy graphs]\label{lem:reduction_two_sparse_to_lossy_graph}
Given any matrix $\ma \in \R^{m \times n}$ that has at most two non-zero entries per row, there exists a lossy graph $G = (V,E,\eta)$ with $|E| = m$ and $|V| = 2n$, and a diagonal edge weight matrix $\mg \in \R^{m \times m}$ that satisfy the following properties:
\begin{enumerate}
\item For any vector $h \in \R^n$, $\ma \cdot h = \mg \ov{\ma} \cdot \begin{bmatrix}
h \\ -h
\end{bmatrix}$, where $\ov{\ma} \in \R^{m \times 2n}$ is the incidence matrix of $G$.
\item 
The flow multipliers satisfy $\max_{e \in [m]} \eta_e \leq W_{\ma}^2$ (\Cref{def:W_A}), and the edge weights satisfy $\max_{e \in [m]} \max(|\mg_{e,e}|, \frac{1}{|\mg_{e,e}|}) \leq W_{\ma}$.
\end{enumerate}
\end{lemma}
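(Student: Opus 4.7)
\textbf{Proof plan for \Cref{lem:reduction_two_sparse_to_lossy_graph}.}

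The plan is to construct a \emph{vertex-doubled} graph. For each column index $j \in [n]$ of $\ma$, introduce two vertices $v_j^+$ and $v_j^-$, so $|V| = 2n$. The extended vector $\tilde h \defeq \begin{bmatrix} h \\ -h \end{bmatrix} \in \R^{2n}$ then satisfies $\tilde h_{v_j^+} = h_j$ and $\tilde h_{v_j^-} = -h_j$. The goal is to build, for each row $i$ of $\ma$, a single edge $e_i = (a_{e_i}, b_{e_i})$ with multiplier $\eta_{e_i} \geq 1$ and positive weight $\mg_{i,i}$ such that the row $\mg_{i,i}\,(\indicVec{b_{e_i}} - \eta_{e_i}\indicVec{a_{e_i}})^\top$ of $\mg\ov{\ma}$, when applied to $\tilde h$, reproduces $(\ma h)_i$. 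Taking $\ov{\ma}$ to be the incidence matrix of the resulting lossy graph then yields property~1 of the lemma by construction.

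The edge construction proceeds by case analysis on the signs of the (at most two) nonzero entries in row $i$. Write the nonzeros as $\ma_{i,j} = \alpha$ and $\ma_{i,k} = \beta$ with $j \neq k$ and both nonzero (the two-entry case). There are four sign patterns, but all reduce to the same recipe: choose endpoints among $\{v_j^+, v_j^-, v_k^+, v_k^-\}$ so that one endpoint contributes $h_j$ (resp.~$-h_j$) and the other contributes $\pm h_k$ with matching signs. Concretely, I would orient the edge so that $b_{e_i}$ corresponds to the variable with the \emph{larger}-magnitude coefficient and $a_{e_i}$ to the one with the \emph{smaller}-magnitude coefficient; then set
\[
\mg_{i,i} \;=\; \max(|\alpha|,|\beta|), \qquad \eta_{e_i} \;=\; \frac{\min(|\alpha|,|\beta|)}{\max(|\alpha|,|\beta|)}^{-1} \;=\; \frac{\max(|\alpha|,|\beta|)}{\min(|\alpha|,|\beta|)}.
\]
Since $1/W_\ma \leq |\alpha|,|\beta| \leq W_\ma$, this immediately gives $1 \leq \eta_{e_i} \leq W_\ma^2$ and $1/W_\ma \leq \mg_{i,i} \leq W_\ma$, verifying property~2. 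For rows with a single nonzero entry $\ma_{i,j} = \alpha$, I would set $a_{e_i}, b_{e_i} \in \{v_j^+, v_j^-\}$ (oriented so the sign of $\alpha$ is reproduced), $\eta_{e_i} = 1$, and $\mg_{i,i} = |\alpha|/2$; this costs only a universal constant factor in the $W_\ma$ bound on $\mg_{i,i}$, which is absorbed into the $W_\ma$ bound (or equivalently one may restate the lemma with a $2W_\ma$ bound, which is what downstream applications actually need).

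The main obstacle is simply the bookkeeping of the case analysis: there are four sign patterns for two-entry rows and two for one-entry rows, each requiring a different choice of $(a_{e_i}, b_{e_i}) \in \{v_j^{\pm}, v_k^{\pm}\}$ and a consistent sign accounting to ensure that (i) the combined contribution $\mg_{i,i}(\tilde h_{b_{e_i}} - \eta_{e_i}\tilde h_{a_{e_i}})$ exactly equals $\alpha h_j + \beta h_k$, (ii) $\mg_{i,i} > 0$, and (iii) $\eta_{e_i} \geq 1$. Once the table of cases is laid out, each entry is a direct verification, and property~1 follows by summing the per-row identities across $i \in [m]$.
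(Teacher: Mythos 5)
Your overall approach is the same as the paper's: double the vertices to $\{v_j^+, v_j^-\}$, encode each row of $\ma$ as a single edge, and handle signs via the choice of $+$ or $-$ vertex, with a case split on the sign pattern. That structure is right, and it is essentially what the paper does (the paper organizes the case split by first forming an intermediate matrix $\wh{\ma}\in\R^{m\times 2n}$ with exactly one positive and one negative entry per row, then normalizing to an incidence matrix).

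However, there is a genuine error in the central formula. With the incidence-matrix convention you yourself state, the $e$-th row of $\ov{\ma}$ is $\indicVec{b_e} - \eta_e\indicVec{a_e}$, so the entry at $b_e$ is $1$ and the entry at $a_e$ has magnitude $\eta_e \geq 1$. Multiplying by $\mg_{i,i}$, the two nonzero entries of row $i$ of $\mg\ov{\ma}$ have magnitudes $|\mg_{i,i}|$ and $|\mg_{i,i}|\,\eta_{e_i}$ --- the latter is always the \emph{larger} one. To reproduce a row of $\ma$ with entries of magnitude $\min(|\alpha|,|\beta|)$ and $\max(|\alpha|,|\beta|)$, you therefore must take $|\mg_{i,i}| = \min(|\alpha|,|\beta|)$ and put $b_{e_i}$ on the \emph{smaller}-magnitude variable, with $\eta_{e_i} = \max/\min$. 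Your proposal does the opposite: $b_{e_i}$ on the larger-magnitude variable and $\mg_{i,i} = \max(|\alpha|,|\beta|)$. With that choice, the two entries of $\mg_{i,i}(\indicVec{b_{e_i}}-\eta_{e_i}\indicVec{a_{e_i}})$ have magnitudes $\max$ and $\max^2/\min$, which do not equal the original pair $\{\min,\max\}$ unless $|\alpha|=|\beta|$, so property~1 fails. Your $W_\ma$ bounds in property~2 happen to hold either way, but the construction as written does not satisfy property~1. Swapping $\max\leftrightarrow\min$ and the roles of $a_{e_i},b_{e_i}$ fixes it and recovers the paper's construction (the paper allows $\mg_{e,e}$ to be negative when $|\alpha| > |\beta|$ to keep the positive entry of $\ov{\ma}$ equal to $1$; this is harmless since only $|\mg_{e,e}|$ is bounded). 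Your handling of the single-nonzero case and the extra factor of $2$ is fine and matches the paper, which incurs the same constant.
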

\begin{proof}%
We construct $\ov{\ma}$ in two steps. First build $\wh{\ma}$, a matrix with a positive and negative entry in each row and second we build $\ov{\ma}$, the adjacency matrix, where the positive entry is $1$. First, we define $\wh{\ma}$ to be a matrix of size $m \times 2n$, where every row of $\ma$ corresponds to a row of $\wh{\ma}$. There are three cases for every row of $\ma$: (1) The row contains two non-zero entries of the same sign. (2) The row contains one positive entry and one negative entry. (3) The row contains one non-zero entry. Next we show how the corresponding rows of $\wh{\ma}$ are constructed for each of these three cases:
\begin{itemize}
\item For every row $e \in [m]$ of $\ma$ that has two non-zero entries of the same sign, denote these two entries as $\ma_{e,i} = \alpha$ and $\ma_{e,j} = \beta$.  Construct the $e$-th row of $\wh{\ma}$ as follows: Let $\wh{\ma}_{e, i} = \alpha$, $\wh{\ma}_{e, n+j} = -\beta$, and let all other entries be $0$.
\item  For every row $e \in [m]$ of $\ma$ that has one positive entry and one negative entry, denote these two entries as $\ma_{e,i} = \alpha > 0$ and $\ma_{e,j} = -\beta < 0$. Construct the $e$-th row of $\ov{\ma}$ as follows: Let $\wh{\ma}_{e, i} = \alpha$, $\wh{\ma}_{e, j} = -\beta$, and let all other entries be $0$.
\item  For every row $e \in [m]$ of $\ma$ that has exactly one non-zero entry, denote this entry as $\ma_{e,i} = \alpha$, and let $\sigma = \mathrm{sign}(\alpha) \in \{1, -1\}$. Construct the $e$-th row of $\wh{\ma}$ as follows: Let $\wh{\ma}_{e, i} = \alpha/2$, $\wh{\ma}_{e, n+i} = -\alpha/2$, and let all other entries be $0$.
\end{itemize}

It is easy to see that $\ma \cdot h = \wh{\ma} \cdot \begin{bmatrix}
h \\ -h
\end{bmatrix}$, and that every row of $\wh{\ma} \in \R^{m \times 2n}$ has exactly one positive entry and one negative entry. See \Cref{fig:reduction_two_sparse_to_lossy_graph} for an illustration of the construction of $\wh{\ma}$.

\begin{figure}[!ht]
    \centering
    \begin{equation*}
    \renewcommand{\arraystretch}{1.6}
    \ma = \begin{bmatrix}
    \alpha_1 &  & \beta_1 \\
     & -\alpha_2 & -\beta_2 \\
     -\beta_3 &  & \alpha_3 \\
     \alpha_4 & &
    \end{bmatrix}, ~~~~
    \wh{\ma} = 
    \begin{bmatrix}
    \begin{array}{ccc|ccc}
    \alpha_1 & & & & & -\beta_1 \\
    & -\alpha_2 & & & ~~ & \beta_2 \\
    -\beta_3 & & \alpha_3 & & & \\
    \alpha_4 / 2 & & & -\alpha_4 / 2 & &
    \end{array}
    \end{bmatrix}
    \end{equation*}
    \caption{An example of a two-sparse matrix $\ma \in \R^{m \times n}$ with $\alpha_1, \alpha_2, \alpha_3, \alpha_4, \beta_1, \beta_2, \beta_3 \geq 0$, and the matrix $\wh{\ma} \in \R^{m \times 2n}$ corresponding to it.}
    \label{fig:reduction_two_sparse_to_lossy_graph}
\end{figure}

We will next build $\ov{\ma}$ from $\wh{\ma}$.  For every $e \in [m]$, denote these two entries as $\wh{\ma}_{e,i} = \alpha > 0$ and $\wh{\ma}_{e,j} = -\beta < 0$, and define a diagonal weight matrix $\mg \in \R^{m \times m}$ and a vector $\eta \in \R^{m}$ as follows:
\begin{equation*}
\mg_{e,e} = \begin{cases}
\alpha & \text{ if } \beta \geq \alpha, \\
-\beta & \text{ if } \beta < \alpha.
\end{cases} ~~~
\eta_e = 
\begin{cases}
-\frac{\beta}{\alpha} & \text{ if } \beta \geq \alpha, \\
-\frac{\alpha}{\beta} & \text{ if } \beta < \alpha.
\end{cases}
\end{equation*}
We then define $\ov{\ma} \in \R^{m \times 2n}$ as $\ov{\ma} \defeq \mg^{-1} \cdot \wh{\ma}$, and it's straightforward to see each row of $\ov{\ma}$ contains exactly two non-zero entries: one is $1$, and the other is $-\eta_e \in (-W_\ma^2, -1]$. Also note that by the definition of $W_{\ma}$ in \Cref{def:W_A}, $\alpha, \beta, \alpha^{-1}, \beta^{-1} \leq W_{\ma}$, so $\eta_e \leq W_{\ma}^{2}$, $\|\mg\|_{\infty} \leq W_{\ma}$, and $\|\mg^{-1}\|_{\infty} \leq W_{\ma}$. Lastly, since $\ma \cdot h = \wh{\ma} \cdot \begin{bmatrix}
h \\ -h
\end{bmatrix}$, and $\ov{\ma} = \mg^{-1}\wh{\ma}$, we have that $\ma \cdot h = \mg\ov{\ma} \cdot \begin{bmatrix}
h \\ -h
\end{bmatrix}$ as desired. 
\end{proof}

Now we prove \Cref{thm:heavy_hitter_two_sparse_general} using \Cref{thm:heavy_hitter_general} and \Cref{lem:reduction_two_sparse_to_lossy_graph}. 

\begin{proof}[Proof of \Cref{thm:heavy_hitter_two_sparse_general}]
    Using \cref{lem:reduction_two_sparse_to_lossy_graph}, whenever we are given a two-sparse matrix $\ma$, we instead consider the problem on the lossy incidence matrix $\ov{\ma}$, with query vectors $\begin{bmatrix}
    h \\ -h
    \end{bmatrix}$, and edge weights $\mg$. We now bound the parameters of the new problem:
    \begin{itemize}
        \item $\log{(W_\eta)} \leq O(\log(W_\ma))$
        and $\log{(W_g)} \leq O(\log(W_\ma))$
        by \Cref{lem:reduction_two_sparse_to_lossy_graph}. 
        \item $\lambda_1$ is  Note that
        \begin{align*}
            \|\ma h\|_2 \geq \sqrt{\lambda_1}\|h\|_2 \implies \left\lVert\mg \ov{\ma} \cdot 
            \begin{bmatrix}
                h \\ -h
            \end{bmatrix}\right\rVert_2
            \geq \sqrt{\frac{1}{2}\lambda_1}
            \left\lVert\begin{bmatrix}
                h \\ -h
            \end{bmatrix}\right\rVert_2
        \end{align*}. 
    \end{itemize}
    It is easy to see that the operations of the two data structuress correspond exactly, since $\ma \cdot h = \mg \ov{\ma} \cdot \begin{bmatrix}
h \\ -h
\end{bmatrix}$. 
\end{proof}

\section{Two-Sparse Linear Programs and Generalized Min-Cost Flows}
\label{sec:IPM_using_data_structs}

The goal of this section is to prove our main results, \Cref{thm:two_sparse_LP} and \Cref{thm:generlize_min_cost_flow}, by implementing the IPM algorithm of \cite{bll+21} using the data structure we developed in \Cref{sec:general_heavy_hitters}. To make this section self-contained, we restate the relevant definitions and theorems from \cite{bll+21} that are used in our proofs. In particular, we restate several of their intermediate theorems to carefully track the dependence on the magnitude parameter $W$ and the error parameter $\delta$ in our specific settings of two-sparse LPs and generalized min-cost flows. 

This section is structured as follows. In \Cref{sec:other_data_structures}, we first restate the formal definitions of the three data structures required by the IPM algorithm of \cite{bll+21}, and show how to implement them for two-sparse matrices using our data structure from \Cref{thm:heavy_hitter_two_sparse_general}. 
In \Cref{sec:path_following}, we prove that using these three data structures, we can implement the algorithm of \cite{bll+21} to approximately follow the central path for two-sparse LPs. %
In \Cref{sec:initialization}, we describe the standard procedures for constructing an initial central point, and for rounding a final central point into an approximate LP solution. 
Finally in \Cref{sec:proof_main_theorems}, we combine all the results of this section to prove our main results, \Cref{thm:two_sparse_LP} and \Cref{thm:generlize_min_cost_flow}.

\subsection{Data Structures for the IPM}\label{sec:other_data_structures}

In this section we show how to use the data structure we developed in \Cref{thm:heavy_hitter_two_sparse_general} to build the the heavy hitter, sampler and inverse maintenance data structures required by the IPM algorithm of \cite{bll+21}.

\paragraph{Heavy Hitter.} We first restate the formal definition of the heavy hitter data structure of \cite{bll+21}.
\begin{definition}[Heavy hitter, Definition~3.1 of \cite{bln+20}]\label{def:heavyhitter}
For $c\in\R^m$ and $P,Q \in \R_{>0}$ with $nP \ge \|c\|_1 \ge P$ ,
we call a data structure with the following procedures a $(P,c,Q)$-\textsc{HeavyHitter} data structure:
\begin{itemize}
\item \textsc{Initialize}$(\ma \in \R^{m \times n}, g \in \R_{>0}^m)$:
	Let $\ma$ be a matrix with $c_i \ge \nnz(a_i)$, $\forall i \in [m]$ and $P \ge \nnz(\ma)$.
	The data structure initializes in $O(P)$ time.
\item \textsc{Scale}$(i \in [m], b \in \R)$:
	Sets $g_i \leftarrow b$ in $O(c_i)$ time.
\item \textsc{QueryHeavy}$(h \in \R^n, \epsilon \in (0,1) )$:
	Returns $I \subset [m]$ containing exactly those $i$ with $|(\mg \ma h)_i| \ge \epsilon$
	in $O( \epsilon^{-2} \| \mg \ma h \|_c^2 + Q )$ time. %
\end{itemize}
\end{definition}

Using \Cref{thm:heavy_hitter_two_sparse_general} we have the following $(P,c,Q)$-\textsc{HeavyHitter} data structure for two-sparse matrices.
\begin{corollary}[Heavy hitters on two-sparse matrices]\label{cor:heavy_hitter_two_sparse}
There exists a $(P,c,Q)$-\textsc{HeavyHitter} data structure for any two-sparse matrix $\ma \in \R^{m\times n}$ that succeeds with high probability with 
\begin{align*}
P = &~ \wt{O}\Big(m \log^2(W_g \lambda_{\min}(\ma^{\top} \ma)^{-1}) \log^{8}(W_g W_{\ma}) \Big), \\
c = &~ \wt{O}\left(\log^2(W_g \lambda_{\min}(\ma^{\top} \ma)^{-1}) \log^6(W_g W_{\ma})\right) \cdot \allone_m, \\
Q = &~ \wt{O}\left(n \log^2(W_g W_{\ma})\right),
\end{align*}
where $W_{\ma}$ is defined in \Cref{def:W_A}, and $W_g$ is the ratio of the largest to smallest non-zero entry in $g$.
\end{corollary}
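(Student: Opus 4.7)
The plan is to directly instantiate \Cref{thm:heavy_hitter_two_sparse_general} on the row-scaled matrix $\mg\ma$ rather than on $\ma$ and $g$ separately. Since $\ma$ is two-sparse, $\mg\ma$ is also two-sparse with magnitude parameter $W_{\mg\ma} \leq W_g W_\ma$, which accounts for the $\log(W_gW_\ma)$ factors in the claimed bounds. The \textsc{Initialize} operation forms $\mg\ma$ (in $O(\nnz(\ma))$ time) and forwards it to the underlying \textsc{Initialize}; the query \textsc{QueryHeavy}$(h,\epsilon)$ is a pass-through to the underlying \textsc{QueryHeavy}, which returns exactly the set of indices $i$ with $|(\mg\ma h)_i|\ge\epsilon$ as required by \Cref{def:heavyhitter}.

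To implement \textsc{Scale}$(i,b)$, I will use the \textsc{Delete} and \textsc{Insert} primitives of \Cref{thm:heavy_hitter_two_sparse_general}: delete the stored scaled row $g_i\cdot a_i$ and insert the new $b\cdot a_i$, maintaining an auxiliary external-to-internal index map. Each primitive runs in amortized $\wt{O}(\log^2(\lambda_1^{-1})\log^6(W_gW_\ma))$ time, which I will take as the uniform per-row weight $c_i$; this value dominates $\nnz(a_i)\le 2$, and because $c_i\ge 1$ pointwise, the query time $\wt{O}(\epsilon^{-2}\|\mg\ma h\|_2^2+n\log^2(W_gW_\ma))$ from \Cref{thm:heavy_hitter_two_sparse_general} is bounded by $O(\epsilon^{-2}\|\mg\ma h\|_c^2+Q)$ for $Q=\wt{O}(n\log^2(W_gW_\ma))$. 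The side conditions $P\ge\nnz(\mg\ma)$ and $nP\ge\|c\|_1\ge P$ from \Cref{def:heavyhitter} reduce to routine polylog comparisons among $P$, $c$, and $\nnz(\mg\ma)\le 2m$, and can be arranged by inflating $c_i$ by a polylog factor if necessary without changing the asymptotic statement of the corollary.

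The only genuinely nontrivial step is translating the parameter $\lambda_1$ of \Cref{thm:heavy_hitter_two_sparse_general}, which is a lower bound on $\|\mg\ma h\|_2/\|h\|_2$ at query times, into the quantity $\lambda_{\min}(\ma^\top\ma)$ appearing in the corollary. Because the set of $\epsilon$-heavy entries is invariant under simultaneous rescaling of $g$ and $\epsilon$, I will without loss of generality rescale $g$ at initialization so that $\max_i g_i=1$, which forces $\min_i g_i\ge 1/W_g$ and hence $\|\mg\ma h\|_2^2 = h^\top \ma^\top \mg^2 \ma h \ge W_g^{-2}\lambda_{\min}(\ma^\top\ma)\|h\|_2^2$. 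This yields $\log(\lambda_1^{-1})=O(\log(W_g\lambda_{\min}(\ma^\top\ma)^{-1}))$, and substituting into the $P$ and $c$ expressions from \Cref{thm:heavy_hitter_two_sparse_general} produces precisely the bounds claimed in the corollary. I do not expect a substantive obstacle, as the reduction is essentially a bookkeeping exercise; the mild care needed is in the normalization argument above and in matching the weight-vector convention of \Cref{def:heavyhitter}.
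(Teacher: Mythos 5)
Your proposal is correct and follows essentially the same path as the paper's proof: instantiate Theorem~\ref{thm:heavy_hitter_two_sparse_general} on a rescaled version of $\mg\ma$, implement \textsc{Scale} as a \textsc{Delete}+\textsc{Insert} pair, observe that the heavy-hitter set is invariant under joint rescaling of $g$ and $\epsilon$, and convert the Rayleigh-quotient parameter $\lambda_1$ into $\lambda_{\min}(\ma^\top\ma)$ by controlling the range of $g$. The only cosmetic difference is the choice of normalization: the paper divides $g$ by its minimum value so that $\ov{g}\ge\allone$ (hence $\ma^\top\ov{\mg}^2\ma\succeq\ma^\top\ma$ directly), while you divide by the maximum so that $g\le\allone$ and $\min_i g_i\ge 1/W_g$, which costs a harmless extra $W_g^2$ factor inside the logarithm; both land on $\log(\lambda_1^{-1})=O(\log(W_g\lambda_{\min}(\ma^\top\ma)^{-1}))$. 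One small item worth a sentence in a full write-up, which the paper addresses explicitly and you omit: since $g$ is mutated by \textsc{Scale}, the normalization constant is fixed at initialization but the all-time dynamic range of $g$ could in principle exceed $W_g$; the paper dispatches this by observing that one may assume WLOG the all-time ratio is at most $W_g^2$, rebuilding the structure otherwise. This is pure bookkeeping and does not affect the asymptotics, but it should be acknowledged so that the claimed $\log(W_g\cdot)$ factors are justified throughout the operation sequence and not just at initialization.
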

\begin{proof}
We let $W'_g$ denote the ratio of the largest non-zero entry in any $g$ throughout the algorithm to smallest non-zero entry in any $g$ throughout the algorithm. We can without loss of generality assume that $W'_g \leq W_g^2$, because otherwise the largest non-zero entry of $g$ at some time $t$ is less than the smallest non-zero entry of $g$ at some other time $t'$, and we can simply re-build the data structure when this happens.

Let $g_{\min}$ be the minimum value of the vector $g$ throughout the algorithm. We implement the three operations using \Cref{thm:heavy_hitter_two_sparse_general} as follows: 
\begin{itemize}
\item \textsc{Initialize}: Given $\ma$ and $g$, we let $\ov{g} \defeq \frac{g}{g_{\min}}$, and initialize a data structure of \Cref{thm:heavy_hitter_two_sparse_general} with $\ov{\ma} \defeq \ov{\mg} \cdot \ma$.
\item \textsc{Scale}: Given $i$ and $b$, we call the data structure of \Cref{thm:heavy_hitter_two_sparse_general} to first delete row $i$ from $\ov{\ma}$, then insert a new row $\frac{b}{g_{\min}} \cdot a_i$, where $a_i$ denotes the $i$-th row of $\ma$.
\item \textsc{QueryHeavy}: Given query vector $h$ and parameter $\epsilon$, we call the \textsc{QueryHeavy} operation of the data structure of \Cref{thm:heavy_hitter_two_sparse_general} with $h$ and $\ov{\epsilon} \defeq \frac{\epsilon}{g_{\min}}$. Note that this is correct because $|(\mg \ma h)_i| \ge \epsilon$ if and only if $|(\ov{\ma} h)_i| \ge \ov{\epsilon}$, and $\ov{\epsilon}^{-2} \|\ov{\ma} h \|_2^2 = \epsilon^{-2} \|\mg \ma h \|_2^2$.
\end{itemize}
Next we bound the parameters $\lambda_1$ and $W_{\ov{\ma}}$ that show up in the time complexity of \Cref{thm:heavy_hitter_two_sparse_general}. First note that during the algorithm we always have $\ov{g} \geq 1$ and $\max_i g_i \leq W_g'$. At any time a query is performed we have that
\begin{align*}
\lambda_1(\ov{\ma}^\top \ov{\ma}) = &~ \lambda_1(\ma^\top \ov{\mg}^2 \ma) 
\leq W_g' \cdot \lambda_1(\ma^\top \ma).
\end{align*}
We also have that
\begin{align*}
W_{\ov{\ma}} = &~ \max_{i\in [n],e\in[m]:\ov{\ma}_{e,i} \neq 0}\left(\max(|\ov{\ma}_{e,i}|, \frac{1}{|\ov{\ma}_{e,i}|})\right) 
\leq W_g' \cdot W_A.
\end{align*}
The claimed time bounds of this Corollary then directly follows from \Cref{thm:heavy_hitter_two_sparse_general}.
\end{proof}

\paragraph{Sampler.} Next we restate the formal definition of the sampler data structure of \cite{bll+21}, which is defined using the following definition of a valid sampling distribution.

\begin{definition}[Valid sampling distribution, definition 4.13 in \cite{bll+21}]\label{def:valid_sampling_distribution}
    Given vector $\delta_r,\ma,g,\overline{\tau}$, %
    we say that a random diagonal matrix $\mr \in \R^{m\times m}$ is $C_{\valid}$-valid if it satisfies the following properties, for $\overline{\ma} = \overline{\mt}^{1/2} \mg \ma$. %
    We assume that $C_{\valid} \geq C_{\normrm}$. 
    \begin{itemize}
        \item (Expectation) We have that $\E[\mr] = \mI$. 
        \item (Variance) For all $i \in [m]$, we have that $\mathrm{Var}[\mr_{ii}(\delta_r)_i]\leq\frac{\eta|(\delta_r)_i|}{C^2_{\valid}}$ and $\E[\mr_{ii}^2] \leq 2\sigma(\overline{\ma})^{-1}_i$.
        \item (Covariance) For all $i\neq j$, we have that $\E[\mr_{ii}\mr_{jj}]\leq 2$. 
        \item (Maximum) With probability at least $1 - n^{-10}$ we have that $\|\mr\delta_r - \delta_r\|_{\infty}\leq\frac{\eta}{C^{2}_{\valid}}$.
        \item (Matrix approximation) We have that $\overline{\ma}^\top\mr\overline{\ma} \approx_\eta \overline{\ma}^\top\overline{\ma}$ with probability at least $1 - n^{-10}$. 
    \end{itemize}
\end{definition}

Using the notion of a valid sampling distribution, we now restate the definition of the sampler of \cite{bll+21}.

\begin{definition}[Sampler data structure, Definition 6.3 in \cite{bll+21}]\label{def:heavy_sampler}
    We call a data structure a $(P,c,Q)$-\textsc{HeavySampler} data structure if it supports the following operations:
    \begin{itemize}
        \item \textsc{Initialize}$(\ma\in \R^{m\times n},g\in \R^m_{>0},\overline{\tau}\in \R^m_{>0})$: Let $\ma$ be a matrix with $c_i \geq \nnz(a_i)$, $\forall i \in [m]$. The data structure initializes in $O(P)$ time. 
        \item \textsc{Scale}$(i,a,b):$ Sets $g_i \leftarrow a$ and $\overline{\tau}_i \leftarrow b$ in $O(c_i)$ amortized time. 
        \item \textsc{Sample}$(h \in \R^{m}):$ Returns a random diagonal matrix $\mr\in \R^{m\times m}$ that satisfies \Cref{def:valid_sampling_distribution} for $\delta_r = \mg\ma h$ with $\|\delta_r\|_2 \leq m/n$ and $\overline{\tau}\approx_{1/2}\sigma(\ov{\mt}^{1/2} \mg \ma)$ %
        in $O(Q)$ expected time. Furthermore, $\E[\nnz(\mr\ma)] = O(Q)$.
    \end{itemize}
\end{definition}

Our data structure of \Cref{thm:heavy_hitter_two_sparse_general} supports the proportional sampling of Lemma~4.42 of \cite{bll+21}, and the output $\mr$ satisfies \Cref{def:valid_sampling_distribution} when choosing $C_0 = 100 C_{\valid}^4 \gamma^{-2} \log m$, where $\gamma = \frac{\epsilon^2}{C^2 \log(Cm/\epsilon^2)}$, $\epsilon=\frac{1}{4C\log(m/n)}$, and $C  \geq 100$ is a constant.
Similar to how \cite{bll+21} proved its sampler for graphs in Corollary~F.4, we also use Corollary~4.42 of \cite{bll+21} and the \textsc{ScaleTau} and \textsc{Sample} operations of the data structure of \Cref{thm:heavy_hitter_two_sparse_general} to obtain the following sampler for two-sparse matrices.
\begin{corollary}[Sampler for two-sparse matrices]\label{cor:sampler}
There exists a $(P,c,Q)$-\textsc{HeavySampler} data structure for any two-sparse matrix $\ma \in \R^{m\times n}$ where 
\begin{align*}
P = &~ \wt{O}\Big(m \log^2(W_g \lambda_{\min}(\ma^{\top} \ma)^{-1}) \log^8(W_g W_{\ma}) \Big), \\
c = &~ \wt{O}\left(\log^2(W_g \lambda_{\min}(\ma^{\top} \ma)^{-1}) \log^6(W_g W_{\ma})\right) \cdot \allone_m, \\
Q = &~ \wt{O}\left((\frac{m}{\sqrt{n}} + n) \cdot \log^2(W_g W_{\ma})\right),
\end{align*}
where $W_{\ma}$ is defined in \Cref{def:W_A}, and $W_g$ is the ratio of the largest to smallest non-zero entry in $g$.
\end{corollary}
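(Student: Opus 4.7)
The plan is to build the \textsc{HeavySampler} as a thin wrapper around the \textsc{Sample} operation of \Cref{thm:heavy_hitter_two_sparse_general}, mirroring the proof of Corollary F.4 in \cite{bll+21} which does the analogous construction for standard graph incidence matrices. First, I would reduce the weighted case to the unweighted case exactly as in the proof of \Cref{cor:heavy_hitter_two_sparse}: pass a rescaled two-sparse matrix $\overline{\ma} \defeq \mg\ma$ (with rows normalized by $g_{\min}$ to keep $W_{\overline{\ma}}$ controlled) to an underlying instance of \Cref{thm:heavy_hitter_two_sparse_general}, and forward \textsc{ScaleTau} directly while implementing \textsc{Scale}$(i,a,b)$ by a delete-and-reinsert together with a forwarded \textsc{ScaleTau}. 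The initialization and per-update costs, and the parameter $c$, inherit verbatim from \Cref{cor:heavy_hitter_two_sparse}.

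The substantive step is choosing constants $C_0, C_1, C_2, C_3$ in the call to $\textsc{Sample}(h, C_0, C_1, C_2, C_3)$ so that the returned $\mr$ is $C_{\valid}$-valid per \Cref{def:valid_sampling_distribution} for $\delta_r = \mg\ma h$. By construction, \Cref{thm:heavy_hitter_two_sparse_general} returns $\mr = C_0^{-1}\sum_{j=1}^{C_0 S} \mdiag(X_j)$ with i.i.d.\ $X_j$, each equal to $p_e^{-1}\indicVec{e}$ with probability $p_e/S$, where $p_e \ge \min\{1, C_1 \tfrac{m}{\sqrt n}\tfrac{(\ma h)_e^2}{\|\ma h\|_2^2} + C_2 \tfrac{1}{\sqrt n} + C_3 \overline{\tau}_e\}$. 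The $\E[\mr]=\mI$ property and the covariance bound $\E[\mr_{ii}\mr_{jj}]\le 2$ for $i\neq j$ follow directly from this representation. Following Corollary~4.42 of \cite{bll+21}, the variance, $\ell_\infty$, and matrix-approximation conditions all follow from a matrix Chernoff bound provided $p_e$ dominates $\min\{1,\gamma^{-1}(\sigma(\overline{\mt}^{1/2}\mg\ma)_e + \tfrac{m}{\sqrt n}\tfrac{(\delta_r)_e^2}{\|\delta_r\|_2^2}\cdot\tfrac{n}{m}\|\delta_r\|_2^{-2} + \tfrac{1}{\sqrt n})\}$ for $\gamma = \Theta(\epsilon^2/\log(m/\epsilon))$ with $\epsilon = \Theta(1/\log(m/n))$. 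Using the hypothesis $\overline{\tau}\approx_{1/2}\sigma(\overline{\mt}^{1/2}\mg\ma)$, taking $C_1 = \Theta(\gamma^{-1})$, $C_2 = \Theta(1)$, $C_3 = \Theta(\gamma^{-1})$, and $C_0 = \Theta(C_{\valid}^4 \gamma^{-2}\log m)$ makes the required domination hold.

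For the time bound, plugging these constants into the amortized cost of \Cref{thm:heavy_hitter_two_sparse_general}'s \textsc{Sample} gives $\widetilde{O}\bigl(C_0 C_1\tfrac{m}{\sqrt n} + (C_0 C_2 \tfrac{m}{\sqrt n} + n)\log^2(W_{\overline{\ma}}) + C_0 C_3\|\overline{\tau}\|_1\bigr)$. Since leverage scores sum to at most the rank, $\|\overline{\tau}\|_1 = O(n)$, so all three terms collapse into $\widetilde{O}((\tfrac{m}{\sqrt n}+n)\log^2(W_g W_{\ma}))$ after absorbing the $\log m$ factors from $C_0,C_1,C_3$ and the bound $\|\delta_r\|_2\le m/n$ into the $\widetilde O$. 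The output sparsity bound $\E[\nnz(\mr\ma)] = O(Q)$ follows because $\mr\ma$ has at most $2\cdot C_0 S$ non-zeros and each sampled row of $\ma$ has $\nnz \le 2$.

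The main obstacle is the matrix-approximation condition $\overline{\ma}^\top \mr \overline{\ma}\approx_\eta \overline{\ma}^\top \overline{\ma}$, which requires the Chernoff argument of \cite{bll+21} with a slightly delicate accounting: our \textsc{Sample} guarantees only a \emph{lower bound} $p_e$ (not equality) and produces a sum of $C_0 S$ samples rather than independent Bernoullis. Both subtleties are handled exactly as in the proof of Corollary F.4 of \cite{bll+21}; the lower bound only helps (one can scale up samples into the distribution without affecting expectations), and $C_0 S$ i.i.d.\ samples is exactly the setting of their matrix Chernoff bound. Once these verifications are in place, the conclusion follows by combining the data-structure guarantees of \Cref{thm:heavy_hitter_two_sparse_general} with the analysis of \cite[Cor.~4.42]{bll+21}.
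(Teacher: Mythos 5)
Your proposal is correct and follows the same route as the paper: the paper itself proves this corollary only via the short preamble paragraph preceding it, which says to apply Corollary~4.42 of \cite{bll+21} together with the \textsc{ScaleTau} and \textsc{Sample} operations of \Cref{thm:heavy_hitter_two_sparse_general}, choosing $C_0 = 100 C_{\valid}^4\gamma^{-2}\log m$, in direct analogy with Corollary~F.4 of \cite{bll+21}. You flesh out that same outline (weighted-to-unweighted reduction as in \Cref{cor:heavy_hitter_two_sparse}, choice of $C_1,C_2,C_3$, invocation of the matrix Chernoff bound, $\|\ov\tau\|_1 = O(n)$, and the time accounting) without deviating from the paper's argument.
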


\paragraph{Inverse Maintenance.} Finally we restate the formal definition of the inverse maintenance data structure of \cite{bll+21}.

\begin{definition}[Inverse maintenance, Definition~6.2 of \cite{bln+20}]\label{def:inverse_maintenance}
We call a data structure a $(P,c,Q)$-\textsc{InverseMaintenance} data structure, if it supports the following operations:
\begin{itemize}
\item \textsc{Initialize}$(\ma \in \R^{m \times n}, v\in \R^m, \bar{\sigma}\in \R^m)$ The data structure initializes in $O(P)$ time for $\bar{\sigma} \geq \frac{1}{2}\sigma(\mv^{1/2}\ma)$ and $\|\bar{\sigma}\|_1 = O(n)$. 
\item \textsc{Scale}$(i \in [m],a,b)$:
        Set $v_i \leftarrow a$ and $\bar{\sigma}_i \leftarrow b$ in $O(c_i)$ time.
\item \textsc{Solve}$(\bar{v} \in \R^m, b,\epsilon \in (0,1) )$: Assume $\bar{\sigma} \geq \frac{1}{2}\sigma(\mv^{1/2}\ma)$ and the given $\bar{v}$ satisfies $\ma^\top\mv\ma \approx_{1/2}\ma^\top\overline{\mv}\ma$. Then \textsc{Solve} returns $\mh^{-1} b$ for $\mh \approx_{\epsilon} \ma^\top \overline{\mv}\ma$ in $O\big( (Q + \nnz(\ov{\mv} \ma)) \cdot \log\epsilon^{-1}\big)$ time. Furthermore, for the same $\bar{v}$ and $\epsilon$, the algorithm uses the same $\mh$. %
\end{itemize}
For the complexity bounds one may further assume the following stability assumption: Let $\bar{v}^{(1)},\bar{v}^{(1)},...$ be the sequence of inputs given to \textsc{Solve}, then there exists a sequence $\tilde{v}^{(1)},\tilde{v}^{(2)},...$ such that for all $t > 0$:
\begin{align*}
    \bar{v}^{(t)} \in (1 \pm 1/(100\log n))\tilde{v}^{(t)} \text{ and }\|(\tilde{v}^{(t)})^{-1}(\tilde{v}^{(t)}-\tilde{v}^{(t+1)}\|_{\bar{\sigma}}=O(1).
\end{align*}
\end{definition}

We show that there exists such an inverse maintenance data structure for two-sparse matrices.
\begin{theorem}[Inverse maintenance data structure for two-sparse matrices]\label{thm:inverse_maintenance}
There exists a $(P,c,Q)$-\textsc{InverseMaintenance} data structure for any two-sparse matrix $\ma \in \R^{m\times n}$ where $c_i = O(1)$, $P = \wt{O}(m)$, and $Q = \wt{O}(n\log(\kappa))$ where $\kappa = \frac{\lambda_{\max}(\ma^{\top} \mv \ma)}{\lambda_{\min}(\ma^{\top} \mv \ma)}$.
\end{theorem}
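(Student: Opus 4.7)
The plan is to follow the leverage-score-sampling based inverse maintenance strategy of \cite{bll+21} (as outlined in \Cref{sec:tech_combine}), but instantiate the underlying linear system solver with the M-matrix solver of \cite{ds08}. The data structure maintains, in addition to $\mv$ and $\bar{\sigma}$, a random $\wt{O}(n)$-sparse diagonal matrix $\ms$ such that $\ma^\top \ms \ma \approx_{O(1)} \ma^\top \mv \ma$, obtained by independently sampling each row $i$ with probability $p_i = \min\{1, c \log(n) \cdot v_i \bar{\sigma}_i\}$ and rescaling by $p_i^{-1}$. Since $\|\bar{\sigma}\|_1 = O(n)$, matrix Chernoff bounds guarantee that $\ms$ has $\wt{O}(n)$ non-zeros w.h.p.~and that the spectral approximation holds.

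To answer a \textsc{Solve}$(\bar{v}, b, \epsilon)$ query, I would build the matrix $\wt{\mh} \defeq \ma^\top \ov{\ms} \ma$ where $\ov{\ms}_{i,i} = \ms_{i,i} \bar{v}_i / v_i$ (so that $\wt{\mh}$ is a spectral approximation of $\ma^\top \ov{\mv} \ma$ since the stability assumption gives $\bar{v} \approx \tilde{v} \approx v$ coordinatewise up to constants in the relevant directions). Because $\ma$ is two-sparse and $\ov{\ms}$ is $\wt{O}(n)$-sparse, the matrix $\wt{\mh}$ has $\wt{O}(n)$ non-zero entries and, moreover, via the reduction of \Cref{lem:reduction_two_sparse_to_lossy_graph}, is (up to trivial rescaling) the Laplacian of a lossy flow graph on $O(n)$ vertices with $\wt{O}(n)$ edges. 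I then apply the M-matrix solver of \cite{ds08} to $\wt{\mh}$, using preconditioned iterative refinement (e.g.~preconditioned Chebyshev iteration) to boost the constant-factor spectral approximation to an $\epsilon$-approximate solve; this costs $\wt{O}(n) \log(\epsilon^{-1})$ time once $\wt{\mh}$ is assembled, and $\wt{\mh}$ is reused between queries with the same $\bar{v}$ as required by \Cref{def:inverse_maintenance}.

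For \textsc{Scale}$(i, a, b)$, the update cost is $O(1)$: I store $v_i \leftarrow a$ and $\bar{\sigma}_i \leftarrow b$, compute the new target probability $p_i$, and lazily resample entry $i$ of $\ms$ only when $p_i$ changes by a constant factor from the last time $i$ was (re)sampled (as in \cite{bll+21}). The stability hypothesis on the sequence $\tilde{v}^{(t)}$ and the bound $\|(\tilde{v})^{-1}(\tilde{v}^{(t)} - \tilde{v}^{(t+1)})\|_{\bar{\sigma}} = O(1)$ together with $\|\bar{\sigma}\|_1 = O(n)$ imply that the total number of resamples across any sequence of $T$ updates is $\wt{O}(n + T)$, so the amortized $c_i$ per scale call is $\wt{O}(1)$. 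Each resample triggers at most a constant number of two-sparse row insertions/deletions in $\wt{\mh}$, so the assembled sparsifier stays at $\wt{O}(n)$ non-zeros.

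The main obstacle I expect is correctly tracking the dependence on $\kappa = \lambda_{\max}/\lambda_{\min}$ in the iterative refinement step: Daitch--Spielman gives a solver whose accuracy parameter enters logarithmically, so to return an $\epsilon$-approximate inverse we need $O(\log(\kappa/\epsilon))$ refinement iterations, yielding the claimed $Q = \wt{O}(n \log \kappa)$ per $\epsilon$-factor (the extra $\log \epsilon^{-1}$ is absorbed into the explicit $\log \epsilon^{-1}$ factor of \Cref{def:inverse_maintenance}). A secondary technical issue is verifying the hypothesis of the \cite{ds08} M-matrix solver, namely that after the reduction of \Cref{lem:reduction_two_sparse_to_lossy_graph} the matrix $\wt{\mh}$ is indeed a (possibly augmented) lossy Laplacian on $O(n)$ vertices with nonnegative edge weights; this follows because $\ov{\ms}$ has nonnegative diagonal entries (leverage score samples with positive weights scaled by positive $\bar{v}/v$) and the two-sparse-to-lossy-graph reduction preserves positivity of the weighting. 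Given these pieces the stated bounds on $P$, $c$, and $Q$ follow routinely.
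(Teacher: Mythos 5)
Your overall approach matches the paper's: leverage-score-sample to maintain an $\wt{O}(n)$-sparse diagonal $\ms$ so that the sparsified normal matrix spectrally approximates $\ma^\top\mv\ma$, then apply the Daitch--Spielman M-matrix scaling (\Cref{thm:fastmsolver}) together with the Spielman--Teng SDD solver (\Cref{thm:fast_Laplacian_solver}) to solve the $\wt{O}(n)$-non-zero sparsified system in $\wt{O}(n\log\kappa)$ time, and boost to $\epsilon$-accuracy via preconditioned Richardson/Chebyshev. So the core route is the same.

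Two concrete issues in your write-up. First, your sampling probability $p_i = \min\{1, c\log n \cdot v_i\bar\sigma_i\}$ has a spurious factor of $v_i$: the quantity $\bar\sigma$ is, by \Cref{def:inverse_maintenance}, already an upper bound on (half) the leverage scores of the \emph{rescaled} matrix $\mv^{1/2}\ma$, so the correct probability is simply $p_i = \min\{1, \Theta(\bar\sigma_i\log n)\}$ (as the paper uses), and the row contribution $v_i a_i a_i^\top$ is reweighted by $1/p_i$; your version double-counts the $v_i$ weighting and would both fail the $\wt{O}(n)$-sparsity bound and break the matrix Chernoff concentration argument when $v$ is far from uniform. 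Second, the lazy-resampling amortization you propose for \textsc{Scale} is unnecessary machinery here: since one resample of a single diagonal entry of $\ms$ is $O(1)$ worst-case, the paper simply resamples $\ms_{i,i}$ on every \textsc{Scale}$(i,\cdot,\cdot)$ call, which already meets the $c_i = O(1)$ requirement without any appeal to the stability assumption (the stability hypothesis in \Cref{def:inverse_maintenance} is only needed for the complexity of the \textsc{Solve} path, not for \textsc{Scale}). Your observation that one should check $\ma^\top\ms\ma$ really sits inside the scope of the Daitch--Spielman scaling (since a general two-sparse $\ma$ need not produce an M-matrix without the sign-splitting of \Cref{lem:reduction_two_sparse_to_lossy_graph}) is a fair caution; the paper's statement of \Cref{thm:fastmsolver} implicitly assumes the factorization has been put in that form, and your reduction-based remedy is the right fix.
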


Our proof uses the following fast Laplacian solver by \cite{st04} (see \Cref{thm:fast_Laplacian_solver}), where we use the stronger version that holds for any symmetric diagonally dominant (SDD) matrix. We will also use the M-matrix scaling theorem by \cite{ds08} (see \Cref{thm:fastmsolver}), which shows the stronger statement that given any matrix $\mm = \ma^\top \ma$ where $\ma$ has at most two non-zeros entries per row, there is an algorithm that can output a scaling matrix $\md$ such that $\md \mm \md$ is SDD. Note that the off-diagonal entries of $\mm$ can be positive.
\begin{theorem}[Fast Laplacian solver, \cite{st04}]\label{thm:fast_Laplacian_solver}
There is an algorithm that takes as input a two-sparse matrix $\ma \in \R^{m \times n}$, a non-negative diagonal matrix $\md \in \R^{m \times m}$, a vector $b \in \R^n$, and a parameter $\epsilon > 0$, such that $\ma^{\top} \md \ma$ is a symmetric diagonally dominant (SDD) matrix, and the algorithm returns a vector $\ov{x} = \mz \cdot b$ such that $\mz \approx_{\epsilon} (\ma^{\top} \md \ma)^{\dagger}$. The algorithm runs in $\wt{O}(m \log(\epsilon^{-1}))$ time, and succeeds with high probability. Furthermore, for the same $\ma$, $\md$, and $\epsilon$, the algorithm uses the same $\mz$.
\end{theorem}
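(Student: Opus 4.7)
The theorem essentially specializes the Spielman--Teng SDD solver \cite{st04} to an input matrix of the form $\ma^\top \md \ma$ with two-sparse $\ma$, plus two extensions beyond the classical statement: the output must be describable as a fixed linear operator $\mz$ that is an $\epsilon$-spectral approximation to $(\ma^\top \md \ma)^\dagger$, and $\mz$ must be identical across calls with the same triple $(\ma, \md, \epsilon)$. The plan is to invoke the existing nearly-linear-time solver essentially as a black box and verify those two extensions.

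\textbf{Key steps.} First, since $\ma$ has at most two nonzero entries per row and $\md$ is a nonnegative diagonal, the matrix $\mm \defeq \ma^\top \md \ma$ is $n \times n$ with $O(m)$ nonzero entries, and by hypothesis it is SDD. Apply Gremban's standard reduction to convert the SDD system $\mm x = b$ into a Laplacian system $\wh\mL \hat x = \hat b$ on a graph with $2n$ vertices and $O(m)$ edges; the relationship between solutions is linear in $b$ and given by fixed block selection/averaging matrices, so spectral approximation translates cleanly across the reduction. Second, invoke the Spielman--Teng Laplacian solver on $\wh\mL$: that algorithm constructs a recursive preconditioner from low-stretch spanning trees together with spectral sparsifiers, then runs preconditioned Chebyshev iteration with $O(\log(1/\epsilon))$ outer steps. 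The total running time is $\wt O(m \log(1/\epsilon))$, and the convergence analysis of preconditioned Chebyshev with an $O(1)$-quality preconditioner guarantees that the whole procedure, regarded as a map on right-hand sides, is an $\epsilon$-spectral approximation to $\wh\mL^\dagger$. Undoing the Gremban embedding then yields $\mz \approx_\epsilon \mm^\dagger$ as required.

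\textbf{Producing a fixed operator $\mz$.} Once the preconditioner hierarchy is built, preconditioned Chebyshev iteration is a fixed linear map on the input vector; composing it with the fixed Gremban embed/extract matrices gives an implicitly defined matrix $\mz$. To enforce that the same $\mz$ is used on repeated calls with identical $(\ma, \md, \epsilon)$, I would derive all randomness (low-stretch tree sampling, effective-resistance-based sparsification) from a deterministic hash of $(\ma, \md, \epsilon)$, or equivalently cache the sampled preconditioner at first invocation and reuse it on later invocations. The high-probability correctness of the solver then needs to hold only for one realization per distinct input triple, which is exactly what ``succeeds with high probability'' delivers.

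\textbf{Main obstacle.} The heavy lifting --- building the recursive preconditioner and proving that $O(\log(1/\epsilon))$ preconditioned Chebyshev steps suffice --- is handled by \cite{st04} and is used here entirely as a black box, so it is not our obstacle. The genuinely delicate points are (i) verifying that the composition of Gremban's reduction with the iterative solver still gives a fixed linear operator satisfying $\mz \approx_\epsilon \mm^\dagger$ in the Loewner sense (not just a guarantee on $\|\mz b - \mm^\dagger b\|_\mm$ for a single $b$), and (ii) ensuring operator determinism across calls without sacrificing the high-probability bound. Both are resolved by the derandomization-by-caching argument above combined with the standard spectral convergence analysis of symmetric preconditioned Chebyshev, whose error operator is a polynomial in $\mm^{1/2}\mz\mm^{1/2} - \mI$ of degree $O(\log(1/\epsilon))$ and hence PSD-approximates identity in the required sense.
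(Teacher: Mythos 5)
The paper does not prove \Cref{thm:fast_Laplacian_solver} --- it is cited directly from \cite{st04} as a black box (with the standard extension to general SDD matrices noted in the surrounding prose). So there is no ``paper's proof'' to compare against; your task here is really to reconstruct why the cited form of the theorem follows from the original.

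Your reconstruction matches the standard folklore argument and is essentially correct. The chain of Gremban reduction (SDD system on $n$ vertices $\to$ Laplacian system on $2n$ vertices, with fixed linear embed/extract maps that commute cleanly with spectral approximation), the recursive preconditioner of \cite{st04}, and non-adaptive preconditioned Chebyshev iteration is exactly how this theorem is understood in the IPM literature. Your two flagged subtleties are the right ones to flag. On point (i): you are correct that the literal theorem in \cite{st04} is a vector guarantee $\|\tilde x - M^\dagger b\|_M \le \epsilon \|M^\dagger b\|_M$, and upgrading to a Loewner-order operator guarantee $\mz \approx_\epsilon M^\dagger$ requires running the solver with a symmetric error polynomial and a fixed (non-adaptive) iteration count so that the error operator $\mI - \mz M$ is a fixed symmetric polynomial in $M\mz$; this is standard but strictly stronger than the original statement. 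On point (ii): your caching-of-randomness (or deterministic-seeding-by-input-hash) resolution for the ``same $\mz$ on repeated calls'' clause is the intended one, and is needed for the \textsc{InverseMaintenance} abstraction in \Cref{def:inverse_maintenance} that this theorem supports. I see no gap in your argument.
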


\begin{theorem}[M-Matrix Scaling, Theorem 4.5 of \cite{ds08}]\label{thm:fastmsolver}
    There is an algorithm that takes as input any M-matrix $\mm \in \mathbb{R}^{n\times n}$ and its factorization $\mm = \ma^\top \ma$, where $\ma \in \mathbb{R}^{m\times n}$ is two-sparse, along with upper and lower bounds $\lambda_{\max}$ and $\lambda_{\min}$ for the eigenvalues of the matrix $\ma$, and the algorithm returns a positive diagonal matrix $\md \in \R^{n \times n}$ such that the matrix $\md \mm \md$ is a symmetric diagonally dominant (SDD) matrix.
    This algorithm runs in expected time $\wt{O}(m\log \kappa)$, where $\kappa \defeq \lambda_{\max}/\lambda_{\min}$.
\end{theorem}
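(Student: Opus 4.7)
The plan is to establish \Cref{thm:fastmsolver} in two parts: an existence argument based on the Perron–Frobenius theorem that exhibits a suitable scaling, and an efficient computation of that scaling exploiting the two-sparse structure of $\ma$.

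For existence, I would write the M-matrix as $\mm = s\mI - \mb$ where $s = \max_i \mm_{ii}$, so that $\mb$ is symmetric, entrywise non-negative, and has zero diagonal. The defining property of an M-matrix forces $s \geq \rho(\mb)$, where $\rho(\mb)$ is the spectral radius. Apply Perron–Frobenius componentwise to obtain a strictly positive vector $v$ with $\mb v \leq \rho(\mb) v$, and set $\md = \mdiag(v)$. A direct calculation then gives, for each row $i$, diagonal entry $(s - \mb_{ii})v_i^2$ and off-diagonal absolute row-sum $v_i(\mb v)_i - \mb_{ii} v_i^2 \leq (\rho(\mb) - \mb_{ii})v_i^2$, so $\md\mm\md$ is SDD as required. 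The irreducibility caveat (if $\mb$ splits into blocks) is handled by running the argument on each block and concatenating, which is consistent with the two-sparse structure of $\ma$ since $\mm$'s off-diagonal support is exactly a graph on $[n]$ with at most $m$ edges.

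For the algorithm, the two-sparse assumption gives $\nnz(\mb) = O(m)$, and each 2-sparse row of $\ma$ with entries of opposite sign (required by the M-matrix hypothesis) contributes a single off-diagonal edge to this graph. I would compute an approximate Perron eigenvector using an iterative scheme that leverages $\mm = \ma^\top \ma$ and the given eigenvalue bounds $\lambda_{\min}, \lambda_{\max}$. Straightforward power iteration on $\mb$ converges at rate $1 - \Omega(\kappa^{-1})$ and only gives $\wt{O}(m\kappa)$ total cost; Lanczos-type acceleration improves this to $\wt{O}(m\sqrt{\kappa})$, still short of the claimed bound. To reach $\wt{O}(m \log \kappa)$, the idea I would pursue is a two-level scheme: build a spanning forest of the support graph of $\mb$ and compute provisional potentials $\phi_v = \log d_v$ by root-to-vertex path sums of the edge weights $\log(|\beta_e|/\alpha_e|)$ in $O(n)$ time, which makes every tree edge exactly balanced; then correct the residual imbalances on the non-tree edges by solving a sequence of SDD linear systems on the cycle-space Gram matrix using \Cref{thm:fast_Laplacian_solver} as an inner oracle, bootstrapping the scaling from one level as the warm start for the next.

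The main obstacle is precisely the $\log \kappa$ iteration count: showing that this refinement scheme contracts the approximation error by a constant factor per round, and that the intermediate SDD systems invoked as preconditioners remain well-conditioned in terms of the given $\lambda_{\min}, \lambda_{\max}$. A secondary technical point is that the algorithm only produces an approximate Perron vector $\tilde v$, so the scaled matrix is only approximately SDD; to guarantee genuine SDD-ness one must either iterate to sufficient precision (which the $\wt{O}(\log \kappa)$ rate supports) or absorb any remaining off-diagonal excess into a small additive diagonal perturbation of $\mm$ that preserves the M-matrix structure. Once these two issues are handled, the per-iteration cost of $\wt{O}(m)$ combined with $\wt{O}(\log \kappa)$ rounds yields the stated complexity.
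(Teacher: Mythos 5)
This theorem is cited by the paper from Daitch and Spielman \cite{ds08} (their Theorem~4.5) rather than proven in-line, so there is no proof in the paper to compare against; assessing your proposal on its own terms, the existence argument is sound but the algorithmic argument has a genuine gap. For the existence part: the reduction of ``$\md\mm\md$ is SDD'' to ``find $d > 0$ with $\mm d \geq 0$ entrywise'' is correct (divide the row-$i$ SDD inequality by $d_i > 0$ and use that the off-diagonal entries of $\mm$ are nonpositive), and the Perron--Frobenius construction does produce such a $d$. One small slip: with $s = \max_i \mm_{ii}$, the matrix $\mb = s\mI - \mm$ need not have zero diagonal, but your row-sum estimate handles $\mb_{ii} \geq 0$ correctly anyway.

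The gap is in the algorithm, which you flag yourself as ``the main obstacle'' --- but that obstacle is essentially the entire content of the theorem. The phrase ``correct the residual imbalances on the non-tree edges by solving a sequence of SDD linear systems on the cycle-space Gram matrix'' is not yet an algorithm: it is unclear what the cycle-space Gram matrix is, why it would be SDD (\Cref{thm:fast_Laplacian_solver} can only be invoked on a matrix you have already shown to be SDD, and the difficulty here is precisely that the matrix at hand is not yet in that form), what quantity contracts by a constant factor per round, why the number of rounds is $O(\log\kappa)$ rather than $O(\kappa)$ or $O(\sqrt{\kappa})$, and how the supplied eigenvalue bounds $\lambda_{\min},\lambda_{\max}$ enter the algorithm at all, even though the stated runtime depends on them. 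Your secondary remark about approximate SDD-ness is a real issue, but the proposed fix of perturbing $\mm$ by a small diagonal is not available, since the theorem requires scaling the given $\mm$ itself; the viable fix you mention second (compute to precision governed by the slack $s - \rho(\mb)$, roughly $\lambda_{\min}/\lambda_{\max}$, hence $O(\log\kappa)$ bits) could work, but making it rigorous requires precisely the missing convergence analysis. As written, the proposal correctly identifies the target (a sub-invariant positive vector of $\mb$) and correctly rules out plain power or Lanczos iteration, but does not supply the mechanism that achieves the $\wt{O}(m\log\kappa)$ bound.
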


Now we are ready to prove \Cref{thm:inverse_maintenance} using \Cref{thm:fast_Laplacian_solver} and \Cref{thm:fastmsolver}.
\begin{proof}[Proof of \Cref{thm:inverse_maintenance}]
In the \textsc{Initialize} operation, the algorithm computes a random diagonal matrix $\ms \in \R^{m \times m}$ where for each $ \in [m]$, $\ms_{i,i}$ is independently set to $\frac{1}{p_i}$ with probability $p_i = \min\{1, \Theta(\ov{\sigma}_i \log n)\}$, and it is set to $0$ otherwise. The guarantees of leverage score sampling implies that $\ma^{\top} \mv^{1/2} \ms \mv^{1/2} \ma \approx_{0.1} \ma^{\top} \mv \ma$, and with high probability $\ms$ has $O(n \log n)$ non-zero entries on its diagonal. This can be computed in $\wt{O}(m)$ time.

In the \textsc{Scale} operation with input $i \in [m]$ and $a,b \in \R$, we re-sample the $i$-th entry of $\ms$: We again set $\ms_{i,i} = \frac{1}{p_i}$ with probability $p_i = \min\{1, \Theta(\ov{\sigma}_i \log n)\}$, and set it to $0$ otherwise. This step takes $O(1)$ time.

In a \textsc{Solve} operation, since we maintained $\ms$, we can use $(\ma^{\top} \mv^{1/2} \ms \mv^{1/2} \ma)^{-1}$ as a preconditioner for $\ma^{\top} \ov{\mv} \ma$.  First note that using \Cref{thm:fastmsolver} and \Cref{thm:fast_Laplacian_solver} we can solve any linear system with $\ma^{\top} \mv^{1/2} \ms \mv^{1/2} \ma$ to constant accuracy in $\wt{O}(n \log \kappa)$ time, where $\kappa = \frac{\lambda_{\max}(\ma^{\top} \mv \ma)}{\lambda_{\min}(\ma^{\top} \mv \ma)} = \Theta(\frac{\lambda_{\max}(\ma^{\top} \mv^{1/2} \ms \mv^{1/2} \ma)}{\lambda_{\min}(\ma^{\top} \mv^{1/2} \ms \mv^{1/2} \ma)})$. We then use preconditioned Richardson iterations 
\[
x^{(k+1)} = x^{(k)} + (\ma^{\top} \mv^{1/2} \ms \mv^{1/2} \ma)^{-1} (b - \ma^{\top} \ov{V} \ma x^{(k)}).
\]
It takes $O(\log \epsilon^{-1})$ iterations to solve to $\epsilon$ accuracy. So the total runtime of \textsc{Solve} is $\wt{O}\big( (n \log \kappa + \nnz(\ov{\mv} \ma)) \log \epsilon^{-1} \big)$.
\end{proof}

\subsection{Approximately Following the Central Path}\label{sec:path_following} %
In this section we prove that using the heavy hitter (\Cref{cor:heavy_hitter_two_sparse}), sampler (\Cref{cor:sampler}), and inverse maintenance (\Cref{thm:inverse_maintenance}) data structures, we can implement the algorithm of \cite{bll+21} to approximately follow the central path for two-sparse LPs. This path following algorithm takes an initial central point as input and produces another central point with improved parameters. We leave the problems of finding the initial central point, and rounding the final central point to an LP solution to next sections.

Recall that we consider an LP of the following form, where $\ma \in \R^{m \times n}$ is a two-sparse matrix: 
\begin{equation*}
\begin{aligned}
\min_{x \in \R^m} &~ c^{\top} x \\
\text{s.t.} &~ \ma^{\top} x = b \\
&~ \ell \leq x \leq u
\end{aligned}
\end{equation*}

\paragraph{Path following algorithm from \cite{bll+21}.}
We first restate the guarantees of the \textsc{PathFollowing} algorithm of \cite{bll+21}. We use the following definitions from Section~4 of \cite{bll+21}. For the two-sided constraints $\ell_i \leq x_i \leq u_i$ for all $i \in [m]$, barrier function is defined as
\begin{align*}
\phi(x) = \sum_{i \in [m]} \phi_i(x_i), ~ \text{where} ~ \phi_i(x_i) = - \log(x_i - \ell_i) - \log(u_i - x_i).
\end{align*}
We also use the definition of regularized Lewis weights for a matrix. Define $p = 1 - \frac{1}{4 \log(4m/n)}$. For every matrix $\ma \in \R^{m \times n}$, its regularized $\ell_p$-Lewis weights $w(\ma) \in \R^m_{>0}$ is defined as the solution of
\[
w(\ma) = \sigma( \mw^{\frac{1}{2} - \frac{1}{p}} \ma ) + \frac{n}{m}, \text{ where } \mw \defeq \mdiag(w(\ma)) .
\]
Given any $x \in \R^m$, define the central path weights as
\[
\tau(x) = w\Big(\mdiag(\phi''(x)^{-\frac{1}{2}}) \ma\Big).
\]
We use the following centrality definition from \cite{bll+21}.
\begin{definition}[$\epsilon$-centered point, Definition~4.7 of \cite{bll+21}]\label{def:eps_centered}
We say that $(x,s,\mu) \in \R^m \times \R^m \times \R^m_{>0}$ is $\epsilon$-centered for $\epsilon \in (0,1/80]$ if the following properties hold, where $C$ is a constant such that $C \ge 100$, and $C_{\normrm} \defeq C / (1-p)$, $\gamma \defeq \frac{\epsilon^2}{C^2 \log(Cm/\epsilon^2)}$.
\begin{enumerate}
\item (Approximate centrality) $\left\| \frac{s+\mu \tau(x)\phi'(x)}{\mu\tau(x)\sqrt{\phi''(x)}}\right\|_\infty \le \epsilon.$
\item (Dual Feasibility) There exists a vector $z \in \R^n$ with $\ma z+s=c$.
\item (Approximate Feasibility) $\| \ma^\top x - b \|_{(\ma^\top(\mt(x)\Phi''(x))^{-1}\ma)^{-1}} \le \epsilon\gamma/C_{\normrm}$.
\end{enumerate}
\end{definition}

The \textsc{PathFollowing} algorithm of \cite{bll+21} satisfies the following guarantees.
\begin{theorem}[\textsc{PathFollowing}, Lemma~4.12 and Theorem~6.1 of \cite{bll+21}]\label{thm:path_following}
Consider a linear program with $\ma\in\R^{m\times n}$, $\ell,u,c \in \R^m$, and $b \in \R^n$:
\[
\Pi:\min_{\substack{\ma^{\top}x=b\\
        \ell\le x\le u
    }
}c^{\top}x.
\]
Let $\epsilon=\frac{1}{4C\log(m/n)}$ for a large enough constant $C$. Given an $\epsilon$-centered initial point $(x^{(\init)},s^{(\init)},\mu^{(\init)})$ for $\Pi$ and a target $\mu^{(\final)}$, there exists an algorithm \textsc{PathFollowing}$(\ma, b, \ell, u, c, \mu, \mu^{(\final)})$ that returns an $\epsilon$-centered point $(x^{(\final)},s^{(\final)},\mu^{(\final)})$.

Assume there exists a $(P,c,Q)$-\textsc{HeavyHitter} (\Cref{def:heavyhitter}), a $(P,c,Q)$-\textsc{InverseMaintenance} (\Cref{def:inverse_maintenance}), and a $(P,c,Q)$-\textsc{HeavySampler} (\Cref{def:heavy_sampler}), then the total time of \textsc{PathFollowing} is
\begin{align*}
\widetilde{O}\left(\left(\sqrt{P\|c\|_1} + \sqrt{n} \Big(Q + n\cdot\max_{i}\nnz(a_i)\Big)\right) \log\frac{\mu^{\mathrm{(init)}}}{\mu^{\mathrm{(final)}}}\right).
\end{align*}
\end{theorem}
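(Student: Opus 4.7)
My plan is to invoke the IPM framework of \cite{bll+21} essentially as a black box, so the task reduces to verifying that all the preconditions of their Lemma~4.12 / Theorem~6.1 are met by the two-sparse LP setting and the three data structures at hand. The underlying algorithm is a robust short-step path-following method: it runs for $\widetilde{O}(\sqrt{n}\log(\mu^{(\init)}/\mu^{(\final)}))$ iterations, each of which decreases $\mu$ by a factor of $1 - \Theta(1/\sqrt{n})$, approximates the Newton step in the slack/primal/weight triple $(x,s,\tau)$, and then lazily pushes the triple to coordinates that have drifted substantially. To bound the per-iteration work I would (i)~call \textsc{InverseMaintenance}.\textsc{Solve} to compute the solution to a regularized linear system in $\ma^{\top}\overline{\mv}\ma$; (ii)~use \textsc{HeavyHitter}.\textsc{QueryHeavy} on the resulting direction vector to enumerate exactly those coordinates of $x$ and $s$ that change by more than the centering tolerance $\epsilon$; and (iii)~use \textsc{HeavySampler}.\textsc{Sample} to produce the stochastic matrix $\mr$ satisfying \Cref{def:valid_sampling_distribution}, so that the weighting $\tau$ (and its leverage-score proxy $\overline{\sigma}$) can be refreshed in expectation while preserving the invariants of \Cref{def:eps_centered}.

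The first substantive check is that the stability assumption required by \Cref{def:inverse_maintenance} holds along the central path. Here I would appeal directly to the weight stability bounds established in Section~4 of \cite{bll+21}: across a window of $O(\sqrt{n})$ iterations, the effective weight $v$ fed into \textsc{Solve} changes by $O(1)$ in the weighted norm $\|\cdot\|_{\overline{\sigma}}$, which is exactly what licenses the amortized cost $\widetilde{O}(Q \log\epsilon^{-1})$ per solve rather than a worst-case cost. Combined with the heavy-hitter amortization (the total $\|\cdot\|_c^2$ mass of the direction vectors across all iterations is $\widetilde{O}(\|c\|_1)$ by the standard potential argument), this yields an average work-per-iteration of
\[
\widetilde{O}\Big( \sqrt{P\|c\|_1 / n} + Q + n \cdot \max_i \nnz(a_i) \Big)
\]
after charging the $\widetilde{O}(P)$ initialization across $\sqrt{n}$ iterations via a Cauchy--Schwarz-style balancing between $\sqrt{P\|c\|_1}$ heavy-hitter budget and $\sqrt{n}\cdot Q$ solver/sampler budget, reproducing the claimed total.

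The second check is that the sampler interface in \Cref{def:heavy_sampler}, which demands a $C_{\valid}$-valid distribution for $\delta_r = \mg \ma h$ with $\|\delta_r\|_2 \le m/n$ and $\overline{\tau} \approx_{1/2} \sigma(\overline{\mt}^{1/2}\mg\ma)$, is actually supplied by our \Cref{cor:sampler}: that corollary produces a proportional sampling matrix whose expectation, variance, and matrix-approximation guarantees follow from Lemma~4.42 of \cite{bll+21} once $C_0$ is chosen as a sufficiently large $\polylog(m)$ factor. In particular, the $\overline{\tau}$-maintenance is handled by the \textsc{ScaleTau} operation exposed by \Cref{thm:heavy_hitter_two_sparse_general}, which is why we carried $\overline{\tau}$ through the whole reduction chain.

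The hardest part, and essentially the only part where two-sparsity enters non-trivially beyond the data-structure statements, is the dual-feasibility invariant in \Cref{def:eps_centered}(2): unlike the graph case of \cite{bll+21} where the constraint matrix is fixed, here $\ma$ is a general two-sparse matrix and one must argue that the kernel of $\ma^{\top}$ behaves well under the Newton step. However, \cite{bll+21}'s centrality analysis treats $\ma$ abstractly, depending only on (a)~the ability to solve in $\ma^{\top}\mv\ma$ (provided by \Cref{thm:inverse_maintenance}, which internally uses the M-matrix scaling of \Cref{thm:fastmsolver} since $\ma^{\top}\mv\ma$ is an M-matrix up to signs when $\ma$ is two-sparse), and (b)~the heavy-hitter/sampler primitives, which we have. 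So once these three data structures are instantiated with the parameters $(P,c,Q)$ from \Cref{cor:heavy_hitter_two_sparse}, \Cref{cor:sampler}, and \Cref{thm:inverse_maintenance}, the conclusion of \cite{bll+21}'s Theorem~6.1 transfers verbatim, giving the claimed runtime.
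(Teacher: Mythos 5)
Theorem~\ref{thm:path_following} is cited verbatim from \cite{bll+21} (Lemma~4.12 and Theorem~6.1 there) and the paper supplies no proof of it — it is a black-box import that is later instantiated in Theorem~\ref{thm:path_following_two_sparse}. So the right ``proof'' here is a citation, and your decision to treat it as a black box is, at the top level, the same move the paper makes.

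That said, your write-up reveals a confusion about what the theorem actually asserts. You repeatedly invoke two-sparsity of $\ma$ as if it were relevant to establishing Theorem~\ref{thm:path_following} — you call the dual-feasibility invariant ``the hardest part, and essentially the only part where two-sparsity enters non-trivially,'' and you justify the inverse-maintenance call by noting that ``$\ma^\top \mv\ma$ is an M-matrix up to signs when $\ma$ is two-sparse.'' But Theorem~\ref{thm:path_following} is completely agnostic to the structure of $\ma$: it says that \emph{if} a $(P,c,Q)$-\textsc{HeavyHitter}, a $(P,c,Q)$-\textsc{InverseMaintenance}, and a $(P,c,Q)$-\textsc{HeavySampler} exist, then \textsc{PathFollowing} runs in the stated time — for \emph{any} constraint matrix. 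The dual-feasibility invariant $\exists z: \ma z + s = c$ is preserved by the Newton update by construction and has nothing to do with sparsity. Two-sparsity only enters when one \emph{instantiates} the three data structures with concrete $(P,c,Q)$, which the paper does via Corollaries~\ref{cor:heavy_hitter_two_sparse}, \ref{cor:sampler}, and Theorem~\ref{thm:inverse_maintenance}, and then plugs into Theorem~\ref{thm:path_following} inside the proof of Theorem~\ref{thm:path_following_two_sparse}. You have effectively folded part of the proof of Theorem~\ref{thm:path_following_two_sparse} into what should be a clean citation of a matrix-agnostic IPM theorem. Separately, the derivation of the $\sqrt{P\|c\|_1}$ term is gestured at (``Cauchy--Schwarz-style balancing'') rather than carried out; in \cite{bll+21} this geometric-mean term arises from tuning the rebuild frequency of the heavy-hitter structure against the total query budget, which is a concrete optimization over a free batching parameter, not a simple averaging of initialization cost over $\sqrt{n}$ iterations. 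None of this changes the conclusion, but if you are going to sketch the argument rather than just cite it, these two points should be stated precisely.
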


Furthermore, the algorithm \textsc{PathFollowing} also guarantees that the parameters along the central path are bounded.
\begin{lemma}[Parameter changes of \textsc{PathFollowing}, Lemma 4.46 of \cite{bll+21}]\label{lem:parameter_changes}
For $\ma \in \R^{m\times n}, b \in \R^n, c \in \R^m$ and $\ell,u \in \R^m$, assume that the point $x^{(\init)}=(\ell+u)/2$ is feasible, i.e. $\ma^\top x^{(\init)} = b$. Let $W$ be the ratio of the largest to smallest entry of $\phi''(x^{(\init)})^{1/2}$, and let $W'$ be the ratio of the largest to smallest entry of $\phi''(x)^{1/2}$ encountered in \textsc{PathFollowing}. Then:
\begin{align*}
    \log W' = \widetilde{O}\left(\log W + \log(1/\mu^{\mathrm{(final)}}) + \log \|c\|_\infty\right).
\end{align*}
\end{lemma}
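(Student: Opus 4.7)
To bound $W'$ it suffices to bound $\max_i \sqrt{\phi_i''(x_i)}$ from above and $\min_i \sqrt{\phi_i''(x_i)}$ from below throughout every iterate $x$ produced by \textsc{PathFollowing}. The lower bound is essentially free; the upper bound will be the whole game, and I plan to extract it from the centrality condition by showing that large $\sqrt{\phi_i''}$ forces $|s_i|$ to be large, and then controlling $\|s\|_\infty$ uniformly over the algorithm.

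\textbf{Step 1: lower bound on $\sqrt{\phi_i''}$.} Writing $y_i = x_i - \ell_i$ and $z_i = u_i - x_i$, we have $\phi_i''(x_i) = y_i^{-2} + z_i^{-2}$ with $y_i + z_i = u_i - \ell_i$. Minimizing over $y_i + z_i$ fixed gives $\phi_i''(x_i) \ge 8/(u_i - \ell_i)^2$, so
\[
\min_i \sqrt{\phi_i''(x_i)} \;\ge\; \frac{2\sqrt 2}{\max_j (u_j - \ell_j)}.
\]
This holds pointwise along the path, independent of the iteration.

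\textbf{Step 2: conditional upper bound via centrality.} A direct computation shows that when $\min(y_i, z_i) \le \tfrac{1}{10}\max(y_i, z_i)$, one has $|\phi_i'(x_i)| \ge c_0 \sqrt{\phi_i''(x_i)}$ for an absolute constant $c_0 > 0$; and when $\min(y_i,z_i) > \tfrac{1}{10}\max(y_i,z_i)$, $\sqrt{\phi_i''(x_i)}$ is already bounded by $O(1/\min_j(u_j - \ell_j))$. In the former regime, the centrality bound $|s_i + \mu \tau_i \phi_i'(x_i)| \le \epsilon \mu \tau_i \sqrt{\phi_i''(x_i)}$ combined with $\epsilon \ll c_0$ yields $|s_i| \ge \tfrac{c_0}{2} \mu \tau_i \sqrt{\phi_i''(x_i)}$. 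Using $\tau_i \ge n/m$ from the definition of regularized Lewis weights, I get the key inequality
\[
\sqrt{\phi_i''(x_i)} \;\le\; O\!\left( \frac{m}{n\,\mu} \,|s_i| \right) \;+\; O\!\left(\frac{1}{\min_j (u_j - \ell_j)}\right).
\]

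\textbf{Step 3: uniform bound on $\|s\|_\infty$ (main obstacle).} At $x^{(\init)} = (\ell+u)/2$ we have $\phi'(x^{(\init)}) = 0$, so centrality gives $|s_i^{(\init)}| \le \epsilon \mu^{(\init)} \tau_i \cdot 2\sqrt 2/(u_i - \ell_i) = O(\mu^{(\init)} W)$. For a generic iterate, the triangle inequality applied to centrality gives $|s_i| \le (1+\epsilon) \mu \tau_i \sqrt{\phi_i''(x_i)}$, so bounding $|s_i|$ is equivalent to bounding $\sqrt{\phi_i''}$; the goal is therefore to find an a priori bound that does not circularly invoke $\sqrt{\phi_i''}$. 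The cleanest route is to use that $s - c \in \range(\ma)$, so $s_i$ is determined by $c$ together with the dual certificate $z$ with $\ma z = c - s$. Using $\ma^\top x = b$ up to error $\epsilon \gamma/C_{\normrm}$ (Def.~\ref{def:eps_centered}) and the standard perturbation bound for LP dual optimal solutions, one obtains $\|s\|_\infty \le \poly(\|c\|_\infty, W, m)\cdot (1/\mu)^{O(1)}$; the $1/\mu$ arises because as $\mu \downarrow 0$ the iterate approaches a vertex where the dual slacks can be polynomially large. I expect the main technical obstacle here to be controlling the growth of $\|s\|_\infty$ as $\mu$ shrinks: one either (i) invokes Renegar-style condition-number bounds on the LP to directly bound $\|s\|_\infty$ by $\poly(\|c\|_\infty, W, 1/\mu^{(\final)})$, or (ii) argues inductively using self-concordance that the multiplicative change of $\phi_i''(x_i)$ per step of the central-path homotopy is bounded, and that the total number of steps is $\tilde O(\sqrt n \log(\mu^{(\init)}/\mu^{(\final)}))$.

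\textbf{Step 4: assembly.} Plugging the polynomial bound on $\|s\|_\infty$ into Step 2 and combining with the lower bound from Step 1,
\[
W' \;=\; \frac{\max_i \sqrt{\phi_i''(x_i)}}{\min_i \sqrt{\phi_i''(x_i)}} \;\le\; \poly\!\left(W,\; m,\; n,\; \|c\|_\infty,\; 1/\mu^{(\final)}\right),
\]
which gives $\log W' = \widetilde O(\log W + \log \|c\|_\infty + \log(1/\mu^{(\final)}))$ as claimed, with all $\log m, \log n$ factors absorbed into the $\widetilde O(\cdot)$.
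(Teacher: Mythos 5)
Your Steps 1 and 2 are sound: the lower bound $\min_i\sqrt{\phi_i''}\ge 2\sqrt 2/\max_j(u_j-\ell_j)$ is correct, and so is the reverse-triangle-inequality argument that in the asymmetric regime centrality forces $|s_i|\gtrsim\mu\tau_i\sqrt{\phi_i''(x_i)}$, hence $\sqrt{\phi_i''(x_i)}\le O\!\big(\frac{m}{n\mu}|s_i|\big)+O(1/\min_j(u_j-\ell_j))$.

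The genuine gap is Step 3, and neither of the two routes you sketch can be made to work. Route (i), appealing to Renegar-style condition-number bounds, is unsound: the distance-to-ill-posedness of an LP is a geometric quantity that is \emph{not} bounded by any polynomial in $W$, $\|c\|_\infty$, or $m$ -- it can be arbitrarily large for problems with moderate entry sizes -- and it produces no $1/\mu^{(\final)}$ dependence whatsoever. Route (ii), tracking the per-iteration multiplicative drift of $\phi_i''$, fails quantitatively: each step changes $\phi_i''$ by a factor $1+\Theta(\epsilon)$, and the number of iterations is $\widetilde\Theta(\sqrt n\,\log(\mu^{(\init)}/\mu^{(\final)}))$, so this route can only give $\log W'=\widetilde O\big(\sqrt n\,\log(\mu^{(\init)}/\mu^{(\final)})\big)$ -- the spurious $\sqrt n$ cannot be absorbed into $\widetilde O(\cdot)$. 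You also correctly flag the circularity obstacle (centrality equivalently gives $|s_i|\lesssim\mu\tau_i\sqrt{\phi_i''}$), but neither proposed route escapes it.

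The ingredient you name but do not actually use is the one that closes the gap: $s-c\in\mathrm{range}(\ma)$ together with $\ma^\top(x-x^{(\init)})\approx 0$ implies the orthogonality $(x-x^{(\init)})^\top s=(x-x^{(\init)})^\top c$ (up to the small approximate-feasibility error). Writing $d_i:=x_i-x_i^{(\init)}=(y_i-z_i)/2$ with $y_i=x_i-\ell_i$, $z_i=u_i-x_i$, and $\phi_i'(x_i)=\frac{y_i-z_i}{y_iz_i}$, the centrality $s_i\approx-\mu\tau_i\phi_i'(x_i)$ gives $-d_i s_i\approx\frac{\mu\tau_i(y_i-z_i)^2}{2y_iz_i}\ge 0$. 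Thus each term is nonnegative and their sum equals $-d^\top c\le\|c\|_\infty\sum_i(u_i-\ell_i)$. Bounding a single term by the sum and using $\tau_i\ge n/m$ yields $\frac{(y_i-z_i)^2}{y_iz_i}\le\frac{m^2\|c\|_\infty\Xi}{n\mu}$ where $\Xi=\max_j(u_j-\ell_j)$, which forces $\min(y_i,z_i)\ge\min_j(u_j-\ell_j)\cdot\Omega\big(1/\poly(m,W,\|c\|_\infty,1/\mu^{(\final)})\big)$ and hence $\log W'=\widetilde O(\log W+\log\|c\|_\infty+\log(1/\mu^{(\final)}))$. This orthogonality step -- directly bounding the whole sum $\sum_i|d_i s_i|$ rather than first bounding $\|s\|_\infty$ -- is the essential idea missing from your Step 3, and is the actual route taken in \cite{bll+21}.
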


\paragraph{Approximately following the central path for two-sparse LPs.} Next we use the data structures for two-sparse matrices to efficiently implement the \textsc{PathFollowing} algorithm.
\begin{theorem}[Path following for two-sparse LPs]\label{thm:path_following_two_sparse}
Consider a linear program with $\ma \in \R^{m\times n}$ that has at most two non zero entries per row, %
$\ell,u,c \in \R^m$, and $b \in \R^n$:
\[
\Pi:\min_{\substack{\ma^{\top}x=b\\
        \ell\le x\le u
    }
}c^{\top}x.
\]
Let $\epsilon=\frac{1}{4C\log(m/n)}$ for a large enough constant $C$. Given an $\epsilon$-centered initial point $(x^{(\init)},s^{(\init)},\mu^{(\init)})$ for $\Pi$ and a target $\mu^{(\final)}$, the algorithm \textsc{PathFollowing}$(\ma, b, \ell, u, c, \mu^{(\init)}, \mu^{(\final)})$ returns an $\epsilon$-centered point $(x^{(\final)},s^{(\final)},\mu^{(\final)})$ with high probability in time
\begin{align*}
&~ \widetilde{O}\left( m \cdot  \left(\log(\lambda_{\min}(\ma^{\top} \ma)^{-1}) + \log(W_{\phi}) + \log(\frac{1}{\mu^{\mathrm{(final)}}}) + \log \|c\|_\infty + \log(W_{\ma})\right)^{10} \log\frac{\mu^{\mathrm{(init)}}}{\mu^{\mathrm{(final)}}} \right) \\
&~ + \widetilde{O} \left( n^{1.5} \cdot  \left(\log(\frac{\lambda_{\max}(\ma^{\top} \ma)}{\lambda_{\min}(\ma^{\top} \ma)}) + \log(W_{\phi}) + \log(\frac{1}{\mu^{\mathrm{(final)}}}) + \log \|c\|_\infty + \log(W_{\ma})\right)^2 \log\frac{\mu^{\mathrm{(init)}}}{\mu^{\mathrm{(final)}}}\right).
\end{align*}
where $W_{\ma}$ is defined in \Cref{def:W_A}, and $W$ is the ratio of the largest to smallest entry of $\phi''(x^{(\init)})^{1/2}$. 
\end{theorem}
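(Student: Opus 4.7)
The plan is to invoke the generic path-following guarantee of Theorem \ref{thm:path_following}, instantiated with the three two-sparse data structures built earlier in this section: Corollary \ref{cor:heavy_hitter_two_sparse} for \textsc{HeavyHitter}, Corollary \ref{cor:sampler} for \textsc{HeavySampler}, and Theorem \ref{thm:inverse_maintenance} for \textsc{InverseMaintenance}. Since every row of $\ma$ has at most two nonzeros, $\max_i \nnz(a_i) = 2$, so the proof reduces to plugging the $(P,c,Q)$ parameters of the three data structures into the runtime formula of Theorem \ref{thm:path_following} and uniformly bounding the polylogarithmic factors over the entire execution of \textsc{PathFollowing}.

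First I would apply Lemma \ref{lem:parameter_changes} to bound the scale ratio $W_g$ that appears in the complexities of Corollaries \ref{cor:heavy_hitter_two_sparse} and \ref{cor:sampler}. The IPM invokes \textsc{Scale} on the data structures with weights proportional to $\phi''(x)^{1/2}$, whose entrywise ratio is exactly the quantity $W'$ of Lemma \ref{lem:parameter_changes}, giving $\log W_g = \wt{O}(\log W_\phi + \log(1/\mu^{(\final)}) + \log \|c\|_\infty)$. Likewise, the condition number $\kappa = \lambda_{\max}(\ma^\top \mv \ma)/\lambda_{\min}(\ma^\top \mv \ma)$ that enters Theorem \ref{thm:inverse_maintenance} equals $\lambda_{\max}(\ma^\top \ma)/\lambda_{\min}(\ma^\top \ma)$ up to a $W_g^{O(1)}$ factor on each side, so $\log \kappa$ is bounded by $\log(\lambda_{\max}(\ma^\top \ma)/\lambda_{\min}(\ma^\top \ma))$ plus the same logarithmic quantities. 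Denote by $L$ the bracketed factor that appears in the $m$-term of the theorem statement and by $L'$ the one in the $n^{1.5}$-term.

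Next I would substitute into Theorem \ref{thm:path_following}. Corollaries \ref{cor:heavy_hitter_two_sparse} and \ref{cor:sampler} give $P = \wt{O}(m L^{10})$ and $\|c\|_1 = \wt{O}(m L^8)$, so $\sqrt{P \|c\|_1} = \wt{O}(m L^{10})$. Corollary \ref{cor:sampler} provides $Q = \wt{O}((m/\sqrt{n} + n) L^2)$, hence $\sqrt{n}\, Q = \wt{O}((m + n^{1.5}) L^2)$, and the term $\sqrt{n} \cdot n \cdot \max_i \nnz(a_i) = O(n^{1.5})$ is absorbed. The inverse maintenance contributes $\sqrt{n} \cdot \wt{O}(n \log \kappa) = \wt{O}(n^{1.5} (L')^2)$, which is likewise absorbed. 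Summing these contributions and multiplying by the iteration count $\log(\mu^{(\init)}/\mu^{(\final)})$ from Theorem \ref{thm:path_following} yields the claimed $\wt{O}(m L^{10} + n^{1.5} (L')^2) \cdot \log(\mu^{(\init)}/\mu^{(\final)})$ runtime.

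The one delicate point I foresee is that the $\lambda_{\min}(\ma^\top \ma)^{-1}$ factor appearing in Corollary \ref{cor:heavy_hitter_two_sparse} is defined with respect to the un-scaled $\ma$, whereas each IPM query effectively acts on the scaled matrix $\mg \ma$. However, Corollary \ref{cor:heavy_hitter_two_sparse} already absorbs such scalings via its $W_g$ parameter: internally one works with $\ov{\ma} = \ov{\mg} \ma$ with $\ov{g} \geq 1$, so $\lambda_1(\ov{\ma}^\top \ov{\ma}) \geq \lambda_1(\ma^\top \ma)$ and the $\log \lambda_{\min}^{-1}$ contribution is no worse than $\log \lambda_{\min}(\ma^\top \ma)^{-1}$, which is precisely the quantity in the theorem statement. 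With this verified, the runtime computation of the previous paragraph goes through directly.
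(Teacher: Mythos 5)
Your proposal is correct and follows essentially the same route as the paper's proof: invoke \Cref{thm:path_following} with the two-sparse data structures of \Cref{cor:heavy_hitter_two_sparse}, \Cref{cor:sampler}, and \Cref{thm:inverse_maintenance}, then use \Cref{lem:parameter_changes} to control the scale ratio $W_g$ and the condition number $\kappa$ by the stated logarithmic quantities before substituting into the runtime formula. Your extra remark about $\lambda_{\min}(\ov{\ma}^\top\ov{\ma})$ versus $\lambda_{\min}(\ma^\top\ma)$ is a reasonable sanity check that the paper handles implicitly inside the proof of \Cref{cor:heavy_hitter_two_sparse}, but it does not change the argument.
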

\begin{proof}
Using \Cref{thm:path_following}, we can implement the \textsc{PathFollowing} algorithm using the three data structures for two-sparse matrices: heavy hitter (\Cref{thm:heavy_hitter_general}), sampler (\Cref{cor:sampler}), and inverse maintenance (\Cref{thm:inverse_maintenance}), and this gives a runtime of
\begin{align*}
\widetilde{O}\left( \Big(m \cdot \log^2(W_g \lambda_{\min}(\ma^{\top} \ma)^{-1}) \log^{8}(W_g W_{\ma}) + n^{1.5} \cdot \big(\log^2(W_g W_{\ma}) + \log(\kappa) \big) \Big) \log\frac{\mu^{\mathrm{(init)}}}{\mu^{\mathrm{(final)}}}\right),
\end{align*}
where $W_g$ is the ratio of the largest to smallest non-zero entry in all scaling vectors $g$ of the data structures, and $\kappa = \frac{\lambda_{\max}(\ma^{\top} \mv \ma)}{\lambda_{\min}(\ma^{\top} \mv \ma)}$, where $\mv$ is the scaling matrix of \textsc{InverseMaintenance}.

Let $W'$ be the ratio of the largest to smallest entry of $\phi''(x)^{1/2}$ encountered in \textsc{PathFollowing}, note that $W_g \leq O(W')$, and the ratio of the largest to smallest entry of $\mv$ of \textsc{InverseMaintenance} is also at most $O(W')$, so 
\begin{align*}
\kappa = \frac{\lambda_{\max}(\ma^{\top} \mv \ma)}{\lambda_{\min}(\ma^{\top} \mv \ma)} \leq O\left(\frac{\lambda_{\max}(\ma^{\top} \ma) \max_i v_i}{\lambda_{\max}(\ma^{\top} \ma) \min_i v_i}\right) \leq O\left(\frac{\lambda_{\max}(\ma^{\top} \ma)}{\lambda_{\min}(\ma^{\top} \ma)}\cdot W'\right).
\end{align*}
\Cref{lem:parameter_changes} implies that $\log W' = \widetilde{O}\left(\log(W_{\phi}) + \log(1/\mu^{\mathrm{(final)}}) + \log \|c\|_\infty\right)$, where $W_{\phi}$ is the ratio of the largest to smallest entry of $\phi''(x^{(\init)})^{1/2}$.  

So the total runtime is bounded by
\begin{align*}
&~ \widetilde{O}\left( m \cdot  \left(\log(\lambda_{\min}(\ma^{\top} \ma)^{-1}) + \log(W_{\phi}) + \log(\frac{1}{\mu^{\mathrm{(final)}}}) + \log \|c\|_\infty + \log(W_{\ma})\right)^{10} \log\frac{\mu^{\mathrm{(init)}}}{\mu^{\mathrm{(final)}}} \right) \\
&~ + \widetilde{O} \left( n^{1.5} \cdot  \left(\log(\frac{\lambda_{\max}(\ma^{\top} \ma)}{\lambda_{\min}(\ma^{\top} \ma)}) + \log(W_{\phi}) + \log(\frac{1}{\mu^{\mathrm{(final)}}}) + \log \|c\|_\infty + \log(W_{\ma})\right)^2 \log\frac{\mu^{\mathrm{(init)}}}{\mu^{\mathrm{(final)}}}\right). \qedhere
\end{align*}
\end{proof}

\subsection{Initial and Final Point of LP}\label{sec:initialization}
In this section we first show how to obtain an initial central point as required by the \textsc{PathFollowing} algorithm  of \Cref{thm:path_following_two_sparse}.

\begin{lemma}[LP initialization, Section 8.2 of \cite{bll+21}]\label{lem:LP_initialization}
Given any matrix $\ma \in \R^{m \times n}$, any vector $b \in \R^n$, any vector $c \in \R^n$, any lower bound and upper bound vectors $\ell \leq u \in \R^m$, and any accuracy parameter $\delta \in (0,0.1)$, consider the following linear program:
\begin{align}\label{eq:LP}
\min_{\substack{\ma^{\top} x = b \\ \ell \leq x \leq u}} c^{\top} x.
\end{align}
Let $W \defeq \max\left\{\|c\|_{\infty}, \|\ma\|_{\infty}, \|b\|_{\infty}, \|u\|_{\infty}, \|\ell\|_{\infty}, \frac{\max_{i \in [m]} (u_i - \ell_i)}{\min_{i \in [m]} (u_i - \ell_i)}\right\}$, $\delta' \defeq \frac{\delta}{10 m W^2}\cdot \min\{1, \|c\|_1\}$, $\Xi \defeq \max_{i \in [m]} |u_i - \ell_i|$, and define the initial points by %
\begin{align*}
x^{(\init)} \defeq (\ell + u) / 2, ~~ \beta \defeq \|b - \ma^{\top} x^{(\init)}\|_{\infty} / \Xi + 1, ~~ \wt{x}^{(\init)} \defeq \frac{1}{\beta} |b - \ma^{\top} x^{(\init)}|.
\end{align*}
Finally, define a vector $\sigma = \mathrm{sign}(b - \ma^{\top} x^{(\init)}) \in \R^n$.

There exists a modified linear program $\min_{\substack{\ov{\ma}^{\top} x = \ov{b} \\ \ov{\ell} \leq x \leq \ov{u}}} \ov{c}^{\top} x$ with %
\begin{align*}
\ov{\ma} = 
\begin{bmatrix}
\ma  \\
\beta \cdot \mdiag(\sigma) 
\end{bmatrix},~ 
\ov{b} = b,~ 
\ov{c} = 
\begin{bmatrix}
c \\
\frac{2 \|c\|_1}{\delta'} \cdot \allone_n
\end{bmatrix},~ 
\ov{u} = 
\begin{bmatrix}
u \\
2 \wt{x}^{(\init)}
\end{bmatrix}, ~
\ov{\ell} = 
\begin{bmatrix}
\ell \\
0
\end{bmatrix},
\end{align*}
that satisfies the following guarantees:
\begin{enumerate}
\item Define $\ov{x}^{(\init)} \defeq \begin{bmatrix}
x^{(\init)} \\ \wt{x}^{(\init)}
\end{bmatrix}$. The point $(\ov{x}^{(\init)}, \ov{c}, \mu)$ is $\epsilon$-centered for $\mu = \frac{4 m \|c\|_1 \Xi}{\epsilon \delta'}$.
\item Assume the linear program \eqref{eq:LP} has a feasible solution. For any $\ov{x}^{(\final)} \defeq \begin{bmatrix}
x^{(\final)} \\ \wt{x}^{(\final)}
\end{bmatrix}$ that satisfies $\ov{\ma}^{\top} x^{(\final)} = \ov{b}$, $\ov{\ell} \leq \ov{x}^{(\final)} \leq \ov{u}$, and $\ov{c}^{\top} \ov{x}^{(\final)} \leq \min_{\substack{\ov{\ma}^{\top} x = \ov{b} \\ \ov{\ell} \leq x \leq \ov{u}}} \ov{c}^{\top} x + \delta$, the point $x^{(\final)}$ satisfies
\begin{align*}
c^{\top} x^{(\final)} \leq \min_{\substack{\ma^{\top} x = b \\ \ell \leq x \leq u}} c^{\top} x + \delta, ~~ \text{and} ~~ \|\ma^{\top} x^{(\final)} - b\|_{\infty} \leq \delta.
\end{align*}
\item $\ov{\ma}$ satisfies $\sigma_{\min}(\ov{\ma}) \geq 1$.
\end{enumerate}
\end{lemma}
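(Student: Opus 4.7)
The plan is to verify the three claims separately, leveraging the structural fact that $\ov{x}^{(\init)}$ sits at the exact midpoint of every box constraint of the modified LP, while the auxiliary block $\wt{x}^{(\init)} = \tfrac{1}{\beta}|b - \ma^\top x^{(\init)}|$ is chosen precisely so that $\beta\mdiag(\sigma)\wt{x}^{(\init)} = b - \ma^\top x^{(\init)}$, making $\ov{\ma}^\top \ov{x}^{(\init)} = b$ hold exactly. This midpoint placement is what forces $\phi'(\ov{x}^{(\init)}) = \vzero$ and drives the centrality verification.

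Part 3 will be the easiest: since $\sigma \in \{\pm 1\}^n$, $\mdiag(\sigma)^\top\mdiag(\sigma) = \mI_n$, so $\ov{\ma}^\top\ov{\ma} = \ma^\top\ma + \beta^2\mI_n \succeq \mI_n$ using $\beta \geq 1$. For Part 2 I would run a standard $\ell_1$-penalty argument: for any feasible $x^*$ of the original LP the lift $(x^*,\vzero)$ is feasible for the modified LP at cost $c^\top x^*$, so a $\delta$-optimal $\ov{x}^{(\final)}$ satisfies
\[
c^\top x^{(\final)} + \tfrac{2\|c\|_1}{\delta'}\,\allone^\top \wt{x}^{(\final)} \leq c^\top x^* + \delta.
\]
Taking $x^*$ to be the optimum of the original LP yields the cost bound directly (since $\wt{x}^{(\final)}\geq \vzero$). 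Rearranging instead gives $\|\wt{x}^{(\final)}\|_\infty \leq \allone^\top \wt{x}^{(\final)} = O(\delta' W)$, using the crude bound $|c^\top(x^* - x^{(\final)})| \leq 2\|c\|_1 W$. Combining with the identity $\ma^\top x^{(\final)} - b = -\beta\mdiag(\sigma)\wt{x}^{(\final)}$ and the crude bound $\beta = O(mW^2/\Xi)$ (using two-sparsity of $\ma$, $\|\ma\|_\infty \leq W$, and $\|x^{(\init)}\|_\infty \leq W$), the chosen $\delta' = \tfrac{\delta}{10mW^2}\min\{1,\|c\|_1\}$ is tight enough to yield $\|\ma^\top x^{(\final)} - b\|_\infty \leq \delta$.

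The main work is Part 1. I would set the dual variables $z=\vzero$ and $s = \ov{c}$, making dual feasibility $\ov{\ma}z + s = \ov{c}$ trivial and approximate feasibility exact by the computation above. Because $\ov{x}^{(\init)}$ is coordinate-wise at the midpoint, $\phi'(\ov{x}^{(\init)})=\vzero$, so the approximate-centrality condition of \Cref{def:eps_centered} reduces to
\[
\Big\|\tfrac{\ov{c}}{\mu\,\tau(\ov{x}^{(\init)})\sqrt{\phi''(\ov{x}^{(\init)})}}\Big\|_\infty \leq \epsilon.
\]
For the first $m$ coordinates, at the midpoint $\sqrt{\phi''_i} = 2\sqrt{2}/(u_i-\ell_i) \geq 2\sqrt{2}/\Xi$, contributing a term $O(\|c\|_\infty \Xi/(\mu \tau_i))$. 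For the $n$ auxiliary coordinates, $\sqrt{\phi''_{m+j}} = \sqrt{2}/\wt{x}^{(\init)}_j$ while $s_{m+j} = 2\|c\|_1/\delta'$, giving $O(\|c\|_1 \wt{x}^{(\init)}_j/(\delta' \mu \tau_{m+j})) \leq O(\|c\|_1 \Xi / (\delta' \mu \tau_{m+j}))$ since $\wt{x}^{(\init)}_j \leq \Xi/\beta \leq \Xi$. The main obstacle will be handling the Lewis-weight factor: invoking the definitional lower bound $\tau \geq n/(m+n) = \Omega(n/m)$ for regularized $\ell_p$-Lewis weights, the auxiliary block dominates (since $\|c\|_1/\delta' \gg \|c\|_\infty$), and plugging in $\mu = 4m\|c\|_1\Xi/(\epsilon\delta')$ together with a careful accounting of the absolute constants matches all $m+n$ terms to $\epsilon$, completing centrality.
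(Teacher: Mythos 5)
Your proposal follows the same route as the paper's proof: Part~3 via $\ov{\ma}^\top\ov{\ma}=\ma^\top\ma+\beta^2\mI_n\succeq\mI_n$, Part~2 via lifting a feasible $x^*$ to $(x^*,\vzero)$ and using the $\ell_1$-penalty to extract both the cost gap and a bound on $\allone^\top\wt{x}^{(\final)}$, and Part~1 via the midpoint argument forcing $\phi'(\ov{x}^{(\init)})=\vzero$ together with the regularized Lewis-weight lower bound and the choice of $\mu$. Two small calibration notes. In Part~2, the paper bounds $|c^\top(x^*-x^{(\final)})|\le\sum_i|c_i|(u_i-\ell_i)\le\|c\|_1\Xi$ rather than the coarser $2\|c\|_1 W$ you use; since you then multiply by $\beta=O(mW^2/\Xi)$, the $\Xi$ in the numerator is what cancels the $\Xi^{-1}$ in $\beta$, so you should keep the $\Xi$-version of this bound or the factors will not close. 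Also, the bound $\beta=O(mW^2/\Xi)$ does not rely on two-sparsity of $\ma$ (the lemma is stated for general $\ma$, and a column of a two-sparse matrix can still have $m$ nonzeros); the generic $\|\ma^\top x^{(\init)}\|_\infty\le m\|\ma\|_\infty\|x^{(\init)}\|_\infty$ is all that is used.
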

Our lemma differs from the initialization techniques of \cite{bll+21} in that we also need to lower bound the least singular value of $\ov{\ma}$. For completeness we include a proof in \Cref{sec:missing_proofs}. %

Next we restate the following lemma from \cite{bll+21} that shows how to round an $\epsilon$-centered point produced by the $\textsc{PathFollowing}$ algorithm into an approximate LP solution.
\begin{lemma}[Final point, Lemma~4.11 of \cite{bll+21}]\label{lem:final_point} 
Given an $\epsilon$-centered point $(x,s,\mu)$ where $\epsilon\le1/80$, we can compute a point $(x^{(\final)},s^{(\final)})$ satisfying 
\begin{enumerate}
	\item $\ma^{\top}x^{(\final)}=b$, $s^{(\final)}=\ma y+c$ for some $y$. 
	\item $c^{\top}x^{(\final)}-\min_{\substack{\ma^{\top}x=b\\
			\ell \le x \le u
		}
	}c^{\top}x\lesssim n\mu.$ 
\end{enumerate}
The algorithm takes $O(\nnz(\ma))$ time plus the time for solving
a linear system on $\ma^{\top}\md\ma$ where $\md$ is a diagonal
matrix.
\end{lemma}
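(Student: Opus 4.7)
The plan is to construct $x^{(\final)}$ by a single Newton-style projection of $x$ onto the affine set $\{x : \ma^\top x = b\}$, take $s^{(\final)} \defeq s$ with $y$ read off from the dual witness in the $\epsilon$-centered condition, and bound the duality gap by a standard IPM calculation that uses $\|\tau(x)\|_1 = O(n)$ and self-concordance of the two-sided log barrier.

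First I would compute the residual $r \defeq b - \ma^\top x$ and set
\[
\delta_x \defeq (\mt(x)\Phi''(x))^{-1} \ma \cdot y_r, \qquad y_r \defeq (\ma^\top(\mt(x)\Phi''(x))^{-1}\ma)^{-1} r,
\]
defining $x^{(\final)} \defeq x + \delta_x$. This gives $\ma^\top x^{(\final)} = \ma^\top x + r = b$ by direct computation. By the approximate feasibility clause of \Cref{def:eps_centered},
\[
\|\delta_x\|_{\mt(x)\Phi''(x)}^2 = r^\top (\ma^\top (\mt\Phi'')^{-1} \ma)^{-1} r \leq \epsilon^2 \gamma^2/C_{\normrm}^2,
\]
which is small enough (by self-concordance) to guarantee $x^{(\final)}$ stays strictly inside $[\ell, u]$. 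The whole step costs $O(\nnz(\ma))$ matrix-vector work plus a single linear solve on $\ma^\top \md \ma$ with $\md = (\mt\Phi'')^{-1}$, matching the claimed runtime.

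For the dual, \Cref{def:eps_centered} supplies $z \in \R^n$ with $\ma z + s = c$; setting $s^{(\final)} \defeq s$ and $y \defeq -z$ immediately gives $s^{(\final)} = \ma y + c$. The vector $z$ is either maintained by $\textsc{PathFollowing}$ or recovered from $s$ by one more solve on the same system. To bound the cost, fix any feasible $x^*$ with $\ma^\top x^* = b$ and $\ell \le x^* \le u$; since $\ma^\top(x^{(\final)} - x^*) = 0$,
\[
c^\top x^{(\final)} - c^\top x^* = (s + \ma z)^\top(x^{(\final)} - x^*) = s^\top(x - x^*) + s^\top \delta_x.
\]
Writing $s = -\mu \mt(x) \phi'(x) + \Delta$ with $|\Delta_i| \leq \epsilon \mu \tau_i \sqrt{\phi''_i(x_i)}$, the main term expands to $-\mu \sum_i \tau_i \phi'_i(x_i)(x_i - x^*_i)$; since both $x, x^* \in [\ell, u]$ the log-barrier identity gives $|\phi'_i(x_i)(x_i - x^*_i)| \leq 2$, so this sum is at most $2\mu\|\tau(x)\|_1 = O(n\mu)$. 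The error term $\Delta^\top(x - x^*)$ and the projection term $s^\top \delta_x$ are both estimated by Cauchy-Schwarz using $\|s\|^2_{(\mt\Phi'')^{-1}} = O(n\mu^2)$, which follows from the bound $(\phi'_i)^2 \leq \phi''_i$ together with approximate centrality, and $\|\delta_x\|_{\mt\Phi''} \leq \epsilon \gamma/C_{\normrm}$; each contributes $O(n\mu)$.

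The main obstacle is bookkeeping: verifying that the small correction $\delta_x$ moves each coordinate only by a tiny fraction of its distance to $\ell_i$ and $u_i$, so that $x^{(\final)}$ remains strictly feasible and so that the bounds on $\phi'_i, \phi''_i$ at $x$ transfer over to $x^{(\final)}$ for the final evaluation. Once this local control is in place, the remaining estimates are routine self-concordance calculations, and the single linear solve together with $O(\nnz(\ma))$ matrix-vector work accounts for the stated complexity.
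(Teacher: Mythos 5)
The paper restates Lemma~4.11 of \cite{bll+21} without proof, so there is no in-paper argument to compare against. On its own terms, your overall scheme---Newton projection onto $\{x : \ma^\top x = b\}$, dual read off from the witness $z$ in \Cref{def:eps_centered}, gap via $s^\top(x^{(\final)} - x^*) = s^\top(x - x^*) + s^\top\delta_x$---is the right structure, and the projection identity, the bound $\|\delta_x\|_{\mt\Phi''} \leq \epsilon\gamma/C_{\normrm}$ from the approximate-feasibility clause, and the Cauchy--Schwarz estimate on $s^\top\delta_x$ (using $\|s\|^2_{(\mt\Phi'')^{-1}} = O(n\mu^2)$, which indeed follows from $(\phi'_i)^2 \leq \phi''_i$ and approximate centrality and $\|\tau\|_1 = O(n)$) all check out.

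The substantive gap is the bound on $\Delta^\top(x - x^*)$. Cauchy--Schwarz in the local norm gives $|\Delta^\top(x - x^*)| \leq \|\Delta\|_{(\mt\Phi'')^{-1}}\,\|x - x^*\|_{\mt\Phi''}$, and the second factor is \emph{not} controlled: for a coordinate with $x_i$ close to $\ell_i$ but $x^*_i$ near $u_i$, one has $\tau_i\,\phi''_i(x_i)\,(x_i - x^*_i)^2 \approx \tau_i\,(u_i - \ell_i)^2/(x_i - \ell_i)^2$, which can be arbitrarily large. The same obstruction kills any pointwise claim of the form $\sqrt{\phi''_i(x_i)}\,|x_i - x^*_i| = O(1)$, so neither the Cauchy--Schwarz route nor the triangle-inequality route closes. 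The missing idea is a coordinate-wise argument that exploits the \emph{sign} of $s_i$ rather than splitting off $\Delta$: if $s_i \geq 0$, bound $s_i(x_i - x^*_i) \leq s_i(x_i - \ell_i)$ and then observe that approximate centrality and $s_i \geq 0$ together force $(x_i - \ell_i)/(u_i - x_i) \leq (1+\epsilon)/(1-\epsilon)$, so that $\sqrt{\phi''_i(x_i)}\,(x_i - \ell_i) = O(1)$ and hence $s_i(x_i - \ell_i) = -\mu\tau_i\phi'_i(x_i)(x_i - \ell_i) + \Delta_i(x_i - \ell_i) = O(\mu\tau_i)$; the symmetric argument via $u_i - x_i$ handles $s_i < 0$, and summing gives $s^\top(x - x^*) = O(\mu\|\tau(x)\|_1) = O(n\mu)$ directly. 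Relatedly, the ``log-barrier identity'' $|\phi'_i(x_i)(x_i - x^*_i)| \leq 2$ is only one-sided: one has $\phi'_i(x_i)(x^*_i - x_i) = 2 - \frac{x^*_i - \ell_i}{x_i - \ell_i} - \frac{u_i - x^*_i}{u_i - x_i} \leq 2$, but the reverse inequality fails; you only use the valid direction, so your conclusion for the main term survives, but the stated two-sided bound is wrong.

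One more remark: what you flagged as ``the main obstacle''---that $\delta_x$ must move each coordinate only a tiny fraction of its boundary distance so $x^{(\final)}$ stays strictly inside $[\ell,u]$---is not required. Item~1 of the lemma asks only that $\ma^\top x^{(\final)} = b$ and that $s^{(\final)}$ be dual feasible; item~2 uses $\ma^\top(x^{(\final)} - x^*) = 0$, and every barrier quantity in the gap estimate is evaluated at the strictly interior point $x$, never at $x^{(\final)}$. This is fortunate, because $\|\delta_x\|_{\mt\Phi''} \leq \epsilon\gamma/C_{\normrm}$ does not yield a coordinate-wise smallness bound once $\tau_i$ is allowed to be as small as $n/m$.
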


\subsection{Proof of Main Theorems}\label{sec:proof_main_theorems}

In this section we prove our main results \Cref{thm:two_sparse_LP}, \Cref{thm:generlize_min_cost_flow}, and \Cref{thm:generlize_maxflow}. %
\begin{proof}[Proof of \Cref{thm:two_sparse_LP}]

Given any LP $\min_{\substack{\ma^{\top} x = b \\ \ell \leq x \leq u}} c^{\top} x$, we use \Cref{lem:LP_initialization} to construct a modified LP instance with $(\ov{\ma}, \ov{b}, \ov{c}, \ov{\ell}, \ov{u})$ where $\ov{\ma} = \begin{bmatrix}
\ma \\ \beta \mdiag(\sigma)
\end{bmatrix}$ for $\beta > 1$ and $\mdiag(\sigma)$ a diagonal matrix with $1$s and $-1$s on the diagonal, together with an initial point $(\ov{x}^{(\init)}, \ov{c}, \mu^{(\init)})$ that is $\epsilon$-centered. The new inputs are bounded by
\begin{align*}
& \mu^{(\init)} \leq \poly(m) \cdot \poly(W) \cdot \delta^{-1}, ~~~
\beta \leq \poly(m) \cdot \poly(W), \\
& \lambda_{\min}(\ov{\ma}^{\top} \ov{\ma}) \geq 1, ~~~
\lambda_{\max}(\ov{\ma}^{\top} \ov{\ma}) \leq \beta^2 \|\ma\|_2^2 \leq \poly(m) \cdot \poly(W). 
\end{align*}
And since $x^{(\init)} = (\ell+u) / 2$ and $\phi_i(x_i) = - \log(x_i - \ell_i) - \log(u_i - x_i)$, the entries of $\phi''(x^{(\init)})^{1/2}$ are $\frac{8}{(u_i - \ell_i)^2}$ $W_{\phi}$, so 
\[
W_{\phi} \leq \poly\left(\frac{\max_i(u_i-\ell_i)}{\min_i(u_i-\ell_i)}\right) \leq \poly(W).
\]

We set $\mu^{(\final)} = \frac{\delta}{\poly(m)}$, and using \Cref{thm:path_following_two_sparse}, we can solve the modified LP $(\ov{\ma}, \ov{b}, \ov{c}, \ov{\ell}, \ov{u})$ to obtain a $\epsilon$-centered point $(x^{(\final)}, s^{(\final)}, \mu^{(\final)})$ with high probability in time
\[
\wt{O}\left( (m + n^{1.5}) \cdot \log^{10}(\frac{W}{\delta})\right).
\]
Then using \Cref{lem:final_point} we convert it into a solution $x^{(\final)}$ that satisfies 
\begin{equation*}
\|\ma^{\top}x^{(\final)}-b\|_{\infty}\le\delta\enspace\text{ and }\enspace\ell\le x^{(\final)}\le u \enspace\text{ and }\enspace c^{\top}x^{(\final)}\le\min_{\substack{\ma^{\top}x=b\\
        \ell \le x \le u
    }
}c^{\top}x+\delta. \qedhere
\end{equation*}
\end{proof}

Next we prove \Cref{thm:generlize_min_cost_flow}, which directly follows from \Cref{thm:two_sparse_LP}.
\begin{proof}[Proof of \Cref{thm:generlize_min_cost_flow}]
Given any lossy graph $G = (V,E,\gamma)$, costs $c \in \R^E$, capacities $u\in\R_{\geq 0}^{E}$, demands $d\in\R^{V}$, and $\delta>0$, our goal is to solve the LP
\begin{align*}
\min_{\substack{\mb_G^{\top} f = d \\ \allzero \leq f \leq u}} c^{\top} f.
\end{align*}
As a corollary of \Cref{thm:two_sparse_LP}, we immediately have that we can solve this LP to $\delta$ error in time 
\begin{align*}
\wt{O}\left( (m + n^{1.5}) \cdot \log^{10}(\frac{W}{\delta})\right),
\end{align*}
where $W \defeq \max(W_{\mb_G}, \|c\|_{\infty},\|d\|_{\infty},\|u\|_{\infty},\frac{\max_{e}u_{e}}{\min_{e}u_{e}})$, $W_{\mb_G} \defeq \max_{e,i:(\mb_G)_{e,i} \neq 0}(\max(|(\mb_G)_{e,i}|, \frac{1}{|(\mb_G)_{e,i}|}))$. 

Note that by definition $W_{\mb_G} \leq \max(\|\gamma\|_{\infty}, \|\gamma^{-1}\|_{\infty})$, and this upper bound together with $\mb_G^{\top} f = d$ implies that $\|d\|_{\infty} \leq O(\min\{1,\|\gamma\|_{\infty}\} \cdot \|u\|_{\infty})$. So we can define 
\[
W' \defeq \max\left(\|\gamma\|_{\infty}, \|\gamma^{-1}\|_{\infty}, \|c\|_{\infty},\|u\|_{\infty},\frac{\max_e u_e}{\min_e u_e}\right)
\]
where $W' \leq O(W^2)$, and the algorithm runs in time $\wt{O}\left( (m + n^{1.5}) \cdot \log^{10}(\frac{W'}{\delta})\right)$.
\end{proof}

Next we prove \Cref{thm:generlize_maxflow}, which also directly follows from \Cref{thm:two_sparse_LP}, and we use the technique of \cite{ds08} to make the flow feasible for the special case of lossy maxflow.
\begin{proof}[Proof of \Cref{thm:generlize_maxflow}]
Given any lossy graph $G = (V,E,\gamma)$, a source vertex $s \in V$ and a sink vertex $t \in V$, capacities $u\in\R_{\geq 0}^{E}$, and $\delta>0$, our goal is to solve the LP
\[
\min_{\substack{\mb_{G \backslash \{s,t\}}^{\top} f = 0 \\ \allzero \leq f \leq u}} (\mb_G^{\top} f)_t.
\]
As a corollary of \Cref{thm:two_sparse_LP}, this LP can be solved to $\delta$ error in time 
\begin{align*}
\wt{O}\left( (m + n^{1.5}) \cdot \log^{10}(\frac{W}{\delta})\right),
\end{align*}
where $W \defeq \max(W_{\mb_G},\|u\|_{\infty},\frac{\max_{e}u_{e}}{\min_{e}u_{e}})$, and $W_{\mb_G} \defeq \max_{e,i:(\mb_G)_{e,i} \neq 0}(\max(|(\mb_G)_{e,i}|, \frac{1}{|(\mb_G)_{e,i}|}))$, and by definition $W_{\mb_G} \leq \max(\|\gamma\|_{\infty}, \|\gamma^{-1}\|_{\infty})$.

In the special case of lossy maxflow with $\gamma \leq \allone_m$, we further apply Lemma~3.2 of \cite{ds08} to convert this $\delta$-approximately feasible flow into an exactly feasible flow in $\wt{O}(m)$ additional time. 
\end{proof}

\section{Conclusion and Open Problems}

In this paper we provide a randomized $\wt{O}( (m + n^{1.5}) \cdot \polylog(\frac{W}{\delta}))$ time algorithm for solving two-sparse LPs. As a corollary, we obtain nearly-linear time algorithms for the generalized maximum flow and generalized minimum cost flow problems on moderately dense graphs where $m \geq n^{1.5}$. 

Perhaps the most immediate open problem is whether our running times can be further improved. A key open problem is to close the gap between our running times for lossy flow and state-of-the-art running times for maximum flow, namely, improving the $\wt{O}( (m + n^{1.5}) \cdot \polylog(\frac{W}{\delta}))$ runtime to an almost-linear runtime of $m^{1+o(1)} \cdot \polylog(\frac{W}{\delta})$. 

Another natural direction for future research is to further improve our dependence on the conditioning of the problem, e.g., improving the exponent in the $\polylog(\frac{W}{\delta})$ factors, and expressing the dependence to scale-invariant condition measures rather than the current $W_{\ma} = \max_{i,j:\ma_{i,j} \neq 0}(\max(|\ma_{i,j}|, \frac{1}{|\ma_{i,j}|}))$.

Finally, investigating further spectral characterizations of lossy graphs and finding further applications of the spectral results proved in this paper are interesting directions for future work.

\section*{Acknowledgements}

Thank you to the reviewers for their helpful anonymous feedback. 
This research was done in part during the authors' visit to the Simons Institute for the Theory of Computing at UC Berkeley. 
Shunhua Jiang is supported by the ERC Starting Grant \#101039914. Lawrence Li is supported by grants awarded to Sushant Sachdeva --- an NSERC Discovery grant RGPIN-2025-06976 and an Ontario Early Researcher Award (ERA) ER21-16-283. 
Aaron Sidford was funded in part by a Microsoft Research Faculty Fellowship, NSF CAREER Award CCF-1844855, NSF Grant CCF1955039, and a PayPal research award. %

\addcontentsline{toc}{section}{References}
\bibliographystyle{alpha}
\bibliography{ref}

\newcommand{\etalchar}[1]{$^{#1}$}
\begin{thebibliography}{VDBCK{\etalchar{+}}24}

\bibitem[AALOG18]{aalg18}
Vedat~Levi Alev, Nima Anari, Lap~Chi Lau, and Shayan Oveis~Gharan.
\newblock {Graph Clustering using Effective Resistance}.
\newblock In Anna~R. Karlin, editor, {\em 9th Innovations in Theoretical
  Computer Science Conference (ITCS 2018)}, volume~94 of {\em Leibniz
  International Proceedings in Informatics (LIPIcs)}, pages 41:1--41:16,
  Dagstuhl, Germany, 2018. Schloss Dagstuhl -- Leibniz-Zentrum f{\"u}r
  Informatik.

\bibitem[AJSS19]{ajss19}
AmirMahdi Ahmadinejad, Arun Jambulapati, Amin Saberi, and Aaron Sidford.
\newblock Perron-frobenius theory in nearly linear time: positive eigenvectors,
  m-matrices, graph kernels, and other applications.
\newblock In {\em Proceedings of the Thirtieth Annual ACM-SIAM Symposium on
  Discrete Algorithms}, SODA '19, page 1387–1404, USA, 2019. Society for
  Industrial and Applied Mathematics.

\bibitem[AMO{\etalchar{+}}93]{amo93}
Ravindra~K Ahuja, Thomas~L Magnanti, James~B Orlin, et~al.
\newblock {\em Network flows: theory, algorithms, and applications}, volume~1.
\newblock Prentice hall Englewood Cliffs, NJ, 1993.

\bibitem[AMV20]{amv20}
Kyriakos Axiotis, Aleksander Madry, and Adrian Vladu.
\newblock { Circulation Control for Faster Minimum Cost Flow in Unit-Capacity
  Graphs }.
\newblock In {\em 2020 IEEE 61st Annual Symposium on Foundations of Computer
  Science (FOCS)}, pages 93--104, Los Alamitos, CA, USA, November 2020. IEEE
  Computer Society.

\bibitem[AMV22]{amv21}
Kyriakos Axiotis, Aleksander Madry, and Adrian Vladu.
\newblock Faster sparse minimum cost flow by electrical flow localization.
\newblock In {\em 2021 IEEE 62nd Annual Symposium on Foundations of Computer
  Science (FOCS)}, pages 528--539, 2022.

\bibitem[BBST24]{bbst24}
Aaron Bernstein, Joakim Blikstad, Thatchaphol Saranurak, and Ta-Wei Tu.
\newblock { Maximum Flow by Augmenting Paths in $n^{2+o(1)}$ Time }.
\newblock In {\em 2024 IEEE 65th Annual Symposium on Foundations of Computer
  Science (FOCS)}, pages 2056--2077, Los Alamitos, CA, USA, October 2024. IEEE
  Computer Society.

\bibitem[BCK{\etalchar{+}}23]{bcklppss23}
Jan Van~Den Brand, Li~Chen, Rasmus Kyng, Yang~P. Liu, Richard Peng,
  Maximilian~Probst Gutenberg, Sushant Sachdeva, and Aaron Sidford.
\newblock { A Deterministic Almost-Linear Time Algorithm for Minimum-Cost Flow
  }.
\newblock In {\em 2023 IEEE 64th Annual Symposium on Foundations of Computer
  Science (FOCS)}, pages 503--514, Los Alamitos, CA, USA, November 2023. IEEE
  Computer Society.

\bibitem[BGS22]{bgs22}
Aaron Bernstein, Maximilian~Probst Gutenberg, and Thatchaphol Saranurak.
\newblock { Deterministic Decremental SSSP and Approximate Min-Cost Flow in
  Almost-Linear Time }.
\newblock In {\em 2021 IEEE 62nd Annual Symposium on Foundations of Computer
  Science (FOCS)}, pages 1000--1008, Los Alamitos, CA, USA, February 2022. IEEE
  Computer Society.

\bibitem[CK24]{ck24a}
Julia Chuzhoy and Sanjeev Khanna.
\newblock Maximum bipartite matching in $n^{2+o(1)}$ time via a combinatorial
  algorithm.
\newblock In {\em Proceedings of the 56th Annual ACM Symposium on Theory of
  Computing}, STOC 2024, page 83–94, New York, NY, USA, 2024. Association for
  Computing Machinery.

\bibitem[CKL{\etalchar{+}}22]{cklpgs22}
Li~Chen, Rasmus Kyng, Yang~P Liu, Richard Peng, Maximilian~Probst Gutenberg,
  and Sushant Sachdeva.
\newblock Maximum flow and minimum-cost flow in almost-linear time.
\newblock In {\em 2022 IEEE 63rd Annual Symposium on Foundations of Computer
  Science (FOCS)}, pages 612--623. IEEE, 2022.

\bibitem[CKL{\etalchar{+}}23]{cklpps23}
Li~Chen, Rasmus Kyng, Yang~P. Liu, Richard Peng, Maximilian~Probst Gutenberg,
  and Sushant Sachdeva.
\newblock Almost-linear-time algorithms for maximum flow and minimum-cost flow.
\newblock {\em Commun. ACM}, 66(12):85–92, November 2023.

\bibitem[CKM{\etalchar{+}}11]{ckmst11}
Paul Christiano, Jonathan~A. Kelner, Aleksander Madry, Daniel~A. Spielman, and
  Shang-Hua Teng.
\newblock Electrical flows, laplacian systems, and faster approximation of
  maximum flow in undirected graphs.
\newblock In {\em Proceedings of the Forty-Third Annual ACM Symposium on Theory
  of Computing}, STOC '11, page 273–282, New York, NY, USA, 2011. Association
  for Computing Machinery.

\bibitem[CLM{\etalchar{+}}16]{clmps16}
Michael~B Cohen, Yin~Tat Lee, Gary Miller, Jakub Pachocki, and Aaron Sidford.
\newblock Geometric median in nearly linear time.
\newblock In {\em Proceedings of the forty-eighth annual ACM symposium on
  Theory of Computing}, pages 9--21, 2016.

\bibitem[CM91]{cm91}
Edith Cohen and Nimrod Megiddo.
\newblock Improved algorithms for linear inequalities with two variables per
  inequality.
\newblock In {\em Proceedings of the twenty-third annual ACM symposium on
  Theory of Computing}, pages 145--155, 1991.

\bibitem[CMSV17]{cmsv17}
Michael~B. Cohen, Aleksander Madry, Piotr Sankowski, and Adrian Vladu.
\newblock Negative-weight shortest paths and unit capacity minimum cost flow in
  $\tilde{O}(m^{10/7} \log {W})$ time: (extended abstract).
\newblock In {\em Proceedings of the Twenty-Eighth Annual ACM-SIAM Symposium on
  Discrete Algorithms}, SODA '17, page 752–771, USA, 2017. Society for
  Industrial and Applied Mathematics.

\bibitem[DKN{\etalchar{+}}24]{dknov24}
Daniel Dadush, Zhuan~Khye Koh, Bento Natura, Neil Olver, and L\'{a}szl\'{o}~A.
  V\'{e}gh.
\newblock A strongly polynomial algorithm for linear programs with at most two
  nonzero entries per row or column.
\newblock In {\em Proceedings of the 56th Annual ACM Symposium on Theory of
  Computing}, STOC 2024, page 1561–1572, New York, NY, USA, 2024. Association
  for Computing Machinery.

\bibitem[DS08]{ds08}
Samuel~I Daitch and Daniel~A Spielman.
\newblock Faster approximate lossy generalized flow via interior point
  algorithms.
\newblock In {\em Proceedings of the fortieth annual ACM symposium on Theory of
  computing}, pages 451--460, 2008.

\bibitem[FW02]{fw02}
Lisa~K. Fleischer and Kevin~D. Wayne.
\newblock Fast and simple approximation schemes for generalized flow.
\newblock {\em Mathematical Programming}, 91(2):215--238, 2002.

\bibitem[GJO97]{gjo97}
Donald Goldfarb, Zhiying Jin, and James~B. Orlin.
\newblock Polynomial-time highest-gain augmenting path algorithms for the
  generalized circulation problem.
\newblock {\em Mathematics of Operations Research}, 22(4):793--802, 1997.

\bibitem[GKL{\etalchar{+}}21]{gklms21}
Grzegorz Gluch, Michael Kapralov, Silvio Lattanzi, Aida Mousavifar, and
  Christian Sohler.
\newblock Spectral clustering oracles in sublinear time.
\newblock In {\em Proceedings of the 2021 ACM-SIAM Symposium on Discrete
  Algorithms (SODA)}, pages 1598--1617. SIAM, 2021.

\bibitem[GLP22]{glp22}
Yu~Gao, Yang~P. Liu, and Richard Peng.
\newblock Fully dynamic electrical flows: Sparse maxflow faster than
  goldberg-rao.
\newblock In {\em 2021 IEEE 62nd Annual Symposium on Foundations of Computer
  Science (FOCS)}, pages 516--527, 2022.

\bibitem[GPT88]{gpt88}
A.V. Goldberg, S.A. Plotkin, and E.~Tardos.
\newblock Combinatorial algorithms for the generalized circulation problem.
\newblock In {\em [Proceedings 1988] 29th Annual Symposium on Foundations of
  Computer Science}, pages 432--443, 1988.

\bibitem[HN94]{hn94}
Dorit~S. Hochbaum and Joseph~(Seffi) Naor.
\newblock Simple and fast algorithms for linear and integer programs with two
  variables per inequality.
\newblock {\em SIAM Journal on Computing}, 23(6):1179--1192, 1994.

\bibitem[Hoc04]{h04}
Dorit~S Hochbaum.
\newblock Monotonizing linear programs with up to two nonzeroes per column.
\newblock {\em Operations Research Letters}, 32(1):49--58, 2004.

\bibitem[JL84]{jl84}
William Johnson and Joram Lindenstrauss.
\newblock Extensions of lipschitz maps into a hilbert space.
\newblock {\em Contemporary Mathematics}, 26:189--206, 01 1984.

\bibitem[Kar22]{k22}
Adam Karczmarz.
\newblock Improved strongly polynomial algorithms for deterministic {MDPs},
  {2VPI} feasibility, and discounted all-pairs shortest paths.
\newblock In {\em Proceedings of the 2022 Annual ACM-SIAM Symposium on Discrete
  Algorithms (SODA)}, pages 154--172. SIAM, 2022.

\bibitem[KLS20]{kls20}
Tarun Kathuria, Yang~P. Liu, and Aaron Sidford.
\newblock Unit capacity maxflow in almost $o(m^{4/3})$ time.
\newblock In {\em 2020 IEEE 61st Annual Symposium on Foundations of Computer
  Science (FOCS)}, pages 119--130, 2020.

\bibitem[KW92]{kw92}
Jacek Kuczy{\'n}ski and Henryk Wo{\'z}niakowski.
\newblock Estimating the largest eigenvalue by the power and lanczos algorithms
  with a random start.
\newblock {\em SIAM journal on matrix analysis and applications},
  13(4):1094--1122, 1992.

\bibitem[LS14]{ls14}
Yin~Tat Lee and Aaron Sidford.
\newblock Path finding methods for linear programming: Solving linear programs
  in Õ(vrank) iterations and faster algorithms for maximum flow.
\newblock In {\em 2014 IEEE 55th Annual Symposium on Foundations of Computer
  Science}, pages 424--433, 2014.

\bibitem[LS20]{ls20}
Yang~P. Liu and Aaron Sidford.
\newblock Faster energy maximization for faster maximum flow.
\newblock In {\em Proceedings of the 52nd Annual ACM SIGACT Symposium on Theory
  of Computing}, STOC 2020, page 803–814, New York, NY, USA, 2020.
  Association for Computing Machinery.

\bibitem[Mad13]{m13}
Aleksander Madry.
\newblock Navigating central path with electrical flows: From flows to
  matchings, and back.
\newblock In {\em 2013 IEEE 54th Annual Symposium on Foundations of Computer
  Science}, pages 253--262, 2013.

\bibitem[Mad16]{m16}
Aleksander Madry.
\newblock { Computing Maximum Flow with Augmenting Electrical Flows }.
\newblock In {\em 2016 IEEE 57th Annual Symposium on Foundations of Computer
  Science (FOCS)}, pages 593--602, Los Alamitos, CA, USA, October 2016. IEEE
  Computer Society.

\bibitem[Meg83]{m83}
Nimrod Megiddo.
\newblock Towards a genuinely polynomial algorithm for linear programming.
\newblock {\em SIAM Journal on Computing}, 12(2):347--353, 1983.

\bibitem[ST04]{st04}
Daniel~A Spielman and Shang-Hua Teng.
\newblock Nearly-linear time algorithms for graph partitioning, graph
  sparsification, and solving linear systems.
\newblock In {\em Proceedings of the thirty-sixth annual ACM symposium on
  Theory of computing (STOC)}, pages 81--90, 2004.

\bibitem[SW19]{sw21}
Thatchaphol Saranurak and Di~Wang.
\newblock Expander decomposition and pruning: Faster, stronger, and simpler.
\newblock In {\em Proceedings of the Thirtieth Annual ACM-SIAM Symposium on
  Discrete Algorithms}, pages 2616--2635. SIAM, 2019.

\bibitem[Vai89]{v89}
P.M. Vaidya.
\newblock Speeding-up linear programming using fast matrix multiplication.
\newblock In {\em 30th Annual Symposium on Foundations of Computer Science},
  pages 332--337, 1989.

\bibitem[VDBCK{\etalchar{+}}24]{bcklmms24}
Jan Van Den~Brand, Li~Chen, Rasmus Kyng, Yang~P. Liu, Simon Meierhans,
  Maximilian~Probst Gutenberg, and Sushant Sachdeva.
\newblock Almost-linear time algorithms for decremental graphs: Min-cost flow
  and more via duality.
\newblock In {\em 2024 IEEE 65th Annual Symposium on Foundations of Computer
  Science (FOCS)}, pages 2010--2032, 2024.

\bibitem[vdBGJ{\etalchar{+}}22]{bgjllps22}
Jan van~den Brand, Yu~Gao, Arun Jambulapati, Yin~Tat Lee, Yang~P. Liu, Richard
  Peng, and Aaron Sidford.
\newblock Faster maxflow via improved dynamic spectral vertex sparsifiers.
\newblock In {\em Proceedings of the 54th Annual ACM SIGACT Symposium on Theory
  of Computing}, STOC 2022, page 543–556, New York, NY, USA, 2022.
  Association for Computing Machinery.

\bibitem[vdBLL{\etalchar{+}}21]{bll+21}
Jan van~den Brand, Yin~Tat Lee, Yang~P Liu, Thatchaphol Saranurak, Aaron
  Sidford, Zhao Song, and Di~Wang.
\newblock Minimum cost flows, mdps, and {$\ell_1$}-regression in nearly linear
  time for dense instances.
\newblock In {\em Proceedings of the 53rd Annual ACM SIGACT Symposium on Theory
  of Computing}, pages 859--869, 2021.

\bibitem[vdBLN{\etalchar{+}}20]{bln+20}
Jan van~den Brand, Yin-Tat Lee, Danupon Nanongkai, Richard Peng, Thatchaphol
  Saranurak, Aaron Sidford, Zhao Song, and Di~Wang.
\newblock Bipartite matching in nearly-linear time on moderately dense graphs.
\newblock In {\em 2020 IEEE 61st Annual Symposium on Foundations of Computer
  Science (FOCS)}, pages 919--930. IEEE, 2020.

\bibitem[vdBLSS20]{blss20}
Jan van~den Brand, Yin~Tat Lee, Aaron Sidford, and Zhao Song.
\newblock Solving tall dense linear programs in nearly linear time.
\newblock In {\em Proceedings of the 52nd Annual ACM SIGACT Symposium on Theory
  of Computing}, STOC 2020, page 775–788, New York, NY, USA, 2020.
  Association for Computing Machinery.

\bibitem[V{\'e}g14]{v14}
L{\'a}szl{\'o}~A V{\'e}gh.
\newblock A strongly polynomial algorithm for generalized flow maximization.
\newblock In {\em Proceedings of the forty-sixth annual ACM symposium on Theory
  of computing}, pages 644--653, 2014.

\bibitem[Way02]{w02}
Kevin~D. Wayne.
\newblock A polynomial combinatorial algorithm for generalized minimum cost
  flow.
\newblock {\em Mathematics of Operations Research}, 27(3):445--459, 2002.

\end{thebibliography}

\appendix
\section{Missing Proofs}\label{sec:missing_proofs}

\subsection{Proof of LP Initialization}
\begin{proof}[Proof of \Cref{lem:LP_initialization}]
{\bf Part 1.} By definition we have $\ov{\ma}^{\top} \ov{x}^{(\init)} = \ov{b}$, $\ov{\ma} 0_n + \ov{c} = \ov{c}$, $\ell \leq x^{(\init)} \leq u$, and $0 \leq \wt{x}^{(\init)} \leq 2 \wt{x}^{(\init)}$, so the point is feasible.

Next we bound centrality. Recall that the barrier for the interval $[\ell, u]$ is $\phi(x) = - \log(x - \ell) - \log(u - x)$, and its derivatives are $\phi'(x) = -\frac{1}{x-\ell} + \frac{1}{u-x}$, and $\phi''(x) = \frac{1}{(x-\ell)^2} + \frac{1}{(u-x)^2} \geq \frac{1}{(u-\ell)^2}$. For the old constraints, we have $\frac{1}{(u_i-\ell_i)^2} \geq \frac{1}{\Xi^2}$. For the new constraints, we have
\begin{align*}
\frac{1}{(2\wt{x}_i^{(\init)})^2} \geq \frac{1}{(2\|b - \ma^{\top} x^{(\init)}\|_{\infty} / \beta)^2} 
\geq \frac{1}{4 \Xi^2}.
\end{align*}

Since $\tau(x) \geq \frac{n}{m}$, and since $\phi'(\ov{x}^{(\init)}) = 0$, we can bound centrality by
\begin{align*}
\left\|\frac{\ov{c} + \mu \tau(\ov{x}^{(\init)}) \phi'(\ov{x}^{(\init)})}{\mu \tau(\ov{x}^{(\init)}) \sqrt{\phi''(\ov{x}^{(\init)})}}\right\|_{\infty} 
\leq &~ \max\left\{ \frac{\|c\|_{\infty}}{\mu \frac{n}{m} \frac{1}{\Xi}}, \frac{\frac{2 \|c\|_1}{\delta'}}{\mu \frac{n}{m} \frac{1}{2 \Xi}} \right\} \\
\leq &~ \frac{4 \|c\|_1 \Xi m}{\mu n \delta'} \leq \epsilon,
\end{align*}
where the last step follows from $\mu = \frac{4 m \|c\|_1 \Xi}{\epsilon \delta'}$.

{\bf Part 2.} First note that for any vector $x \in \R^m$ that is feasible for the original LP, the vector $\ov{x} = \begin{bmatrix}x \\ 0_n \end{bmatrix} \in \R^{m+n}$ is feasible for the modified LP. This means
\begin{align*}
\min_{\substack{\ov{\ma}^{\top} \ov{x} = \ov{b} \\ \ov{\ell} \leq \ov{x} \leq \ov{u}}} \ov{c}^{\top} \ov{x} \leq \min_{\substack{\ma^{\top} x = b \\ \ell \leq x \leq u}} c^{\top} x.
\end{align*}
Using this we can bound the objective value as follows:
\begin{align*}
c^{\top} x^{(\final)} \leq &~ \min_{\substack{\ov{\ma}^{\top} \ov{x} = \ov{b} \\ \ov{\ell} \leq \ov{x} \leq \ov{u}}} \ov{c}^{\top} \ov{x} + \delta - \frac{2\|c\|_1}{\delta'} \cdot 1_n^{\top} \wt{x}^{(\final)} \\
\leq &~ \min_{\substack{\ma^{\top} x = b \\ \ell \leq x \leq u}} c^{\top} x + \delta,
\end{align*}
where the second step follows from the equation above, and that $\wt{x}^{(\final)} \geq 0$ since $\ov{x}^{(\final)}$ is feasible.

Next we bound the error in feasibility. First note that
\begin{align*}
c^{\top} x^{(\final)} + \frac{2\|c\|_1}{\delta'} \cdot 1_n^{\top} \wt{x}^{(\final)} \leq \min_{\substack{\ov{\ma}^{\top} x = \ov{b} \\ \ov{\ell} \leq x \leq \ov{u}}} \ov{c}^{\top} x + \delta 
\leq \min_{\substack{\ma^{\top} x = b \\ \ell \leq x \leq u}} c^{\top} x + \delta.
\end{align*}
Since $\ell \leq x^{(\final)} \leq u$, and the same bounds hold for the $x$ in $\min_{\substack{\ma^{\top} x = b \\ \ell \leq x \leq u}} c^{\top} x$, 
\begin{align*}
c^{\top} x^{(\final)} \geq \sum_{i=1}^m \min\{c_i \ell_i, c_i u_i\}, ~~~ \min_{\substack{\ma^{\top} x = b \\ \ell \leq x \leq u}} c^{\top} x \leq \sum_{i=1}^m \max\{c_i \ell_i, c_i u_i\}.
\end{align*}
Combining this and the inequality above yields
\begin{align*}
\sum_{i=1}^m \min\{c_i \ell_i, c_i u_i\} + \frac{2\|c\|_1}{\delta'} \cdot 1_n^{\top} \wt{x}^{(\final)} \leq \sum_{i=1}^m \max\{c_i \ell_i, c_i u_i\} + \delta.
\end{align*}
Next note that for an $i \in [m]$, $\max\{c_i \ell_i, c_i u_i\} - \min\{c_i \ell_i, c_i u_i\} = |c_i| \cdot (u_i - \ell_i) \leq |c_i| \cdot \Xi$, which implies
\begin{align*}
&~ \frac{2\|c\|_1}{\delta'} \cdot 1_n^{\top} \wt{x}^{(\final)} \leq \sum_{i=1}^m |c_i| \cdot \Xi + \delta \\
\implies &~ 1_n^{\top} \wt{x}^{(\final)} \leq \delta' \cdot (\frac{\Xi}{2} + \frac{\delta}{2 \|c\|_1}).
\end{align*}

Finally, since $\ov{x}^{(\final)}$ is feasible,  $\ma^{\top} x^{(\final)} + \beta \cdot \mdiag(\sigma) \cdot \wt{x}^{(\final)} = b$, and so 
\begin{align*}
\|\ma^{\top} x^{(\final)} - b\|_{\infty} = &~ \beta \cdot \|\wt{x}^{(\final)}\|_{\infty} \\
\leq &~ (\|b - \ma^{\top} x^{(\init)}\|_{\infty} / \Xi + 1) \cdot \delta' \cdot (\frac{\Xi}{2} + \frac{\delta}{2 \|c\|_1}) \\
\leq &~ \Big(\|b\|_{\infty} \Xi^{-1} + m\|\ma\|_{\infty} \max\{\|u\|_{\infty}, \|\ell\|_{\infty}\} \Xi^{-1} + 1\Big) \cdot \delta' \cdot (\frac{\Xi}{2} + \frac{\delta}{2 \|c\|_1}) \\
\leq &~ \delta,
\end{align*}
where the last step follows from $\delta' \defeq \frac{\delta}{10 m W^2} \cdot \min\{1, \|c\|_1\}$.

{\bf Part 3.} By definition, $\ov{\ma}^{\top} \ov{\ma} = \ma^{\top} \ma + \beta^2 I \succeq I.$
\end{proof}

\subsection{Equivalent Statement of Small Rayleigh Quotient Condition}
\begin{lemma}\label{lem:bound_on_v_topv_other_direction}
Let $\mm$ be any PSD matrix. For any $c \geq 0$, if $v$ is a unit vector that satisfies $1 - \left(v^{\top} v_{1}(\mm)\right)^2 \leq \frac{c \cdot \lambda_1(\mm)}{\lambda_n(\mm)}$, then $v^{\top} \mm v \leq (1 + c) \cdot \lambda_1(\mm)$.
\end{lemma}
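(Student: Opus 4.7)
The plan is to mirror the proof of \Cref{lem:bound_on_v_topv}, running the spectral decomposition argument in the reverse direction. Write $v$ in an orthonormal eigenbasis of $\mm$ as $v = \sum_{i=1}^n \alpha_i v_i(\mm)$, where $\alpha_i \defeq v_i(\mm)^\top v$, so that $\sum_{i=1}^n \alpha_i^2 = \|v\|_2^2 = 1$ and $v^\top \mm v = \sum_{i=1}^n \alpha_i^2 \lambda_i(\mm)$. The eigenvalues satisfy $\lambda_1(\mm) \leq \lambda_i(\mm) \leq \lambda_n(\mm)$ for every $i$, so one immediately obtains the crude interpolation bound
\[
v^\top \mm v \;\leq\; \alpha_1^2 \lambda_1(\mm) + (1-\alpha_1^2)\lambda_n(\mm) \;=\; \lambda_1(\mm) + (1-\alpha_1^2)\bigl(\lambda_n(\mm)-\lambda_1(\mm)\bigr).
\]

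Next I would substitute the hypothesis $1-\alpha_1^2 = 1 - (v^\top v_1(\mm))^2 \leq c\,\lambda_1(\mm)/\lambda_n(\mm)$ into this bound. This yields $v^\top \mm v \leq \lambda_1(\mm) + \frac{c\,\lambda_1(\mm)}{\lambda_n(\mm)}\cdot\bigl(\lambda_n(\mm)-\lambda_1(\mm)\bigr) \leq \lambda_1(\mm) + c\,\lambda_1(\mm) = (1+c)\lambda_1(\mm)$, which is exactly the claimed inequality. The final step uses only $\lambda_n(\mm)-\lambda_1(\mm) \leq \lambda_n(\mm)$, which is valid because $\mm$ is PSD and hence $\lambda_1(\mm) \geq 0$.

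There is essentially no obstacle here: the argument is a two-line computation based on the spectral decomposition, with the one subtlety being to avoid dividing by $\lambda_n(\mm)$ in case $\mm = \mzero$, in which case both sides of the claim are $0$ and the hypothesis is vacuous (the ratio $\lambda_1(\mm)/\lambda_n(\mm)$ is interpreted as a bound that forces $\alpha_1^2 = 1$ only when one cares). I would handle this by simply noting that if $\lambda_n(\mm) = 0$ then $\mm = \mzero$ and the statement is trivial, and otherwise running the displayed chain of inequalities.
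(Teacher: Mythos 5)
Your proof is correct and uses essentially the same approach as the paper's: expand $v$ in the eigenbasis of $\mm$, bound the tail $\sum_{i\geq 2}\alpha_i^2\lambda_i(\mm)$ by $\lambda_n(\mm)(1-\alpha_1^2)$, and substitute the hypothesis. Your intermediate estimate is marginally tighter (retaining $\lambda_n(\mm)-\lambda_1(\mm)$ rather than $\lambda_n(\mm)$) and you explicitly handle the degenerate $\mm=\mzero$ case, but these are cosmetic differences from the paper's two-line computation.
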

\begin{proof}
\begin{align*}
v^\top \mm v 
= &~ \sum_{i \in [n]} \lambda_i(\mm) (v^\top v_i(\mm))^2 \\
\leq &~ \lambda_1(\mm) + \sum_{i = 2}^{n}  \lambda_n(\mm) (v^\top v_i(\mm))^2
= \lambda_1(\mm) + \lambda_n(\mm) \left(1 - (v^\top v_1(\mm))^2\right)
\leq (1+ c) \lambda_1(\mm)\,. \qedhere
\end{align*}

\end{proof}

\end{document}